\DeclareMathOperator*{\E}{\mathbb{E}}
\DeclareMathOperator*{\pE}{\widetilde{\mathbb{E}}}
\DeclareMathOperator*{\pPr}{\widetilde{\Pr}}
\DeclareMathOperator*{\Ind}{\mathbb{I}}
\DeclareMathOperator*{\Cov}{\mathbb{C}ov}
\DeclarePairedDelimiter\ip{\langle}{\rangle}
\newcommand\skipi{{\vskip 10pt}}
\def \Q {{\mathbb Q}}
\def \R {\mathbb{R}}
\def \Z {\mathbb{Z}}
\def \N {\mathbb{N}}
\def \C {(\{0,1\}^k)^\ell}
\def \T {\widetilde{\E}}
\def \A {\mathcal{A}}
\def \cA {\mathcal{A}}
\def \cC {\mathcal{C}}
\def \cD {\mathcal{D}}
\def \cE {\mathcal{E}}
\def \cF {\mathcal{F}}
\def \cL {\mathcal{L}}
\def \tO {\widetilde{O}}
\def \nn {\mathbb{N}}
\def \sos {SoS\xspace}
\def \sym {\text{sym}}
\def \poly {\mathrm{poly}}
\def \eps {{\varepsilon}}
\def \one {\mathbb{1}}
\def \val {\mathrm{val}}
\def \viol {\mathrm{viol}}
\def \Chi {\raisebox{2pt}{$\chi$}}
\def \pCov {\widetilde{\Cov}}
\renewcommand{\hat}[1]{\widehat{#1}}
\newcommand{\dor}[1]{}
\newcommand{\mitali}[1]{}
\newtheorem{theorem}{Theorem}[section]
\newtheorem{fact}[theorem]{Fact}
\newtheorem{proposition}[theorem]{Proposition}
\newtheorem{corollary}[theorem]{Corollary}
\newtheorem{problem}{Problem}
\newtheorem{lemma}[theorem]{Lemma}
\newtheorem*{lemma*}{Lemma}
\newtheorem*{theorem*}{Theorem}
\newtheorem{claim}[theorem]{Claim}
\newtheorem*{claim*}{Claim}
\newtheorem{conjecture}[theorem]{Conjecture}
\newtheorem{remark}[theorem]{Remark}
\newtheorem{definition}[theorem]{Definition}
\theoremstyle{definition}
\newtheorem{algorithm-thm}[theorem]{Algorithm}
\let\c@fconjecture\c@conjecture
\let\c@fconj\c@conj
\title{Solving Unique Games over Globally Hypercontractive Graphs}
\author{	Mitali Bafna \thanks{Carnegie Mellon University. Supported in part by the Computer Science Department, CMU and a gift from CYLAB, CMU.}
\and
	Dor Minzer\thanks{Department of Mathematics, Massachusetts Institute of Technology. Supported by a Sloan Research Fellowship, NSF CCF award 2227876 and NSF CAREER award 2239160.}}
\date{\vspace{-5ex}}
\begin{document}

\maketitle

\begin{abstract}
We study the complexity of affine Unique-Games (UG) over 
\emph{globally hypercontractive graphs}, which are graphs that are not small set expanders but admit a useful and succinct characterization of all small sets that violate the small-set expansion property. This class of graphs includes the Johnson and Grassmann graphs, which have played a pivotal role in recent PCP constructions for UG, and their generalizations via high-dimensional expanders.

Our algorithm shows how to round ``low-entropy'' solutions to sum-of-squares (SoS) semidefinite programs, broadly extending the algorithmic framework of~\cite{BBKSS}. 
We give a new rounding scheme for SoS, which eliminates global correlations in a 
given pseudodistribution so that it retains 
various good properties even after conditioning. 
Getting structural control
over a pseudodistribution after conditioning is a 
fundamental challenge in many SoS based 
algorithms. Due to these challenges,~\cite{BBKSS} 
were not able to establish 
strong algorithms for globally 
hypercontractive graphs, and could only
do so for certifiable small-set expanders. 
Our results improve upon the results 
of~\cite{BBKSS} in various aspects: 
we are able to deal with instances with arbitrarily small (but constant) completeness, and most importantly, 
their algorithm gets a soundness 
guarantee that degrades with other 
 parameters of the graph (which in all PCP constructions grow with the alphabet size), whereas our doesn't.

Our result suggests that UG is easy 
on globally hypercontractive graphs, and therefore highlights the importance of graphs that lack such a characterization in the context of PCP reductions for UG.
\end{abstract}

\section{Introduction}
\subsection{Unique-Games}\label{sec:ug}
The Unique Games Conjecture (UGC in short) 
is a central open problems in Complexity Theory~\cite{Khot02}.
The primary reason for the interest in UGC is that, if true, it
implies a large number of hardness of approximation results 
that are often times tight~\cite{KhotR08,KhotKMO07,Austrin,KNSgroth,Raghavendra08}
(see~\cite{KhotICM,trevisan2012khot}). For example, 
one of the most striking
consequences of UGC is that it 
implies that a class of semi-definite programs
achieves the best possible approximation ratio (among all 
efficient algorithms) for constraint satisfaction problems~\cite{Raghavendra08}.

Research towards a proof of UGC, on the other hand, 
has stalled for the most part up until recently. Unlike problems such 
as $3$-SAT, a randomly sampled instance of Unique-Games (UG) is easy to solve, raising the question of how do hard instances of UG
even look like. To discuss this, we first give 
a formal definition of the Unique-Games problem and the statement of the Unique-Games Conjecture.
\begin{definition}
A instance of Unique-Games $\Psi$ consists of a graph $G = (V,E)$, 
a finite alphabet $\Sigma$ and a collection of constraints, 
$\Phi = \{\Phi_e\}_{e\in E}$, one for each edge in $G$. 
For all $e\in E$, the constraint $\Phi_e$ takes the form 
$\Phi_e = \{(\sigma, \phi_e(\sigma))~|~\sigma\in \Sigma\}$, 
where $\phi_e\colon \Sigma\to\Sigma$ is a $1$-to-$1$ map.

The goal in the Unique-Games problem is to find an assignment 
$A\colon V\to\Sigma$ that satisfies the maximum number of constraints possible, that is, satisfies that $(A(u),A(v))\in \Phi_{e}$ 
for the largest number of edges $e=(u,v)\in E$ as possible. We define
 the value of the instance $\Psi$ by:
\[
{\sf val}(\Psi) = \max_{A\colon V\to\Sigma}\frac{\#\{e~|~\text{A satisfies $e$}\}}{|E|}.
\]
\end{definition}
With this in mind, the Unique-Games Conjecture is the following statement:
\begin{conjecture}
For all $\eps,\delta>0$ there is $k\in\mathbb{N}$ such that 
given a Unique-Games instance $\Psi$ with alphabet size at most $k$,
it is NP-hard to distinguish between:

\textbf{YES case:} ${\sf val}(\Psi)\geq 1-\eps$.

\textbf{NO case:} ${\sf val}(\Psi)\leq \delta$.
\end{conjecture}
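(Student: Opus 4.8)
The statement above is the Unique-Games Conjecture, which remains open; what follows is therefore a plan describing the route that current research pursues, together with the obstacle this paper sheds light on, rather than a complete argument.

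The natural strategy is via a PCP reduction: starting from a canonically NP-hard gap problem --- say, gap-$3$-LIN, or more usefully a \emph{Smooth Label Cover} instance with good soundness --- one composes it with a ``gadget'' that encodes each label by a codeword and tests consistency with a predicate all of whose constraints are $1$-to-$1$. Near-perfect completeness is easy to arrange: a consistent labeling of the outer instance, passed through the encoding together with a small amount of noise, yields an assignment satisfying a $(1-\eps)$ fraction of the produced unique constraints. The entire difficulty is soundness: one must show that any assignment to the composed instance passing more than a $\delta$ fraction of constraints can be ``decoded'' back into a good labeling of the outer instance. For dictatorship-style gadgets (long codes) this amounts to understanding which functions can fool a unique-constraint test, and the classical Fourier-analytic machinery that handles $2$-query predicates with $\{-1,1\}$-valued noise does not by itself yield unique constraints.

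The state of the art realizes this program only partially: the $2$-to-$2$ Games Theorem (Khot--Minzer--Safra, building on Dinur--Khot--Kindler--Minzer--Safra) proves the analogue of the conjecture with completeness $\tfrac12-\eps$ in place of $1-\eps$, equivalently NP-hardness of Unique-Games with completeness exactly $\tfrac12$. Its soundness analysis goes through a gadget supported on the \emph{Grassmann graph}, and the crux is a structural theorem --- the ``Grassmann expansion hypothesis'' --- characterizing the small non-expanding sets of that graph. A hypothetical route to the full conjecture would push the completeness of this reduction from $\tfrac12$ up to $1-\eps$, presumably by replacing the $2$-to-$1$ structure of Label Cover and the Grassmann gadget with objects that support near-perfect completeness while still admitting a usable soundness analysis.

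The reason I expect this last step to be the genuine obstacle --- and the point that this paper makes precise --- is that the Grassmann and Johnson graphs, on which all known near-proofs rely, are precisely \emph{globally hypercontractive}: their small non-expanding sets admit a succinct characterization. The results of this paper show that Unique-Games on such graphs is \emph{algorithmically easy} (solvable by rounding low-entropy SoS solutions, with soundness that does not degrade with the graph's other parameters). Hence any PCP reduction whose output instances live on globally hypercontractive graphs cannot certify hardness below the algorithm's soundness threshold, so a proof of the conjecture must produce hard instances on graph families that are \emph{not} globally hypercontractive --- and constructing or analyzing such ``non-hypercontractive'' gadgets is something we currently do not know how to do. This is exactly where any attempted proof, including the plan sketched above, gets stuck.
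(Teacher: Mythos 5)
The statement you were given is the Unique-Games Conjecture itself, which is an open problem; the paper states it only as a conjecture and offers no proof, so there is nothing to compare your argument against. You correctly recognize this, and your survey of the intended PCP route, the $2$-to-$2$ Games Theorem, and the role of global hypercontractivity as the obstacle matches the paper's own framing in its introduction and ``Interpretation of Our Results'' discussion.
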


It turns out that the topology of the underlying graph $G$ plays a crucial role in the complexity of the UG instance defined over it~\cite{AroraKKSTV08,MMlocalexpander,DBLP:journals/eccc/AroraIMS10}. In particular, it 
turns out that UG over expander graphs is easy:


\begin{definition}
Given a regular graph $G=(V,E)$ and a set 
of vertices $S\subseteq V$, the edge expansion of $S$ is defined by:
\[
\Phi(S) = \Pr_{u\in S, v\in \Gamma(u)}\left[v\not\in S\right].
\]
\end{definition}
A graph $G$ is a called a $(\gamma,\xi)$-small set-expander if for every $S\subseteq V$ of size at most $\xi |V|$  it holds that $\Phi(S)\geq \gamma$. Informally, we say that $G$ is a small set expander if it is a $(\gamma,\xi)$-small set 
expander for $\xi$ that is a small constant, and we say $G$ is an expander 
if $\xi = 1/2$. The results of~\cite{AroraKKSTV08,MMlocalexpander,DBLP:journals/eccc/AroraIMS10,BBKSS}
 assert that UG instances with completeness close to $1$ over small-set expanders are easy. 
Thus, to have any chance of proving UGC, one must use graphs 
which are outside the scope of expanders and small set expanders.

Indeed, recent progress towards UGC~\cite{KhotMS17,DBLP:conf/stoc/DinurKKMS18a,DinurKKMS18,KhotMS18}
has utilized graphs which are not small-set expanders. In these 
works it is proved that $2$-to-$1$-Games are NP-hard (which is a 
very similar problem to UG, except that 
each one of the maps $\phi_e$ defining the constraints 
is a $2$-to-$1$ map). Among others, this implies that for all $\eps > 0$, given
a UG instance $\Psi$ over sufficiently large alphabet, 
it is NP-hard to distinguish between the case that 
${\sf val}(\Psi)\geq 1/2$ and the case that 
${\sf val}(\Psi)\leq \eps$. To prove these results, these works
use graphs that are not small set expanders in two different ways:
\begin{enumerate}
    \item Smooth Parallel Repetition: A key step in the reduction     of~\cite{KhotMS17,DBLP:conf/stoc/DinurKKMS18a,DinurKKMS18,KhotMS18} is an application of the Parallel Repetition Theorem~\cite{Raz} to get a hardness result 
    for a sufficiently 
    smooth outer PCP construction. Roughly speaking, this 
    step in the process may be associated with the Johnson graph
    with a large intersection parameter. That is, with the graph 
    $J(n,\ell,t)$ in which the vertices are $\binom{[n]}{\ell}$, and 
    two vertices $A$ and $B$ are adjacent if $|A\cap B|=\ell-t$, 
    and we think of $t$ as much smaller than $\ell$ (say, $t=\sqrt{\ell}$).
    \item Composition with the Grassmann encoding: The Grassmann
    encoding is an encoding of linear functions based on the Grassmann graph 
    ${\sf Grass}(n,\ell)$ over $\mathbb{F}_2$. The Grassmann graph 
    over $\mathbb{F}_2$ is the graph whose vertices are all 
    $\ell$-dimensional subspaces of $\mathbb{F}_2^n$, denoted by 
    ${n\brack \ell}$, and two vertices $L$ and $L'$ are adjacent if 
    ${\sf dim}(L\cap L') = \ell-1$. 
\end{enumerate}
Both of the graphs above, namely the Johnson graph with large intersection sizes, as well as the Grassmann graph, are not 
small set expanders. However very importantly, the class
of small sets in the Grassmann graph with bad expansion has a succinct and intuitive characterization, 
and the proof of the $2$-to-$1$ Games Theorem heavily relies on this characterization. 


Though the term is not formally defined, we refer to graphs such 
as the Grassmann graph above as globally hypercontractive graphs. 
By that, we mean that there is a collection of 
``obviously-non-expanding local sets'', such that any small set that 
doesn't expand well must have a large intersection with one of the sets from 
the collection (see Section~\ref{sec:abstraction} for a semi-formal definition). Aside from the Grassmann graph, this 
class of graphs includes Johnson graphs with small intersection sizes~\cite{KMMS},
certain Cayley graphs over 
the symmetric group~\cite{filmus2020hypercontractivity},
$p$-biased cubes for $p=o(1)$, other product domains~\cite{keevash2021global} as well as high dimensional expanders~\cite{gur2022hypercontractivity,bafna2022hypercontractivity}.

A natural question to consider then refers to the 
complexity of UG over graphs that are
globally hypercontractive, which is the main question 
of study in this paper. The motivation for this 
is two-fold:
\begin{enumerate}
    \item First, in light of 
the reduction of~\cite{KhotMS17,DBLP:conf/stoc/DinurKKMS18a,DinurKKMS18,KhotMS18} and the above discussion, it is an interesting 
question whether there may be a reduction showing the NP-hardness of UG that only uses globally hypercontractive graphs. Gaining a better understanding for these questions will shed further
light on the source of hardness in this reduction, as well as explain the distinct roles of the ``smooth parallel repetition'' and ``composition with the Grassmann encoding'' steps above.

\item Second, the UGC is related to another well-known computational problem known as the Small-set Expansion (SSE) problem~\cite{RaghavendraS10,RaghavendraST12}. In the SSE problem one is given a graph $G = (V,E)$ such that either (a) for all $S\subseteq V$ of size at most $\gamma|V|$ it holds that $\Phi(S)\geq 1-\eps$, or (b) there is $S\subseteq V$ of size at most $\gamma|V|$ such that $\Phi(S)\leq \delta$. The
Small-set Expansion hypothesis (SSEH) asserts that for all $\eps,\delta>0$
this problem is NP-hard for sufficiently small $\gamma$, and the 
papers~\cite{RaghavendraS10,RaghavendraST12} show that this 
hypothesis implies the UGC.\footnote{To be more precise, it implies
a stronger version of the UGC, which states that UG 
remains hard on graphs that have an expansion profile
as the noisy hypercube.} 

Despite recent progress towards UGC, no new results for the SSEH were proved. In some sense, it seems that to prove hardness results for SSE, one would need 
to only use gadgets that are small-set expanders or perhaps globally hypercontractive. Otherwise, in the final graph one may create small non-expanding sets that are unintended (namely, that don't correspond to any solution to the 
initial problem we reduced from). This motivates the study of the role of globally hypercontractive graphs in reductions related to Unique-Games.
\end{enumerate}

The question has been addressed recently in~\cite{BBKSS} (with 
follow-up by~\cite{DBLP:conf/soda/BafnaHKL22}) for the class of \emph{Affine 
Unique-Games}, which we define and discuss in the next section.

\subsection{Unique-Games over Globally Hypercontractive Graphs}
\begin{definition}
An instance of $\text{Affine-UG}$ is an instance of Unique-Games 
in which the alphabet is the ring of integers modulo $q$, $\mathbb{Z}_q$, and all of the 
constraint maps $\phi_e$ are affine shifts, that is, $\phi_e$ 
of the form $\phi_e(\sigma) = \sigma+b_e$ for some $b_e\in\mathbb{Z}_q$.
\end{definition}
An equivalent but slightly different way to view 
the Affine-UG problem is as a system
of linear equations $(X,E)$ over $\mathbb{Z}_q$.  
Each equation in $E$ is of the form $x_i - x_j = b$ where $x_i,x_j\in X$ are variables 
and $b\in\mathbb{Z}_q$ is some constant. Despite looking very restrictive, 
it is known~\cite{KhotKMO07} that the UGC is true if and 
only if it holds for the class of Affine UG and furthermore this class captures many interesting optimization problems such as Max-Cut and graph coloring, thus we shall focus our attention on Affine UG henceforth. 
\footnote{We remark that the 
reduction of~\cite{KhotKMO07} does not preserve the topology of the graph. 
We are therefore not able to translate our results directly to the 
class of general UG, and believe this is an interesting direction 
for further study.}

In~\cite{BBKSS}, the authors investigate the complexity of Affine UG 
on the Johnson graph, albeit with small intersection sizes, the regime in which a characterization theorem for non-expanding sets holds. 
Namely, they consider
the ``$\alpha$-noisy'' Johnson graph $J(n,\ell,t)$ in the case that $t = \alpha \ell$, for $\alpha\in (0,1)$ bounded away from $0$ and thought of as a fixed constant independent of $\ell$.
The result of~\cite{BBKSS} 
asserts that for small enough $\eps>0$, there is a polynomial time algorithm
that given an instance of UG over $J(n,\ell,t)$ with value at least 
$1-\eps$, finds an assignment that satisfies at least $C(\ell,\alpha,\eps)>0$
of the constraints. The most important feature of this result is the lack
of dependency on the size of the graph $n$ and the alphabet size, however 
it has two significant downsides:
\begin{enumerate}
    \item The completeness guarantee needed for the algorithm to work is close to $1$ ($\eps$ close to $0$). Thus, within the realm of their result, it is still possible that UG over the noisy-Johnson and Grassmann graphs is hard for completeness smaller than $1/2$ (wherein 
    general UG is now known to be hard) but suddenly becomes easy when the completeness exceeds $1/2$. In fact, UG with completeness $> 1/2$ is easy for the Grassmann graph via the UG algorithm for expanders itself, thus their result does not imply anything non-trivial about the complexity of UG on the Grassmann graph, despite its being globally hypercontractive. Therefore, it is not
    clear what their algorithm implies about the hard instances of~\cite{DinurKKMS18} which have completeness less than $1/2$: (1) are they hard due to the smooth parallel repetition step, or the composition with the Grassmann graph? (2) Is the smoothness of the outer PCP necessary -- instead of the Johnson graph with large intersection sizes could one have instead used the noisy-Johnson graph?    
    \item Their soundness guarantee depends on $\ell$, the uniformity of the sets in the Johnson graph, which in most PCP constructions grows with the alphabet size of the instance. Indeed, such graphs are often used to encode a global function $f\colon [n]\to\Gamma$ (often from an error correcting code) where $|\Gamma| = O(1)$. 
    In that case, the Johnson graph corresponds to the restrictions encoding
    $F[A] = f|_{A}$, in which case each vertex $A$ of the Johnson graph is 
    assigned a label from $\Gamma^{\ell}$ corresponding to the restriction of 
    $f$ to $A$. Thus, in the hard instances of UG obtained via PCP constructions (in particular the one above) the parameter $\ell$ 
    grows with the alphabet size of the UG instance, hence we would like the performance of our algorithm to not depend on it. 
    
    We remark though, that it is okay for the runtime of our algorithm to depend on it (just like an algorithm 
    running in time $n^{k}$ that solves UG 
    would refute the conjecture, but the trivial random algorithm that satisfies at least a $1/k$ fraction of the constraints does not).
\end{enumerate}

\subsection{Our Results}
Our main result asserts that there is a polynomial time algorithm for solving 
Affine UG over globally hypercontractive graphs that bypasses these 
two barriers thus addressing the motivations presented in Section~\ref{sec:ug}. As the term globally hypercontractive graph is not formally defined, 
below are some concrete instances of graphs on which this applies. In Section~\ref{sec:abstraction} we give a semi-formal definition of globally hypercontractive graphs and also show how our algorithm and analysis can be abstracted to solve UG on such graphs, as long as one is provided with an SoS certificate of global hypercontractivity.

We first consider the noisy-Johnson graph, 
for which we have the following two results.
The first result addresses the case that 
the completeness of the instance is close to 
$1$, in which case our algorithm matches the 
guarantee of the algorithm of~\cite{BBKSS} 
for certifiable small-set expanders, and in particular the $\alpha$-noisy hypercube graph:
\begin{theorem}\label{thm:main_johnson_close_to_1}
There is $\eps_0>0$
such that for all $\alpha\in (0,1)$ 
the following holds for all $0<\eps\leq \eps_0$.
There exists an algorithm whose running time is $n^{\poly(\ell,|\Sigma|,1/\eps)}$ which, on input
$\Psi$ which is an affine UG instance over $J(n,\ell,\alpha\ell)$ promised to be at least $(1-\eps)$-satisfiable, finds an assignment that satisfies 
at least $2^{-O\left(\frac{\sqrt{\eps}}{\alpha}\right)}$-fraction of the constraints in $\Psi$.
\end{theorem}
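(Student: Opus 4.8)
The plan is to run the sum-of-squares (SoS) pipeline for Unique-Games — relax, decorrelate by conditioning, round — but with a conditioning/rounding phase tailored to the low-degree structure of the Johnson scheme. First I would write down the canonical degree-$D$ SoS relaxation of $\Psi$ for $D=\poly(\ell,|\Sigma|,1/\eps)$; since $\Psi$ is $(1-\eps)$-satisfiable, SoS feasibility produces a pseudodistribution $\T$ over assignments $A\colon V\to\Sigma$ with $\T[\val]\geq 1-\eps$. The stated running time $n^{\poly(\ell,|\Sigma|,1/\eps)}$ is precisely the cost of solving this SDP and of the conditioning operations below, each of which only passes to a sub-pseudodistribution and hence cannot increase complexity.

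Turning $\T$ into a genuine assignment is the real issue. On a small-set expander, \cite{BBKSS} do this by conditioning on the labels of a bounded number of randomly chosen vertices to push the \emph{global correlation} of the pseudodistribution below a threshold $\tau$, at which point the SSE property forces the conditional pseudodistribution to factor, on most edges, into well-concentrated single-vertex marginals that can be rounded independently, losing a factor $2^{-O(\sqrt\eps/\alpha)}$ in the value. On $J(n,\ell,\alpha\ell)$ this breaks: the graph has genuine non-expanding small sets — the zoom-ins $\{A : T\subseteq A\}$ with $|T|=O(1)$, whose expansion is only $\approx\alpha|T|$ — so even after global decorrelation the pseudodistribution can retain large correlation supported on low-degree ($\leq r$) Johnson structure, and rounding it naively produces a bound that decays with $\ell$.

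The new ingredient, and the step I would devote the most effort to, is a rounding scheme that first kills this \emph{low-degree} correlation while keeping the pseudodistribution amenable to further rounding. I would interleave the global-correlation-reduction steps with a second kind of conditioning: conditioning on bounded-complexity quantities (SoS proxies for the labels of a bounded set of ground-set elements of $[n]$) that ``explain'' the surviving non-expanding correlated sets — such explaining sets exist and are few by the SoS-certified global-hypercontractivity characterization of the Johnson graph invoked in Section~\ref{sec:abstraction}, the relevant level being $r=\poly(1/\alpha,\log(1/\eps))$ and in particular independent of $\ell$. Each such conditioning is ``low entropy'': it fixes only a bounded number of $\Sigma$-valued quantities, so it can be iterated a bounded number of times and costs only a bounded (in $\eps,\alpha$, not in $\ell$) amount in $\T[\val]$, leaving — after the standard averaging argument — a pseudodistribution of value $\geq 1-O(\eps)$ that carries \emph{no} non-expanding correlated part. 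Such a pseudodistribution ``looks like'' a solution over a small-set expander with parameter $\Theta(\alpha)$, so the \cite{BBKSS} propagation rounding applies and outputs an assignment of value $2^{-O(\sqrt\eps/\alpha)}$ with no $\ell$ in the exponent. The hypothesis $\eps\leq\eps_0$ is used exactly where it is needed for the hypercontractive characterization and the final rounding to be meaningful, i.e.\ so that the ``satisfied-on-average'' structure dominates.

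The main obstacle — and the reason \cite{BBKSS} could not already carry this out — is that there is no a priori guarantee that the ``few explaining coordinates'' structure, or the hypercontractivity certificate itself, survives conditioning: conditioning on one vertex's label (or on one batch of ground-set labels) can create fresh, unintended non-expanding correlations, so it is not clear the decorrelation process ever terminates. Overcoming this requires choosing \emph{what} to condition on, and in \emph{what order}, with care — conditioning on the low-entropy ground-set quantities rather than on arbitrary vertices, and coordinating these with the global-correlation-reduction steps — and then exhibiting a single potential function (a combination of the global correlation and of the ``low-degree mass'' of the pseudo-covariance) that provably decreases at each step and is bounded solely in terms of $\eps$ and $\alpha$. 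Showing that this potential-function argument closes, i.e.\ that one can simultaneously keep the pseudodistribution decorrelated, low-degree-structure-free, and high-value, is the crux.
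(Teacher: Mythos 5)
Your proposal takes a different route from the paper, and the route as described has a genuine gap at exactly the step you yourself flag as ``the crux.'' The paper does \emph{not} try to strip the pseudodistribution of its non-expanding low-degree structure and then reduce to the small-set-expander case globally. Instead it works with two independent samples $X,X'\sim\mu$ and the shift partition $F_s=\{v:X(v)-X'(v)=s\}$, uses the SoS-certified global hypercontractivity theorem to locate a single $r$-restricted subcube $J|_a$ (with $r=O(\sqrt\eps/\alpha)$) and an event $P_a$ of probability $\Omega(\ell^{-r})$ under which some $F_s$ is dense inside $J|_a$, and then rounds \emph{inside that subcube}. The one genuinely new ingredient is Lemma~\ref{lem:correlation}: after a Raghavendra--Tan preprocessing that makes the local variables $Y_{u,v}$ nearly pairwise independent, conditioning on the noticeable event $P_a$ cannot correlate $X$ with $X'$ on more than a small fraction of pairs, so the large shift-partition potential after conditioning really does certify high collision probability and Condition\&Round succeeds on $J|_a$. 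Since $J|_a$ is a $o(1)$ fraction of the graph, the algorithm then randomizes the edges touching it and iterates (Lemma~\ref{lem:subroutine}) to reach an $\Omega(1)$ fraction of all constraints. Your plan has no analogue of the shift partition, no localization to a subcube, and no iteration, because it aims to produce a single globally well-behaved pseudodistribution.

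The concrete gaps in your route are: (i) ``conditioning on SoS proxies for the labels of a bounded set of ground-set elements of $[n]$'' is not a defined operation here --- the UG instance assigns labels to $\ell$-sets, not to elements of $[n]$, and constructing such proxies (a list-decoding/direct-product step) is itself a substantial theorem you do not supply; (ii) the assertion that each such conditioning costs only an $\eps,\alpha$-bounded amount of value and terminates is exactly the statement that needs a proof --- conditioning on an event of small probability can collapse the product structure of $\mu\times\mu$ (e.g.\ force $X'=X$), which is the failure mode that blocked~\cite{BBKSS}, and your proposed potential function (global correlation plus low-degree pseudo-covariance mass) is named but never shown to decrease or to control the quantity the rounding actually needs; (iii) even granting termination, a pseudodistribution with ``no non-expanding correlated part'' does not obviously behave like one over a small-set expander for the purposes of the BBKSS rounding, whose analysis goes through the shift-partition potential of $\mu\times\mu$ rather than through factorization of single-vertex marginals. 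The paper sidesteps all three issues by never trying to remove the dense structure: it conditions on it, and proves (Lemma~\ref{lem:correlation}) that this particular conditioning is harmless once global correlations have been eliminated first.
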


The second result addresses the case of UG instances with arbitrarily small (but bounded away from $0$) completeness, in which case our algorithm satisfies a constant fraction of the constraints:
\begin{theorem}\label{thm:main_johnson}
For all $\alpha\in (0,1)$ and $c>0$, there is $\delta>0$ such that the following holds. 
There exists an algorithm whose running time is $n^{D}$ with $D = \ell^{\poly(|\Sigma|\ell^{1/c})}$\footnote{We note that we have not optimized for $D$ and the $\exp(\ell)$-dependence arises due to the degree of the SoS proofs in Section~\ref{sec:johnson_2}. We used a blackbox statement to convert some of the proofs therein to SoS proofs, and we conjecture that one can in fact improve the SoS degree to $O(\ell)$ when done carefully.} which on input
$\Psi$, an affine UG instance over $J(n,\ell,\alpha \ell)$ promised to be at least $c$-satisfiable, finds an assignment that satisfies 
at least $\delta$-fraction of the constraints in $\Psi$.
\end{theorem}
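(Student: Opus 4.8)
The plan is to run the sum-of-squares conditioning-and-rounding paradigm, but with a conditioning rule dictated by the zoom-in/zoom-out structure of the non-expanding sets of $J(n,\ell,\alpha\ell)$. Given a $c$-satisfiable affine UG instance $\Psi$ over $J(n,\ell,\alpha\ell)$, we solve the degree-$d$ SoS relaxation for a degree $d=d(c,\ell,|\Sigma|)$ to be determined, obtaining a pseudoexpectation $\pE$ over assignments $A\colon\binom{[n]}{\ell}\to\Sigma$ with objective value $\frac{1}{|E|}\sum_{e=(u,v)}\pE\big[\one[A(v)=A(u)+b_e]\big]\ge c$. The target is to round $\pE$ to an honest assignment of value at least $\delta=\delta(\alpha,c)>0$, where crucially $\delta$ must not degrade with $\ell$ or $|\Sigma|$. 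The obstruction to naive rounding (sample each $A(u)$ independently from the pseudo-marginal $\mu_u$) is standard: $\pE$ may carry global correlations, so its restriction to the two endpoints of a typical edge need not be close to a product distribution.

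The core of the argument is an iterative conditioning scheme driving $\pE$ to a ``low-entropy'' state --- equivalently, one of small global correlation --- while maintaining enough structure to round at the end. We keep a growing list of conditioned vertices and, at each step, track a total-correlation/entropy potential such as $\sum_u H(\mu_u)$; if it is small we stop, and otherwise we must condition so as to decrease it. For general constraint graphs one conditions on the label of a random vertex, which works for (certifiable) small-set expanders as in \cite{BBKSS} but can stall here, because the surviving correlation may be concentrated on a non-expanding set and hence invisible to a random conditioning. This is exactly where the SoS certificate of global hypercontractivity for $J(n,\ell,\alpha\ell)$ enters --- in the abstract form of Section~\ref{sec:abstraction}, and concretely via the non-expanding-set characterization for the noisy Johnson graph \cite{KMMS} rendered in SoS as in Section~\ref{sec:johnson_2}: it tells us that the remaining correlation is explained by a bounded-width zoom-in/zoom-out $\{A\colon S\subseteq A,\ A\cap T=\emptyset\}$, and we then condition on the labels of a bounded set of vertices associated with this structured set. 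Averaging ensures such a conditioning can be taken into a branch of objective value still $\ge c$; the technical heart is to simultaneously guarantee that the potential drops and that the hypercontractivity-type inequalities survive, so that the process can be iterated. The number of rounds, together with the degree cost of SoS-ifying \cite{KMMS}, is what produces the stated degree $d=\ell^{\poly(|\Sigma|\ell^{1/c})}$ (and, per the footnote, is surely far from optimal).

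When the loop halts, $\pE$ is low-entropy: its restriction to (the endpoints of) a typical edge is $o(1)$-close to the product of the pseudo-marginals, and the objective is still at least some $\delta'(\alpha,c)>0$. Rounding by independent sampling from the $\mu_u$ (or by mode-rounding) then yields, by an edge-by-edge comparison, an assignment of expected value $\ge\delta(\alpha,c)>0$. The essential point --- and the reason this improves on \cite{BBKSS} --- is that the hypercontractive characterization lets us charge all losses only to the width of the structured sets and to $\alpha$, never to $\ell$ or $|\Sigma|$, so $\delta$ depends on $\alpha,c$ alone. In the complementary $c\to1$ regime the same scheme needs only a bounded number of rounds and supports a cleaner, more efficient analysis --- this is Theorem~\ref{thm:main_johnson_close_to_1}, with the explicit bound $2^{-O(\sqrt{\eps}/\alpha)}$ --- whereas for arbitrarily small constant $c$ we are forced into many more conditioning rounds and the blackbox SoS conversion, hence the larger (though still $n$-independent in the base) running time.

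The main obstacle, as flagged in the introduction and in \cite{BBKSS}, is retaining structural control over $\pE$ across conditionings: a priori conditioning can manufacture new global correlations and can spoil the very hypercontractivity inequalities that the next conditioning step and the final rounding depend on, which is precisely why \cite{BBKSS} could only handle certifiable small-set expanders. Overcoming this is the role of the new rounding scheme --- a way to eliminate global correlations that remains valid under further conditioning. A secondary technical hurdle is the one named in the footnote: \cite{KMMS} is proved analytically, and turning it into a low-degree SoS proof is delicate, so we settle for a blackbox conversion, which is what inflates $d$ to exponential in $\ell$.
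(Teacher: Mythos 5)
Your high-level framing (SoS, global correlation reduction, hypercontractivity-guided conditioning) points in the right direction, but the proposal has a genuine gap at its final step, and it misses the central mechanism of the paper. You claim that once the pseudodistribution has small global correlation and the edge-marginals are near-product, ``rounding by independent sampling from the $\mu_u$ (or by mode-rounding) then yields, by an edge-by-edge comparison, an assignment of expected value $\ge \delta(\alpha,c)$.'' This fails for affine UG: by shift-symmetry the single-vertex marginals can be exactly uniform over $\Sigma$, so independent sampling satisfies each edge with probability $1/|\Sigma|$ --- the trivial bound, and one that degrades with the alphabet. Near-independence of the two endpoints of an edge is compatible with this worst case. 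The paper's rounding must instead detect that the pseudodistribution has high \emph{collision probability} (is concentrated on essentially one assignment up to shift), and this is done via the \emph{shift partition} of two independent copies $X,X'\sim\mu$: the sets $F_s=\{v: X(v)-X'(v)=s\}$ are non-expanding (doubly-satisfied edges stay inside a part), global hypercontractivity forces some $F_s$ to be dense in a subcube $J|_a$, and the rounding procedure conditions on one vertex's label and then samples the rest from conditional marginals. Your proposal never introduces the two-copy structure, so it has no handle on collision probability and no way to make the final rounding work with $\delta$ independent of $|\Sigma|$.

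Two further ingredients you omit are load-bearing. First, the paper's conditioning is not only on vertex labels: after a Raghavendra--Tan step to kill global correlations, it conditions on the \emph{event} that a shift-partition part is dense in $J|_a$ (an event of probability only $\Omega(\ell^{-r})$), and the new technical lemma is that this event-conditioning does not reintroduce local correlations between $X$ and $X'$ --- this is exactly the obstruction that stopped \cite{BBKSS}, and your scheme does not address it. Second, for completeness $c$ bounded away from $1$ one cannot argue that most edges stay inside the shift partition; the paper needs the edge-covering theorem (that the dense subcubes capture almost all internal edges, hence almost all of the $c^2$-fraction of doubly-satisfied edges), and even then the rounding only succeeds inside a single subcube $J|_a$ of measure $o(1)$, so an iteration lemma (randomize edges incident to assigned subcubes and repeat) is required to accumulate a constant fraction of satisfied edges globally. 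Without these pieces the argument does not go through.
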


We note that the soundness guarantee in the theorems above does not depend on $\ell$ when $\alpha = \Omega(1)$, but degrades as $\alpha$ approaches $0$. In Section~\ref{sec:interpretation} below we discuss why one cannot get a polytime algorithm that obtains an $\Omega(1)$-valued solution when $\alpha = o(1)$ (e.g. $1/\sqrt{\ell}$) therefore showing that this dependence on $\alpha$ is necessary.

We can get similar results given any of the globally hypercontractive graphs mentioned earlier. Below
we give a corollary for the Grassmann graph.
We show that there is a polynomial time algorithm solving affine UG over the Grassmann graph, even on 
instances with small completeness:
\begin{theorem}\label{thm:main_grass}
For all $c>0$ there exists $\delta>0$ such that the following holds. There exists an algorithm whose running time is $n^{D}$ with $D = \ell^{\poly(|\Sigma|\ell^{1/c})}$ which
on input $\Psi$, an affine UG instance over 
${\sf Grass}(n,\ell)$ promised to be at least $c$-satisfiable, finds an assignment that satisfies 
at least $\delta$-fraction of the constraints in $\Psi$.
\end{theorem}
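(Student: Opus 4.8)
\medskip
\noindent\textbf{Proof plan for Theorem~\ref{thm:main_grass}.}
The plan is to deduce the theorem from the same algorithmic template used for the noisy Johnson graph in Theorems~\ref{thm:main_johnson_close_to_1} and~\ref{thm:main_johnson}, now instantiated with $G={\sf Grass}(n,\ell)$, and to supply the two graph-specific ingredients the template needs. For nested subspaces $W\subseteq U\subseteq\mathbb{F}_2^n$, call $Z_{W,U}=\{L\in{n\brack\ell}: W\subseteq L\subseteq U\}$ a \emph{zoom}. The first ingredient is the Grassmann structure theorem of~\cite{KhotMS17,DBLP:conf/stoc/DinurKKMS18a,DinurKKMS18,KhotMS18}: any set $S$ with small expansion $\Phi(S)\le 1-\eps$ has density at least $f(\eps)>0$ inside some zoom $Z_{W,U}$ whose ``radius'' $\dim(U)-\dim(W)$ is bounded in terms of $\eps$. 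Equivalently, the Grassmann functions that are \emph{pseudorandom} --- those with negligible correlation with every bounded zoom --- are small-set expanders, which is precisely the statement that ${\sf Grass}(n,\ell)$ is globally hypercontractive. The second ingredient is an SoS proof of this fact of degree $D=\ell^{\poly(|\Sigma|\,\ell^{1/c})}$; given both, the abstraction of Section~\ref{sec:abstraction} produces a $\delta(c)$-valued assignment in time $n^{D}$.

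Concretely, I would fix a number of conditioning rounds $t=t(c)=O(1/c)$, set up the degree-$D$ SoS relaxation of the affine UG instance $\Psi$, and run the conditioning-and-propagate rounding. While the current pseudodistribution has large \emph{global correlation along a zoom} --- i.e.\ there is a bounded zoom $Z_{W,U}$ such that the pseudodistribution restricted to the induced subgraph on $Z_{W,U}$ still has non-negligible covariance between the $\mathbb{Z}_q$-labels of the endpoints of a random edge --- I condition on the label of a single vertex $L_0\in Z_{W,U}$, which strictly decreases a bounded global-correlation potential. Once no such zoom exists the pseudodistribution is globally decorrelated, and, feeding the SoS hypercontractivity certificate into the analysis in place of the Johnson one, independent propagation rounding from a uniformly random vertex (in the sense of~\cite{BBKSS}) satisfies an $\Omega(1)$-fraction of the constraints in its support. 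Each conditioning step decreases the potential by a fixed amount, so the loop halts after at most $t$ rounds, and the resulting soundness $\delta$ depends only on $c$ and on the global-hypercontractivity constant of a \emph{zoom} of ${\sf Grass}(n,\ell)$. The latter is an absolute constant: a zoom $Z_{W,U}$ is itself isomorphic, with its induced adjacency, to a Grassmann graph ${\sf Grass}(\dim(U)-\dim(W),\,\ell-\dim(W))$, so it inherits the same global-hypercontractivity parameters. This self-similarity is exactly why $\delta$ carries no dependence on $\ell$ (or $n$, $|\Sigma|$), unlike the soundness in~\cite{BBKSS}.

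The SoS certificate itself is built as in Section~\ref{sec:johnson_2}: decompose functions on the Grassmann association scheme into their level-$d$ eigencomponents, express the level-$d$ (hypercontractive) inequality for functions pseudorandom with respect to all bounded zooms as a manifest sum of squares over the scheme, and convert the known analytic proof into a low-degree SoS proof via the same black-box conversion invoked there; only $O(t)=O(1/c)$ levels are relevant, which together with the SoS-ification yields the stated exponent.

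The step I expect to be the main obstacle is making conditioning compatible with the SoS certificate for this richer family of non-expanding sets. Unlike in the Johnson graph, where zooms are indexed by a handful of coordinates and ``zooming in'' is combinatorially transparent, here one must argue inside SoS that (i) conditioning a globally decorrelated Grassmann pseudodistribution on a vertex $L_0$ yields a pseudodistribution effectively supported on a mixture of zooms through $L_0$, and (ii) the restriction of the degree-$D$ hypercontractivity certificate to each such zoom is again a valid certificate for the smaller Grassmann graph, with no loss in the constant. Establishing this ``self-similarity of the Grassmann scheme under zooming, inside SoS'' --- so that both the decorrelation potential and the hypercontractivity constant persist through all $t$ conditioning rounds --- is the crux, and is what upgrades the dimension-dependent soundness of~\cite{BBKSS} to one depending only on the completeness $c$.
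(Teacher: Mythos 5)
Your high-level framing (instantiate the Johnson template on ${\sf Grass}(n,\ell)$ with zooms $Z_{W,U}$ as the basic sets, supply an SoS certificate of the Grassmann structure theorem, and inherit the soundness from the template) matches the paper's Section~\ref{sec:other_graphs}. But the rounding loop you describe is not the paper's mechanism, and as described it does not work. You propose: repeatedly condition on the label of a single vertex until no zoom exhibits global correlation, then run propagation rounding. That is the standard Raghavendra--Tan/BRS scheme, and it is exactly what fails on graphs that are not small-set expanders: after decorrelation the pseudodistribution can still be a mixture of several genuinely different good assignments, and propagation rounding from one vertex gives nothing. The paper's engine is different. It takes \emph{two} independent samples $X,X'$ from the pseudodistribution, forms the shift partition $F_s=\{v: X(v)-X'(v)=s\}$, uses the (SoS) structure theorem to find a zoom and an event $P_a(X,X')$ of probability $\Omega(\ell^{-r})$ under which some $F_s$ is dense in that zoom, and then \emph{conditions on $P_a$}. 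The decorrelation step (Lemma~\ref{lem:ragh-tan-appln}) is only a preprocessing step whose purpose is to make Lemma~\ref{lem:correlation} applicable: conditioning on an event of noticeable probability cannot re-correlate too many local pairs $(Y_{u,v},Y'_{u,v})$. That lemma is the paper's main technical contribution, and it is also what makes $\delta$ independent of $\ell$ --- the $\ell^{-r}$ probability of $P_a$ is absorbed into the SoS degree rather than the soundness. Your explanation for the $\ell$-independence (self-similarity of zooms) is not the operative reason and would not by itself rescue the argument.

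A second gap: for the Grassmann graph the only non-trivial regime is small completeness (for $c>1/2$ the spectral-gap algorithm already works, as the paper notes), and there the basic structure theorem is not enough. One needs the \emph{edge-covering} strengthening (the analogue of Theorem~\ref{thm:edge-covering-restated}, proved for Grassmann in Section~\ref{sec:other_graphs}): the maximally dense zooms must capture almost all edges internal to the shift partition, since only a $c^2$-fraction of edges is doubly satisfied and one must guarantee these are the edges being covered. Your proposal never engages with this, nor with the iteration step needed to pass from satisfying constraints inside one zoom (a $o(1)$-fraction of the graph) to satisfying a $\delta$-fraction of all of $\Psi$. Without the shift-partition/conditioning machinery and the edge-covering theorem, the proposal reduces to the $\ell$-dependent, high-completeness guarantee of~\cite{BBKSS}, which is precisely what Theorem~\ref{thm:main_grass} is meant to improve upon.
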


As discussed before, since the spectral gap of the Grassmann graph is $1/2$, UG algorithms over expanders already imply Theorem~\ref{thm:main_grass} for $c \gg 1/2$. Thus, the main contribution of Theorem~\ref{thm:main_grass} is the algorithm on Grassmann graphs that works for arbitrarily small completeness.

Below we state our result for random walks on high dimensional expanders (HDX), a large class of graphs that generalize the Johnson graphs but do not necessarily possess its strong symmetries. These  include graphs stemming from cut-offs of \cite{lubotzky2005explicit}'s construction of Ramanujan complexes, or  \cite{kaufman2018construction}'s construction of coset complex expanders. These graphs exhibit the nice high-dimensional expansion properties (e.g. global hypercontractivity) of the Johnson graphs yet are substantially different in other aspects, such as being of bounded degree.

\begin{theorem}\label{thm:main_hdx}
For all $\alpha \in (0,1)$ and $c>0$, there exists $\delta > 0$ such that the following holds. Let $X$ be any $d$-dimensional two-sided $\gamma$-local-spectral expander with $\gamma \ll o_\ell(1)$ and $d>\ell$. There exists an algorithm whose running time is $n^{D}$ with $D = \ell^{\poly(|\Sigma|\ell^{1/c})}$ which on input $\Psi$, an affine UG instance over the canonical walk $M$ on $X(\ell)$ of depth $\alpha$, promised to be at least $c$-satisfiable, finds an assignment that satisfies 
at least $\delta$-fraction of the constraints in $\Psi$.
\end{theorem}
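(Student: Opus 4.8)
The plan is to reduce Theorem~\ref{thm:main_hdx} to the abstract algorithmic result of Section~\ref{sec:abstraction}, which solves affine UG over any graph $G$ as soon as one is handed a bounded-degree SoS certificate of global hypercontractivity of $G$. Concretely, that abstraction asks for a low-degree SoS proof of the statement: every small vertex set $S$ whose expansion under $G$ is close to $0$ must correlate noticeably with one of the ``obvious'' non-expanding local sets of $G$. For the canonical depth-$\alpha$ walk $M$ on $X(\ell)$, the relevant local sets are the links: for a face $\tau\in X$ with $|\tau|\le r$ (where $r = r(\alpha,\delta)$ is a constant), the set $X_\tau(\ell)$ of all $\ell$-faces containing $\tau$. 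So the bulk of the work is to produce a degree-$D$ SoS proof, with $D = \ell^{\poly(|\Sigma|\ell^{1/c})}$, that every $S\subseteq X(\ell)$ of density at most $\delta$ with $\Phi_M(S)\le 1-\eps$ satisfies $|S\cap X_\tau(\ell)| \ge \delta'\,|X_\tau(\ell)|$ for some such $\tau$, where $\delta',\eps$ depend only on $\alpha,\delta$.

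The non-SoS version of this statement is exactly the content of the global hypercontractive inequality for high-dimensional expanders~\cite{gur2022hypercontractivity,bafna2022hypercontractivity}: for a two-sided $\gamma$-local-spectral expander with $\gamma$ small enough as a function of $\ell$ (this is where the hypothesis $\gamma\ll o_\ell(1)$ is used), the walk $M$ has essentially the same spectral structure as the noisy Johnson walk $J(n,\ell,\alpha\ell)$, and functions on $X(\ell)$ that are far from all link indicators have exponentially small $q\to 2$ norms. Plugging $f=\one_S$ into the level-$d$/hypercontractive inequality shows that a small non-expanding $S$ must have density bounded below inside some bounded-dimensional link, just as in the Johnson case~\cite{KMMS}. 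Two-sided expansion is also what lets us certify, via PSD-ness of $I-M$, the relevant spectral bounds and the approximate decomposition of $M$ into the ``Johnson-like'' levels.

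First I would set up the link decomposition of $L^2(X(\ell))$ and record the eigenvalue windows of $M$ on each level in terms of $\alpha$ and $\gamma$, then re-derive the global hypercontractive inequality for $M$ following~\cite{gur2022hypercontractivity,bafna2022hypercontractivity} while tracking degrees. Next I would convert each inequality in that derivation into an SoS inequality over the pseudodistribution: the spectral steps are immediate from PSD-ness, and the purely combinatorial steps (counting faces in links, $\mathrm{tr}(M^k)$-type computations) are turned into SoS proofs via the blackbox symmetrization/decoupling lemma used in Section~\ref{sec:johnson_2}. Finally I would invoke the abstraction of Section~\ref{sec:abstraction} with this certificate to obtain the algorithm, the running time $n^{D}$, and the constant $\delta$.

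The main obstacle I expect is controlling the SoS degree of the structural statement. Unlike the Johnson graph, a general HDX has no vertex-transitive symmetry group, so one cannot appeal to the representation theory of $S_n$ to get constant-degree certificates; the combinatorial proof of hypercontractivity on $X(\ell)$ runs through trace-power arguments and local-to-global counting whose SoS-ification is lossy, forcing the trace power (hence $D$) to grow with $\ell$ — which is precisely why the theorem claims running time $n^{\ell^{\poly(\cdot)}}$ rather than $n^{\poly(\ell)}$, and matches the situation for Theorem~\ref{thm:main_johnson} proved via~\cite{BBKSS}-style techniques. A secondary point needing care is that $\gamma$ must be taken small enough (polynomially in $1/\ell$ and in the target parameters) so that the $\gamma$-dependent error terms in the spectral decomposition of $M$ do not swamp the hypercontractivity gain; this quantitative requirement is what $\gamma\ll o_\ell(1)$ encodes.
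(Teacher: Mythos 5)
Your overall route is the paper's: Theorem~\ref{thm:main_hdx} is proved by feeding an SoS certificate of global hypercontractivity for the canonical walk on $X(\ell)$ --- obtained by SoS-ifying the HDX analogue of \cite{KMMS} from \cite{bafna2022hypercontractivity,gur2022hypercontractivity}, whose proof mirrors the level-decomposition/fourth-moment argument certified for the Johnson graph in Theorem~\ref{thm:structure-johnson} --- into the abstraction of Section~\ref{sec:abstraction}, with the correlation-elimination, conditioning and rounding components being graph-independent and the iteration component adapted as in \cite{DBLP:conf/soda/BafnaHKL22}. To that extent the proposal matches the paper.

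Two points, one substantive. First, the certificate you describe producing --- ``every $S$ with $\Phi_M(S)\le 1-\eps$ is $\delta'$-dense in some bounded-dimensional link'' --- is only the \emph{basic} form of global hypercontractivity, and it suffices only when the completeness is close to $1$. Theorem~\ref{thm:main_hdx} is stated for arbitrary $c>0$; for small $c$ the edges internal to the shift partition are only a $c^2$-fraction of all edges, and knowing that some part is dense in some link no longer guarantees that the link contains any edges satisfied by both $X$ and $X'$. What the low-completeness analysis actually consumes is the edge-covering strengthening (the analogue of Theorem~\ref{thm:edge-covering-restated}): the maximally-dense links capture almost all edges staying inside the shift partition. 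This does follow in a black-box way from the basic inequality, as the paper carries out for the Johnson and Grassmann graphs, but it is a distinct step your plan omits. Second, your diagnosis of the degree blowup is off: the $\ell^{\poly(\cdot)}$ SoS degree is not caused by the HDX lacking the symmetries of the Johnson graph --- the structure theorem itself is certified in degree $O(1)$ in both settings --- but by the blackbox conversion (Theorem~\ref{thm:blackbox-sos}) used to SoS-ify the multivariate indicator inequalities in the edge-covering argument, which already forces the same degree for the fully symmetric Johnson graph in Theorem~\ref{thm:main_johnson}.
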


Since we have not defined any of the HDX terminology, let us note that this is indeed a generalization of Theorem~\ref{thm:main_johnson}. The Johnson graph corresponds to the complete complex $X$ (which is the simplest instantiation of a two-sided local spectral expander), and the $\alpha$-noisy Johnson graph $J(n,\ell,\alpha\ell)$ corresponds to a ``canonical'' random-walk on $X(\ell)$ that goes down $\alpha\ell$-levels and comes back up randomly to $X(\ell)$ while ensuring that it changes exactly $\alpha\ell$ elements in a vertex. In fact, in the above theorem we can allow $M$ to be any complete random walk on $X(\ell)$ and our soundness guarantee will only depend on $c$ and certain parameters of $M$ that are inherently independent of $\ell$ \footnote{Concretely it depends on the stripped threshold rank of $M$ above a certain threshold as defined in~\cite{DBLP:conf/soda/BafnaHKL22}. For example, when $M$ is the canonical random walk with depth $\alpha$ on $X(\ell)$, and the completeness is $c=1-\eps$, this quantity is $r(M) = O(\sqrt{\eps}/\alpha)$ and our soundness guarantee is $\exp(-r)$, matching that of Theorem~\ref{thm:main_johnson_close_to_1}.}.

For the sake 
of simplicity of presentation, in the first part of the paper we focus
on the Johnson graph. In Section~\ref{sec:other_graphs} we provide an abstraction of our techniques and discuss the algorithm for general globally hypercontractive graphs, in particular for the Grassmann graph and random-walks over HDXs.


\subsubsection*{An Interpretation of Our Results}\label{sec:interpretation}
Our results suggest that the hardness in 
the instances of UG obtained via the reduction  of~\cite{KhotMS17,DBLP:conf/stoc/DinurKKMS18a,DinurKKMS18,KhotMS18} does not come from
the Grassmann graph (which is globally 
hypercontractive), but rather from the smooth parallel repetition step. Recall that this step uses a Johnson graph with a large intersection
parameter ($J(n,\ell,\alpha\ell)$ with $\alpha \approx 0$), that is not globally-hypercontractive. Therefore combining the knowledge from the reduction and our algorithm we get that the $\alpha$-noisy-Johnson graphs are hard for UG when $\alpha = o(1)$ and become easy when $\alpha$ is bounded away from $0$, thus also explaining why our soundness guarantee must necessarily decay with $\alpha$ (under $P \neq NP$). Indeed, we would be able to make such an assertion provided that our
results held for general UG (as opposed to only affine UG) or if the reduction above produced  instances of Affine UG. Though
we believe an algorithm for general UG should exist along the lines of our algorithm, we 
do not know how to prove so and leave this is 
an interesting direction to investigate. 

Albeit, ignoring the subtlety between general and affine UG, this means that any future progress on UGC will have to use graphs that are \emph{not} 
globally hypercontractive, possibly again 
via the technique of \emph{smooth} parallel repetition. 

\subsection{Techniques: New rounding scheme for higher degree SoS}
Our algorithms are obtained via a novel rounding scheme and analysis for the standard higher degree Sum-of-Squares SDP relaxation for Unique Games. Raghavendra's~\cite{Raghavendra08} groundbreaking result showing the optimality of the basic SDP for all CSPs under the UGC, led to efforts to refute the UGC using higher degree SoS relaxations~\cite{Lasserre00,Parrilo00}. The study of SoS algorithms has since produced numerous algorithmic advances across many fronts: high-dimensional robust statistics~\cite{BarakKS14,ma2016polynomial, BakshiDHKKK20, BakshiDJKKV22}, quantum computation~\cite{BKS17} and algorithms for semi-random models~\cite{BuhaiKS22}, to name a few. Most of these works use the sum-of-squares method for \emph{average-case} problems though and unfortunately there remains a dearth of techniques for analysing higher degree SoS relaxations for \emph{worst-case} optimization problems. The handful of techniques known for worst-case rounding are the \emph{global correlation rounding} technique from~\cite{BarakRS11,RT12} and its generalization via reweightings in~\cite{BKS17}.

In this context, very recently~\cite{BBKSS} proposed a new technique for rounding relaxations of UG that have ``low-entropy'' measured via a function called the shift-partition size. Given two fixed assignments for the instance, their shift-partition size is roughly defined as the fraction of variables on which these assignments agree (upto symmetry).
Taking the equivalent view of the SDP solution as a distribution $\cD$ over non-integral solutions, called a pseudodistribution, the expected shift-partition size of two random assignments drawn from $\cD$ is then roughly equal to an average of local collision probabilities under $\cD$ and thus a proxy for the \emph{entropy} of $\cD$. Their analysis proceeds by showing: (1) when the expected shift-partition size (equivalently collision probability) is large, one can round to a high-valued solution, and moreover (2) when the graph is a certifiable small-set expander, the pseudodistribution always has large shift-partition size! They were not able to extend this idea to get high-valued solutions for the broader class of globally hypercontractive graphs though, since in this case the pseudodistribution might be supported over multiple assignments and therefore does not have high collision probability. It turns out though that even in this harder case, the pseudodistribution $\cD \times \cD$ has large expected shift-partition size after \emph{conditioning} on an event $E$. But they could not exploit this property since after conditioning the shift-partition could be large for trivial reasons\footnote{In the worst case, the event $E$ could collapse the product distribution over two random assignments to set the second random assignment to be always equal to the first one. In this case a pair of assignments drawn from $\cD \times \cD ~|~ E$ being equal does not say anything about the collision probability of $\cD$.} and therefore is no longer a good proxy for the collision probability/entropy of the distribution $\cD$.

Our main technical contribution is to strengthen and broadly extend this framework of rounding low-entropy pseudodistributions. We show that after a suitable preprocessing step on the pseudodistribution, one can in fact condition on any event $E$ (with not too small probability) while preserving most of the desired local independence properties of the distribution. Thus, even after conditioning on $E$, the expected shift-partition size of $\cD\times\cD ~|~ E$ being large signifies that the pseudodistribution $\cD$ has high collision probability. One can then use a simple rounding procedure to obtain a high-valued UG solution. Conditioning pseudodistributions is one of the few ways we know of harnessing the power of higher-degree pseudodistributions, hence we believe that the idea of gaining structural control over the distribution after conditioning may be applicable in the analysis of other SoS algorithms too.

At a high level, the algorithm of~\cite{BBKSS} can be viewed as showing one how to round a pseudodistribution when it is essentially supported over \emph{one} assignment (upto symmetry of the solution space). In general though, our problem might not be as structured and the resulting pseudodistribution for it might not be low-entropy in this restricted sense\footnote{For example, it could be supported over $O(1)$ solutions, in which case the collision probability would be high only after conditioning on an appropriate event.}. We expand this notion of ``low-entropy'' thus allowing a broader set of pseudodistributions, intuitively those with a ``few good solutions'', to fall into this class and therefore become amenable to rounding. We expect that with this strengthening, this framework  should be broadly applicable for algorithm design for other optimization problems. Below is a detailed overview of our techniques, starting out with the framework of~\cite{BBKSS}.

\subsubsection{The Approach of~\cite{BBKSS}: Rounding analysis via the Shift Partition}
Fix an Affine Unique-Games instance $\Psi = (G=(V,E), \mathbb{F}_q, \Phi)$. 
In the SoS relaxation of the Unique-Games problem we have a collection 
of variables $X_{v,\sigma}$, one for pair of vertex $v\in V$ and label 
to it $\sigma\in \Sigma$. The output of the program is a pseudoexpectation 
operator $\widetilde{\E}$, which assigns to each monomial involving 
at most $d$ of the variables a real-number, under which:
\begin{enumerate}
    \item The value is high:
    \[
    \widetilde{\E}\left[\sum\limits_{(u,v)\in E}\sum\limits_{\sigma\in \Sigma} X_{v,\sigma} X_{u,\phi_{u,v}(\sigma)}\right]\geq c\cdot|E|.
    \]
    \item $\widetilde{\E}$ is a linear, positive semi-definite operator (when viewed as a matrix over $\mathbb{R}^{M\times M}$ where $M$ is the set of monomials of degree at most $d/2$) satisfying various 
    Booleanity constraints on $X_{u,\sigma}$.
    \item Scaling: $\widetilde{\E}[1] = 1$.
\end{enumerate}
Morally, the pseudoexpectation 
$\widetilde{\E}$ should be thought of 
in the following way: 
there is an unknown distribution $\mathcal{D}$ 
over assignments $A_1,\ldots,A_m$ that each have value 
at least $c$. For the 
assignment $A_i$ we think of Boolean valued assignment to 
the variables $X_{u,\sigma}$ that assigns to a variable $1$
if and only if $A_i(u) = \sigma$, and associate with it 
the expectation operator $\E_i$ which maps monomials to 
Boolean values in the natural way according to $A_i$. The operator 
$\widetilde{\E}$ then is the average of the operators $\E_i$ 
according to $i\sim \mathcal{D}$. 
\footnote{Formally speaking,
when given $\widetilde{\E}$ we are not guaranteed that there 
exists an actual distribution $\mathcal{D}$ over good assignments 
as above, however this intuition will be good enough 
for the sake of this informal presentation.}

\paragraph{Shift-partition:}
Given $\widetilde{\E}$, one can construct a different pseudoexpectation operator that allows access 
to moments of two assignments 
$X = A_i, X' = A_j$ where $i,j\sim \mathcal{D}$ are 
chosen independently. 
In expectation, we get that at least $c^2$ fraction of the edges
get satisfies by both $X$ and $X'$; the algorithm attempts 
to satisfy these edges. Towards this end, given two fixed assignments $X$ and $X'$ we define the shift-partition of the vertices of $V$: $V = \cup_{s\in\mathbb{F}_q} F_s$ where for each $s\in\mathbb{F}_q$ we define
\[
F_s(X,X') = \left\{v\in V~|~X(v) - X'(v) = s\right\}.
\]
The shift-partition size is then defined as: 
\[\pE_{X,X' \sim \cD}\left[\sum_{s \in \Sigma} \left(\frac{|F_s(X,X')|}{|V(G)|}\right)^2\right].\]  
After rearranging, we get that when $X$ and $X'$ are independent, this expression is an average of some local collision probabilities (precisely $\E_{u,v}[CP(X_u - X_v)]$), and hence the shift-partition size being large in expectation turns out to be useful for rounding.

On the other hand, observe that if an edge $(u,v)\in E$ is satisfied by both $X$ and 
$X'$, then $X(u) - X(v) = X'(u) - X'(v)$ and rearranging we 
conclude that $u$ and $v$ are in the same part $F_s$ 
of the shift partition. We therefore conclude that in expectation over $X,X' \sim \cD$ at least $c^2$ fraction of the edges of $G$ stay
inside the same part of the shift partition, implying that the expansion of the shift-partition is small. 

\paragraph{Small-set expanders.} If the graph 
$G$ is a small-set expander, then the above implies that at
least one of the sets $F_s$/the shift-partition size is large and the following 
rounding procedure works in such cases:
\begin{enumerate}
    \item Sample a vertex $v\in V$ and choose $A(v) = \sigma$ according to the distribution $p(\sigma) = \widetilde{\E}[X_{v,\sigma}]$.
    \item For any $u\in V$, sample $A(u)$ according to the distribution $p(a)=\frac{\widetilde{\E}[X_{u,a}X_{v,\sigma}]}
    {\widetilde{\E}[X_{v,\sigma}]}$.
\end{enumerate}
To get an understanding to why this rounding scheme works, think of $X$ as fixed and $X'$ as random. Thus, the fact that part $s$ of
the shift partition is large implies 
that $X' = X+s$ 
on a constant fraction of the vertices.
Therefore, once we sampled the assignment to $v$ in the first part 
of the algorithm, the value of $s$ is determined. In the second step we are sampling the assignment to other 
nodes conditioned on the value of $v$. However, there is one
value for $u$ which is much more likely than others -- namely 
$X(u)+s$, and so we can expect that $X'(u) = X(u) + s$ for a constant fraction of the vertices $u$. In particular, for any edge $(u,w)$ inside $F_s$ that is satisfied by $X$, we will have that the assignments sampled for $u$ and $w$ are $X(u)+s$ and $X(w) + s$ respectively with constant probability, in which case we manage to satisfy $(u,w)$. To analyse this rounding strategy formally,~\cite{BBKSS} crucially use the independence of $X$ and $X'$. 

In essence, the above asserts that the shift-partition being large implies that the solution space of $X$ must have high collision probability, which can then be used for rounding. By that, we mean that our distribution essentially consists of only one assignment (upto shift-symmetry) and its perturbations.

\paragraph{Non small-set expanders.} Consider a graph which is 
not a small set expander, say that $G$ is the Johnson graph $J(n,\ell,t=\ell/2)$. In that case the above reasoning no longer 
works as $F_s$ may indeed be all small sets. However, as explained 
earlier, using global hypercontractivity we can infer that one of the sets $F_s$ must posses a 
certain structure -- it must have large density inside one of the canonical non-expanding sets.
In the case of the Johnson graph specifically, these canonical sets take the following form:
\[
H_R = \left\{A\in\binom{n}{\ell}~\Big|~A\supseteq R\right\},
\]
for $R \subset [n]$ and $|R| = r = O(1)$. In fact, global hypercontractivity gives the following stronger structural property: the set $H = \bigcup_{R\in\mathcal{R}} H_R$ where $\mathcal{R}$ 
consists of all $R$'s inside which some part $F_s$ is dense, 
has a constant measure. Doing simple accounting, it follows that $|\mathcal{R}|\geq \Omega(n^r/\ell^r)$
and as there are at most $\binom{n}{r}$ different canonical sets it follows that $|\mathcal{R}|$ contains an
$\Omega(1/\ell^r)$ fraction of these sets.

For each choice of $X$ and $X'$ though 
we may have a different collection of dense subcubes $\mathcal{R}$. But since $\mathcal{R}$ contains an $\Omega(1/\ell^r)$ fraction of all the subcubes, we get that there must be at least one subcube $H_R$ that is dense with probability $\Omega(1/\ell^r)$ over $X,X' \sim \cD$. Let $H_R$ be such a subcube and $E_R(X,X')$ be the event that $H_R$ is dense. Ideally, at this point one would like to condition on $E_R$ so that one of the parts inside the shift partition $F_s$ becomes large inside $H_R$, and then hope that as was the case for small-set expanders, we can satisfy many of the edges inside $F_s\cap H_R$. 


Unfortunately, this hope does not materialize -- after conditioning on $E_R$ even though the shift-partition is large, the rounding strategy above may break. Indeed, 
for the rounding procedure we wanted the values of $X(u)$ and $X'(u)$ for $X,X'\sim\mathcal{D}$ to be independent 
for every vertex $u$. However, after conditioning the joint distribution $\cD \times \cD ~|~E_R$ over $(X,X')$ might have correlations between $X$ and $X'$. In particular this distribution could even be supported on pairs $(X,X')$ that are always equal to each other, in which case the shift-partition is large because of trivial reasons and therefore its large size doesn't imply anything about the collision probability/entropy of $\cD$. 

Hence in~\cite{BBKSS} the authors don't manage to do this conditioning, and instead settle 
for satisfying an $\Omega\left(\frac{1}{\ell^{2r}}\right)$-fraction of the constraints on $H_R$. After that they iterate this algorithm many times to satisfy an $\Omega\left(\frac{1}{\ell^{2r}}\right)$-fraction of the constraints of the whole graph. 

\subsubsection{Our Approach: Conditioning on the Event $E$ via
(Eliminating) Global Correlations}
Our main contribution to the 
above framework is to show 
that by adding an additional preprocessing step, we can ensure that even after conditioning on the event $E_R$ 
above, the assignments $X$ and 
$X'$ will remain highly 
independent. In particular, 
the fact that some part in
the shift partition becomes
large must happen -- just like
in the case of small-set expanders -- due to the fact
that our distribution has high collision probability.

As the event $E = E_R(X,X')$ has 
probability at least $\Omega\left(\frac{1}{\ell^r}\right)$, 
 if we are sufficiently high up in the SoS hierarchy 
($\Theta(\ell^r)$ levels will do, for an overall running time of 
$n^{\Theta(\ell^r)}$), we do have access to the conditional pseudoexpectation
\[
\pE[Y~|~E] = \frac{\pE[Y 1_E]}{\pE[1_E]}.
\]
This means that we can sample labels of vertices 
conditioned on the event $E$. To make this useful though, we must change the 
rounding procedure. 
To get some intuition consider 
the extreme case in which after conditioning on $E(X,X')$ 
there are huge correlations between $X$
and $X'$ that remain in our distribution. 

Namely, suppose that after conditioning on $E$ it holds that $X(u) = X'(u)$ for almost all vertices $u$. 
In that case, if we sampled 
$X,X'$ from $\cD \times \cD$ (not conditioned on $E$), we would get that with probability 
at least $\Pr[E]\geq \Omega\left(\frac{1}{\ell^r}\right)$
the event $E$ holds, in which case $X$ and $X'$ agree on almost all vertices. This means that if $\cD$ was an actual distribution the assignments have a large global correlation: fix $X' = X_0$ for $X_0$ that satisfies $\Pr_\cD[E(X,X_0) = 1] \geq \Pr_{\cD \times \cD}[E]$. Once $E$ holds, 
we have that $X(u) - X(v) = X_0(u) - X_0(v)$ for almost all pairs of vertices, hence the values of the assignment $X$ to the vertices $u$ and $v$ is correlated across $\cD$.
Therefore, a natural idea is to avoid this issue by transforming $\cD$ to another distribution lacking global correlations, in the sense that the assignments to a typical pair of vertices $u$ and $v$ are almost independent.

For this purpose we use an idea from~\cite{RT12}, which adapted to our setting says 
that for any $\tau>0$ there is $d = d(\tau,|\Sigma|)$ such that conditioning $\pE$ on the values of $d$ randomly chosen vertices ensures that the global correlation is 
at most $\tau$. That is, the values of $X(u)$ and $X(v)$
for two typical vertices $u$ and $v$ are at most $\tau$-correlated, and the same holds for $X'$. 
In Lemma~\ref{lem:correlation} we then show that if we start with such a pseudodistribution that lacks global correlations, then one can condition on the event $E$ and retain near independence between the assignments $X$ and $X'$, at least on most vertices. To be more precise, we show that for $Y_{u,v} = (X(u), X(v))$ and $Y_{u,v}' = (X'(u), X'(v))$, the statistical distance between $Y_{u,v}, Y'_{u,v}~|~E$ 
and $Y_{u,v}, Y'_{u,v}$ is small for almost all pairs of vertices $u,v$.\footnote{To make our rounding succeed we need to use a more complicated version of $Y_{u,v}$ (see Definition~\ref{defn:local-dist}). Proving Lemma~\ref{lem:correlation} then turns out to be  technically challenging since these $Y_{u,v}$'s are not indicator variables, only approximately so and additionally we need to make sure that our proof works for pseudodistributions instead of just for actual distributions.}

Using this idea we are able to get an $\Omega(1)$-valued solution on some basic set $H_R$. To summarize, we first preprocess the pseudodistribution to eliminate global correlations. We can then find an event $E(X,X')$, corresponding to the 
fact that some part $F_s$ in the shift partition has becomes dense in some basic set $H_R$. Furthermore, conditioning on $E$ most pairs $(X(u),X(v)), (X'(u),X'(v))$ remain almost-independent. Then running a simple rounding procedure on $H_R$ (as in~\cite{BBKSS}), we are able to satisfy a good fraction of the edges inside $H_R$. $H_R$ might be a $o(1)$ fraction of the graph though, therefore like~\cite{BBKSS} we repeat this procedure multiple times to get an $\Omega(1)$-valued solution for the whole graph. This gives an efficient algorithm for affine UG over the Johnson graphs as in Theorem~\ref{thm:main_johnson_close_to_1}.

To prove Theorem~\ref{thm:main_johnson}
(namely, the regime where $c$ is not close to $1$) more work is needed. Indeed, in the case that $c$ is 
close to 
$1$ we are able to conclude that essentially all edges 
stay within some part $F_s$ of the shift partition. Thus, 
as long as our sets $H_R$ cover a constant fraction of 
the edges that stay within some $F_s$, they are automatically guaranteed to cover a constant fraction of the edges that 
are satisfied by both $X$ and $X'$, and these are the edges our rounding procedure manages to satisfy. If $c$ is just 
bounded away from $0$ we can no longer make such an argument, 
and it is no longer even clear that the sets 
$H_R$ cover some edges that we have a hope of satisfying.

\subsubsection{Getting Small Completeness: Capturing all of the Non-expanding Edges}
To design our algorithm for the case when the completeness $c$ is just 
guaranteed to be bounded away from $0$ we must first argue that in the shift partition, we are able to capture almost all 
of the edges that stay within a part $F_s$  
using the basic sets $H_R$ (so as to ensure we are including 
the edges that $X$ and $X'$ both satisfy).

Towards this end we require a more
refined corollary of global hypercontractivity, asserting 
that if we have a small set of vertices $F$ in the Johnson 
graph that has edge expansion at most $1-\eta$, then we 
can find a collection $\mathcal{R}$ of basic sets such that:
\begin{enumerate}
    \item \textbf{Bounded and dense:} 
    each $R\in \mathcal{R}$ has size $|R| = O(1)$ 
    and $F$ is dense inside each $H_R$. That is, 
    $\delta(F\cap H_R)\geq \Omega_{\eta}(\delta(H_R))$ for 
    each $R\in\mathcal{R}$.
    \item \textbf{Maximally dense:} 
    For all $R\in\mathcal{R}$ and all $R'\subsetneq R$, 
    $F$ is not very dense in $H_{R'}$.
    \item \textbf{Capture almost all non-expanding edges:} 
    Almost all the edges that stay inside $F$ also stay inside 
    $H_R$ for some $R\in\mathcal{R}$.
\end{enumerate}
Indeed, we show that a global hypercontractive inequality such as the one 
in~\cite{KMMS} can be used to prove such a result (in a black-box manner).

Using this result, we are able to argue that that the edges that 
stay inside the subcubes $H_R$ for $R\in\mathcal{R}$  
cover most of the edges that stay within the same part in the shift
partition. There are several subtleties here that one has to deal with, for example, ``regularity issues'' such as, how many different $R$'s cover 
a given edge. The goal of the second item above is to handle such concerns, and it roughly says that no vertex nor 
edge gets over-counted by a lot. After that, we are able to 
condition on an event $E$, where as before $E$ indicates that some part 
$F_s$ becomes dense inside some basic set $H_R$, so that the resulting distribution has a large shift-partition inside $H_R$.
At this point, we are (morally) 
back to the problem of rounding the SoS solution 
on a set with a large shift-partition, except that now our solution 
has value $c'>0$ (as opposed to close to $1$). We remark that again, we use the ``elimination of global correlations'' 
idea presented earlier to retain near independence after conditioning. With more care, we use a similar analysis to the one presented for completeness close to $1$ to finish the proof when $c$ is arbitrarily small.

\subsection{Open Problems}
We end this introductory section by stating a few open
directions that are of interest for future research.
The first problem asks whether our results continue to 
hold for non-affine unique games:
\begin{problem}
For globally hypercontractive graphs $G$ 
such as the Johnson graph 
(with small intersection size) and the Grassmann graph, is there
a polynomial time algorithm that given a UG instance 
$\Psi$ over $G$ with ${\sf val}(\Psi)\geq 1-\eps$ (where $\eps>0$ is thought of as small), finds 
an assignment satisfying at least $\delta$ fraction of the
constraints in $\Psi$? How about the case that ${\sf val}(\Psi)\geq c$, where $c$ is bounded away from $1$?
\end{problem}

The second problem asks whether there are other combinatorial optimization problems for which our techniques may yield improved algorithms. Informally, we show how to round pseudodistributions with low-entropy, including for instance those that are supported over $O(1)$ assignments (and their perturbations). We do so by proving that after the
elimination of global correlations one can retain local independence properties even after conditioning. We believe that this technique should be useful outside the context of UG -- given any problem for which one can prove (in SoS) that there are only a ``few good solutions'', one can apply our rounding technique to obtain one such solution.

\begin{problem}
Can one use the low-entropy rounding framework to get improved run-time for other combinatorial optimization problems, such as coloring $3$-colorable graphs using as few colors as possible?
\end{problem}

Third, it would be interesting to see whether our techniques 
can help in designing improved sub-exponential time algorithms
for combinatorial optimization problems such as Max-Cut over general graphs (ideally) or over special classes of graphs.
\begin{problem}
Can one use the low-entropy rounding framework to get improved approximation algorithm for Max-Cut that runs in time $2^{n^{\eps}}$? 
\end{problem}

\section{Preliminaries}\label{sec:prelims-sos} 

\paragraph{Notations.} 
For a (weighted) graph $G=(V,E)$, we denote by $(u,v) \sim E$ the distribution over edges proportional to their weight, and by $A_G$ the transition matrix related to the random walk on $G$.
We define the Laplacian of $G$ by $L_G = I - A_G$, and denote the stationary distribution of $G$ by $\pi_G$. In this paper we will mostly be working with regular graphs and therefore the uniform distribution over $G$. We use $\ip{f,g}_G$ to denote $\E_{u \sim \pi_G}[f(u)g(u)]$. We often drop the subscript of $G$ when it is clear from context. 

If $A$ is some probabilistic event or condition, we use $\Ind(A)$ to denote the indicator random variable of $A$ (i.e., $\Ind(A)=1$ if $A$ occurs and $\Ind(A)=0$ otherwise).





\subsection{The Sum of Squares Relaxation of Unique-Games}
Our algorithm is based on the \sos semidefinite programming (SDP) relaxation, and in particular its view as optimizing over \emph{pseudo expectation} operators. In this section, we 
briefly present the necessary background as well as several 
basic facts that we will use, and we refer the reader to the
surveys \cite{BarakS14,RaghavendraSS18,TCS-086} for a more 
systematic presentation.

Given a Unique-Games instance $I= (G=(V,E),\Sigma,\Pi)$
the value of $I$ can be computed by the following integer program over zero-one variables $\{ X_{u,a} \}_{u \in V, a \in \Sigma}$:
\begin{align}
\max_{X}& \E_{(u,v) \in E} \sum_{a \in \Sigma} X_{u,a} X_{v,\pi_{uv}(a)} \label{eq:ip}\\
s.t. 
&\quad X_{u,a}^2 = X_{u,a} ~\qquad\quad \forall u \in V, a \in \Sigma \nonumber,\\
&\quad X_{u,a}X_{u,b} = 0 \qquad\quad \forall u \in V, a \neq b \in \Sigma \nonumber,\\
&\quad \sum_{a} X_{u,a} = 1 ~\qquad\quad \forall u \in V\nonumber.
\end{align}

Indeed, an assignment to the variables $X_{i,a}$ represent 
an assignment to the vertices of $G$, in which a vertex $v \in V$ takes label $a \in \Sigma$ if $X_{v,a} = 1$, and the objective function counts the fraction of constraints 
that are satisfied satisfied.

The degree $D$ \sos relaxation of the above program is a convex 
optimization program that relaxes the above program. This 
program can be written as an optimization problem in which 
the goal is to find a vector-valued assignment to the variables $X_{S,\sigma}$, 
where we have such variable for each set of vertices $S$ of size at most $D$ and labels for them $\sigma\in \Sigma^S$. 
The constraints of the program address the inner products between these vectors, attempting to ensure that for any two sets 
of vertices $S$ and $T$ such that $|S\cup T|\leq D$, 
the inner products 
$\langle X_{S,\sigma}, X_{T,\sigma'}\rangle$ represent 
a distribution over assignments to $S\cup T$, and that 
these distributions are locally consistent. 
It will be more convenient for us to take a different 
but equivalent view on the solutions to the \sos relaxation,
in the language of pseudoexpectations as presented next.

\subsection{Pseudoexpectations, Pseudodistributions and Pseudoprobabilities}
\paragraph{Pseudoexpectations.}
A degree $D$ pseudoexpectation $\pE$ is an operator $\pE:X^{\le D} \to \R$, where $X^{\le D}$ is the set of all monomials in the $X$ variables up to degree $D$, and $\pE$ satisfies the above equality constraints and the Booleanity constraints $\{X_{u,a}^2 = X_{u,a}\}$ as axioms.  For brevity, we will refer to this set of axioms as $\A_{I}$, dropping the subscript when $I$ is clear from context. More generally, given a polynomial optimization program $P = \{\max_{x} p(x) \ s.t. ~ q_i(x) = 0, \forall i \in [m] \}$, the degree-$D$ sum-of-squares semidefinite programming relaxation of $P$ is a semidefinite program of size $n^{O(D)}$ that returns a {\em pseudoexpectation operator} $\pE: x^{\le D} \to \R$. This operator can be uniquely extended to give a pseudoexpectation operator on the set of all polynomials of degree at most $D$ by linearity (defined precisely below).
This operator satisfies four properties:
\begin{itemize}
\item Scaling: $\pE[1] = 1$.
\item Linearity: $\pE[a \cdot f(x) + b \cdot g(x)] = a\cdot \pE[f(x)] + b \cdot \pE[g(x)]$, for all $a,b \in \R$ and all degree $\leq D$ polynomials $f,g$. 
\item Non-negativity of low-degree squares: $\pE[s(x)^2] \ge 0$ for all polynomials $s(x)$ with $\deg(s) \le \tfrac{D}{2}$.
\item Program constraints: $\pE[f(x) \cdot q_i(x)] = 0$ for all $i \in [m]$ and polynomials $f(x)$ such that $\deg(fq_i) \le D$.
\end{itemize}
Depending on the problem we are trying to solve, we also 
discuss the value achieves by the pseudoexpectation $\pE$, 
which is defined to be the pseudoexpectation of the objective 
function. In our case of interest, namely the case of Affine 
Unique-Games, the \emph{value} of $\pE$ on the instance $I$ is denoted by $\val_\mu(I)$ and is defined to be 
\[
\pE[\val_{I}(X)] = \pE \left[ \E_{(u,v) \in E} \sum_{a \in \Sigma} X_{u,a} X_{v,\pi_{uv}(a)}\right].
\]
Our pseudoexpectation will be guaranteed to achieve a value which matches the completeness guarantee (for example, in 
the context of Theorems~\ref{thm:main_johnson},~\ref{thm:main_grass} it will satisfy that $\pE[\val_{I}(X)]\geq c$).

\paragraph{Pseudodistribution.} 
We often refer to a pseudoexpectation operator $\pE$ as an operator corresponding to a pseudodistribution $\mu$ over variables $X$. This is analogous to the case where we have an actual distribution $\mu$ and its corresponding expectation operator $\E_\mu[\cdot]$. This notation makes our analysis using $\pE$ operators more intuitive, since many properties that are true of actual distributions also hold for pseudodistributions. Hence when we say that we are given a degree $D$ pseudodistribution $\mu$ we are referring to the degree $D$ pseudoexpectation operator $\pE_\mu$.


\paragraph{Pseudoprobabilities.}
\begin{definition}[Pseudoprobability of an event]\label{def:pseudoprob}
Let $\mu$ be a pseudodistribution of degree $D$.
For an event $\cE(X,X')$ such that $\Ind[\cE(X)]$ can be expressed as a degree-$D$ function of $X$, we define the {\em pseudoprobability of $\cE(X)$} to be
\[
\pPr_{\mu}[\cE(X)] = \pE_{\mu}[\Ind(\cE(X)].
\]
Similarly, if $\cF(X)$ is an event and $\deg(\Ind[\cF(X)]) + \deg(\Ind[\cE(X)]) \le D$,  we define the {\em pseudoprobability of $\cE(X)$ conditioned on $\cF(X)$} to be
\[
\pPr_\mu[\cE(X) \mid \cF(X)] = \pE_\mu[\Ind(\cE(X)) \mid \Ind(\cF(X))] = \frac{\pE[\Ind(\cE(X)) \cdot \Ind(\cF(X))]}{\pE[\Ind(\cF(X))]}.
\]
\end{definition}

\subsection{SoS-ing Mathematical Proofs}
Our argument will use several mathematical statements (such 
as global hypercontractivity), and we will need to be 
able to argue that these statements are also satisfied 
in the context of the SoS program and its variables. Thus, 
we will need to be able to ensure that we use tools that can 
be proved via sum of squares inequalities of low-degree (and 
axioms), and below we collect a few such standard tools that we will use. We use the following notation for SoS prooofs.

\paragraph{Sum of squares proofs notations.}
Given a set of axioms $\cA = \{q_i = 0\}_i \cup \{g_j \ge 0\}_j$ for polynomials $q_i, g_j \in \R[X]$, we say that ``there is a degree-$d$ sum-of-squares proof that $f \ge h$ modulo $\cA$'' if: $f = h + s + \sum_k c_k\cdot Q_k + \sum_t r_t \cdot G_t$ where each polynomial $Q_k$ and $G_t$ is a product of some polynomials from $\{q_i\}$ and $\{g_j\}_j$ respectively, $s,\{c_k\}_k,\{r_t\}_t \in \R[X]$ are real polynomials such that $s$ and $\{r_t\}_t$ are sums of squares, and the maximum degree among $s,\{c_k Q_k\}_k, \{r_t G_t\}_t$ is at most $d$. We will use the notation $\cA \vdash_d f(x) \ge h(x)$ to denote the existence of such an equality.
We also sometimes use $f(x) \succeq h(x)$ to denote that the inequality is a \sos inequality.

\subsubsection{Basic Inequalities}
The first of which is the following basic forms of the Cauchy-Schwarz and H\"{o}lder inequalities. 
The proofs are by now standard and can be found for example in~\cite{BarakKS14,ODonnellZ13}.
\begin{lemma}[Cauchy Schwarz]\label{CS1-prelim}
For for all $\epsilon \in \R_+$, 
$$
\vdash_2 YZ \leq \frac{\epsilon}{2}Y^2 + \frac{1}{2\epsilon}Z^2.
$$
\end{lemma}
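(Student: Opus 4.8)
The plan is to exhibit the inequality as a single explicit square. Since $\epsilon \in \R_+$, the quantities $\sqrt{\epsilon}$ and $1/\sqrt{\epsilon}$ are genuine real numbers, so $s(Y,Z) := \bigl(\sqrt{\epsilon}\,Y - \tfrac{1}{\sqrt{\epsilon}}\,Z\bigr)^2$ is a bona fide sum-of-squares polynomial of degree $2$ in the variables $Y,Z$. First I would expand it:
\[
\Bigl(\sqrt{\epsilon}\,Y - \tfrac{1}{\sqrt{\epsilon}}\,Z\Bigr)^2 = \epsilon Y^2 - 2YZ + \tfrac{1}{\epsilon}Z^2.
\]
Rearranging this polynomial identity gives
\[
\tfrac{\epsilon}{2}Y^2 + \tfrac{1}{2\epsilon}Z^2 - YZ = \tfrac{1}{2}\Bigl(\sqrt{\epsilon}\,Y - \tfrac{1}{\sqrt{\epsilon}}\,Z\Bigr)^2,
\]
which is precisely the equality $f = h + s'$ required by the definition of a degree-$d$ SoS proof, with $f = \tfrac{\epsilon}{2}Y^2 + \tfrac{1}{2\epsilon}Z^2$, $h = YZ$, $s' = \tfrac12 s$ a sum of squares, and no axiom terms (the axiom set is empty here). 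Since $\deg(s') = 2$, this establishes $\vdash_2 YZ \le \tfrac{\epsilon}{2}Y^2 + \tfrac{1}{2\epsilon}Z^2$.

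There is essentially no obstacle in this argument; the only point that warrants a line of care is the appearance of $\sqrt{\epsilon}$, but as noted this is harmless because $\epsilon$ is a fixed positive real constant (not a program variable), so the coefficients of the square are legitimate reals and the degree bound is unaffected. One could alternatively phrase the certificate multiplicatively as $\epsilon \cdot\bigl(\tfrac{\epsilon}{2}Y^2 + \tfrac{1}{2\epsilon}Z^2 - YZ\bigr) = \tfrac12(\epsilon Y - Z)^2$ to avoid writing $\sqrt{\epsilon}$ altogether, which yields the same conclusion after dividing by the positive constant $\epsilon$. I would present the direct version above, as it is the cleanest.
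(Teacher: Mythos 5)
Your proof is correct and is exactly the standard single-square certificate; the paper itself omits the proof and simply cites standard references, where this same decomposition $\tfrac{\epsilon}{2}Y^2 + \tfrac{1}{2\epsilon}Z^2 - YZ = \tfrac12\bigl(\sqrt{\epsilon}\,Y - \tfrac{1}{\sqrt{\epsilon}}\,Z\bigr)^2$ appears. Your remark that $\sqrt{\epsilon}$ is a legitimate real coefficient (since $\epsilon$ is a constant, not a program variable) is the right point of care, and the degree-$2$ bound is immediate.
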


\begin{lemma}[Cauchy Schwarz]\label{CS2-prelim}
A degree-$D$ pseudoexpectation operator where 
$D \geq  2\max(\deg(f),\deg(g))$ satisfies that
$$\pE[fg]^2 \leq \pE[f^2]\pE[g^2].$$
\end{lemma}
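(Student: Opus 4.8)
The plan is to mimic the classical proof of the Cauchy--Schwarz inequality, using the non-negativity of low-degree squares as the only non-trivial property of $\pE$ (an alternative, via taking $\pE$ of the degree-$2$ \sos identity in Lemma~\ref{CS1-prelim} and optimizing the free parameter, works equally well, but the parabola argument is cleanest). First I would observe that for every real number $\lambda$, the polynomial $\lambda f - g$ has degree at most $\max(\deg(f),\deg(g)) \le D/2$, so $(\lambda f - g)^2$ has degree at most $D$ and the square-positivity axiom of $\pE$ applies: $\pE[(\lambda f - g)^2] \ge 0$. Expanding by linearity of $\pE$, this says precisely that the quadratic polynomial in $\lambda$
\[
p(\lambda) \;:=\; \lambda^2\,\pE[f^2] \;-\; 2\lambda\,\pE[fg] \;+\; \pE[g^2]
\]
is non-negative for every $\lambda \in \R$. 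Note also that $\pE[f^2] \ge 0$ and $\pE[g^2] \ge 0$ by the same axiom.

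The main case is $\pE[f^2] > 0$. Then $p$ is an upward-opening parabola that is everywhere non-negative, hence its discriminant is non-positive: $\bigl(2\,\pE[fg]\bigr)^2 - 4\,\pE[f^2]\,\pE[g^2] \le 0$, which rearranges to $\pE[fg]^2 \le \pE[f^2]\,\pE[g^2]$. (Equivalently, one can plug in the minimizer $\lambda = \pE[fg]/\pE[f^2]$, obtaining $\pE[g^2] - \pE[fg]^2/\pE[f^2] \ge 0$, and multiply through by $\pE[f^2] \ge 0$.)

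It remains to handle the degenerate case $\pE[f^2] = 0$, where the discriminant argument formally breaks down. Here $p(\lambda) = -2\lambda\,\pE[fg] + \pE[g^2]$ is an affine function of $\lambda$ that is bounded below (by $0$) on all of $\R$; an affine function with this property must be constant, forcing $\pE[fg] = 0$. Consequently $\pE[fg]^2 = 0 \le \pE[f^2]\,\pE[g^2]$, since the right-hand side is a product of two non-negative quantities. This completes the argument. There is essentially no obstacle: the only points requiring (minimal) care are the degree bookkeeping — the hypothesis $D \ge 2\max(\deg(f),\deg(g))$ is exactly what guarantees $(\lambda f - g)^2$ is a degree-$\le D$ square to which the positivity axiom applies — and the separate treatment of the $\pE[f^2]=0$ case.
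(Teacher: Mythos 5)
Your proof is correct, and it is the standard discriminant/square-positivity argument that the paper itself does not spell out but defers to the cited references (\cite{BarakKS14,ODonnellZ13}); the degree bookkeeping and the separate treatment of the $\pE[f^2]=0$ case are both handled properly. Nothing to add.
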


\begin{fact}[H\"{o}lder's Inequality]\label{fact:sos-hol}
For all real $\nu > 0$ we have that,
$$\vdash_4 Y^3 Z \leq \frac{3\nu}{4} Y^4 + \frac{1}{4\nu^3}Z^4.$$ 
\end{fact}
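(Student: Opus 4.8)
The plan is to produce an explicit sum-of-squares certificate of degree $4$ for the polynomial $\tfrac{3\nu}{4}Y^4 + \tfrac{1}{4\nu^3}Z^4 - Y^3Z$, bootstrapping from the $p=2$ case. Recall that Lemma~\ref{CS1-prelim} is itself just the expansion of the square $(\sqrt{\epsilon}\,Y - Z/\sqrt{\epsilon})^2 \ge 0$; the idea is to apply it twice — once to peel off the top power of $Y$, and once to break the leftover cross term $Y^2Z^2$ — and then take a nonnegative linear combination.

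Concretely, I would proceed in three steps. First, apply Lemma~\ref{CS1-prelim} with the substitution $Y\mapsto Y^2$, $Z\mapsto YZ$ and parameter $\epsilon=\nu$; since substituting degree-$2$ polynomials into a degree-$2$ SoS proof yields a degree-$4$ SoS proof, this gives $\vdash_4 Y^3Z \le \tfrac{\nu}{2}Y^4 + \tfrac{1}{2\nu}Y^2Z^2$. Second, apply Lemma~\ref{CS1-prelim} again with $Y\mapsto Y^2$, $Z\mapsto Z^2$ and parameter $\epsilon=\nu^2$, obtaining $\vdash_4 Y^2Z^2 \le \tfrac{\nu^2}{2}Y^4 + \tfrac{1}{2\nu^2}Z^4$. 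Third, multiply the second inequality by the nonnegative scalar $\tfrac{1}{2\nu}$ and add it to the first; adding SoS proofs and scaling by nonnegative reals preserves the SoS property without increasing the degree, and the $Y^2Z^2$ terms cancel, leaving $\vdash_4 Y^3Z \le \big(\tfrac{\nu}{2}+\tfrac{\nu}{4}\big)Y^4 + \tfrac{1}{4\nu^3}Z^4 = \tfrac{3\nu}{4}Y^4 + \tfrac{1}{4\nu^3}Z^4$. Equivalently — and this is perhaps the cleanest way to present it — one can simply write down and verify the identity
\[
\frac{3\nu}{4}Y^4 + \frac{1}{4\nu^3}Z^4 - Y^3Z = \frac{1}{2}\Big(\sqrt{\nu}\,Y^2 - \tfrac{1}{\sqrt{\nu}}YZ\Big)^2 + \frac{1}{4\nu}\Big(\nu Y^2 - \tfrac{1}{\nu}Z^2\Big)^2,
\]
which exhibits the difference as a sum of squares of degree-$2$ polynomials, hence a degree-$4$ SoS proof.

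There is essentially no obstacle here; the statement is a routine SoS inequality. The only points requiring (minor) care are the two standard bookkeeping facts used implicitly — that composing an SoS proof with a polynomial substitution multiplies the proof degree by the maximum degree of the substituted polynomials, and that nonnegative linear combinations of SoS proofs are again SoS proofs — and the choice of parameters $\epsilon=\nu$ and $\epsilon=\nu^2$, which is forced: matching the $Z^4$ coefficient requires $\epsilon_1\epsilon_2 = \nu^3$, and then matching the $Y^4$ coefficient requires $\tfrac{\epsilon_1}{2} + \tfrac{\epsilon_2}{4\epsilon_1} = \tfrac{3\nu}{4}$, whose solution is $\epsilon_1=\nu$, $\epsilon_2=\nu^2$.
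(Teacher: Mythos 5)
Your proof is correct: the explicit identity
\[
\frac{3\nu}{4}Y^4 + \frac{1}{4\nu^3}Z^4 - Y^3Z = \frac{1}{2}\Bigl(\sqrt{\nu}\,Y^2 - \tfrac{1}{\sqrt{\nu}}YZ\Bigr)^2 + \frac{1}{4\nu}\Bigl(\nu Y^2 - \tfrac{1}{\nu}Z^2\Bigr)^2
\]
checks out term by term and exhibits a degree-$4$ SoS certificate. The paper does not give its own proof (it defers to standard references), and your two-fold application of the degree-$2$ Cauchy--Schwarz step is exactly the standard argument those references contain, so there is nothing further to reconcile.
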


The theory of univariate sum-of-squares (in particular, Luk\'{a}cs Theorem) says that if a univariate polynomial is non-negative on an interval, this fact is also SoS-certifiable. The following  corollary of Luk\'{a}cs theorem is well-known.
\begin{corollary}[Corollary of Luk\'{a}cs Theorem]\label{cor:Lukacs}
Let $q$ be a degree-$d$ polynomial which is non-negative on $[a,b]$.
Then given the axioms $\cA = \{x \ge a\} \cup\{x \le b\}$, there is a degree-$2d$ \sos proof that $q$ is non-negative, $\cA \vdash_{2d} q(x) \ge 0$.
\end{corollary}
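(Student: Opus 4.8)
The corollary follows by invoking the Markov--Luk\'{a}cs representation theorem for univariate polynomials nonnegative on an interval and then checking that the representation it produces is, verbatim, an \sos proof modulo $\cA$ in the sense defined above. Recall Luk\'{a}cs's theorem: if $q$ has degree $d$ and $q(x)\ge 0$ on $[a,b]$, then when $d=2m$ is even one can write $q = \sigma_0 + (x-a)(b-x)\sigma_1$ with $\sigma_0,\sigma_1$ sums of squares, $\deg\sigma_0 \le d$ and $\deg\sigma_1 \le d-2$; and when $d=2m+1$ is odd one can write $q = (x-a)\sigma_0 + (b-x)\sigma_1$ with $\sigma_0,\sigma_1$ sums of squares and $\deg\sigma_0,\deg\sigma_1\le d-1$. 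For self-containedness one derives this by factoring $q$ over $\R$: roots in the open interval $(a,b)$ must occur with even multiplicity and group into a perfect square; a real root $r\le a$ contributes a factor $x-r = (x-a)+(a-r)$ with $a-r\ge 0$, a real root $r\ge b$ contributes $r-x = (b-x)+(r-b)$ with $r-b\ge 0$, and each complex-conjugate pair contributes an everywhere-positive quadratic, hence a sum of two squares; multiplying out and grouping — using that an even power of $(x-a)$ or of $(b-x)$ is a square while a single leftover factor of each type combines into $(x-a)(b-x)$ — collapses everything into the claimed form, with the degree bookkeeping exactly as stated.

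Given this, the rest is definition-chasing. Here $\cA$ has no equality axioms and two inequality axioms, $g_1 = x-a \ge 0$ and $g_2 = b-x \ge 0$; we take $f = q$ and $h = 0$ in the definition of an \sos proof. In the even case, set $s = \sigma_0$, a sum of squares of degree $\le d$, and take the single module term $r_1 G_1$ with $r_1 = \sigma_1$ (a sum of squares) and $G_1 = g_1 g_2$ (a product of inequality axioms); this term has degree $\le (d-2)+2 = d$. Then $q = 0 + s + r_1 G_1$ exhibits a degree-$d$ \sos proof that $q\ge 0$ modulo $\cA$. In the odd case, set $s = 0$ and take two module terms $r_1 G_1 = \sigma_0\, g_1$ and $r_2 G_2 = \sigma_1\, g_2$, each a sum of squares times a product of one inequality axiom, of degree $\le (d-1)+1 = d$. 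In both cases the maximum degree appearing among $s$ and the module terms is at most $d$, hence in particular at most $2d$, so $\cA \vdash_{2d} q(x)\ge 0$ as claimed.

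The only substantive content is the Luk\'{a}cs representation with the stated degree bounds; the reduction of the corollary to it is purely formal (indeed it even gives the stronger bound $\vdash_d$, and the slack up to $2d$ is harmless). Accordingly the main — and essentially only — obstacle is the parity-of-multiplicity argument together with the degree accounting inside the factorization step, and since the paper is content to cite Luk\'{a}cs's theorem as known, even that may be taken as a black box, leaving nothing nontrivial to verify.
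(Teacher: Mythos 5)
Your proof is correct and is exactly the standard argument the paper has in mind: the paper states this corollary without proof, citing Luk\'{a}cs's theorem as well known, and your reduction of the corollary to the Markov--Luk\'{a}cs representation (with the even/odd case split and the degree bookkeeping giving in fact $\vdash_d$, hence a fortiori $\vdash_{2d}$) is the canonical derivation. Nothing to add.
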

We will use the above multiple times to convert univariate inequalities into \sos inequalities in a blackbox manner.

Given multivariate inequalities the theorem above no longer holds. Nevertheless given a strictly positive polynomial $f$ that is bounded away from $0$, one can get an \sos proof of degree which is exponential in $\deg(f)$ using Theorem 3 in~\cite{schweighofer}. We state a corollary of~\cite{schweighofer} that is sufficient for our purposes. 

\begin{theorem}[Corollary of~\cite{schweighofer}]\label{thm:blackbox-sos}
Let $S$ be the set $[0,1]^k$. Let $f(x)$ be a polynomial of degree $d$ with $\nu = \min\{f(x) \mid x \in S\} > 0$ and $||f||$ denoting the maximum absolute value of $f$'s coefficients. Then $f$ has an \sos certificate of bounded degree:
\[\{x_i \in [0,1] | i \in [k]\} \vdash_D f(x) \geq 0,\]
with $D = \poly(k^{\deg(f)}, ||f||/\nu)$.
\end{theorem}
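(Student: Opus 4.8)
The plan is to obtain Theorem~\ref{thm:blackbox-sos} as a direct specialization of the effective version of Schm\"{u}dgen's Positivstellensatz due to Schweighofer (Theorem~3 in~\cite{schweighofer}). That theorem takes as input a compact basic closed semialgebraic set $K = \{x \in \R^n : g_1(x) \geq 0, \dots, g_m(x) \geq 0\}$ whose associated preordering is Archimedean (together with a normalization, e.g.\ $0 \le g_i \le 1$ on $K$ and $K$ contained in a ball of known radius), and a polynomial $f$ with $f^\ast := \min_{x \in K} f(x) > 0$. Its conclusion is that $f$ admits a representation $f = \sum_{e \in \{0,1\}^m} \sigma_e \prod_i g_i^{e_i}$ with each $\sigma_e$ a sum of squares, and with the degree of every summand bounded by an explicit quantity $D$ that is polynomial in $n$, $\deg f$, $\max_i \deg g_i$, the sup-norm $\|f\|_{\infty,K}$, $1/f^\ast$, and a handful of geometric constants of $K$ (coming from its diameter and from the cost of certifying the Archimedean property). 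I would first quote this statement, being explicit about which quantities the degree bound depends on.

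Next I would instantiate it with $n = k$, $K = [0,1]^k$, and the $m = 2k$ generators $g_i(x) = x_i$ and $g_{k+i}(x) = 1 - x_i$ for $i \in [k]$, each of degree $1$ and taking values in $[0,1]$ on $K$. The only hypothesis requiring a small check is the Archimedean condition: writing $k - \sum_i x_i^2 = \sum_i (1-x_i) + \sum_i x_i(1-x_i)$ exhibits $k - \|x\|^2$ as an element of the preordering generated by $\{x_i, 1-x_i\}_{i\in[k]}$ (each $1-x_i$ is a generator and each $x_i(1-x_i)$ is a product of two generators), so the preordering is Archimedean with a degree-$2$, small-coefficient certificate; the diameter of $[0,1]^k$ and the remaining geometric constants are bounded by absolute constants or at worst by $\poly(k)$. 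Hence all of Schweighofer's hypotheses hold.

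Finally I would read off and simplify the degree bound. On $[0,1]^k$ the sup-norm and the coefficient norm are related by $\|f\|_{\infty,[0,1]^k} \le \binom{k+d}{d}\,\|f\| \le (k+1)^d\,\|f\|$, so $\|f\|_{\infty,K}/f^\ast \le (k+1)^d\,\|f\|/\nu$, which is $\poly(k^{\deg f}, \|f\|/\nu)$; substituting this, $n = k$, $\max_i \deg g_i = 1$, and $f^\ast = \nu$ into Schweighofer's bound, and folding the polynomial dependence on $k$ and $d$ into $k^{\deg f}$ (the case $k=1$ being anyway covered by Corollary~\ref{cor:Lukacs}), yields $D = \poly(k^{\deg f}, \|f\|/\nu)$. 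It then remains to observe that the representation $f = \sum_e \sigma_e \prod_i g_i^{e_i}$ with the $\sigma_e$ sums of squares is exactly a degree-$D$ sum-of-squares proof of $f(x) \ge 0$ modulo $\cA = \{x_i \ge 0\} \cup \{x_i \le 1\}$ in the sense of the paper's definition (take $h = 0$, the pure sum-of-squares term $s = \sigma_0$, and each remaining term $r_t G_t = \sigma_e \prod_i g_i^{e_i}$, a sum of squares times a product of axiom polynomials), giving $\cA \vdash_D f(x) \ge 0$. The main work — and the only place care is needed — is the bookkeeping in matching Schweighofer's normalization conventions to the cube, verifying the Archimedean certificate, and tracking the sup-norm/coefficient-norm loss so that the final bound comes out in the clean form $\poly(k^{\deg f},\|f\|/\nu)$; none of this is conceptually hard.
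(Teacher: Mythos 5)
Your proposal is correct and is exactly the derivation the paper intends: the statement is asserted as a corollary of Schweighofer's Theorem~3 without further proof, and your instantiation with $K=[0,1]^k$, generators $\{x_i,1-x_i\}$, the degree-$2$ Archimedean certificate $k-\sum_i x_i^2=\sum_i(1-x_i)+\sum_i x_i(1-x_i)$, and the coefficient-norm/sup-norm bookkeeping is the right way to fill it in. The final matching of the preordering representation to the paper's definition of $\cA\vdash_D f\ge 0$ is also handled correctly, so there is nothing to add.
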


We will use the above fact in Section~\ref{sec:johnson_2} to convert inequalities involving polynomials of constant degree and on constantly many variables into \sos inequalities. We believe it should be possible to get a degree bound above which is $\poly(\deg(f))$ in cases where we have reasonable polynomials $f$, though as far as we know this has not been proved in generality.

\subsubsection{Approximating Indicators via Low-degree Polynomials}\label{sec:apx-ind}


Our argument will involve indicators of events such as 
$f(X)\geq \beta$ where $f$ is a low-degree polynomial, and we will want to condition on such events. Strictly speaking, the function $\Ind[f(X)\geq \beta]$ is not a low-degree polynomial and therefore we cannot condition on it. However, it is not difficult to show that such indicators can be approximated by low-degree polynomials, and we will need to use such ideas. 
Indeed, in this section we present such an approximation theorem that will be used throughout our proofs.

The following theorem, due to~\cite{diakonikolas2010bounded}, provides a low-degree approximation to a step function.

\begin{theorem}\label{thm:step-approx}
Let $\Ind[x \geq \beta]$ be the step function at $\beta \in (0,1)$. Then for each $0 < \nu < \beta$ there is a univariate polynomial $p_{\beta,\nu}$ of degree $O(\frac{1}{\nu}\log^2\frac{1}{\nu})$ such that:
\begin{enumerate}
\item $|p_{\beta,\nu}(x) - \Ind[x \geq \beta]| \le \nu$ for all $x \in [0, \beta] \cup [\beta +\nu, 1]$.
\item $p_{\beta,\nu}$ is monotonically increasing on $(\beta, \beta + \nu)$.
\item $0 \le p_{\beta,\nu}(x) \le 1$ for all $x \in [0,1]$.
\item All coefficients of $p_{\beta,\nu}$ are at most
$2^{O(\log(1/\nu)/\nu)}$ in absolute value.
\end{enumerate}
Further the first three facts are SoS-certifiable in degree $2\deg(p_{\beta,\nu})$.
\end{theorem}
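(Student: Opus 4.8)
\noindent\textbf{Proof plan for Theorem~\ref{thm:step-approx}.}
The plan is to realize $p_{\beta,\nu}$ as the antiderivative of a nonnegative, low-degree ``kernel'' polynomial whose $L^1$-mass on $[0,1]$ is almost entirely concentrated on the transition window $[\beta,\beta+\nu]$. Concretely, I would look for a polynomial $Q$ and set
\[
p_{\beta,\nu}(x) \;=\; \frac{1}{Z}\int_0^x Q(t)^2\,dt, \qquad\text{where } Z \;=\; \int_0^1 Q(t)^2\,dt .
\]
This design choice does most of the work: since $p_{\beta,\nu}' = Q^2/Z$ is manifestly a sum of squares, $p_{\beta,\nu}$ is monotone increasing on all of $[0,1]$ --- in particular on $(\beta,\beta+\nu)$ (Fact 2) --- and $0 = p_{\beta,\nu}(0) \le p_{\beta,\nu}(x) \le p_{\beta,\nu}(1) = 1$ for $x\in[0,1]$ (Fact 3). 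All that remains for Fact 1 is the mass-concentration estimate $\int_{[0,1]\setminus[\beta,\beta+\nu]}Q^2 \le \nu Z$: given it, $p_{\beta,\nu}(x) \le Z^{-1}\int_0^\beta Q^2 \le \nu$ for $x\le\beta$, and $p_{\beta,\nu}(x) \ge 1 - Z^{-1}\int_{\beta+\nu}^1 Q^2 \ge 1-\nu$ for $x\ge\beta+\nu$. (One may assume $\beta+\nu\le 1$; the complementary case is similar and easier.)

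For the construction of $Q$ I would exploit the fast growth of Chebyshev polynomials outside $[-1,1]$. Let $\beta' = \beta+\nu/2$, $D = \max(\beta',1-\beta')\in[1/2,1)$, and take the quadratic reparametrization
\[
\phi(x) \;=\; (1+\eta) - c\,(x-\beta')^2, \qquad c = \frac{4\eta}{\nu^2}, \qquad \eta = \Theta\!\big(\nu^2/D^2\big),
\]
with the constant in $\eta$ chosen small enough that $\phi(x)\ge -1$ for all $x\in[0,1]$. Then $\phi$ is a downward parabola with maximum $1+\eta$ at $\beta'$, it maps the window $[\beta,\beta+\nu]$ onto $[1,1+\eta]$, and it maps $[0,1]\setminus[\beta,\beta+\nu]$ into $[-1,1]$. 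Setting $Q = T_k\circ\phi$ with $T_k$ the degree-$k$ Chebyshev polynomial and $k = \Theta\!\big(\tfrac{D}{\nu}\log\tfrac1\nu\big)$ gives $\deg p_{\beta,\nu} = 4k+1 = O\!\big(\tfrac1\nu\log\tfrac1\nu\big)$, within the claimed bound. The concentration estimate is then a calculation: outside the window $|\phi|\le 1$, so $Q^2\le 1$ and $\int_{\text{outside}}Q^2\le 1$; whereas $Q(\beta')^2 = T_k(1+\eta)^2 \gtrsim e^{\Omega(k\nu/D)}$ by the standard bounds $T_k(1+\eta)\ge\tfrac12 e^{k\,\mathrm{arccosh}(1+\eta)}$ and $\mathrm{arccosh}(1+\eta)=\Omega(\sqrt\eta)$, and an interval of width $\Theta(\sqrt{\nu D/k})$ about $\beta'$ (on which $Q^2 \ge \tfrac14 Q(\beta')^2$) already contributes $\Omega(\sqrt{\nu D/k})\,e^{\Omega(k\nu/D)}$ to $Z$. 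Hence the ratio is at most $O(\sqrt{k/(\nu D)})\cdot e^{-\Omega(k\nu/D)}$, which falls below $\nu$ once the constant in $k$ is taken large enough.

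The remaining two claims are quick. For the coefficient bound (Fact 4): $T_k$ has coefficients at most $2^k$ in absolute value and $\phi$ has $O(1)$ coefficients, so the degree-$2k$ polynomial $Q = T_k\circ\phi$ has coefficients $2^{O(k)}$, hence so do $Q^2$ and its antiderivative; since $Z\ge\int_{[\beta,\beta+\nu]}Q^2\ge\nu$, dividing by $Z$ inflates these by at most $1/\nu$, giving coefficients $2^{O(k)} = 2^{O(\log(1/\nu)/\nu)}$. For SoS-certifiability: each of Facts 1--3 is the assertion that a \emph{fixed univariate} polynomial of degree $\le\deg p_{\beta,\nu}$ is nonnegative on a subinterval of $[0,1]$ --- namely $\nu\pm p_{\beta,\nu}$ on $[0,\beta]$ and on $[\beta+\nu,1]$ for Fact 1; the sum of squares $p_{\beta,\nu}'=Q^2/Z$ on $[\beta,\beta+\nu]$ for Fact 2; and $p_{\beta,\nu}$ and $1-p_{\beta,\nu}$ on $[0,1]$ for Fact 3 --- and we have just verified each statement is true, so Corollary~\ref{cor:Lukacs} converts each into a degree-$2\deg p_{\beta,\nu}$ SoS proof modulo the corresponding interval axioms.

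I expect the main obstacle to be carrying out the concentration estimate \emph{uniformly over all $\beta\in(0,1)$}: when $\beta$ is close to $0$ or $1$ one has $D$ close to $1$, which forces $\eta$ as small as $\Theta(\nu^2)$ and tightens the interplay between the reparametrization scale $c = 4\eta/\nu^2$, the Chebyshev growth exponent $k\sqrt\eta$, and the final degree $2k$; the technical work is to track these constants and confirm the ratio genuinely drops below $\nu$ without pushing $k$ beyond $O(\tfrac1\nu\log\tfrac1\nu)$ (which is within the stated $O(\tfrac1\nu\log^2\tfrac1\nu)$).
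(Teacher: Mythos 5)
Your construction is correct, and it takes a genuinely different route from the paper: the paper's proof of Theorem~\ref{thm:step-approx} is essentially a citation (items 1--3 are quoted from Theorem~4.5 of the Diakonikolas et al.\ reference, item 4 is attributed to the Markov brothers' inequality, and the SoS certificates to Corollary~\ref{cor:Lukacs}), whereas you build $p_{\beta,\nu}$ from scratch as the normalized antiderivative of $Q^2$ with $Q=T_k\circ\phi$. The main thing your approach buys is that monotonicity and the bounds $0\le p_{\beta,\nu}\le 1$ become structural rather than analytic facts to be re-verified ($p_{\beta,\nu}'=Q^2/Z$ is literally a sum of squares, and $p_{\beta,\nu}(0)=0$, $p_{\beta,\nu}(1)=1$), the coefficient bound falls out of the explicit formula instead of a black-box Markov-type argument, and you even land at degree $O(\nu^{-1}\log\nu^{-1})$, inside the stated $O(\nu^{-1}\log^2\nu^{-1})$. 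Your parameter bookkeeping checks out: $\phi$ maps $[\beta,\beta+\nu]$ onto $[1,1+\eta]$ and the rest of $[0,1]$ into $[-1,1]$ provided $\eta\lesssim \nu^2/D^2$; the width-$\Theta(\sqrt{\nu D/k})$ plateau where $Q^2\ge\tfrac14 T_k(1+\eta)^2$ follows from $\mathrm{arccosh}(1+\eta)=\Theta(\sqrt{\eta})$; and $Z\ge\int_{\beta}^{\beta+\nu}Q^2\ge\nu$ since $T_k\ge 1$ on $[1,1+\eta]$, so the normalization costs only a factor $1/\nu$ in the coefficients. The one caveat worth recording is a defect of the theorem statement rather than of your proof: at the single point $x=\beta$ one has $\Ind[x\ge\beta]=1$ while any polynomial satisfying the rest of item 1 must have $p_{\beta,\nu}(\beta)\le\nu$, so item 1 cannot hold literally there for any continuous function; the intended (and used) reading is $p_{\beta,\nu}\le\nu$ on $[0,\beta]$ and $p_{\beta,\nu}\ge 1-\nu$ on $[\beta+\nu,1]$, which your construction delivers.
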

The first three items follow from~\cite[Theorem 4.5]{diakonikolas2010bounded}, the fourth 
item is immediate by Markov brothers' inequality, and the SoS certifiability follows from Corollary~\ref{cor:Lukacs}.

The following fact provides convenient point-wise bounds on the approximating polynomials of indicators from above. 
\begin{fact}[Markov Inequality for Bounded Polynomials]
\label{fact:bdd-markov}
Let $p := p_{\beta,\nu}$ be the degree-$D = \tO(1/\nu)$ polynomial guaranteed by Theorem \ref{thm:step-approx}.
Then $p$ satisfies Markov's inequality:
\begin{equation*}
\{0 \le x \le 1\}
\vdash_{\deg(p)} 
\{ p(x) \ge 1 - \frac{1-x}{1-\beta-\nu} - \nu\} \cup \{p(x) \le \frac{x}{\beta-\nu} + \nu\}
\end{equation*}
\end{fact}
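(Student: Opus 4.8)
The plan is to deduce the two one-sided bounds from the structural properties of $p := p_{\beta,\nu}$ guaranteed by Theorem~\ref{thm:step-approx}, using the interval $[0,1]$ as the domain of quantification and invoking Corollary~\ref{cor:Lukacs} to make everything SoS over the axioms $\{0\le x\le 1\}$. Concretely, I would first establish the two inequalities as genuine pointwise facts on $[0,1]$, and then observe that each is the non-negativity of a fixed univariate polynomial of degree $\deg(p)$ on $[0,1]$, so Corollary~\ref{cor:Lukacs} upgrades them to degree-$2\deg(p)$ SoS proofs modulo $\cA=\{x\ge 0\}\cup\{x\le 1\}$ — and since the statement only asks for a degree-$\deg(p)$ proof, I should be a little careful about constants, but morally the bound of Theorem~\ref{thm:step-approx} already says the first three items are SoS-certifiable in degree $2\deg(p_{\beta,\nu})$, so I will quote that and not worry about the factor of two (or absorb it into the $\tO$).

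For the \emph{upper} bound $p(x)\le \frac{x}{\beta-\nu}+\nu$: the key input is that $p$ is monotonically non-decreasing and satisfies $0\le p\le 1$ on $[0,1]$ with $|p(x)-\Ind[x\ge\beta]|\le\nu$ on $[0,\beta]$. For $x\ge \beta-\nu$ the claimed right-hand side is at least $1\ge p(x)$, so the inequality is trivial there. For $x\in[0,\beta-\nu]\subseteq[0,\beta]$ we have $p(x)\le \nu$, while $\frac{x}{\beta-\nu}+\nu\ge \nu$, so again the inequality holds. Thus $q_{+}(x):=\frac{x}{\beta-\nu}+\nu-p(x)$ is non-negative on all of $[0,1]$, and Corollary~\ref{cor:Lukacs} gives the SoS certificate. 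Symmetrically, for the \emph{lower} bound $p(x)\ge 1-\frac{1-x}{1-\beta-\nu}-\nu$: for $x\le \beta+\nu$ the right-hand side is at most $0\le p(x)$, so it is trivial; for $x\in[\beta+\nu,1]$ we have $p(x)\ge 1-\nu$, and it suffices to check $1-\nu\ge 1-\frac{1-x}{1-\beta-\nu}-\nu$, i.e. $\frac{1-x}{1-\beta-\nu}\ge 0$, which holds since $x\le 1$ and $\beta+\nu<1$. Hence $q_{-}(x):=p(x)-1+\frac{1-x}{1-\beta-\nu}+\nu$ is non-negative on $[0,1]$, and Corollary~\ref{cor:Lukacs} again gives the SoS proof. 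The union in the statement is then just the assertion that \emph{both} degree-$\deg(p)$ SoS proofs exist modulo $\{0\le x\le 1\}$.

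The only mildly delicate point — and the one I would flag as the ``main obstacle'', though it is more bookkeeping than mathematics — is making sure the case analysis above is itself captured by a \emph{single} SoS certificate rather than a piecewise argument: splitting into $x\le\beta-\nu$ versus $x\ge\beta-\nu$ is not something one can do inside an SoS proof. This is precisely what Corollary~\ref{cor:Lukacs} (Luk\'acs's theorem) buys us: once we have verified, by whatever elementary piecewise reasoning, that the fixed univariate polynomials $q_{+}$ and $q_{-}$ are non-negative on the whole interval $[0,1]$, the existence of the SoS decomposition is automatic and degree-bounded, with no need to reproduce the case split symbolically. So the proof reduces to the two pointwise verifications above plus one invocation of Corollary~\ref{cor:Lukacs} for each inequality.
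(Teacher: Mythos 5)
Your proposal is correct and matches the paper's own proof: the same case analysis (splitting at $\beta-\nu$ for the upper bound and $\beta+\nu$ for the lower bound, using $0\le p\le 1$ and the $\nu$-closeness to the step function), followed by the same appeal to Corollary~\ref{cor:Lukacs} to convert the pointwise non-negativity on $[0,1]$ into an SoS certificate. Your remark about the factor-of-two in the degree is a fair observation that the paper itself glosses over, but it does not affect correctness.
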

\begin{proof}
We perform case analysis on $x$ and then use Corollary~\ref{cor:Lukacs} to conclude the proof
is \sos.
For the first inequality, for $x \in [0,\beta+\nu)$ 
we have
\[
p(x) \succeq 0 \succeq 1 - \frac{1-x}{1-\beta - \nu},
\]
where we have used that $p(x) \succeq 0$ and $\frac{1-x}{1-\beta - \nu}\succeq 1$. For $x \in [\beta +\nu, 1]$,
\[
p(x) \succeq 1 - \eps \succeq 1 - \eps - \frac{1-x}{1-\alpha-\delta},
\]
where we have used that $x \in [0,1]$ so that we are subtracting a positive quantity.
Combining these claims concludes the proof of the first claim.

To see the second claim, notice that for $x \in [0,\beta-\nu]$, $p(x) \le \nu$, and for $x \in (\beta - \nu, 1]$, $p(x) \le 1 \le \frac{x}{\beta - \nu}$.
This concludes the proof.
\end{proof}

\subsection{Manipulating pseudoexpectations}

\subsubsection{Reweighing and conditioning:} 
We will sometimes {\em reweigh} or {\em condition} our degree-$D$ pseudodistribution by a polynomial $s(x)$ where $s(x)$ is non-negative under the program axioms, i.e. $\cA \vdash_{d} s(x)$ for $d < D$. Technically, this operation amounts to defining a new pseudoexpectation operator $\pE'$  of degree $D-d$ by taking, 
\[\widetilde{E}'[x^{\alpha}] = \frac{\pE[x^{\alpha} \cdot s(x)]}{\pE[s(x)]},\] 
for every monomial $x^{\alpha}$ of degree at most $D - d$. As an example, under the unique games axioms presented in \ref{eq:ip} one can prove that a variable $X_i$ is in $[0,1]$, hence one can reweigh the pseudodistribution by $X_i$. One can show that reweighing preserves the four properties of the pseudodistribution up to degree $D - d$.
Thus, we will also refer to this operation as ``conditioning'', and denote $\pE'$ by $\pE[\cdot ~|~ s(x)]$. Often times, the polynomial $s(x)$ we will ``condition'' on 
will be a smooth approximation of some event $E$, in which case the above operation takes the interpretation of conditioning our sample from the pseudodistribution to satisfy some properties specified by the event $E$. We refer the reader to \cite{BarakRS11,BKS17} for further discussion of reweighting.

\subsubsection{Independent Samples}\label{sec:ind_samples}
Recall that a given pseudoexpectation operator $\pE: X^{\leq D} \rightarrow \R$ has the interpretation as averaging of functions $f(X)$ over a pseudodistribution $X\sim\mu$.  
We will need to be able to mimic averaging over two 
independently chosen samples $X,X'\sim \mu$,\footnote{Similar constructs have been used in the literature, see e.g. \cite{BarakKS14}.} for that 
we define the product pseudoexpectation $\pE_{X,X'}$ 
as follows: let $X^{\alpha}(X')^{\beta}$ be a monomial 
of degree at most $D$ in variables $X,X'$; we define  $\pE_{X,X'}[X^{\alpha}(X')^{\beta}] := \pE_{X}[X^{\alpha}] \cdot \pE_X[X^{\beta}]$. It is easy to check that $\pE_{X,X'}$ 
is also a pseudoexpectation operator corresponding to two 
independent samples from the pseudodistribution $\mu$; 
see Fact~\ref{fact:indep}.

Given two independent samples $X,X'\sim \mu$, we will often be
interested in the variables $\{Z_{v,s}\}_{v\in V, s\in\Sigma}$ 
that are used to define the shift-partition discussed in the introduction. Formally, we define them as
\[
Z_{u,s} = \sum_{a \in \Sigma} X_{u,a} X'_{u,a+s}.
\]
In the rest of this section we present a few facts about polynomials in independent samples,
and the pseudodistribution $\pE_{X,X'}$ on them. 
Some of these facts will be general, and some of which 
will be specific to the shift partition variables $Z_{u,s}$.

The first fact asserts that $\pE_{X,X'}$ defined above is a 
legitimate pseudodistribution that inherits all of the 
constraints that $\pE$ satisfies.
\begin{fact}\label{fact:indep}
If $\pE_{X}$ is a valid pseudodistribution of degree $D$ in variables $X$, then $\pE_{X,X'} $ is a valid pseudodistribution of degree $D$. Furthermore, if there are additional \sos inequalities that are true for $\T_X$, they also hold for $\T_{X,X'}$.
\end{fact}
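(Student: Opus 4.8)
The plan is to check directly that $\pE_{X,X'}$ satisfies the four defining properties of a degree-$D$ pseudoexpectation — scaling, linearity, the program constraints, and non-negativity of low-degree squares — and then to observe that both the $X$-marginal and the $X'$-marginal of $\pE_{X,X'}$ coincide with $\pE_X$, which immediately yields the ``furthermore'' statement. Of these, only the non-negativity of squares requires an actual argument; the rest are bookkeeping.

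First the easy items. We have $\pE_{X,X'}[1] = \pE_X[1]\cdot\pE_X[1] = 1$, and linearity holds by construction: every monomial in the combined variables of degree at most $D$ factors uniquely as $X^\alpha(X')^\beta$ with $\abs{\alpha},\abs{\beta}\le D$, so the operator is well-defined on monomials and extended linearly. For the program constraints, take a Unique-Games axiom $q(X)$ in the $X$-variables (one of $X_{u,a}^2-X_{u,a}$, $X_{u,a}X_{u,b}$, $\sum_a X_{u,a}-1$) and a polynomial $f(X,X')$ with $\deg(fq)\le D$. Grouping $f$ by its $X'$-part, $f = \sum_\beta (X')^\beta p_\beta(X)$, we get $\pE_{X,X'}[f\cdot q(X)] = \sum_\beta \pE_X[X^\beta]\cdot\pE_X[p_\beta(X)q(X)] = 0$, since each factor $\pE_X[p_\beta q]$ vanishes by the program constraints for $\pE_X$ (the degree bound $\deg(p_\beta q)\le D$ follows from $\deg(fq)\le D$). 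The symmetric argument, grouping $f$ by its $X$-part instead, handles the axioms in the primed variables; so $\pE_{X,X'}$ satisfies the Unique-Games axioms for both copies. Moreover $\pE_{X,X'}[X^\alpha] = \pE_X[X^\alpha]\cdot\pE_X[1] = \pE_X[X^\alpha]$ and $\pE_{X,X'}[(X')^\beta] = \pE_X[1]\cdot\pE_X[X^\beta] = \pE_X[X^\beta]$, so restricting $\pE_{X,X'}$ to polynomials in $X$ alone (resp.\ in $X'$ alone) recovers $\pE_X$.

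The main step is non-negativity of low-degree squares, and this is where I expect the only real content to lie. Let $s(X,X')$ have degree at most $D/2$ and write $s = \sum_\beta (X')^\beta p_\beta(X)$, where each surviving term satisfies $\abs{\beta}+\deg(p_\beta)\le D/2$, hence $\abs{\beta}\le D/2$ and $\deg(p_\beta)\le D/2$. Expanding the square and applying the definition of $\pE_{X,X'}$ linearly gives
\[
\pE_{X,X'}[s^2] \;=\; \sum_{\beta,\gamma}\pE_X\bigl[X^{\beta+\gamma}\bigr]\cdot\pE_X\bigl[p_\beta(X)p_\gamma(X)\bigr] \;=\; \sum_{\beta,\gamma} M_{\beta\gamma}\,N_{\beta\gamma},
\]
with $M_{\beta\gamma} := \pE_X[X^{\beta+\gamma}]$ and $N_{\beta\gamma} := \pE_X[p_\beta p_\gamma]$. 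Both matrices are positive semidefinite: $M$ is the principal submatrix of the degree-$D$ pseudomoment matrix of $\pE_X$ indexed by the monomials $\{X^\beta\}$, all of degree $\le D/2$; and for any vector $(c_\beta)$ we have $\sum_{\beta,\gamma}c_\beta c_\gamma N_{\beta\gamma} = \pE_X\bigl[(\sum_\beta c_\beta p_\beta)^2\bigr]\ge 0$ since $\deg(\sum_\beta c_\beta p_\beta)\le D/2$. Finally, the entrywise (Schur/Hadamard) inner product of two positive semidefinite matrices is non-negative: writing $N = \sum_k \lambda_k v_k v_k^\top$ with $\lambda_k\ge 0$, we get $\sum_{\beta,\gamma}M_{\beta\gamma}N_{\beta\gamma} = \sum_k \lambda_k\, v_k^\top M v_k \ge 0$. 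Hence $\pE_{X,X'}[s^2]\ge 0$. The only subtlety is tracking the degree bounds carefully so that every moment of $\pE_X$ that appears lies inside its degree-$D$ pseudomoment matrix, and spotting the Schur-product structure.

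For the ``furthermore'' part, since the restrictions of $\pE_{X,X'}$ to the $X$-variables and to the $X'$-variables both equal $\pE_X$ (after the obvious relabeling $X'\mapsto X$), any degree-$\le D$ SoS inequality that holds for $\pE_X$ — being a statement purely about the $X$-variables and the program axioms — automatically holds for $\pE_{X,X'}$ applied to either copy. In other words, the same SoS certificate that witnesses the inequality for $\pE_X$ witnesses it for $\pE_{X,X'}$ verbatim in the chosen set of variables. The main obstacle in the whole argument is genuinely just the non-negativity-of-squares computation above; everything else is routine verification.
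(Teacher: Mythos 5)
Your proof is correct. The paper itself states Fact~\ref{fact:indep} without proof (it is treated as a standard construction, with a pointer to similar constructs in the literature), so there is no argument of the authors' to compare against; your write-up supplies exactly the verification they omit. The one step with real content is the non-negativity of squares, and your handling of it — decomposing $s=\sum_\beta (X')^\beta p_\beta(X)$, recognizing $\pE_{X,X'}[s^2]$ as the entrywise inner product $\sum_{\beta,\gamma}M_{\beta\gamma}N_{\beta\gamma}$ of the two PSD matrices $M_{\beta\gamma}=\pE_X[X^{\beta+\gamma}]$ and $N_{\beta\gamma}=\pE_X[p_\beta p_\gamma]$, and invoking non-negativity of the trace inner product of PSD matrices — is the standard and correct route, with the degree bookkeeping done properly. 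The marginalization argument for the program constraints and for the ``furthermore'' clause is likewise fine.
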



Now, we prove some properties specific to the $Z$ variables. 
The following fact asserts that the shift partition variables 
$Z_{v,s}$ indeed behave like a partition, in the sense 
that if $X$ satisfy the constraints in the program~\ref{eq:ip}, 
then the variables $Z_{v,s}$ indeed define a partition.
\begin{fact}\label{fact:z-vars}
Define the shift variable $Z_{u,s} = \sum_{a \in \Sigma} X_{u,a}X'_{u,a+s}$ to be the indicator that $X_u - X_u' = s$, for $X,X'$ degree-$8$ solutions to the \sos relaxation of the UG integer program (\ref{eq:ip}), and for each edge $(u,v)$ the variable $Y_{(u,v)} = \sum_{a} X_{u,a} X_{v, \pi_{uv}(a)}$ to be the indicator that the constraint on the edge $(u,v)$ is satisfied.

Then the $Z$ variables satisfy:
\begin{enumerate}
\item Booleanity: $Z_{v,s}^2 = Z_{v,a}$. 
\item Partition constraints: $Z_{v,s}Z_{v,s'} = 0$ for $s \neq s'$, and $\sum_{s} Z_{u,s} = 1$.
\item Crossing edges violate an assignment: $Z_{u,s} Z_{v,s'} Y_{(u,v)} Y'_{(u,v)} = 0$ for every edge $(u,v) \in E$ and $s \neq s'$. 
\end{enumerate}
\end{fact}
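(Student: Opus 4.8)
The plan is to verify each identity as an \emph{exact} polynomial equality modulo the unique-games axioms $\cA$ (the Booleanity, orthogonality, and normalization constraints of~\eqref{eq:ip}), applied to the $X$-variables and, independently, to the $X'$-variables. By Fact~\ref{fact:indep} all of these axioms hold for $\pE_{X,X'}$, and every monomial appearing below has degree at most $8$, so a degree-$8$ solution suffices. In particular there is no genuine sum-of-squares term to produce: each claim follows by expanding the definitions and repeatedly using $X_{u,a}X_{u,b}=0$ for $a\neq b$ (and likewise for $X'$), $X_{u,a}^2=X_{u,a}$, and $\sum_a X_{u,a}=1$.

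For item~(1), expand $Z_{v,s}^2=\sum_{a,b}X_{v,a}X_{v,b}\,X'_{v,a+s}X'_{v,b+s}$; orthogonality of the $X_{v,\cdot}$ kills every term with $a\neq b$, and on the diagonal two applications of Booleanity (one for $X$, one for $X'$) give $\sum_a X_{v,a}X'_{v,a+s}=Z_{v,s}$. For item~(2), the analogous expansion of $Z_{v,s}Z_{v,s'}$ vanishes off the diagonal by $X_{v,a}X_{v,b}=0$ and on the diagonal by $X'_{v,a+s}X'_{v,a+s'}=0$, using that $s\neq s'$ implies $a+s\neq a+s'$; and $\sum_s Z_{u,s}=\sum_a X_{u,a}\big(\sum_s X'_{u,a+s}\big)=\sum_a X_{u,a}\cdot 1=1$, re-indexing the inner sum and using normalization for $X'$ and then for $X$.

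Item~(3) is the only place the affine structure enters. Writing out the product of the four sums $Z_{u,s}=\sum_c X_{u,c}X'_{u,c+s}$, $Z_{v,s'}=\sum_d X_{v,d}X'_{v,d+s'}$, $Y_{(u,v)}=\sum_a X_{u,a}X_{v,\pi_{uv}(a)}$, and $Y'_{(u,v)}=\sum_b X'_{u,b}X'_{v,\pi_{uv}(b)}$, consider a monomial of the expansion indexed by $(a,b,c,d)$. Orthogonality applied to the factor $X_{u,c}X_{u,a}$ forces $a=c$; applied to $X_{v,d}X_{v,\pi_{uv}(a)}$ it forces $d=\pi_{uv}(a)$; applied to $X'_{u,c+s}X'_{u,b}$ it forces $b=a+s$; applied to $X'_{v,d+s'}X'_{v,\pi_{uv}(b)}$ it forces $\pi_{uv}(b)=d+s'$, i.e.\ $\pi_{uv}(a+s)=\pi_{uv}(a)+s'$. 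Since $\pi_{uv}$ is an affine shift $\sigma\mapsto\sigma+b_{uv}$, this last equation reads $s=s'$, contradicting $s\neq s'$; hence every monomial is killed by some axiom and the product is identically $0$.

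I do not anticipate a real obstacle: all three statements are purely algebraic consequences of the axioms in degree $\le 8$, obtained by expansion and index bookkeeping, so nothing is subtle about the SoS degree or the existence of a square term. The only point requiring care is the order in which one applies the orthogonality relations in item~(3) together with the final use of affineness of $\pi_{uv}$; this is exactly where the hypothesis that the instance is affine (rather than an arbitrary unique-games instance) is needed, since for a general bijection $\pi_{uv}$ one could have $\pi_{uv}(a+s)=\pi_{uv}(a)+s'$ with $s\neq s'$ for some $a$, and the identity would fail.
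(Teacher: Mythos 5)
Your proposal is correct and follows essentially the same route as the paper: items (1) and (2) by direct expansion using the Booleanity, orthogonality, and normalization axioms, and item (3) by expanding the quadruple sum, chaining the orthogonality constraints to force a system of index equations, and deriving $s = s'$ from the affine form of $\pi_{uv}$ — exactly the contradiction the paper exhibits. Your closing remark correctly identifies that affineness is the essential hypothesis for item (3).
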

\begin{proof}
The first two items are easily verified via direct computation, using properties of the variables $X_{u,a}$.
We prove that the final property holds.
Since our UG instance is affine, we have that for each $i,j \in E$, $\pi_{ij}(a) = a + h_{ij}$ for some $h_{ij} \in \Sigma$.
 Therefore,
 \begin{align*}
 Z_{i,s} Z_{j,t}Y_{(i,j)}Y'_{(i,j)}
 &= \sum_{a,b,c,d \in \Sigma} X_{i,a}X'_{i,a+s} \cdot X_{j,b}X'_{j,b+t} \cdot X_{i,c}X_{j,c + h_{ij}} \cdot X'_{i,d} X'_{j,d+h_{ij}}\\
 &= 0,
 \end{align*}
 where we derive the final equality from the disjointness constraints (i.e. that $X_{i,a}X_{i,b} = 0$ whenever $a \neq b$), as for the above term to be nonzero we require $a = c$, $d = a + s$, $b = d - t + h_{ij} = a + s - t + h_{ij}$, and also $b = c + h_{ij}$, which implies $a + s - t = c$, a contradiction since $t \neq s$.
This establishes the final property.
\end{proof}

\subsubsection{Shift-Symmetry}\label{sec:symm-prelims}
Next, we define the notion of shift-symmetric functions and pseudodistributions and establish some properties that they
satisfies.
\begin{definition}[Shift-Symmetry]\label{def:shift-symm}
We say that a pseudodistribution $\mu$ is shift symmetric if 
for any monomial $\prod\limits_{i=1}^{D} X_{u_i, a_i}$ and $s\in\Sigma$ it holds that
\[
\pE_{\mu}\left[\prod\limits_{i=1}^{D} X_{u_i, a_i}\right]
=
\pE_{\mu}\left[\prod\limits_{i=1}^{D} X_{u_i, a_i+s}\right].
\]
\end{definition}
We say that a polynomial $p(X_{u_1,\sigma_1},\ldots)$ equivalently also thought of as a function $f: \Sigma^n \rightarrow \R$ is shift-symmetric if $f(X) = f(X+s)$ for all $s \in \Sigma$.

Given a pseudodistribution obtained by the \sos relaxation of sum of squares of an 
Affine Unique-Game, we are able to 
transform it into a shift-symmetric pseudodistributions 
with the same value by defining a new pseudodistribution $\mu^{\text{sym}}$ as follows
\[
\widetilde{E}_{\mu^{\text{sym}}}\left[\prod\limits_{i=1}^{D} X_{u_i, a_i}\right] 
= 
\frac{1}{|\Sigma|}\sum\limits_{s\in\Sigma}
\pE_{\mu}\left[\prod\limits_{i=1}^{D} X_{u_i, a_i+s}\right].
\]
It is easy to check that $\pE_{\mu^{\text{sym}}}$ is a valid pseudoexpectation, 
and that it's value is the same; indeed, this follows since if $(a,b)\in \Sigma$ satisfies a constraint in an affine Unique-Games instance, then $(a+s,b+s)$ also satisfy that 
constraint for any $s\in\Sigma$. More generally one can check that $\pE_{\mu^{\text{sym}}}[p(X)] = \pE_{\mu}[p(X)]$ for all shift-symmetric polynomials $p$.
Thus, we will assume henceforth that our pseudoexpectation and the pseudodistribution can be made shift symmetric without losing the value and in general preserving the pseudoexpectation of shift-symmetric polynomials.

\subsection{Information Theory}
We will use $\mu|_R$ to denote the marginal distribution of a random variable $R \sim \mu$. We use $TV(A,B)$ to denote the total-variation distance between two distributions $A,B$.

\begin{definition}[Mutual Information]
Given a distribution $\mu$ over $(X,Y)$, the mutual information between $X,Y$ is defined as:
\[I_\mu(X;Y) = D_{KL}(\mu~ || ~\mu|_X \times \mu|_Y),\]
where $D_{KL}$ is the Kullback-Leibler divergence. The conditional mutual information between $(X,Y)$ with respect to a random variable $Z$ is defined as:
\[I(X;Y | Z) = \E_{z \sim Z}[I_{\mu | Z = z}(X;Y)].\]
\end{definition}

\begin{lemma}[Pinsker's inequality]
Given any two distributions $D_1,D_2$:
\[TV(D_1, D_2) \leq \sqrt{\frac{1}{2}D_{KL}(D_1 || D_2)}.\]
Using this we get that for a distribution $\mu$ over $(Y,Y')$:
\[TV(\mu, \mu_Y \times \mu_{Y'}) \leq O(\sqrt{I(Y;Y')}).\]
\end{lemma}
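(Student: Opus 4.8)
The plan is to establish the first, classical two-distribution inequality (Pinsker) and then read off the mutual-information statement directly from the definition of $I_\mu(Y;Y')$. For the main inequality I would use the standard reduction to the two-point case. Given $D_1,D_2$ on a common space, let $A$ be the set on which the density of $D_1$ is at least that of $D_2$, and consider the deterministic coarsening that sends a point to the indicator of its membership in $A$; this pushes $D_1,D_2$ forward to distributions $\bar D_1,\bar D_2$ on $\{0,1\}$. Two facts drive the reduction: (i) $TV(D_1,D_2) = D_1(A) - D_2(A) = TV(\bar D_1,\bar D_2)$, since the total variation distance is attained on the likelihood-ratio set $A$; and (ii) $D_{KL}(D_1\,\|\,D_2) \ge D_{KL}(\bar D_1\,\|\,\bar D_2)$, the data-processing inequality for KL divergence, which follows from the log-sum inequality applied separately over $A$ and over its complement and then added. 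Combining (i), (ii), and the two-point case yields the general bound.

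It then remains to prove the two-point case: for $p,q\in(0,1)$, $\phi(p\,\|\,q) := p\ln\frac{p}{q} + (1-p)\ln\frac{1-p}{1-q} \ge 2(p-q)^2$. I would fix $p$ and study $g(q) := \phi(p\,\|\,q) - 2(p-q)^2$ as a function of $q$. A direct computation gives $g'(q) = \frac{q-p}{q(1-q)} + 4(p-q) = (q-p)\bigl(\frac{1}{q(1-q)} - 4\bigr)$, and since $q(1-q)\le\frac14$ the factor $\frac{1}{q(1-q)}-4$ is nonnegative; hence $g'(q)$ has the same sign as $q-p$, so $g$ is minimized at $q=p$, where $g(p)=0$. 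Thus $g\ge 0$ everywhere, which is exactly the two-point inequality. Degenerate cases in which $p$ or $q$ lies in $\{0,1\}$ are handled by continuity and direct inspection, using that $D_{KL}=\infty$ whenever the support condition is violated.

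For the corollary, by definition $I_\mu(Y;Y') = D_{KL}\!\left(\mu \,\big\|\, \mu|_Y \times \mu|_{Y'}\right)$, so applying the main inequality with $D_1=\mu$ and $D_2=\mu|_Y\times\mu|_{Y'}$ gives $TV\!\left(\mu,\ \mu|_Y\times\mu|_{Y'}\right) \le \sqrt{\tfrac12\,I_\mu(Y;Y')} = O\bigl(\sqrt{I_\mu(Y;Y')}\bigr)$, as claimed.

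The only genuinely delicate ingredient is the data-processing / log-sum step in (ii); the rest is either a definition chase or the short calculus argument for the two-point case, which is the technical heart of the proof but is entirely elementary. Since this is a standard result, I would likely just state the reduction and the log-sum inequality and carry out the one-variable computation, citing a reference for (ii) if desired.
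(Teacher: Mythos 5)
Your proof is correct: the reduction to the two-point case via the likelihood-ratio set together with data processing for KL, the one-variable calculus verification of $p\ln\frac{p}{q}+(1-p)\ln\frac{1-p}{1-q}\ge 2(p-q)^2$, and the corollary obtained by substituting $D_1=\mu$, $D_2=\mu|_Y\times\mu|_{Y'}$ into the definition $I_\mu(Y;Y')=D_{KL}(\mu\,\|\,\mu|_Y\times\mu|_{Y'})$ are all standard and sound. The paper itself states this lemma as a known fact without proof, so there is no argument in the text to compare against; your write-up simply supplies the classical proof that the authors implicitly cite.
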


\begin{lemma}[Data processing inequality]
Let $X,Y,A,B$ be random variables such that $H(A|X) = 0$ and $H(B|Y) = 0$, i.e. $A$ is fully determined by $X$ and $B$ is fully determined by $Y$. Then:
\[I(A;B) \leq I(X;Y).\]
\end{lemma}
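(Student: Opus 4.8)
The plan is to derive the inequality from the chain rule for mutual information together with the elementary observation that a deterministic function of a conditioned random variable carries no further information. First I would record the two facts I will use: mutual information and conditional mutual information are non-negative, and they obey the chain rule $I(U;V,W) = I(U;V) + I(U;W\mid V)$. Both hold for the finite-alphabet random variables that arise in this paper, so no question of existence or finiteness of the relevant entropies comes up.

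Next I would observe that since $H(A\mid X) = 0$, conditioning on $X$ fixes the value of $A$, so $A$ is independent of the pair $(Y,B)$ given $X$; that is, $I(A;Y,B\mid X) = 0$. Symmetrically, since $H(B\mid Y) = 0$, we get $I(X;B\mid Y) = 0$. With these in hand I would expand the joint quantity $I(A,X;Y,B)$ in two ways using the chain rule: on the one hand $I(A,X;Y,B) = I(X;Y,B) + I(A;Y,B\mid X) = I(X;Y,B)$, and on the other hand $I(A,X;Y,B) = I(A;Y,B) + I(X;Y,B\mid A) \ge I(A;Y,B)$, where the inequality is non-negativity of conditional mutual information. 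Combining the two gives $I(X;Y,B) \ge I(A;Y,B)$.

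Finally I would simplify both sides. On the left, $I(X;Y,B) = I(X;Y) + I(X;B\mid Y) = I(X;Y)$, using $I(X;B\mid Y) = 0$. On the right, $I(A;Y,B) = I(A;B) + I(A;Y\mid B) \ge I(A;B)$ by non-negativity. Chaining these yields $I(X;Y) = I(X;Y,B) \ge I(A;Y,B) \ge I(A;B)$, which is the claim.

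There is no real obstacle here: the statement is the classical data-processing inequality for the deterministic Markov chain $A - X - Y - B$, and the only care needed is to keep the chain-rule bookkeeping consistent and to invoke explicitly, at the two relevant places, that a deterministic function of a conditioned variable contributes zero conditional mutual information. If one prefers, the same argument can be packaged as: establish $I(A;B) \le I(X;B)$ from the Markov chain $A - X - B$, then $I(X;B) \le I(X;Y)$ from the Markov chain $X - Y - B$, each step being an instance of the computation above.
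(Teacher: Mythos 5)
Your proof is correct. The paper states this lemma in its preliminaries as a classical fact and gives no proof at all, so there is nothing to compare against; your argument is the standard textbook derivation (double expansion of $I(A,X;Y,B)$ via the chain rule, killing the terms $I(A;Y,B\mid X)$ and $I(X;B\mid Y)$ using the determinism hypotheses, and dropping the remaining nonnegative conditional terms), and every step is sound for the finite-alphabet variables used in the paper.
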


\section{Proof of Theorem~\ref{thm:main_johnson_close_to_1}}
\label{sec:johnson_1}
In this section we prove Theorem~\ref{thm:main_johnson} 
in the case that $c = 1-\eps$ where $\eps>0$ is small 
so as to isolate the ``conditioning on an event'' challenge as explained in the introduction. In the next section, we explain the modifications that are necessary to prove Theorem~\ref{thm:main_johnson} in full generality. 

We begin by formally defining the Johnson graph.

\begin{definition}[Johnson Graph] \label{def:johnson}
For any $\alpha \in (0,1)$ and $n,\ell \in \N$ with $\alpha\ell \in \N$ and $n > \ell$, we define the \emph{$(n,\ell,\alpha)$-Johnson graph $J(n,\ell,\alpha\ell )$} to be the graph whose vertex set is $\binom{[n]}{\ell}$ and where edges are between pairs of vertices $U,V \in \binom{[n]}{\ell}$ if and only if $|U \cap V| = (1-\alpha)\ell$. We will drop the $(n,\ell,\alpha\ell)$ when clear from context,  
and use $[N]$ to denote the set of vertices of $J$. 

We will often refer to $\alpha$ as the noise parameter 
of the graph.
\end{definition}

\paragraph{Notation:}
The $(n,\ell,\alpha\ell)$-Johnson graph contains other Johnson graphs as subgraphs, corresponding to the basic sets from 
the introduction. Indeed, for any $S \subset [n]$ of size 
smaller than $\ell$ we may consider the induced subgraph on 
all vertices $A$ such that $A\supseteq S$. We will often 
refer to such subgraphs as $|S|$-restrictions and denote it by $J|_S$. The 
motivation for this name is that for given a function $F\colon V(J)\to \mathbb{R}$ 
and $S$, we can define the restricted function 
$F|_{S}\colon V(J|_S)\to\mathbb{R}$ by $F|_{S}[A] = F[A\cup S]$. We will use the notation $\delta(F|_S)$ to denote $\E_{u \in J|_S}[F(u)]$.


\subsection{The Algorithm for Affine Unique Games over the Johnson Graph}
In this section, we describe our algorithm for Affine Unique-Games on $J(n,\ell,\alpha)$. Throguhout, when 
we say $q = \exp(r)$ we mean that there exist universal positive constants $c_1, c_2$ such that $2^{c_1 r} \leq q \leq 2^{c_2 r}$. Further if we set a parameter $q$ to be $\exp(r)$ we mean that we set $q = 2^{cr}$ for some universal constant $c$. We also have
an error parameter $\nu>0$ which we determine
in the end but should be thought of as small but bounded away from $0$.

\paragraph{High level description of the algorithm}
\begin{enumerate}
    \item First, our algorithm solves the degree $D$ sum of squares relaxation of program~\eqref{eq:ip} to find a pseudo-expectation $\pE_{\mu}$ corresponding to a pseudo-distribution 
    whose value is at least $1-\eps$.
    \item Next, using the manipulations that were described above, we produce a new one corresponding to two independent samples, 
    as in Sections~\ref{sec:ind_samples}.
    \item In the new pseudo-distribution, it holds that the expected objective function for both the samples $X$ and $X'$ is at least $1-\eps$, hence morally the fraction of edges that are simultaneously satisfied by $X$ and $X'$ is $1-2\eps$. We argue that in this case, there exists a subcube $C$ corresponding to a restriction of constant size on 
    which we have a large \emph{shift-partition potential} \footnote{The existence 
    of such a subcube is argued via global hypercontractivity,
    and to ensure this applies for our pseudodistribution 
    we have to make sure that the global hypercontractivity 
    result we used is proved within the sum-of-squares 
    system.}. We formally define this notion soon, but remark for now that it guarantees that a simple rounding procedure as described in the introduction manages to satisfy a constant fraction of the constraints inside the subcube.
    \item We use a subprocedure, that we refer to as SubRound, to find a subcube $C$ with large shift-partition potential and then an $\Omega(1)$-valued assignment for the vertices in the subcube $C$.
    \item We randomize the edges incident on the vertices of $C$ to get a new instance $I'$ (with possibly lower value) 
    and iterate the algorithm again. The goal of this step is 
    that in the next iteration, we will find a subcube $C'$ 
    which only has very small overlap with the vertices we have assigned thus far. Satisfying a large fraction of $C'$ in the same manner as we did for $C$ would then ensure that we have satisfied new edges and made progress. Iterating thus, we manage to satisfy an $\Omega(1)$-fraction of the graph. This kind of iteration was also used in~\cite{BBKSS}, but at each step they only managed to satisfy an $\Omega(1/\ell^r)$-fraction of the constraints in an $r$-subcube.
\end{enumerate}

\subsubsection{Shift-partition Potential~\cite{BBKSS}}
Our analysis uses the definition of the shift-partition potential from~\cite{BBKSS}, but we explain it here for completeness. 
Recall that given two assignments $X$ and 
$X'$ to an affine Unique-Games instance $I$, 
one may define the shift partition 
$F_s$ which consists of vertices $v$ for 
which $X(v) - X'(v) = s$. As explained 
in the introduction we will heavily use the shift partition, but for 
technical reasons we only want to work with vertices $u$ on which many of the constraints adjacent to them are satisfied. For an assignment $X$ and a vertex $v \in V(G)$, let ${\sf val}_v^G(X)$ denote the value of $v$, i.e. the fraction of the edges of $G$ that are incident on $v$ and are also satisfied by $X$. We drop the superscript $G$ when clear from context.

Naturally, 
this means that we want to include $v$ in $F_s$ only if $X(v) - X'(v) = s$ and ${\sf val}^G_v(X), {\sf val}^G_v(X')$ are at least 
somewhat large. To stay within the realm of
\sos though, we replace these indicator constraints with approximating polynomials 
as in Theorem~\ref{thm:step-approx}.

\begin{definition}[Shift-Partition~\cite{BBKSS}]
Given two affine UG assignments $X$ and $X'$ to $G$ define the following functions $\{F_s:V(G) \to \R\}_{s \in \Sigma}$ \begin{equation}
F_{s}^G(u) = \Ind(X_u - X'_u = s) \cdot p_{\beta,\nu}(\val^{G}_u(X))p_{\beta,\nu}(\val_u^{G}(X')),\label{eq:fs-j}
\end{equation}
where $p_{\beta,\nu}(x)$ is the degree-$\tilde{O}(1/\nu)$ polynomial which is a polynomial approximation to $\Ind[x \geq \beta]$ with accuracy parameter $\nu$ as in Theorem~\ref{thm:step-approx}. We drop the superscript $G$ in $F_s^G$ when the graph is clear from context.
\end{definition}



Equipped with the definition of shift partition, we can now define the shift-partition potential that governs 
our most basic rounding procedures inside 
subcubes.
\begin{definition}[Shift-Partition Potential~\cite{BBKSS}]\label{def:shift-partition}
The shift-partition size given two UG assignments $X$ and $X'$ on a graph $G$ is defined as:
\[\Phi^G_{\beta,\nu}(X,X') = \sum_{s \in \Sigma} \E_{u \in G}[F_s^G(u)]^2.\]
The shift-partition potential for a pseudodistribution $\cD$ over $(X,X')$ is defined as:
\[\Phi^G_{\beta,\nu}(\cD) = \pE_{(X,X') \sim \cD}[\Phi^G_{\beta,\nu}(X,X')].\]
\end{definition}
Intuitively, $\E_{u \in G}[F_s^G(u)]$ measures the 
fractional size of $F_s$ in the shift partition, 
hence the shift partition potential measures the 
collision probability of the shift partition. In
particular, it can be $\Omega(1)$ if and inly if there
is $s$ such that the part $F_s$ in the partition 
has constant density.

Throughout our arguments, we will consider the shift-partition potential with respect 
to the whole Johnson graph, as well as with 
respect to subcubes inside it. When we consider the shift-partition potential with respect to a subcube $J|_a$ we abbreviate it as $\Phi^a_{\beta,\nu}(\cD)$.

\subsubsection{The Main Routine}
We now give a formal description of our algorithm, modulo the procedure SubRound which we present later.
\begin{algorithm-thm}[Unique Games on the Johnson Graph]\label{alg:j} 
The input to the algorithm is an instance $I$ of affine Unique-Games over a Johnson graph $J(n,\ell,\alpha)$ such that 
${\sf val}(I) \geq 1-\eps$. The output of the algorithm is 
an assignment to $V(J)$.
\begin{enumerate}
\item Let $r =\left\lfloor\frac{64\sqrt{\eta}}{\alpha}\right\rfloor$, $\delta := \frac{\eps}{\exp(r)}$, $\gamma = \eps$ and $D = \frac{|\Sigma|^3\ell^{O(r)}}{\eps^3}$.
Fix $\cA_I$ to be the set of unique games axioms/ integer program over the instance $I$ (Program~\eqref{eq:ip}).

\item Set $j = 1, I_0 = I, R_0 = \phi$.
\item While $|R_{j-1}| < \frac{\gamma}{2}|V(J)|$ do the following:
\begin{compactenum}
\item Solve the degree-$D$ \sos SDP relaxation for the integer program $\cA_{I_{j-1}}$ and apply the transformation 
from Sections~\ref{sec:ind_samples} to get a pseudo-expectation corresponding to two independent samples. Let $\mu_j \times \mu_j$ be the corresponding pseudo-distribution.
\item Find $r' \le r$ and an $r'$-restricted subcube $C_j$ with high SubRound-value\footnote{This is the value of the assignment returned by the subroutine SubRound in~\eqref{alg:j-partial}.}, namely such that $\text{SubRound-val}_{\mu_j}(C_j) \ge \delta$. 

\item Run SubRound (Algorithm~\ref{alg:j-partial}) on $C_j$ with pseudodistribution $\mu_{j}$ to get an assignment $f_j$ to $V(C_j)$.

\item Let $S_j$ be a subgraph of $C_j$ induced by the set of vertices that have not been previously assigned by any partial assignment $f_{k}, k < j$ and assign them using $f_j$. Set $R_j = R_{j-1} \cup H_j$. 

\item Choose random affine-constraints on edges incident on $V(R_j)$ and let the new instance by $I_j$. 

\item Increment $j$.
\end{compactenum}
\item Output any assignment to $V(J)$ that agrees with all partial assignments $f_j$ (assigned to $S_j$) considered above.
\end{enumerate}
\end{algorithm-thm}

\subsubsection{The Subroutine SubRound and Condition\&Round}
Before describing the subroutine SubRound, 
we present several notations that are necessary for the analysis. For a vertex $u\in V(G)$ we will introduce new auxiliary Boolean variables $p_u$ and $p'_u$.  Morally speaking, $p_u$ are the indicators
that many of the constraints adjacent to $u$ are satisfied by the assignment $X$, but formally 
we approximate it by polynomials. 

First note that given a valid degree $D$ pseudodistribution $\cD$ one can extract valid and consistent local distributions over any $D$ variables from $X$, for e.g. looking at the marginal of $(X_u,X_v)$ in $\cD$ we get a distribution over $\Sigma \times \Sigma$. Similarly for any indicator variables $R$ in $\cD$ (where $R_u = R_u^2$ under the axioms of $\cD$) one could analogously define local distributions over $D$ variables from $X,X',R,R'$. Unfortunately the indicator that the value of $u$ is large is not a polynomial. Nevertheless using $\cD$ we can extract the following collection of local distributions over $X,X',p,p'$, where $p$ is a new set of variables we introduce and are not present in $\mu$. We denote this collection by $\cL(\cD, p_{\beta,\nu})$.

Formally, given a pseudodistribution $\cD$ of degree $D > \tO(1/\nu)$ over $(X,X')$ assignments to $\cA_I$ so that the corresponding pseudo-expectation has a high value, 
we define the local distributions $\cL(\cD, p_{\beta,\nu})$ over the variables of $p,p', X, X'$ in the following way:
\begin{definition}[Collection of Local Distributions $\cL(\cD,p_{\beta,\nu})$]\label{defn:local-dist}
Let $\cD$ be a pseudodistribution of degree $D\geq \tilde{\Omega}(d/\nu)$ over $(X,X')$ assignments to $\cA_{I}$, and let the polynomial $p_{\beta,\nu}$ be from Theorem~\ref{thm:step-approx} for $\eps=\delta = \nu$ 
and $\alpha = \beta$.
We define joint distributions over collections of $d$ variables from $p,p',X,X'$ as:
\[\pPr_{{\cL}}[p_u = 1] := \pE_{\cD}[p_{\beta,\nu}(\val_u(X))], ~~~ \pPr_{\cL}[p_u = 0] = 1- \pE_{\cD}[p_{\beta,\nu}(\val_u(X))],~~~ \pPr_{\cL}[X_u = \sigma] = \pE[X_{u,\sigma}],\]
and analogously for $X'_u,p'_u$. More generally we can extend the above definition to define the probability of conjunction of events in the variables $X,X',p,p'$. For all subsets $S,T,A,B \subseteq [N]$ with $|S|+|T|+|A|+|B| \leq d$, $b_1,\ldots,b_{|S|},b'_1,\ldots,b'_{|T|} \in \{0,1\}$ and $\sigma_1,\ldots,\sigma_{|A|},\sigma'_1,\ldots,\sigma'_{|B|} \in \Sigma$ define:
\begin{align*}
&\pPr_{{\cL}}\left[\bigcap_{u \in S} p_u = b_u, \bigcap_{v \in T} p'_v = b'_v, \bigcap_{a \in A} X_{a} = \sigma_a,\bigcap_{b \in B} X'_{b} = \sigma'_b\right] \\= &\pE_{\cD}\left[\prod_{u \in S} (1-b_u + (-1)^{1-b_u}p_{\beta,\nu}(\val_u(X))) \prod_{v \in T} (1-b_v + (-1)^{1-b_v}p_{\beta,\nu}(\val_v(X'))) \prod_{a \in A}X_{a,\sigma_a} \prod_{b \in B}X'_{b,\sigma_b}\right].
\end{align*}
\end{definition}
\begin{remark}
Note that since $\cD$ has large enough degree and $\cA_I \vdash_{\tO(1/\nu)} p_{\beta,\nu}(\val_u(X)) \in [0,1]$, it is easy to check that the above collection of local distributions form valid probability distributions over $d$ variables at a time, and are consistent with each other.
\end{remark}

Using the variables $p,p'$ we further define auxiliary random variables $Y_{u,v}$ and $Y'_{u,v}$ for $u \neq v \in [N]$ that will be particularly useful in the description and analysis of our algorithm: 
\begin{equation}
Y_{u,v} = (X_u,X_v,p_u,p_v), ~~~~ Y'_{u,v} = (X'_u,X'_v,p'_u,p'_v).
\end{equation}
Additionally, analogous to the notation $Y_{u,v}$, for all $S = \{i_1,\ldots,i_{|S|}\} \subseteq [N]$, define the random variables $Y_S, Y'_S$ as: 
\[Y_S = (X_{i_1}, \ldots, X_{i_{|S|}}, p_{i_1}, \ldots, p_{i_{|S|}}), ~~~ Y'_S = (X'_{i_1}, \ldots, X'_{i_{|S|}}, p'_{i_1}, \ldots, p'_{i_{|S|}}).\]
From Definition~\ref{defn:local-dist} we know that the event $Y_S = y_S$ naturally corresponds to an \sos polynomial $Q_S$ such that $\pPr_{{\cL}(\mu \times \mu, p_{\beta,\nu})}[Y_S = y_S] = \pE_{\mu}[Q_S]$ and similarly for events on $Y'_S$. Therefore we will use the notation ``condition $\mu$ on $Y_S = y_S$'' to mean reweighting $\mu$ on the corresponding polynomial $Q_S$. We will frequently look at quantities like $\pPr_{\cL}[\Ind[Y_S = y_S] \wedge \Ind[Y_T = y_T]]$ and $I(Y_S;Y_T)$ where $|S \cup T| \leq d$. In such places we emphasize that we're thinking of the local distribution on $Y_{S \cup T}$ and computing probabilities or mutual information on this distribution.

\skipi
\paragraph{The Subroutine Condition\&Round.}
To formally state SubRound, we first need to present
the subroutine Condition\&Round from~\cite{BBKSS} which is similar to 
the basic rounding procedure from the introduction.
\begin{algorithm-thm}[Condition\&Round~\cite{BBKSS}]\label{alg:low-ent-restated} ~\\ 
\textbf{Input:} A degree-$D$ (for $D \geq 2$) 
shift-symmetric pseudodistribution
$\mu$ for a UG  instance $I'=(G=(V,E),\Pi)$ over alphabet $\Sigma$.  \\
\textbf{Output:} Returns an assignment $x \in \Sigma^V$. 
\\\\
Sample a random solution $Z$:
\begin{compactenum}
\item Sample a vertex $u \sim \pi$ and condition on $X_{u} = 0$ to obtain the new marginals $\pE_\mu[\cdot ~|~ X_{u} = 0]$.
\item Sample a solution $Z$ by choosing each collapsed variable's labels independently according to its marginals: $Z_v \sim \pE_\mu[X_v ~|~ X_{u} = 0]$.
\end{compactenum}
\end{algorithm-thm}
\paragraph{Derandomized Condition\&Round.}
For future reference, it will be more convenient for us to analyze
a derandomized version of Condition\&Round, which makes the 
above procedure deterministic using standard methods such as 
the method of conditional expectations. We omit the straightforward details, and for us it suffices that this derandomized 
procedure always achieves value which is at least the 
expected value of what Condition\&Round gives.
\skipi
We can now describe SubRound that obtains a partial assignment for the Johnson graph. 

\paragraph{High level description of SubRound.}
In this routine, we are given a pseudoexpectation and a 
subcube $J|_a$ (on which we hope the value is large), and we wish to 
find a good assignment to the nodes in the given subcube, 
or quit. This method successfully assigns values to $V(J|_a)$ 
in the case that the shift-partition potential function on the subcube is large.

Towards this end, we find an appropriate conditioning on the random variables $Y_{u,v}$ so that after performing it, 
they are nearly pairwise independent (we use mutual information 
to denote this; this is the step that we referred to in 
the introduction as eliminating global correlations). Denote the new distribution obtained as $\mu_1 \times \mu_2$.

Simultaneously we find an event 
(which we state 
as a polynomial to stay within the realm of \sos) $P_a(X,X')$, which roughly says that 
the part $\{v~:~X(v)-X'(v) = s\}$ is dense inside $J|_a$, 
such that this event has significant probability 
and conditioned on it the shift-partition inside $J|_a$
is large, where both these facts are measured with respect to $\mu_1 \times \mu_2$. We give a formal definition of $P_a(X,X')$ in Lemma~\ref{lem:sp-j}.

Finally, we run the procedure Condition\&Round using either the symmetrized versions of $\mu_1$ or $\mu_2$ which 
returns a good assignment to the given subcube provided that the previous steps succeeded.

\skipi
We now move on to the formal description of SubRound.
\begin{algorithm-thm}[SubRound: Rounding for a subgraph]\label{alg:j-partial} ~\\
\textbf{Input:} Takes as input an affine Unique-Games instance $I = (G, \Pi, \Sigma)$ over an $(n,\ell,\alpha\ell)$-Johnson graph $G$ and alphabet $\Sigma$, a pseudo-distribution $\mu$ with $\val_\mu(I) \geq 1- \eta$, a $j$-restricted subcube $J|_a$, with $j \leq r = \left\lfloor\frac{32\sqrt{\eta}}{\alpha}\right\rfloor$. 

\noindent\textbf{Output:} Returns an assignment in $\Sigma^{|V(J|_a)|}$ to the vertices of $J|_a$. 
\begin{enumerate}
\item Fix $t = \frac{|\Sigma|^3\ell^{\Theta(r)}}{\eta^3}, \tau = \Theta(\frac{\eta}{|\Sigma|^2\ell^{2r}\exp(r)}), \beta = \sqrt{\eta}, \nu = \eta\exp(-r)$.
\item Find a polynomial $P_a(X,X')$ from the set of polynomials given by Lemma~\ref{lem:sp-j}, subsets $A, B \subseteq V(J|_a)$ of size at most $t$ and strings $y_A, y_B$ so that conditioning $\mu$ on the events $Y_A = y_A$ and $Y_B = y_B$ gives pseudodistributions $\mu_1, \mu_2$ satisfying:
\begin{enumerate}
\item 
Shift partition potential inside $J|_{a}$ is significant: 
\[
\pE_{\mu_1 \times \mu_2}[\Phi^a_{\beta,\nu}(X,X')P_a(X,X')] \geq \exp(-r) \pE[P_a(X,X')].
\]
\item The probability of $P_a$ is significant:
\[
\pE_{\mu_1 \times \mu_2}[P_a(X,X')] \geq \Omega\left(\frac{\sqrt{\eta}}{|\Sigma|\ell^{j}\exp(r)}\right).
\]
\item The random variables $Y_{u,v}$ have small global correlation inside $J|_a$: 
\[\E_{\substack{u_1,v_1 \sim S: u_1 \neq v_1\\ u_2,v_2 \sim S: u_2 \neq v_2}}[I(Y_{u_1,v_1};Y_{u_2,v_2})] \leq \tau.
\]
\item The random variables $Y_{u,v}'$ have small global correlation inside $J|_a$: \[
\E_{\substack{u_1,v_1 \sim S: u_1 \neq v_1\\ u_2,v_2 \sim S: u_2 \neq v_2}}[I(Y'_{u_1,v_1};Y'_{u_2,v_2})] \leq \tau.
\]
\end{enumerate}
Here $S = V(J|_a)$, and the mutual information is taken with respect to the collection of local distributions $\cL(\mu_1 \times \mu_2,p_{\beta,\nu})$. If no such polynomial or conditioning exists then output the all zeros assignment on $V(J|_a)$ and exit.
\item Symmetrize $\mu_1, \mu_2$ using the procedure in Section~\ref{sec:symm-prelims} to obtain $\mu^{\sym}_1, \mu^{\sym}_2$. Perform derandomized Condition\&Round on $\mu^{\sym}_1$ or $\mu^{\sym}_2$ and output the higher valued assignment obtained for $V(J|_a)$.
\end{enumerate}
\end{algorithm-thm}

\subsection{Analysis of Algorithm~\ref{alg:j}}
\subsubsection{High Level Description of the Analysis}
We prove that Algorithm~\ref{alg:j} returns a solution with value independent of the alphabet size. We begin by explaining the ideas of the analysis~\cite{BBKSS} for small-set expanders, and then 
explain how this analysis is adapted to our case.

\paragraph{Small-set expanders~\cite{BBKSS}:} 
Consider the shift-partition potential, 
and suppose for simplicity of presentation that the $p$'s are replaced with the actual indicator functions. 
Then the functions $F_s$ cover 
almost all the vertices of the graph, 
and any edge that goes across parts in the shift-partition must be violated by either $X$ or $X'$. Hence if $X,X'$ were $(1-\eps)$-satisfying assignments then the shift-partition is non-expanding, in the sense that at most $2\eps$-fraction of the edges go across the parts (those that are violated 
by either $X$ or $X'$). 

In the case of $(2\eps,\delta)$-small-set expanders,~\cite{BBKSS} then conclude that the fractional-size of one of the parts must be at least $\Omega(\delta)$, and in fact the shift-partition size: $\sum_s \E_{u \sim G}[F_s(u)]^2$, must be $\geq \Omega(\delta)$. On the other hand they prove that if the shift-partition potential is large on $\mu \times \mu$ then Condition\&Round on $\mu$ succeeds in rounding to an $\Omega_{\eps,\delta}(1)$-assignment, 
and formally they show:
\begin{enumerate}
\item Given a degree $D$ certificate of $(O(\eps),\delta)$-small-set expansion for a graph $G$ and a degree $\Omega(D)$ pseudodistribution $\mu$ with value $1-\eps$ on any affine UG instance $I = (G,\Pi)$, conclude that $\Phi^G_{\beta,\nu}(\mu \times \mu) \geq \delta$.
\item Show that if $\Phi^G_{\beta,\nu}(\mu \times \mu)$ is large then Condition\&Round outputs an assignment with large value.
\end{enumerate}

\paragraph{Our analysis for Johnson graphs:} 
Let us see how to adapt this analysis to the case of Johnson graphs. Step (1) is far from being true since Johnson graphs are not small-set expanders. But we know that all non-expanding sets in these graphs must have large size when restricted to some subcube, although the set might not be large in the whole graph. We use this property to conclude that the shift-partition potential must be large on a subcube. This part of the analysis turns out to be much more non-trivial than the certifiable-SSE case and is where the bulk of our technical work lies. 

Once we have this we use a suitably modified version of step (2) to conclude that Condition\&Round when applied to the subcube outputs an $\Omega(1)$-assignment to the subcube. We then use an iteration lemma to run this algorithm multiple times to output an $\Omega(1)$-satisfying assignment to the whole graph. 


\subsubsection{Lemmas to be Proven Later}
In this section, we state a few lemmas that are necessary for the analysis of 
Algorithm~\ref{alg:j}, whose proofs are deferred to later sections. In Section~\ref{sec:combine_analysis} we 
use these lemmas to analyze the performance 
of Algorithm~\ref{alg:j}.
\skipi

The analysis of Algorithm~\ref{alg:j} proceeds by showing that given a pseudodistribution for $I$ with high value, there exists a subcube with high SubRound value, i.e. showing the success of Step 3(a). To do so, we first prove that the structure theorem for Johnson graphs (Theorem~\ref{thm:structure-johnson}) implies that given a pseudodistribution $\mu$ with large value, there exists a subcube $J|_a$ with large shift-partition potential denoted by $\Phi^a_{\beta,\nu}$. Since the Johnson graph is not a small-set expander though, we have the following more subtle statement: There exists an $r = O(\sqrt{\eps}/\alpha)$-restricted subcube $a$ and an event $P_a$ that has large probability, such that if we condition $\mu \times \mu$ on $P_a$, the induced shift-partition potential is large on the subcube $J|_a$. The event $P_a$ roughly corresponds to the indicator that the size of one of the shift-partition components is large inside the subcube $J|_a$, i.e. $\Ind(\delta(F_s|_a) > \exp(-r))$ and has probability $\geq \ell^{-O(r)}$.



\begin{lemma}\label{lem:sp-j}
There exists a constant $\eps_0 \in (0,1)$, such that for all positive constants $\eps \leq \eps_0$, $\alpha, \tau \leq 1$, $\nu \leq \eps \exp(-r)$, all integers $\ell \geq \Omega(r), n \geq \ell$, where $r = \lfloor \frac{32\sqrt{\eps}}{\alpha}\rfloor$ the following holds. 

Let $I$ be an affine UG instance on $J(n,\ell,\alpha\ell)$ and $\mu$ be a pseudodistribution over assignments for $I$ with $\val_\mu(I) \geq 1-\eps$ and degree at least $D = \widetilde{\Omega}\left(\frac{|\Sigma|\ell^r\exp(r)}{\nu\tau\sqrt{\eps}}\right)$. Then there exists a restriction $a \subseteq [n]$ of size $j \leq r$, a degree $\widetilde{O}(1/\nu)$ polynomial $P_a(X,X')$ from a fixed set of $|\Sigma|$ polynomials, subsets $A, B \subseteq V(J|_a)$ of size at most $\tO(\frac{|\Sigma|\ell^j\exp(r)}{\tau\sqrt{\eps}})$ and strings $y_A, y_B$ such that conditioning $\mu$ on the events $Y_A = y_A$ and $Y_B = y_B$ gives degree $\widetilde{\Omega}(1/\nu)$ pseudodistributions $\mu_1$ and $\mu_2$ such that:
\begin{enumerate}
\item $\cA_I  ~~\vdash_{\widetilde{O}(1/\nu)} ~~ P_a(X,X') \in [0,1]$.
\item $\pE_{\mu_1 \times \mu_2}[\Phi^a_{\sqrt{\eps},\nu}(X,X')P_a(X,X')] \geq \exp(-r) \pE[P_a(X,X')]$.
\item $\pE_{\mu_1 \times \mu_2}[P_a(X,X')] \geq \Omega\left(\frac{\sqrt{\eps}}{|\Sigma|\ell^{j}\exp(r)}\right)$.
\item $\E_{\substack{u_1,v_1 \sim S: u_1 \neq v_1\\ u_2,v_2 \sim S: u_2 \neq v_2}}[I(Y_{u_1,v_1};Y_{u_2,v_2})] \leq \tau.$
\item $\E_{\substack{u_1,v_1 \sim S: u_1 \neq v_1\\ u_2,v_2 \sim S: u_2 \neq v_2}}[I(Y'_{u_1,v_1};Y'_{u_2,v_2})] \leq \tau,$
\end{enumerate}
where $S = V(J|_a)$, $Y_{u,v} = (X_u,X_v,p_u,p_v)$, $Y'_{u,v} = (X'_u,X'_v,p'_u,p'_v)$ and the mutual information is taken with respect to the collection of local distributions $\cL(\mu_1 \times \mu_2,p_{\sqrt{\eps},\nu})$. 
\end{lemma}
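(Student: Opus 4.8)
The plan is to prove Lemma~\ref{lem:sp-j} in three stages: (i) use the value guarantee together with the (SoS) structure theorem for Johnson graphs to locate, \emph{before any conditioning}, a constant-size restriction on which one part of the shift partition is dense with noticeable pseudoprobability; (ii) read off the polynomial $P_a$ and check that items~1 and~2 are essentially structural/pointwise facts that survive conditioning; and (iii) perform an iterative RT12-style conditioning that simultaneously eliminates global correlations and preserves this density, yielding items~3--5. For stage~(i): in $\mu\times\mu$ both $X$ and $X'$ have value $\geq 1-\eps$, so by Markov's inequality at most $O(\sqrt\eps)$ of the vertices $u$ have $\val_u(X)<1-\sqrt\eps$ (and likewise for $X'$); hence, with $\beta=\sqrt\eps$, the functions $F_s$ of Definition~\ref{def:shift-partition} satisfy $\sum_s F_s(u)\approx 1$ on all but an $O(\sqrt\eps)$ fraction of vertices, while any edge whose endpoints have large value but lie in distinct parts of the shift partition is violated by $X$ or $X'$. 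Thus the shift partition is an $O(\sqrt\eps)$-non-expanding cover of $J(n,\ell,\alpha\ell)$. Feeding this into the structure theorem for Johnson graphs (Theorem~\ref{thm:structure-johnson}), applied \emph{inside} the SoS system via the black-box conversions of Corollary~\ref{cor:Lukacs} and Theorem~\ref{thm:blackbox-sos}, gives for each $(X,X')$ in the non-expanding regime a restriction $a$ of size $j\le r=\lfloor 32\sqrt\eps/\alpha\rfloor$ and a shift $s$ with $\delta(F_s|_a)\geq\exp(-r)$. A counting argument over the $\approx n^r$ restrictions of size $\le r$ (each $H_R$ has measure $\approx(\ell/n)^{|R|}$, and the union of the dense ones has $\Omega(1)$ measure, so they form an $\Omega(\ell^{-r})$ fraction of all restrictions), together with averaging over $X,X'\sim\mu\times\mu$ and over $s\in\Sigma$, produces a single restriction $a$ of size $j\le r$ and a single $s$ with $\pPr_{\mu\times\mu}[\delta(F_s|_a)>\exp(-r)]\geq\Omega(\sqrt\eps/(|\Sigma|\ell^{j}\exp(r)))$.

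For stage~(ii): define $P_a(X,X')$ to be $p_{\exp(-r),\nu}$ from Theorem~\ref{thm:step-approx} applied to the degree-$\tO(1/\nu)$ polynomial $\delta(F_s|_a)(X,X')$. This is a degree-$\tO(1/\nu)$ polynomial chosen from the fixed set $\{p_{\exp(-r),\nu}(\delta(F_s|_a)):s\in\Sigma\}$ of size $|\Sigma|$, so item~1 is immediate from Theorem~\ref{thm:step-approx}. For item~2, note $\Phi^a_{\beta,\nu}(X,X')\succeq\delta(F_s|_a)^2$ (drop all other terms), while Fact~\ref{fact:bdd-markov} gives $P_a\preceq\delta(F_s|_a)/(\exp(-r)-\nu)+\nu$ modulo $\cA_I$; multiplying through and using $\nu\le\eps\exp(-r)$ yields a degree-$\tO(1/\nu)$ SoS inequality $\exp(-\Theta(r))\,P_a\preceq\delta(F_s|_a)^2\preceq\Phi^a_{\beta,\nu}(X,X')$ modulo $\cA_I$, absorbing the loss into the $\Theta(r)$. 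Being a pointwise SoS inequality modulo the program axioms, item~2 is preserved under \emph{any} later conditioning, so it will hold for $\mu_1\times\mu_2$ for free once those are constructed.

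For stage~(iii): we must choose $A,B\subseteq V(J|_a)$ and $y_A,y_B$ so that, writing $\mu_1=\mu\,|\,(Y_A=y_A)$ and $\mu_2=\mu\,|\,(Y_B=y_B)$ --- where conditioning on $Y_A=y_A$ means reweighting $\mu$ by the polynomial $Q_A$ of Definition~\ref{defn:local-dist} --- the product $\mu_1\times\mu_2$ still has $\pE[P_a]\geq\Omega(\sqrt\eps/(|\Sigma|\ell^{j}\exp(r)))$ while the average pairwise mutual informations of the $Y_{u,v}$ (a function of $X$ only, hence of $\mu_1$) and of the $Y'_{u,v}$ (of $\mu_2$) over $J|_a$, computed in $\cL(\mu_1\times\mu_2,p_{\beta,\nu})$, are at most $\tau$. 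We handle the two sides separately, each by the iterative conditioning of~\cite{RT12}: repeatedly add a uniformly random vertex of $J|_a$ together with its $\cL$-marginal value of $(X_u,p_u)$ to $A$, decreasing the entropy potential $\E_{u\in J|_a}[H(Y_u)]\le\log|\Sigma|+O(1)$; as long as the average pairwise mutual information of the compound variables $Y_{u,v}$ exceeds $\tau$, a random step drops this potential by $\Omega(\tau^2)$ in expectation, and a chain-rule/sub-additivity bound controls $\E[I(Y_{u_1,v_1};Y_{u_2,v_2})]$ by a constant times the singleton quantity. The delicate point is that we must interleave this with keeping the rare event $P_a$ likely: since $\E_{y}[\pE_{\mu|Y_w=y}[P_a]]=\pE_\mu[P_a]$, almost all single conditionings roughly preserve $\pE[P_a]$, but to stop it from degrading over many steps we only allow conditionings consistent with $P_a$ retaining at least half its current value, which effectively shrinks the usable entropy budget by the probability $p$ of $P_a$; this is why the number of steps, hence $|A|,|B|$, is $\tO(|\Sigma|\ell^{j}\exp(r)/(\tau\sqrt\eps))\approx\tO(|\Sigma|/(p\tau))$ rather than $\tO(\log|\Sigma|/\tau^2)$, and why the degree budget $D=\tilde\Omega(|\Sigma|\ell^{r}\exp(r)/(\nu\tau\sqrt\eps))$ is exactly what is needed to afford $|A|+|B|$ conditionings each on a degree-$\tO(1/\nu)$ polynomial of the form $Q_{\{u\}}$.

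I expect stage~(iii) to be the main obstacle: reconciling the correlation-killing with the preservation of the $P_a$-event, carried out for pseudodistributions (so that "conditioning" means reweighting by $Q_A$, with all four pseudoexpectation properties tracked) and for the compound variables $Y_{u,v}$ that mix the SoS variables $X_u$ with the non-polynomial value-indicators $p_u$ that only live in the local distributions $\cL$. By contrast, the structure-theorem input of stage~(i) and the elementary SoS manipulations of stage~(ii) are comparatively routine given the black-box tools collected in Section~\ref{sec:prelims-sos}.
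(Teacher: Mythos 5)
Your overall architecture matches the paper's: apply the (SoS-certified) structure theorem to the shift partition, average over restrictions and shifts to isolate a single pair $(a,s)$ with noticeable mass, define $P_a$ from that event, run an RT12-style conditioning that kills global correlations while preserving the event, and then read off items 2--5. However, two of your steps have genuine gaps as described.

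First, stage (ii). Item 2 asks for $\pE_{\mu_1\times\mu_2}[\Phi^a\, P_a]\geq \exp(-r)\,\pE[P_a]$, with the product $\Phi^a P_a$ inside the pseudoexpectation; your displayed chain $\exp(-\Theta(r))P_a \preceq \Phi^a$ only yields $\exp(-\Theta(r))\pE[P_a]\le \pE[\Phi^a]$, which is not item 2. The repair you presumably intend, a pointwise inequality of the form $P_a(\Phi^a-\exp(-\Theta(r)))\succeq -\exp(-\Theta(r))\nu$ (the additive error is unavoidable because $p_{\exp(-r),\nu}$ is not exactly zero below its threshold), gives $\pE[\Phi^aP_a]\ge \exp(-\Theta(r))(\pE[P_a]-\nu)$, and this only recovers item 2 when $\nu\ll\pE[P_a]\approx \sqrt{\eps}/(|\Sigma|\ell^{j}\exp(r))$ --- a condition \emph{not} implied by the stated hypothesis $\nu\le\eps\exp(-r)$ once $j\ge1$ and $\ell$ is large. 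This is exactly why the paper does not take $P_a$ to be an approximate indicator of the density event: it takes $P_a=q_a(F_s)$, the degree-$2$ SoS multiplier supplied by Theorem~\ref{thm:structure-johnson}, carries the quantity $\pE[P_a(\delta(F_s|_a)-\exp(-r))]\ge p/2$ \emph{through} the conditioning, and then derives item 2 by reweighting by $P_a$ and applying Cauchy--Schwarz, so that the only additive loss ($4\nu$ from Claim~\ref{claim:potentials}) is compared against $\exp(-r)$ rather than against $\pE[P_a]$. Relatedly, you set the global threshold $\beta=\sqrt{\eps}$ while the lemma's local potential is $\Phi^a_{\sqrt{\eps},\nu}$; since a vertex loses up to a $200\sqrt{\eps}$ fraction of its value when restricted to $J|_a$ (Claim~\ref{claim:subcube-expanse}), the global threshold must sit at $201\sqrt{\eps}$ for the comparison $F_s^{J|_a}\ge F_s^{G}-O(\nu)$ to hold.

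Second, stage (iii). Your adaptive rule ``only allow conditionings under which $P_a$ retains at least half its current value'' loses a factor $2$ per step, hence a factor $2^{-|A|}$ over $|A|=\tO(|\Sigma|\ell^{j}\exp(r)/(\tau\sqrt{\eps}))$ steps, which destroys item 3; it also biases the conditioning values away from their $\cL$-marginals, invalidating the entropy-decrement accounting of~\cite{RT12}. The paper's Lemma~\ref{lem:ragh-tan-appln} instead conditions in one shot on $t=O(\log|\Sigma|/(p\tau))$ random variables drawn from the marginal, and combines three Markov bounds (the two mutual-information averages, each with failure probability $\alpha=p/8$ after taking $\tau'<p\tau/8$, and $\Pr_y[\pE_{\mu|Y=y}[E]\ge p/2]\ge p/2$) in a single union bound to exhibit one choice of $(A,y_A,B,y_B)$ satisfying all conditions simultaneously; this is where the factor $1/p$ in $|A|,|B|$ and in the degree actually comes from. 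Your final parameter values agree with the paper's, but the mechanism you describe would not produce them.
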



The next lemma captures the intuition that 
given a collection of random variables $Z_{u,v}$ that are roughly pairwise independent (that are functions of assignments $X$ and $X'$), and an event $E(X,X')$ with significant 
probability, then typically $Z_{u,v}~|~E$ 
is close to $Z_{u,v}$ in statistical distance. Informally, this asserts that 
conditioning on the event $E$ does not 
change the distributions of $(X, X')$ locally.

\begin{lemma}\label{lem:correlation}
For all $\tau,p,\beta,\nu,\delta \in (0,1)$, integers $D,N$, $m \leq N$ the following holds: Let $\mu_1 \times \mu_2$ be a degree $D+\widetilde{\Omega}(1/\nu)$ pseudodistribution over $(X,X')$ satisfying $\cA_I(X,X')$ and $E(X,X')$ be a polynomial such that, $\cA_I \vdash_D E(X,X') \in [0,1]$ and $\pE_{\mu_1 \times \mu_2}[E(X,X')] \geq p$. Suppose for $S \subseteq [N]$ we have that, 
\begin{enumerate}
\item $\E_{\substack{u_1,v_1 \sim S: u_1 \neq v_1\\ u_2,v_2 \sim S: u_2 \neq v_2}}[I(Y_{u_1,v_1};Y_{u_2,v_2})] \leq \tau.$
\item $\E_{\substack{u_1,v_1 \sim S: u_1 \neq v_1\\ u_2,v_2 \sim S: u_2 \neq v_2}}[I(Y'_{u_1,v_1};Y'_{u_2,v_2})] \leq \tau$,
\end{enumerate}
where $Y_{u,v} = (X_u,X_v,p_u,p_v), Y'_{u,v} = (X'_u,X'_v,p'_u,p'_v)$ and the mutual information is with respect to the collection of local distributions $\cL(\mu_1 \times \mu_2,p_{\beta,\nu})$ (Definition~\ref{defn:local-dist}). Then we have that, 
\[\Pr_{u,v \sim S: u \neq v}[ TV((Y_{u,v},Y'_{u,v})|E , (Y_{u,v},Y'_{u,v})) \geq \delta] \leq O\left(\frac{\sqrt{\tau}+ 1/|S|}{p\delta^2}\right),\]
where the distribution $(Y_{u,v},Y'_{u,v})|E$ refers to the joint distribution on these variables defined by the collection of local distributions $\cL(\mu_1 \times \mu_2|E, p_{\beta,\nu})$ and similarly $(Y_{u,v},Y'_{u,v})$ refers to the distribution defined by $\cL(\mu_1 \times \mu_2,p_{\beta,\nu})$.
\end{lemma}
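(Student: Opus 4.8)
\medskip
\noindent\textbf{Proof proposal.} The plan is to route everything through mutual information and reduce the statement to a single near-pairwise-independence estimate, which is the heart of the matter.

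\medskip
\noindent\textbf{Step 1 (reductions).} First I would collapse the two sides. Since $\mu_1\times\mu_2$ is a product pseudodistribution, in the collection $\cL(\mu_1\times\mu_2,p_{\beta,\nu})$ the block $(X,p)$ is independent of the block $(X',p')$, so for all pairs
\[
I\big((Y_{u_1,v_1},Y'_{u_1,v_1});(Y_{u_2,v_2},Y'_{u_2,v_2})\big)=I(Y_{u_1,v_1};Y_{u_2,v_2})+I(Y'_{u_1,v_1};Y'_{u_2,v_2}),
\]
and hence, writing $W_{u,v}:=(Y_{u,v},Y'_{u,v})$, hypotheses (1)--(2) give $\E[I(W_{u_1,v_1};W_{u_2,v_2})]\le 2\tau$ over two independent uniform distinct-vertex pairs of $S$. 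Next, conditioning on $E$ is reweighting by $E(X,X')/\pE[E(X,X')]$, so for a fixed pair $(u,v)$ the standard decomposition of $I(W_{u,v};\Ind(E))$ into conditional KL terms gives $D_{KL}\big(W_{u,v}\mid E\,\|\,W_{u,v}\big)\le \tfrac1p\,I(W_{u,v};\Ind(E))$. Combining this with Pinsker's inequality ($TV\ge\delta\Rightarrow D_{KL}\ge 2\delta^2$) and Markov over $(u,v)$, it suffices to prove the single estimate
\[
\E_{u,v\sim S:\,u\ne v}\big[I(W_{u,v};\Ind(E))\big]\ \le\ O\!\big(\sqrt{\tau}+1/|S|\big). \tag{$\star$}
\]

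\medskip
\noindent\textbf{Step 2 (the main estimate $(\star)$).} The intuition is that $\Ind(E)$ carries at most one bit of entropy, so it cannot be appreciably correlated with many near-independent variables $W_{u,v}$. To make this quantitative I would sample $k$ i.i.d.\ uniform distinct-vertex pairs $\mathbf{p}_1,\dots,\mathbf{p}_k$ of $S$, abbreviate $W_{\mathbf{p}_{<i}}=(W_{\mathbf{p}_1},\dots,W_{\mathbf{p}_{i-1}})$, and use $H(\Ind(E))\le 1$ together with the chain rule:
\[
\begin{aligned}
1\ &\ge\ I(W_{\mathbf{p}_1},\dots,W_{\mathbf{p}_k};\Ind(E))\ =\ \sum_{i=1}^{k}I\big(W_{\mathbf{p}_i};\Ind(E)\mid W_{\mathbf{p}_{<i}}\big)\\
&\ge\ \sum_{i=1}^{k}\Big(I(W_{\mathbf{p}_i};\Ind(E))-I(W_{\mathbf{p}_i};W_{\mathbf{p}_{<i}})\Big).
\end{aligned}
\]
Taking expectations and using exchangeability ($\E[I(W_{\mathbf{p}_i};\Ind(E))]=\E_{u,v}[I(W_{u,v};\Ind(E))]$ for every $i$), $(\star)$ follows provided $\tfrac1k\sum_{i\le k}\E[I(W_{\mathbf{p}_i};W_{\mathbf{p}_{<i}})]=O(\sqrt\tau+1/|S|)$ for a choice $k\approx 1/\sqrt\tau$. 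For this I would set up the Raghavendra--Tan-style potentials $\phi_m=\E[H(W_{\mathbf{r}}\mid W_{\mathbf{q}_1},\dots,W_{\mathbf{q}_m})]$ over fresh pairs $\mathbf{r},\mathbf{q}_1,\dots,\mathbf{q}_m$, note that the first drop satisfies $\phi_0-\phi_1=\E[I(W_{\mathbf{r}};W_{\mathbf{q}})]\le 2\tau$, and argue --- using the freshness of each newly sampled pair and the hypothesis $\pPr[E]\ge p$, which caps how ``spread out'' $E$ can be --- that conditioning on further sampled pairs does not inflate the average pairwise correlation beyond $O(\sqrt\tau)$; summing and optimizing over $k$ then yields $(\star)$, with the additive $1/|S|$ term collecting the contributions of sampled pairs that share a vertex.

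\medskip
\noindent\textbf{Step 3 (pseudodistribution bookkeeping).} Throughout one must remember that the $p_u$ are the polynomial proxies $p_{\beta,\nu}(\val_u(X))$ rather than genuine indicators, so $\cL(\mu_1\times\mu_2,p_{\beta,\nu})$ and $\cL(\mu_1\times\mu_2\mid E,p_{\beta,\nu})$ are only guaranteed to be \emph{consistent genuine} probability distributions on at most $d$ variables at a time (the Remark after Definition~\ref{defn:local-dist}). This is why the degree is taken to be $D+\widetilde{\Omega}(1/\nu)$: every information-theoretic step above (Pinsker, the chain rule, data processing, exchangeability) is applied only to marginals on $O(k)$ variables, which is legitimate as long as $d\gg k$. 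Likewise $\Ind(E)$ is replaced everywhere by the degree-$D$ polynomial $E(X,X')\in[0,1]$ and $\pPr[E]=\pE[E(X,X')]$; conditioning is valid since $\cA_I\vdash_D E\in[0,1]$ and the degree budget suffices, and the $O(\nu)$ approximation errors from Theorem~\ref{thm:step-approx} are absorbed into the $O(\cdot)$.

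\medskip
\noindent\textbf{Main obstacle.} The hard part is Step~2, namely upgrading ``$\Ind(E)$ is one bit and the $W_{u,v}$ are $\tau$-near-pairwise-independent'' into $(\star)$ with the correct $\sqrt\tau$ rate. The naive chain-rule bound is too lossy because $I(W_{\mathbf{p}_i};W_{\mathbf{p}_{<i}})$ is mutual information with a \emph{bundle} of previously sampled pairs and is not dominated by a sum of pairwise terms (conditioning on a random variable can even increase the average pairwise correlation), so the argument genuinely needs the telescoping/potential structure together with the constraint $\pPr[E]\ge p$. A secondary nuisance is that the entire chain of inequalities has to be run for pseudodistributions with approximate indicators instead of for honest distributions.
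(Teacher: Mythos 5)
Your Step 1 reduction is fine (the inequality $D_{KL}\big(W_{u,v}\mid E\,\|\,W_{u,v}\big)\le \tfrac1p I(W_{u,v};\Ind(E))$ plus Pinsker and Markov does reduce the lemma to $(\star)$), but Step 2 --- which you correctly identify as the heart of the matter --- has a genuine gap, and it is not one that the Raghavendra--Tan potential trick can close. The chain rule gives you $\sum_i I(W_{\mathbf{p}_i};\Ind(E)\mid W_{\mathbf{p}_{<i}})\le 1$, and you need to convert each conditional term back to an unconditional one at a cost of $I(W_{\mathbf{p}_i};W_{\mathbf{p}_{<i}})$, a mutual information against a \emph{bundle} of $i-1$ previously sampled pairs. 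The hypotheses only control \emph{average pairwise} mutual information, and pairwise near-independence says nothing about bundle mutual information (pairwise independent variables can jointly determine one another). The Raghavendra--Tan argument does not rescue this: it proves that \emph{there exists} some number $t$ of conditionings after which the average pairwise correlation is small --- it gives no per-step control of $I(W_{\mathbf{p}_i};W_{\mathbf{p}_{<i}})$ for every $i\le k$, and in the present lemma the distribution $\mu_1\times\mu_2$ has already been conditioned to achieve hypotheses (1)--(2); there is no further conditioning budget. There is also a quantitative mismatch: forming the joint local distribution on $k\approx 1/\sqrt{\tau}$ sampled pairs together with $E$ requires degree $D+\widetilde\Omega(k/\nu)$, whereas the lemma only assumes degree $D+\widetilde\Omega(1/\nu)$, so the entropies $H(W_{\mathbf{p}_1},\dots,W_{\mathbf{p}_k})$ you need are not even defined by the available local distributions.

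The paper avoids information theory for the global step entirely and uses a second-moment argument, which is worth internalizing because it is exactly calibrated to pairwise information. Let $U$ be the set of pairs $a$ with $TV\ge\delta$ and fractional size $\gamma$; for each $a\in U$ pick a distinguishing event $S_a$ with $\pPr[(Y_a,Y'_a)\in S_a\mid E]-\pPr[(Y_a,Y'_a)\in S_a]\ge\delta$, and set $Z=\E_{a\sim U}\bigl[\one((Y_a,Y'_a)\in S_a)-e_a\bigr]$. Each covariance $\pCov\bigl(\one((Y_a,Y'_a)\in S_a),\one((Y_b,Y'_b)\in S_b)\bigr)$ is bounded by $O\bigl(\sqrt{I(Y_a;Y_b)+I(Y'_a;Y'_b)}\bigr)$ via Pinsker and data processing on the $4$-pair local distribution, so Jensen gives $\pE[Z^2]\le O(\sqrt{\tau}/\gamma)$. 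On the other hand, writing $Z$ as a polynomial $P$ (up to an $O(1/|U|)$ correction for intersecting pairs), one has $\pE[P^2]\ge\pE[E]\cdot\pE[P^2\mid E]\ge p\,\pE[Z\mid E]^2\ge p\delta^2$ by Cauchy--Schwarz on the reweighted pseudodistribution. Comparing the two bounds forces $\gamma\le O\bigl((\sqrt{\tau}+1/|S|)/(p\delta^2)\bigr)$. This route only ever touches local distributions on $O(1)$ vertices and degree-$2$ pseudoexpectation facts, which is why it survives the pseudodistribution bookkeeping that your Step 3 worries about. If you want to salvage your information-theoretic route, you would need to first prove $(\star)$ by essentially this same variance computation, at which point the detour through $I(W_{u,v};\Ind(E))$ buys nothing.
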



We now restrict our attention to the nice subcube $J|_a$ obtained from Lemma~\ref{lem:sp-j} and we henceforth only care about the relevant pseudodistributions (e.g. $\mu_1 \times \mu_2|P_a$) when marginalized to $J|_a$. For simplicity of notation we still refer to the marginalized pseudodistributions using their original notation.

We show that large shift-potential $\Phi^a(\mu_1 \times \mu_2 | P_a) \geq \delta$ with the additional property that an average $(Y_{u,v}, Y'_{u,v})$-pair when drawn from $\mu_1 \times \mu_2 |P_a$ is close to its distribution when drawn from $\mu_1 \times \mu_2$, implies that Condition\&Round (applied on $J|_a$) succeeds on either $\mu_1$ or $\mu_2$.

\begin{lemma}\label{lem:round-j}
Let $I = (G, \Pi)$ be an affine instance of Unique Games over the alphabet $\Sigma$ and $\mu_1 \times \mu_2$ be a degree $D+\widetilde{O}(1/\nu)$ pseudodistribution over assignments $(X,X')$ to $I$. Let $E(X,X')$ be a polynomial such that $\cA_I(X,X') \vdash_D E(X,X') \in [0,1]$. Suppose we have that: 
\[\Pr_{u,v \sim V(G): u \neq v}[TV((Y_{u,v},Y'_{u,v})|E , (Y_{u,v},Y'_{u,v})) > \delta] \leq \zeta,\]
where the distribution $(Y_{u,v},Y'_{u,v})|E$ refers to the joint distribution on these variables defined by the collection of local distributions $\cL(\mu_1 \times \mu_2|E, p_{\beta,\nu})$ and similarly $(Y_{u,v},Y'_{u,v})$ refers to the distribution defined by $\cL(\mu_1 \times \mu_2,p_{\beta,\nu})$ (Definition~\ref{defn:local-dist}). 

If $\Phi^G_{\beta,\nu}(\mu_1 \times \mu_2 | E(X,X')) \geq \gamma$, then on at least one of the pseudodistributions $\mu^\sym_1$ or $\mu^\sym_2$ Algorithm~\ref{alg:low-ent-restated} runs in time $\poly(|V(G)|)$ and returns an assignment of expected value at least 
\[
(\beta - \nu)^2\left(\gamma - O(\delta+\zeta) - \frac{1}{|V(G)|}\right) - 3\nu(\beta - \nu)
\]
for $I$.
\end{lemma}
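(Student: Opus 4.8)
Here is the plan.

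\medskip\noindent\textbf{Proof plan.} The plan is to reduce the statement to the rounding lemma of~\cite{BBKSS} for a \emph{single} shift-symmetric pseudodistribution, by first transporting the shift-partition potential from $\mu_1\times\mu_2\mid E$ back to the unconditioned product $\mu_1\times\mu_2$, and then isolating one good factor. The first move uses only the near-independence hypothesis. Fix $u\neq v$ in $V(G)$ and a pseudodistribution $\cD$ over $(X,X')$. Expanding $F^G_s(u)F^G_s(v)$ into its indicator part and the value-approximation factors and summing over $s$ gives $\sum_{s}F^G_s(u)F^G_s(v)=\Ind(X_u-X'_u=X_v-X'_v)\,p_{\beta,\nu}(\val^G_u(X))p_{\beta,\nu}(\val^G_v(X))p_{\beta,\nu}(\val^G_u(X'))p_{\beta,\nu}(\val^G_v(X'))$, so by Definition~\ref{defn:local-dist}, $\sum_s\pE_{\cD}[F^G_s(u)F^G_s(v)]=\pPr_{\cL(\cD,p_{\beta,\nu})}[\,X_u-X'_u=X_v-X'_v\ \wedge\ p_u=p_v=p'_u=p'_v=1\,]$, which lies in $[0,1]$ and depends only on the joint law of $(Y_{u,v},Y'_{u,v})$ under $\cL(\cD,p_{\beta,\nu})$. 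Hence $\bigl|\sum_s\pE_{\cD\mid E}[F^G_s(u)F^G_s(v)]-\sum_s\pE_{\cD}[F^G_s(u)F^G_s(v)]\bigr|\le TV\bigl((Y_{u,v},Y'_{u,v})\mid E,\ (Y_{u,v},Y'_{u,v})\bigr)$. Averaging over $u\neq v$, bounding the $\le\zeta$-fraction of ``bad'' pairs by $1$ and the rest by $\delta$, and accounting for the $O(1/|V(G)|)$ difference between the uniform-pair average defining $\Phi^G_{\beta,\nu}$ and the $u\neq v$ average, yields $\Phi^G_{\beta,\nu}(\mu_1\times\mu_2)\ge\gamma-O(\delta+\zeta)-O(1/|V(G)|)$.

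For the second move, expand the $s$-sum in $\Phi^G_{\beta,\nu}(\mu_1\times\mu_2)$ and use that $\mu_1,\mu_2$ are independent: writing $g_w(X):=p_{\beta,\nu}(\val^G_w(X))$ and $F^{(i)}_{u,v}(t):=\pE_{\mu_i}[\Ind(X_u-X_v=t)\,g_u(X)g_v(X)]\in\R^{\Sigma}$, a change of variables shows $\sum_s\pE_{\mu_1\times\mu_2}[F^G_s(u)F^G_s(v)]=\sum_{t\in\Sigma}F^{(1)}_{u,v}(t)F^{(2)}_{u,v}(t)$, hence $\Phi^G_{\beta,\nu}(\mu_1\times\mu_2)=\E_{u,v}\langle F^{(1)}_{u,v},F^{(2)}_{u,v}\rangle$, and with both slots equal to $\mu_i$ the same computation gives $\Phi^G_{\beta,\nu}(\mu_i\times\mu_i)=\E_{u,v}\|F^{(i)}_{u,v}\|_2^2$. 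Two applications of Cauchy--Schwarz — inside the $\Sigma$-inner product and then over the average on $(u,v)$, all terms being nonnegative — give $\Phi^G_{\beta,\nu}(\mu_1\times\mu_2)^2\le\Phi^G_{\beta,\nu}(\mu_1\times\mu_1)\,\Phi^G_{\beta,\nu}(\mu_2\times\mu_2)$, so for some $i^\star\in\{1,2\}$ we have $\Phi^G_{\beta,\nu}(\mu_{i^\star}\times\mu_{i^\star})\ge\Phi^G_{\beta,\nu}(\mu_1\times\mu_2)$. Since the affine values $\val^G_w(\cdot)$ and the indicators $\Ind(X_u-X'_u=s)$ are invariant under a global shift of an assignment, $\Phi^G_{\beta,\nu}(X,X')$ is invariant under independent shifts of $X$ and of $X'$, so symmetrization does not change the potential and $\Phi^G_{\beta,\nu}(\mu^{\sym}_{i^\star}\times\mu^{\sym}_{i^\star})=\Phi^G_{\beta,\nu}(\mu_{i^\star}\times\mu_{i^\star})=:\gamma'\ge\gamma-O(\delta+\zeta)-O(1/|V(G)|)$.

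The third move is (a re-derivation of) the rounding lemma of~\cite{BBKSS}. Set $\mu:=\mu^{\sym}_{i^\star}$; it suffices to lower bound the expected value of randomized Condition\&Round on $\mu$, since its derandomization does at least as well and runs in $\poly(|V(G)|)$ time by the method of conditional expectations. By shift-symmetry $\pPr_\mu[X_w=0]=1/|\Sigma|$ and $\pPr_\mu[X_v=a\mid X_w=0]=\pPr_\mu[X_v-X_w=a]$, so the expected value equals $\E_{w\sim\pi}\E_{(v,v')\in E}\sum_a\pPr_\mu[X_v-X_w=a]\,\pPr_\mu[X_{v'}-X_w=a+h_{vv'}]$ up to the $O(1/|V(G)|)$ mass of the event $w\in\{v,v'\}$. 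One then argues, as in~\cite{BBKSS}, that $\Phi^G_{\beta,\nu}(\mu\times\mu)\ge\gamma'$ forces a density-$\approx\gamma'$ ``backbone'' set of vertices carrying a common relative shift and with $p_{\beta,\nu}(\val^G_\cdot(X))\approx1$, that conditioning on the label of a uniformly random vertex concentrates the conditional marginals of backbone vertices on the shifted backbone, and hence that the independently sampled $Z$ agrees with the backbone on the corresponding constraints. The one modification relative to~\cite{BBKSS} is that there the ``backbone is near-satisfying'' came from the completeness hypothesis, whereas here it is carried by the $p_{\beta,\nu}(\val)$-weights inside $\Phi^G_{\beta,\nu}$; converting those weights into honest lower bounds on $\val^G_\cdot(X)$ via the upper-bound form $p_{\beta,\nu}(x)\le x/(\beta-\nu)+\nu$ of Fact~\ref{fact:bdd-markov} is precisely what produces the multiplicative $(\beta-\nu)^2$ factor and the additive $3\nu(\beta-\nu)$ loss. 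Combining the three moves gives value at least $(\beta-\nu)^2\bigl(\gamma-O(\delta+\zeta)-1/|V(G)|\bigr)-3\nu(\beta-\nu)$ on $\mu^{\sym}_{i^\star}$.

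The main obstacle is the first move: one must verify carefully that $\sum_s F^G_s(u)F^G_s(v)$ really is a $[0,1]$-valued functional of the local law of $(Y_{u,v},Y'_{u,v})$ under $\cL(\cD,p_{\beta,\nu})$ — this relies on the exact shape of the auxiliary $p$-variables in Definition~\ref{defn:local-dist} and on the \sos bound $\cA_I\vdash_{\tO(1/\nu)}p_{\beta,\nu}(\val^G_u(X))\in[0,1]$ — and that every step (the $TV$-transfer, the change of variables, both Cauchy--Schwarz inequalities, and the interpretation of conditioning on $Y_A=y_A$) is legitimate for pseudoexpectations, using only linearity and nonnegativity of low-degree squares within the available degree $D+\tO(1/\nu)$. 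The re-derivation in the third move is technically routine given~\cite{BBKSS}, but the constant-tracking through Fact~\ref{fact:bdd-markov} must use the correct orientation of the Markov-type inequality, which is the source of the specific exponents in the stated guarantee.
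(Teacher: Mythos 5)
Your plan is correct and reaches the stated bound, but it reorders the paper's argument in a way worth noting. The paper's proof introduces an \emph{alternate shift potential} $\Psi(\mu) = \E_{u,v}\big[\sum_s \pPr_\mu[X_v-X_u=s]^2\,\pE[\val_v(X)\mid X_v-X_u=s]\big]$ and proves a single inequality (Lemma~\ref{lem:relating-ent-j}): $\Phi_{\beta,\nu}(\mu_1\times\mu_2\mid E) \leq \frac{\Psi(\mu_1)+\Psi(\mu_2)}{2(\beta-\nu)^2} + \frac{3\nu}{\beta-\nu} + 2\delta+2\zeta+\frac{1}{|V(G)|}$. Both your Move 1 (the TV-transfer from the conditioned to the unconditioned product, with the $u=v$ diagonal contributing the $1/|V(G)|$) and your Markov-inequality conversion $p_{\beta,\nu}(x)\le x/(\beta-\nu)+\nu$ live inside that one lemma, with AM--GM used to split the cross term $\pE_{\mu_1}[\cdot]\,\pE_{\mu_2}[\cdot]$ into the two squares that become $\Psi(\mu_1)$ and $\Psi(\mu_2)$; the paper then picks the larger $\Psi$ and quotes the [BBKSS] rounding lemma in the form ``$\Psi(\mu)\geq\kappa$ implies Condition\&Round achieves expected value $\kappa$.'' Your route instead first lands on $\Phi(\mu_1\times\mu_2)$, then applies Cauchy--Schwarz to the vectors $F^{(i)}_{u,v}\in\R^{\Sigma}$ to conclude $\Phi(\mu_{i^\star}\times\mu_{i^\star})\geq\Phi(\mu_1\times\mu_2)$ for some $i^\star$ (a clean substitute for the paper's AM--GM, and a valid one since these are ordinary real vectors), and defers the $\Phi\to\Psi$ conversion to the end. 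The one caveat: the black box you invoke in Move 3 takes $\Phi(\mu\times\mu)$ as input, whereas the rounding lemma as stated in the paper takes $\Psi(\mu)$, so your ``routine re-derivation'' must still contain the computation $\Phi(\mu\times\mu)\leq\Psi(\mu)/(\beta-\nu)^2+3\nu/(\beta-\nu)$ --- exactly the ``main term'' calculation of Lemma~\ref{lem:relating-ent-j} specialized to $\mu_1=\mu_2=\mu$ --- and you correctly identify that this step is the source of the $(\beta-\nu)^2$ factor and the $3\nu(\beta-\nu)$ loss. Your shift-symmetry observation ($\Phi$ is invariant under independent global shifts of $X$ and $X'$, so symmetrizing costs nothing) is also needed and correct.
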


Using the lemmas above for $G = J|_a$ one can analyze a single iteration of SubRound and show that it manages to assign a subcube of $I$ and within it satisfy a constant fraction of the edges. This subcube might be of size $o(1)$ of the whole graph though, therefore to complete the analysis of our algorithm we need the following lemma, asserting that we can iterate this procedure (as done in Algorithm~\ref{alg:j}) to satisfy a constant fraction of constraints of $I$. 

\begin{lemma}\label{lem:subroutine}
Let $c,\gamma,\delta \in (0,1]$ and $r \in \nn$. Let $I$ be an affine UG instance on alphabet $\Sigma$ on $J_{n,\ell,\alpha}$ with $\ell, n$ large enough  and value at least $c$. 
Suppose we have a subroutine $\cA$ which given as input any affine UG instance $I'$ on $J_{n,\ell,\alpha}$ with $\val_\mu(I') \ge c-\gamma$, returns an $\leq r$-restricted subcube $H$ on $J$ and a partial assignment $f$ such that, $\text{val}_{f}(H) \geq \delta$.
Then if $\cA$ runs in time $T(\cA)$, there is a $O(|V(J)|T(\cA)+|V(J)|^3)$-time algorithm which finds a solution for $I$ that satisfies an $\Omega(\delta\gamma(1-\alpha)^r)$-fraction of the  edges of $J$. 
\end{lemma}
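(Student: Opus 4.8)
The plan is to prove this by a greedy iteration that mirrors Algorithm~\ref{alg:j}. We maintain a growing set $R$ of already-assigned vertices together with a partial assignment on it, and repeatedly invoke $\cA$ to carve off a fresh restricted subcube. Concretely, set $R_0=\emptyset$ and $I_0=I$; at step $j$ run $\cA$ on $I_{j-1}$ to obtain an $\le r$-restricted subcube $H_j$ and a partial assignment $f_j$ with $\val_{f_j}^{I_{j-1}}(H_j)\ge\delta$; let $S_j = V(H_j)\setminus R_{j-1}$, extend the global partial assignment by $f_j|_{S_j}$, put $R_j = R_{j-1}\cup V(H_j)$, and let $I_j$ be $I$ with the constraint of every edge incident to $R_j$ replaced by a uniformly random affine shift. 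We stop the first time $|R_{j-1}|\ge\frac{\gamma}{2}|V(J)|$ and output any extension of the accumulated assignment. The randomization is the mechanism that forces later calls of $\cA$ to look at essentially new regions of the graph.

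Two bookkeeping facts make this well defined. First, since $|R_{j-1}|<\frac{\gamma}{2}|V(J)|$ throughout the loop and $J$ is regular, at most a $\gamma$-fraction of the edges of $J$ are incident to $R_{j-1}$, so the optimal assignment of $I$ still satisfies $\ge c-\gamma$ of the constraints of $I_{j-1}$; hence $\val(I_{j-1})\ge c-\gamma$ and $\cA$ is applicable at every step. Second, since $H_j$ is the $r'$-restriction to a set of size $r'\le r$, a direct computation shows that $H_j$ is again a noisy-Johnson graph with $\deg(H_j)/\deg(J)=\prod_{i=0}^{r'-1}\frac{(1-\alpha)\ell-i}{\ell-i}\ge\Omega\big((1-\alpha)^r\big)$ once $\ell$ is large enough; this is exactly the source of the $(1-\alpha)^r$ factor in the conclusion.

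The counting then runs as follows. The sets $S_j$ are pairwise disjoint and $\bigsqcup_j S_j = R$ at termination, so $\sum_j|S_j| = |R|\ge\frac{\gamma}{2}|V(J)|$. I will argue (the crux, below) that $f_j$ satisfies at least $\frac{\delta}{2}|E(H_j)|$ edges both of whose endpoints lie in $S_j$; for such an edge the constraint in $I_{j-1}$ agrees with that in $I$ (it does not touch $R_{j-1}$) and the output agrees with $f_j$ on both endpoints, so it is satisfied by the output in $I$; and for distinct $j$ these edge sets are disjoint, so there is no overcounting. Therefore the output satisfies at least $\frac{\delta}{2}\sum_j|E(H_j)| = \frac{\delta}{4}\sum_j\deg(H_j)|V(H_j)|\ge\frac{\delta}{4}\,\Omega\big((1-\alpha)^r\big)\deg(J)\sum_j|S_j|\ge\Omega\big(\delta\gamma(1-\alpha)^r\big)|E(J)|$ edges of $J$. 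For the running time, the same crux gives $S_j\neq\emptyset$, so $R$ grows each step and there are at most $|V(J)|$ iterations, each costing $T(\cA)$ plus $O(|V(J)|^2)$ for re-randomizing constraints and bookkeeping; this totals $O(|V(J)|T(\cA)+|V(J)|^3)$.

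The main obstacle is precisely this crux: that $f_j$ satisfies $\ge\frac{\delta}{2}|E(H_j)|$ edges strictly inside $S_j$, equivalently that $H_j$ cannot overlap $R_{j-1}$ so heavily that most of its edges are ``random''. Since $\cA$ already guarantees $f_j$ satisfies $\ge\delta|E(H_j)|$ edges of $H_j$ in $I_{j-1}$, it is enough to bound by $\approx\frac{\delta}{2}|E(H_j)|$ the number of edges of $H_j$ incident to $R_{j-1}$ that $f_j$ satisfies --- but in $I_{j-1}$ these carry uniformly random affine constraints, so a fixed assignment satisfies each with probability $1/|\Sigma|$. A Chernoff bound, union-bounded over the $n^{O(r)}$ choices of subcube and over the assignments to their vertices, shows that with overwhelming probability over the re-randomization no assignment satisfies more than $O\big(|E(H_j)|/|\Sigma|+|V(H_j)|\log|\Sigma|\big)$ of the randomized edges inside $H_j$; the additive term is negligible because $\deg(H_j)\ge\Omega((1-\alpha)^r)\deg(J)$ is large for $\ell$ large, and the $|E(H_j)|/|\Sigma|$ term is below $\frac{\delta}{2}|E(H_j)|$ once $|\Sigma|$ exceeds a constant depending on $\delta$ (in the small-alphabet regime one instead leans on the fact that degree-$D$ SoS refutes random affine constraint systems, so the SDP solution underlying $f_j$ cannot place much weight on those edges). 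Carrying this out so that it holds uniformly across all iterations, with the precise parameter settings of Algorithm~\ref{alg:j}, is the technical heart of the argument.
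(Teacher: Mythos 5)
Your proposal is correct and follows essentially the same route as the paper's proof: the same iterate--carve--randomize loop, the same bound on the value drop of $I_{j-1}$, the same Chernoff-plus-union-bound argument showing that the re-randomized edges inside any $\le r$-restricted subcube have value $O(1/|\Sigma|)$ under every assignment, and the same $(1-\alpha)^r$ degree-ratio accounting to convert edges inside $H_j$ into a fraction of $E(J)$. You are in fact slightly more explicit than the paper in flagging that the $O(1/|\Sigma|)$ bound on randomized edges only falls below $\delta/2$ when $|\Sigma|$ is large relative to $1/\delta$ --- a condition the paper's corresponding claim uses implicitly without comment.
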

\begin{proof}
Deferred to Section~\ref{sec:pf_it}.
\end{proof}




\subsubsection{The Analysis of Algorithm~\ref{alg:j}}\label{sec:combine_analysis}

\begin{theorem}\label{thm:main-johnson}
There exists a constant $\eps_0 \in (0,1)$, such that for all positive constants $\eps \leq \eps_0$, $\alpha \in \Q$ with $\alpha \leq 1/2$, all integers $\ell \geq \Omega(r)$ with $\alpha \ell \in \N$ and $r = \Theta(\sqrt{\eps}/\alpha)$ and $n$ large enough, Algorithm~\ref{alg:j} has the following guarantee: If $I$ is an instance of affine Unique Games on the $(n,\ell,\alpha)$-Johnson graph $G$ with alphabet $\Sigma$ and $\val(I) = 1 - \eps$, then in time $|V(J)|^{\poly(\ell^r,|\Sigma|,1/\eps)}$ 
Algorithm~\ref{alg:j} returns an $\frac{\poly(\eps)}{\exp(r)}$-satisfying assignment for $I$.
\end{theorem}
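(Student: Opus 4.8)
The plan is to chain together the four lemmas stated in the previous subsection in the order that matches the loop in Algorithm~\ref{alg:j}. First I would set up the parameters as in Algorithm~\ref{alg:j}: take $r = \lfloor 64\sqrt{\eps}/\alpha \rfloor$ (so that $r = \Theta(\sqrt{\eps}/\alpha)$), $\beta = \sqrt{\eps}$, $\nu = \eps\exp(-r)$, $\tau = \Theta(\eps/(|\Sigma|^2\ell^{2r}\exp(r)))$, and $D = \poly(\ell^r, |\Sigma|, 1/\eps)$ large enough to feed all the lemmas; one must check that $D$ as chosen dominates the degree requirements $\widetilde\Omega(|\Sigma|\ell^r\exp(r)/(\nu\tau\sqrt\eps))$ from Lemma~\ref{lem:sp-j} and the $D + \widetilde O(1/\nu)$ budget needed after conditioning in Lemmas~\ref{lem:correlation} and~\ref{lem:round-j}. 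The running time bound $|V(J)|^{\poly(\ell^r,|\Sigma|,1/\eps)}$ is then immediate from the fact that the SoS SDP of degree $D$ is solved $O(|V(J)|)$ times and each call of SubRound runs in $\poly(|V(J)|)$ time after the (brute-force, but $n$-independent) search over $O(|\Sigma|)$ polynomials $P_a$, over restrictions $a$ of size $\le r$, and over conditioning sets $A,B$ of size $\le t$.

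Next I would verify the key invariant maintained by the \textbf{while} loop: at the start of each iteration $j$, the current instance $I_{j-1}$ has $\val_{\mu_j}(I_{j-1}) \ge 1 - \eps - O(\gamma) = 1 - O(\eps)$. This holds for $I_0 = I$ by hypothesis, and is preserved because step (e) only rerandomizes constraints on edges incident to $R_j$, and $|R_j| \le \frac{\gamma}{2}|V(J)|$ throughout the loop (with $\gamma = \eps$), so the value drops by at most $O(\gamma)$ — this is exactly the setting covered by Lemma~\ref{lem:subroutine}. Given this invariant, apply Lemma~\ref{lem:sp-j} to $\mu_j$ to produce a restriction $a$ of size $j' \le r$, a polynomial $P_a$, and conditionings $Y_A = y_A, Y_B = y_B$ yielding $\mu_1, \mu_2$ with (i) shift-partition potential inside $J|_a$ significant relative to $\pE[P_a]$, (ii) $\pE[P_a] \ge \Omega(\sqrt\eps/(|\Sigma|\ell^{j'}\exp(r)))$, and (iii)--(iv) the global-correlation bounds $\le \tau$. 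Combining (i) and (ii) gives $\Phi^a_{\beta,\nu}(\mu_1\times\mu_2 \mid P_a) \ge \exp(-r) \cdot \frac{1}{\pE[P_a]}\pE[\Phi^a P_a] \cdot$ — more precisely, dividing by $\pE[P_a]$ shows the conditional shift-partition potential is $\ge \exp(-r) =: \gamma_0$. Feed the correlation bounds and $\pE[P_a] \ge p := \Omega(\sqrt\eps/(|\Sigma|\ell^r\exp(r)))$ into Lemma~\ref{lem:correlation} with $\delta$ a small constant times $\gamma_0$: the conclusion is that, restricted to $S = V(J|_a)$, all but a $\zeta := O((\sqrt\tau + 1/|S|)/(p\delta^2))$-fraction of pairs $(u,v)$ have $(Y_{u,v},Y'_{u,v}) \mid P_a$ within TV-distance $\delta$ of $(Y_{u,v},Y'_{u,v})$; here the choice of $\tau$ makes $\sqrt\tau \ll p\delta^2$ so $\zeta$ is a small constant (and $1/|S|$ is negligible for $n$ large). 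Now invoke Lemma~\ref{lem:round-j} with $G = J|_a$, $E = P_a$, $\gamma \leftarrow \gamma_0 = \exp(-r)$: it yields that derandomized Condition\&Round on $\mu_1^{\sym}$ or $\mu_2^{\sym}$ returns an assignment to $V(J|_a)$ of value at least $(\beta-\nu)^2(\exp(-r) - O(\delta + \zeta) - o(1)) - 3\nu(\beta-\nu) \ge \Omega(\eps \exp(-r))$, using $\beta = \sqrt\eps$, $\nu = \eps\exp(-r)$, and choosing $\delta, \zeta$ as suitably small constant multiples of $\exp(-r)$. This establishes $\text{SubRound-val}_{\mu_j}(C_j) \ge \delta := \eps/\exp(r)$ for the subcube found in step 3(b), so step 3(b) succeeds.

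Finally, with SubRound realized as the subroutine $\cA$ of Lemma~\ref{lem:subroutine} — producing an $\le r$-restricted subcube and a partial assignment of value $\ge \delta = \eps/\exp(r)$ on instances of value $\ge c - \gamma$ with $c = 1-\eps$, $\gamma = \eps$ — Lemma~\ref{lem:subroutine} gives the outer iteration an overall assignment satisfying an $\Omega(\delta\gamma(1-\alpha)^r) = \Omega\!\big(\tfrac{\eps}{\exp(r)}\cdot \eps \cdot (1-\alpha)^r\big)$ fraction of the edges of $J$; since $(1-\alpha)^r \ge 2^{-O(\alpha r)} = 2^{-O(\sqrt\eps)} = \Omega(1)$ for $r = O(\sqrt\eps/\alpha)$ and $\eps \le \eps_0$ small, this simplifies to $\Omega(\eps^2/\exp(r)) = \poly(\eps)/\exp(r)$, matching the claimed soundness. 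I expect the main obstacle to be the bookkeeping that links the three "interface" quantities across the lemmas — ensuring the degree budget $D$ after two conditionings (first the global-correlation-killing $Y_A, Y_B$ of size $t$, then the reweighting by $P_a$ of degree $\widetilde O(1/\nu)$, then whatever Condition\&Round consumes) stays within what Lemma~\ref{lem:sp-j} was set up to deliver, and ensuring the probability lower bound $p$ on $P_a$ is strong enough that the error term $\zeta$ in Lemma~\ref{lem:correlation} is dwarfed by the shift-potential $\exp(-r)$ fed into Lemma~\ref{lem:round-j}. This is precisely where the seemingly baroque choice $\tau = \Theta(\eps/(|\Sigma|^2\ell^{2r}\exp(r)))$ and $D = |\Sigma|^3\ell^{O(r)}/\eps^3$ earn their keep, and the whole argument collapses if any one of these is off by more than a constant in the exponent of $\ell$.
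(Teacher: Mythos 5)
Your proposal is correct and follows essentially the same route as the paper's proof: the same parameter choices ($\beta=\sqrt{\eps}$, $\nu=\eps\exp(-r)$, $\tau=\Theta(\eps/(|\Sigma|^2\ell^{2r}\exp(r)))$, degree $|\Sigma|^3\ell^{O(r)}/\eps^3$), the same chain Lemma~\ref{lem:sp-j} $\to$ Lemma~\ref{lem:correlation} $\to$ Lemma~\ref{lem:round-j} to certify SubRound value $\Omega(\eps/\exp(r))$ on some $\le r$-restricted subcube, and the same final invocation of the iteration Lemma~\ref{lem:subroutine} with $c=1-\eps$, $\gamma=\eps$. The only cosmetic differences are that you make the while-loop value invariant explicit (the paper absorbs it into the iteration lemma) and your final bound $\Omega(\eps^2/\exp(r))$ versus the paper's stated $\Omega(\eps^3/\exp(r))$, both of which satisfy the claimed $\poly(\eps)/\exp(r)$.
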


\begin{proof}
Let $r = \lfloor 32\eps/\alpha\rfloor$. Given the $\leq r$-restricted subcube $a \subseteq [n], P_a(X,X')$ and $\mu_1 \times \mu_2$ from Lemma~\ref{lem:sp-j} we have that $\Phi_{\sqrt{\eps},\nu}^a(\mu_1 \times \mu_2| P_a) \geq \exp(-r)$, therefore we apply Lemma~\ref{lem:correlation} with $S = J|_a$, the polynomial  $E(X,X') = P_a(X,X')$ and the parameters $\beta = \sqrt{\eps}, \nu = \exp(-r)\eps, \delta = \exp(-r), p = \Omega\left(\frac{\sqrt{\eps}}{|\Sigma|\ell^{r}\exp(r)}\right)$ and $\tau = \Theta(\frac{\eps}{|\Sigma|^2\ell^{2r}\exp(r)})$. The parameter $\tau$ has been chosen so that $O(\frac{\sqrt{\tau}+1/|S|}{p\delta^2}) = \exp(-r)$ and therefore we can apply the rounding lemma (Lemma~\ref{lem:round-j}) with $\zeta = \exp(-r)$ and $\gamma = \exp(-r)$ and the same settings of $\beta, \nu, \delta$. This shows that there exists a subcube $a \in {[n] \choose j}$ with SubRound value that is at least $\Omega(\eps/\exp(r))$ if $\deg(\mu) \geq \widetilde{\Omega}\left(\frac{|\Sigma|\ell^r\exp(r)}{\nu\tau\sqrt{\eps}}\right)$. Hence it suffices to have degree of $\mu$ equal to $\frac{|\Sigma|^3\ell^{O(r)}}{\eps^3}$.

By using SubRound as a subroutine, we finish the proof of this theorem by applying the iteration Lemma~\ref{lem:subroutine} with $c=1-\eps$, $\gamma = \eps$, $\delta = \Omega(\eps/\exp(r))$ and $r = \lfloor 64\sqrt{\eps}/\alpha \rfloor$ so that $(1-\alpha)^r = O(\sqrt{\eps})$ (Claim~\ref{claim:subcube-expanse}). We can check that SubRound satisfies the hypotheses of the lemma: it finds a $\le r$-restricted subcube and an assignment to it with value $\Omega(\delta)$ in time $|V(J)|^{\frac{|\Sigma|^3\ell^{O(r)}}{\eps^3}}$ (this follows from the degree upper bound on $\mu$).
This gives us that Algorithm~\ref{alg:j} outputs an assignment of value at least $\Omega(\eps^3/\exp(r))$ in time $|V(G)|^{\frac{|\Sigma|^3\ell^{O(r)}}{\eps^3}}$.
\end{proof}

\subsection{Proof of Lemma~\ref{lem:sp-j}: Finding Subcube with Large Shift-partition}\label{sec:large-potential}

In this section we use the structure theorem for Johnson graphs
to prove the existence of a restriction $a \subseteq [n]$ of constant size (corresponding to the induced subgraph $J|_a$) with large induced shift-partition potential when conditioned on a non-negative polynomial $P_a$, i.e. $\Phi^a_{\beta,\nu}(\mu \times \mu | P_a)$ is large. Recall that to analyze our final Condition\&Round step though we need to show conditioning $\mu \times \mu$ on $P_a$ does not introduce too many correlations between $X, X'$. Therefore we first perform a global correlation reduction procedure on $\mu \times \mu$ to get the product pseudodistribution $\mu_1 \times \mu_2$ while preserving the property that $\Phi^a_{\beta,\nu}(\mu_1 \times \mu_2 | P_a)$ is large.
The exact quantity that aids our rounding analysis is: 
\[I(Y_{u_1,v_1};Y_{u_2,v_2}) =  I(X_{u_1},X_{v_1},p_{u_1},p_{v_1};X_{u_2},X_{v_2},p_{u_2},p_{v_2}),\] 
defined according to the collection of local distributions in Definition~\ref{defn:local-dist}. We upper bound this by using the global correlation reduction procedure of Raghavendra-Tan~\cite{RT12}. 

\subsubsection{Restricting the Shift Partition to a Subgraph}
To prove Lemma~\ref{lem:sp-j} we will need the notion of
\emph{global shift-partition potential restricted to subcube $J|_a$}, which is almost the same as shift-partition potential on $J|_a$ except for one key difference:
\begin{definition}[Global shift-potential restricted to Subgraphs~\cite{BBKSS}]\label{def:res-shift}
For any $\nu,\beta \in (0,1)$ and subgraph $H$ of $G$, define the {\em global shift-partition size restricted to the subgraph $H$} to be the quantity:
\[
\Phi_{\beta,\nu}(X,X')|_H = \sum_{s \in \Sigma} \E_{u \in H} [F_s^{G}(u)]^2,
\]
where $F_s$'s are the functions defining the shift-partition (Definition~\ref{def:shift-partition}). 
Let the global shift-partition potential with respect to a pseudodistribution $\cD$ over pairs of assignments $(X,X')$ to $G$ be:
\[
\Phi_{\beta,\nu}(\cD)|_H = \pE_{(X,X') \sim \cD}[\Phi_{\beta,\nu}(X,X')|_H].
\]
\end{definition}

Note the difference between the global potential and the shift-partition potential on $H$: the global shift-partition potential $\Phi_{\beta,\nu}(X,X')|_{H}$ measures the size of the global partition inside $H$, i.e. $\val_u^G(X)$ is a function of all the edges of the graph that are incident on $u$, not just the edges in the subgraph $H$. The potential 
$\Phi_{\beta,\nu}^{H}(X,X')$, on 
the other hand, measures the value of a vertex $u$ only \emph{inside} the subgraph $H$.

\subsubsection{Global Hypercontractivity}
We will need the following definition to describe the structure theorem for Johnson graphs:
\begin{definition}[Restrictions of Functions]
For the $(n,\ell,\alpha)$-Johnson graph $J$, given a function $F:V(J) \rightarrow \R$ and a set $a \subseteq [n]$ with $a = r$, such that $0 \leq r \leq \ell-1$, we define the restricted function $F|_a: \binom{[n] \setminus a}{\ell - r} \rightarrow \R$ as, 
$$F|_a(X) = F(a \cup X).$$
Further, let $\delta(F|_a)$ denote the fractional size of the function restricted to the subcube $J|_a$, that is,
$$\delta(F|_a) := \E\limits_{X \sim \binom{[n] \setminus a}{\ell - r}}[F|_a(X)].$$
When $a = \phi$ and $r = 0$, we have that $F|_a(X) = F(X)$ for all $X \in \binom{[n]}{\ell}$ and $\delta(F|_a) = \delta(F) = \E[F]$.
\end{definition}

In~\cite{KMMS} it is shown that pseudorandom sets expand. Formally, we define pseudorandom sets and in general pseudorandom functions as follows:
\begin{definition}[Pseudorandom functions]\label{def:p.r.}
A set $S$ is called $(r,\gamma)$-pseudorandom if for all $\leq r$-restrictions $a$, $\delta(S|_a) \leq \gamma$. Similarly a function $F$ bounded in $[0,1]$ is called $(r,\gamma)$-pseudorandom if for all $\leq r$-restrictions $a$, $\delta(F^2|_a) \leq \gamma$.
\end{definition}

We will need a version of this 
result for general Boolean and pseudorandom functions $F$, and moreover to show that it is proved in the 
\sos proof system of constant degree. We get an \sos proof that shows that if $\delta(F^2|_a) \leq \gamma$ for all $a$ of size $\leq r$ and $F$ is Boolean, then $\ip{F,LF}$ is large, with $q_a(F)$ being the SoS multipliers of the axioms $\{\gamma - \delta(F^2|_a)\}_a$ and $(\Pi_{\geq \lambda_r}F)(X)$ being the multiplier for the axioms $\{F(X)^3 = F(X)\}$. We have the following statement: 


\begin{theorem}[Expansion Theorem for Johnson Graphs]\label{thm:structure-johnson}
For all $\alpha \in (0,1)$, all integers $\ell \geq 1/\alpha$ and $n \geq \ell$, the following holds: Let $J(n,\ell,\alpha \ell)$ be the $\alpha$-noisy Johnson graph. For every constant $\gamma \in (0,1)$ and positive integer $r \leq O(\ell)$, every function $F: V(J) \rightarrow \R$ that is $(r,\gamma)$-pseudorandom has high expansion:
\begin{align}
&\{F(X) \in [0,1]\}_{X \in V(J)}  \vdash_{O(1)}\, \notag\\
&\ip{F,LF} \geq \delta(F)(1 - (1-\alpha)^{r+1})(1- \gamma^{1/3}\exp(r)) - \sum_{j = 0}^r \frac{ c_j\ell^{j}}{\gamma}\E_{a \sim {[n] \choose j}}[q_a(F)({\delta(F^{2}|_a) - \gamma)}]+ B(F) \\
&\{F(X) \in [0,1]\} \,\,\, \vdash_{2} \,\,\, 0 \leq q_a(F) \leq 1 
\end{align}
where for all $j \leq r$, $c_j$'s are positive constants of size at most
$\exp(r)$, for all size $j$ subsets $a \subseteq [n]$, $q_a(F)$ are degree $2$  polynomials and $B(F)=\frac{4}{3}\E_X[(F^{3} - F)\Pi_{\geq \lambda_r}F ]$ for $\lambda_r = (1-\alpha)^r$ and $\Pi_{\geq \lambda}$ denoting the projection operator to the top-eigenspace of $J$ of eigenvalues $\geq \lambda$.
\end{theorem}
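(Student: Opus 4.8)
The statement factors into an elementary spectral inequality that isolates the ``dangerous'' low‑eigenspace part of $F$, and a global‑hypercontractivity estimate for that part; both steps then have to be rerun inside constant‑degree sum of squares, where essentially the only subtlety is that $\Pi_{\ge\lambda_r}$, the individual level projections, and the restriction operators $F\mapsto F|_a$ must all be treated as \emph{fixed linear} maps applied to $F$, so that the quadratic forms that appear are honest low‑degree polynomials in the variables $\{F(X)\}_X$ and the comparisons between them reduce to PSD inequalities between explicit matrices.

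\emph{Step 1 (spectral reduction, degree $2$).} Decompose $F=\Pi_{\ge\lambda_r}F+\Pi_{<\lambda_r}F$ into its projections onto the top $r+1$ eigenspaces of $J$ (levels $0,\dots,r$, eigenvalues $\lambda_0>\dots>\lambda_r$) and onto the rest. Since the eigenvalues decrease with the level, $1-\lambda_i\ge 1-(1-\alpha)^{r+1}$ on levels $>r$ while $1-\lambda_i\ge 0$ on levels $\le r$, so the matrix $L-(1-(1-\alpha)^{r+1})\,\Pi_{<\lambda_r}$ is positive semidefinite; writing its quadratic form as a sum of squares of linear forms in $\{F(X)\}$ gives, using no axioms,
\[
\ip{F,LF}\ \ge\ \big(1-(1-\alpha)^{r+1}\big)\big(\ip{F,F}-\ip{F,\Pi_{\ge\lambda_r}F}\big).
\]
The Booleanity axiom $F^3=F$ yields $\ip{F,F}=\delta(F^2)\ge\delta(F)$ (indeed $F^2-F=F^2(1-F)+(F^3-F)$, the first summand being a square times the axiom $1-F\ge 0$), the leftover multiple of $F^3-F$ being of the type collected into $B(F)$; it also makes available $F^4=F^2=F$. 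It therefore remains to bound $\ip{F,\Pi_{\ge\lambda_r}F}=\|\Pi_{\ge\lambda_r}F\|_2^2$ from above by roughly $\gamma^{1/3}\exp(r)\,\delta(F)$ up to the advertised slack terms.

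\emph{Step 2 (global hypercontractivity, the heart).} Here I would adapt the Johnson‑graph argument of~\cite{KMMS}. The form $\|\Pi_{\ge\lambda_r}F\|_2^2$ is fixed and quadratic in $F$; using the structure of the Johnson association scheme it expands into the restriction moments $\E_{a\sim\binom{[n]}{j}}[\delta(F^2|_a)^2]$ for $j\le r$ — again fixed PSD quadratic forms in $F$ — plus an $\ell$‑free remainder produced by a peeling induction on $r$ in which one coordinate is deleted at a time so that the accumulated losses telescope rather than multiply. The $\ell$‑free remainder is then controlled by interpolating $\|\Pi_{\ge\lambda_r}F\|_2^2$ between the first moment $\delta(F)$ and a fourth moment to which the global hypercontractive inequality of~\cite{KMMS} applies; this H\"older step, implemented through Fact~\ref{fact:sos-hol} and made SoS‑valid by Corollary~\ref{cor:Lukacs}, is where the factor $\gamma^{1/3}\exp(r)$ comes from, the identities $F^4=F^2=F$ being used to replace the fourth moment of $F$ by $\delta(F)$ — and it is exactly this use of $F^3-F=0$, with multiplier $\tfrac43\Pi_{\ge\lambda_r}F$, that produces $B(F)$. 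Finally, the $\ell$‑dependent pieces that the peeling genuinely produces, and that cannot be bounded independently of $\ell$, are left untouched: splitting $\delta(F^2|_a)^2=\gamma\,\delta(F^2|_a)+\delta(F^2|_a)(\delta(F^2|_a)-\gamma)$ displays them as $\sum_{j\le r}\tfrac{c_j\ell^{j}}{\gamma}\E_{a}[q_a(F)(\delta(F^2|_a)-\gamma)]$ with $q_a(F)$ a degree‑$2$ polynomial (essentially $\delta(F^2|_a)=\E[(F|_a)^2]$, whence $\cA\vdash_2 0\le q_a(F)\le 1$) and $c_j=\exp(O(r))$, so that under the pseudorandomness axioms $\{\gamma-\delta(F^2|_a)\ge 0\}$ these terms are nonpositive and the estimate collapses to the clean one; putting Step 1 together with this bound gives the displayed inequality with total degree $O(1)$.

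\textbf{Main obstacle.} The difficulty lies entirely in Step 2: the original proof that pseudorandom sets in $J$ expand is not phrased as a chain of low‑degree SoS steps, and converting it requires (i) keeping $\Pi_{\ge\lambda_r}$, the level projections and the restrictions $F\mapsto F|_a$ as fixed linear maps so that every comparison is a PSD inequality between explicit matrices, certifiable in degree $2$; (ii) replacing the $\ell_3/\ell_{4/3}$ interpolation and the fourth‑moment hypercontractivity by even‑degree, Luk\'acs/H\"older surrogates without degrading the exponent $1/3$; and (iii) unrolling the peeling induction into a \emph{finite}, length‑$r$ sequence of such steps, so that $r$ being constant the total degree stays $O(1)$, while verifying that the $\ell$‑dependent error introduced at the $j$‑th peeling step is precisely of the harmless form $\tfrac{c_j\ell^{j}}{\gamma}q_a(F)(\delta(F^2|_a)-\gamma)$. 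Once this bookkeeping is settled, assembling the two steps and reading off the multipliers $q_a$, $\tfrac43\Pi_{\ge\lambda_r}F$ and the constants $c_j$ is routine.
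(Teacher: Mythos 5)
Your plan follows the same route as the paper: split the spectrum at level $r$ so that $\ip{F,LF}\ge(1-(1-\alpha)^{r+1})(\ip{F,F}-\ip{F,\Pi_{\ge\lambda_r}F})$ as a degree-$2$ PSD inequality, then bound the top-eigenspace mass $\|\Pi_{\ge\lambda_r}F\|_2^2$ by an SoS-ified KMMS fourth-moment argument, with the H\"older step supplying the $\gamma^{1/3}\exp(r)$ factor and the multiplier $\tfrac43\Pi_{\ge\lambda_r}F$ of $F^3-F$ supplying $B(F)$, and with the $\ell$-dependent losses parked as multipliers $q_a(F)$ of the pseudorandomness axioms. (The paper does this level-by-level on a Cayley-graph proxy $C_{n,\ell,\alpha}$ with permutation-invariant functions, proving a ``level-$i$ inequality'' $\eta_i\le\exp(i)\gamma^{1/3}\delta+B_i(F)+\dots$ and summing over $i\le r$, but that is presentational.)

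Two concrete problems remain in your sketch of Step 2, which you correctly identify as the heart. First, the inequality $\ip{F,F}=\delta(F^2)\ge\delta(F)$ is false for $F\in[0,1]$ (pointwise $F^2\le F$); your identity $F^2-F=F^2(1-F)+(F^3-F)$ only gives $\E[F^2]\ge\E[F]+\E[F^3-F]$, a Booleanity deficit with multiplier $1$ rather than $\tfrac43\Pi_{\ge\lambda_r}F$, so it is not absorbed by the stated $B(F)$ and the error term of the theorem must be enlarged (the paper's Cayley-graph version indeed carries an extra $c\,\E[F-F^4]$ term for exactly this reason). Second, the mechanism you describe for extracting $\gamma$ --- splitting $\delta(F^2|_a)^2=\gamma\,\delta(F^2|_a)+\delta(F^2|_a)(\delta(F^2|_a)-\gamma)$ --- does not work as written: the ``clean'' part $\ell^j\E_a[\delta(F^2|_a)]$ is $\ell^j\delta(F^2)$, which is not $\ell$-free. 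In the actual argument the pseudorandomness enters through the quadratic forms $\E_b[f_i(a,b)^2]\le\delta(F^2|_a)/\binom{\ell-|a|}{i-|a|}$ applied to \emph{one} factor of a Cauchy--Schwarz'd four-wise product $\E[f_i(I_1)f_i(I_2)f_i(I_3)f_i(I_4)]$ (classified by the intersection pattern of $I_1,\dots,I_4$, with terms vanishing when an index appears once); the $q_a$'s are normalized sums of these quadratic forms, not of $\delta(F^2|_a)$, and the resulting $\exp(i)\gamma\,\eta_i$ upper bound on $\E[F_i^4]$ is played against the H\"older lower bound $4\eps^3\eta_i-3\eps^4\delta+B(F)$ with $\eps=(\exp(i)\gamma)^{1/3}$ to get $\eta_i\lesssim\gamma^{1/3}\exp(i)\delta$. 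This combinatorial expansion and the Cauchy--Schwarz cascade over intersection patterns are the actual content of the proof and are absent from the proposal.
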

\begin{proof}
Deferred to Section~\ref{sec:fourier}.
\end{proof}

One can equivalently view the expansion result in~\cite{KMMS} as asserting that if $F$ is a Boolean function which is the
indicator of a set $\mathcal{F}\subseteq \binom{[n]}{\ell}$ whose edge 
expansion is bounded away from $1$, then $\delta(F|_{a})\geq \Omega(1)$ for 
some $r$-restricted subcube $a$ with constant $r$. In fact, if the expansion of $F$ is at most $1-\eps$ on $J_{n,\ell,\alpha}$, then $\delta(F|_a) \geq \exp(-r)$ for $r = O(\eps/\alpha)$. Similarly, we can use the above \sos statement to conclude that even if $F$ is ``almost-Boolean'' and non-expanding, then $q_a(F)(\delta(F^2|_a) - \gamma) > 0$ for some $O(1)$-restriction $a$. In the context of a pseudodistribution $\mu$ over non-expanding sets $F$, one can conclude that conditioning $\mu$ on $q_a(F)$ results in a new pseudodistribution $\mu'$ where $\delta(F^2|_a) > \gamma$ on average, and therefore $q_a(F)$ roughly corresponds to the ``event'' that $F$ is dense on $J|_a$.

We use the observation that the shift-partition defined with respect to $(1-\eps)$-satisfying assignments $X$ and $X'$ has expansion at most $2\eps$, therefore is non-expanding. Therefore as above, using the structure theorem we conclude that at least one of the sets in the shift-partition is not $(r,\gamma)$-pseudorandom for $r = O(\eps/\alpha)$ and $\gamma = \exp(-r)$, i.e. $\delta(F_s|_a) \geq \gamma$ for some $s \in \Sigma$ and $\leq r$-restriction $a$. Let $\viol(X)$ denote the fraction of edges that an assignment $X$ violates in the instance $I$. We frame this fact in \sos using the following lemma:

\begin{lemma}\label{lem:structure-fs-j}
Under the conditions of Lemma~\ref{lem:sp-j}, for all $\eta > 0$ and $r \leq \ell$ we get:
\begin{align}
\A_I  
&\vdash_{\tilde{O}(1/\nu)} \\ 
&\sum_{j = 0}^r c_j \exp(r) \ell^j \E_{a \in {[n] \choose j}}[\sum_{s}q_{a}(F_s)(\delta(F_s|_a) - \exp(-r))] \notag\\
&\geq \frac{1 - \lambda_{r+1}}{2} 
- (\viol(X)+\viol(X'))\left(1+\frac{3(1-\lambda_{r+1})+ 12 +8\eta}{6(1-\beta-\nu)}\right) - \frac{2}{3\eta} - \nu\left(5 - \lambda_{r+1} + \frac{8\eta}{3}\right)\label{eq:sp-j-1},
\end{align}
where $\lambda_{r+1} = (1-\alpha)^{r+1}$, $q_a$ are the polynomials in Theorem~\ref{thm:structure-johnson}, $\A_I$ is the set of axioms defined for $I$ by program (\ref{eq:ip}).
\end{lemma}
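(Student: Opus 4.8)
The plan is to apply the \sos Expansion Theorem (Theorem~\ref{thm:structure-johnson}) to the shift-partition functions $F_s$ and sum over $s\in\Sigma$, then relate the ``Laplacian term'' $\sum_s\ip{F_s,LF_s}$ to the violation of $X$ and $X'$. First I would set $\gamma=\exp(-r)$ and apply Theorem~\ref{thm:structure-johnson} to each $F_s$ (noting $F_s(X)\in[0,1]$ is \sos-certifiable from the axioms, since $F_s$ is a product of an indicator-type polynomial $\Ind(X_u-X'_u=s)$ — really $Z_{u,s}$ — and two bounded approximating polynomials $p_{\beta,\nu}$). Summing over $s$ gives, modulo $O(1)$-degree \sos,
\[
\sum_s\ip{F_s,LF_s}\ \geq\ \sum_s\delta(F_s)\bigl(1-\lambda_{r+1}\bigr)\bigl(1-\gamma^{1/3}\exp(r)\bigr)-\sum_{j=0}^r\frac{c_j\ell^j}{\gamma}\E_{a}\Bigl[\sum_s q_a(F_s)(\delta(F_s^2|_a)-\gamma)\Bigr]+\sum_s B(F_s).
\]
With $\gamma=\exp(-r)$ the factor $(1-\gamma^{1/3}\exp(r))$ is a constant bounded away from $0$; more importantly, $\sum_s\delta(F_s)$ is close to $1$ — this is where I use that $F_s$ approximately partitions the vertex set weighted by $p_{\beta,\nu}(\val_u(X))p_{\beta,\nu}(\val_u(X'))$. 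Using $\sum_s\Ind(X_u-X'_u=s)=1$ (Fact~\ref{fact:z-vars}) I get $\sum_s F_s(u)=p_{\beta,\nu}(\val_u(X))p_{\beta,\nu}(\val_u(X'))$, and then via the Markov-type bounds of Fact~\ref{fact:bdd-markov} together with $\val_u=1-\viol_u$, the average of this over $u$ is at least $1-O(\viol(X)+\viol(X')) -O(\nu)$ up to the $(1-\beta-\nu)$ normalization; this produces the $1/(1-\beta-\nu)$ factors and the $\nu$-error terms in the statement.

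Next I would upper-bound the left-hand side $\sum_s\ip{F_s,LF_s}$ in terms of the violations. Here is the combinatorial heart: by the partition property (Fact~\ref{fact:z-vars}, item 3) an edge $(u,v)$ satisfied by both $X$ and $X'$ has $X_u-X'_u=X_v-X'_v$, i.e. both endpoints lie in the same part. Writing $\ip{F_s,LF_s}=\E_{(u,v)\sim E}\bigl[\tfrac12(F_s(u)-F_s(v))^2\bigr]$ and summing over $s$, the only contributions come from edges where $X$ or $X'$ violates the constraint, or where one of the $p$-polynomials at $u,v$ differs; bounding $(F_s(u)-F_s(v))^2\le |F_s(u)-F_s(v)|\le F_s(u)+F_s(v)$ (using $F_s\in[0,1]$, \sos) and pushing through the edge-by-edge accounting with the $Y,Y'$ indicators gives $\sum_s\ip{F_s,LF_s}\le (\viol(X)+\viol(X'))\cdot O(1) + O(\nu)$, with the $O(1)$ constant unwinding to the $\bigl(3(1-\lambda_{r+1})+12+8\eta\bigr)/(6(1-\beta-\nu))$-type expression once one tracks the $1/(1-\beta-\nu)$ normalization from the $p$-polynomials. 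The auxiliary $\eta$ enters here: the terms $B(F_s)=\tfrac43\E_X[(F_s^3-F_s)\Pi_{\ge\lambda_r}F_s]$ are \emph{not} zero for non-Boolean $F_s$, so I would absorb them using Cauchy--Schwarz / H\"older (Fact~\ref{fact:sos-hol}) at a cost of a $\tfrac1\eta$ term plus an $\eta$-weighted square that folds back into the other terms, yielding the $-\tfrac{2}{3\eta}$ and the $8\eta/3$ contributions. Combining the lower and upper bounds on $\sum_s\ip{F_s,LF_s}$ and rearranging to isolate $\sum_{j,a}c_j\exp(r)\ell^j\E_a[\sum_s q_a(F_s)(\delta(F_s|_a)-\exp(-r))]$ — and using $0\le q_a\le1$ plus $\delta(F_s^2|_a)\le\delta(F_s|_a)$ since $F_s\in[0,1]$ (so $F_s^2\preceq F_s$) to replace $\delta(F_s^2|_a)$ by $\delta(F_s|_a)$ — gives exactly inequality~\eqref{eq:sp-j-1}.

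The main obstacle I anticipate is the bookkeeping around the non-Boolean nature of $F_s$ and staying inside low-degree \sos throughout. The terms $B(F_s)$ would vanish if $F_s^3=F_s$, but because $F_s$ is only \emph{approximately} Boolean (it is a product of $Z_{u,s}$, which \emph{is} Boolean under the axioms, with the two $[0,1]$-valued but genuinely fractional polynomials $p_{\beta,\nu}$), one must carefully control $F_s^3-F_s$; the cleanest route is to note $F_s^3-F_s = Z_{u,s}\bigl((p_u p_v')^3 - p_u p_v'\bigr)$ (abusing notation for the value-polynomials at the relevant vertex) so $|F_s^3-F_s|\le |p_up_v'-(p_up_v')^3|$, a fixed low-degree polynomial bounded pointwise by $O(\nu)$-ish quantities via Theorem~\ref{thm:step-approx}, and then apply H\"older as above — all of which is $O(\tilde O(1/\nu))$-degree \sos. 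The second delicate point is that all manipulations — the partition identity, the edge accounting, the normalization lower bound on $\sum_s\delta(F_s)$, and the replacement $\delta(F_s^2|_a)\rightsquigarrow\delta(F_s|_a)$ — must be carried out as \sos inequalities modulo $\A_I$ of degree $\tilde O(1/\nu)$; each individual step is standard (Cauchy--Schwarz \ref{CS1-prelim}, H\"older \ref{fact:sos-hol}, Fact~\ref{fact:bdd-markov}, Corollary~\ref{cor:Lukacs}), but assembling them without the degree blowing up, and getting the constants to land on the precise RHS of~\eqref{eq:sp-j-1}, is where the real care is needed.
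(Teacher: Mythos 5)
Your plan follows essentially the same route as the paper's proof: apply Theorem~\ref{thm:structure-johnson} to each $F_s$ with $\gamma=\exp(-r)$, sum over $s$, lower-bound $\sum_s\delta(F_s)$ and upper-bound $\sum_s\ip{F_s,LF_s}$ via the partition/violation accounting (the paper defers these to Claims 4.2, 4.7, 4.9 of~\cite{BBKSS}), absorb the Booleanity term $B(F_s)$ via Cauchy--Schwarz at cost $\Theta(1/\eta)$, and pass from $\delta(F_s^2|_a)$ to $\delta(F_s|_a)$ using $F_s^2\preceq F_s$ and $q_a\geq 0$. The proposal is correct and, if anything, spells out the Booleanity bookkeeping slightly more explicitly than the paper does.
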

\begin{proof}
This proof proceeds exactly as the proof of Lemma 4.4 for certifiable small-set expanders in~\cite{BBKSS}:
we apply Theorem~\ref{thm:structure-johnson} to each function $F_s$ and sum up over $s$. Doing so, we get the following inequality:
\begin{align}
\sum_s \ip{F_s, LF_s} \geq &(1 - \lambda_{r+1})(1- \gamma^{1/3}\exp(r))\sum_s \delta(F_s) - \sum_{j = 0}^r \frac{c_j \ell^j}{\gamma} \E_{a \in [n]^j}[\sum_{s}q_{a}(F_s)(\delta(F_s^{2}|_a) - \gamma)] \notag\\
&+ \frac{4}{3}\sum_s \E[(F_s^3 - F_s)\Pi_{\geq \lambda_r} F_{s}] \label{eq:sp-intermediate}
\end{align}

We now set $\gamma = \exp(-r)$ such that $(1-\gamma^{1/3}\exp(r)) = 1/2$. Further we bound each of the terms $\sum_s \ip{F_s, LF_s}, \sum_s \delta(F_s)$ and the Booleanity error: $\frac{4}{3}\sum_s \E[(F_s^3 - F_s)\Pi_{\geq \lambda_r} F_{s}]$.

First note that the fraction of low-valued vertices is small and in particular is at most $\frac{\viol(X)+\viol(X')}{1-\beta}$. Since the functions cover all of the high-valued vertices we get $\sum \delta(F_s) \gtrsim 1 - \frac{\viol(X) + \viol(X')}{1-\beta}$. Next, $\sum_s \ip{F_s, LF_s}$ counts the fraction of edges crossing the shift-partition. Every such edge must be violated by either $X$ or $X'$ or must be incident on a vertex with low value, therefore we get:  $\sum_s \ip{F_s, LF_s} \lesssim \viol(X)+\viol(X')+ \frac{2\viol(X)+2\viol(X')}{1-\beta}$.  We have put approximate inequalities here since there are some error terms generated because $F_s$'s are not exact indicator functions. 
The Booleanity term $B(F)$ in Theorem~\ref{thm:structure-johnson} is also small (after summing up) because $F_s$'s are approximate-indicators (note that it is $0$ for $0/1$ functions since $F^3 = F$)\footnote{Note that if the functions $F_s$ were defined using $\Ind[\val_u(X)\geq \beta]$ then we would obtain the lemma statement without the terms involving $\eta$ and with $\nu = 0$}. These statements have been made formal in the claims from Section 4.2 of~\cite{BBKSS}, specifically Claim 4.2, 4.7 and 4.9 therein. They are simple to prove given the properties of the approximate-indicator polynomial $p_{\beta,\nu}$ hence we omit them here. 

Using the above claims to bound each sum, plugging in the bounds in \eqref{eq:sp-intermediate} and rearranging we get the desired inequality.

\end{proof}

\subsubsection{Reducing Global Correlation}
The last ingredient we need for the proof of Lemma~\ref{lem:sp-j} is 
that given a pseudo-expectation of sufficiently high degree, one can 
construct different pseudo-expectations (which are conditionals 
of the initial pseudo-expectation) that have no global correlations.
More precisely:
\begin{lemma}\label{lem:ragh-tan-appln}
For all $\tau,p,\beta,\nu \in (0,1)$ and $|\Sigma|, D, D' \in \Z$ 
such that $D' \geq \max(D,\widetilde{\Omega}(1/\nu))$ the following holds.

Suppose there is a degree $D' + \widetilde{\Omega}(\frac{\log |\Sigma|}{p \nu \tau})$ pseudodistribution $\mu$ over UG assignments that satisfies $\cA_I$, and a polynomial $E(X,X')$ satisfying $\pE_{(X,X') \sim \mu \times \mu}[E(X,X')] \geq p$ and $\cA_I \vdash_D E(X,X') \leq 1$. Then for all subsets $S \subseteq [N]$, there exist subsets $A, B \subseteq S$ of size at most $O(\frac{\log|\Sigma|}{p\tau})$ and strings $y_A, y_B$ such that conditioning $\mu$ on the events $Y_A = y_A$ and $Y_B = y_B$ gives pseudodistributions $\mu_1$ and $\mu_2$ of degree at least $D'$ such that: 
\begin{enumerate}
\item $\pE_{X,X'\sim \mu_1\times\mu_2}[E(X,X')] \geq \frac{p}{2}$.
\item $\E_{\substack{u_1,v_1 \sim S: u_1 \neq v_1\\ u_2,v_2 \sim S: u_2 \neq v_2}}[I(Y_{u_1,v_1};Y_{u_2,v_2})] \leq \tau.$
\item $\E_{\substack{u_1,v_1 \sim S: u_1 \neq v_1\\ u_2,v_2 \sim S: u_2 \neq v_2}}[I(Y'_{u_1,v_1};Y'_{u_2,v_2})] \leq \tau$,
\end{enumerate}
where for $A \subseteq [N]$, $Y_{A} = (X_{i_1},\ldots, X_{i_{|A|}},p_{i_1},\ldots, p_{i_{|A|}})$ (same for $Y'$) and the mutual information is with respect to the collection of local distributions $\cL(\mu_1 \times \mu_2,p_{\beta,\nu})$.
\end{lemma}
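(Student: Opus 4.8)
This is exactly the setting for a pseudodistribution version of the global-correlation-reduction argument of Raghavendra--Tan~\cite{RT12}, with the extra twist that the conditioning must not kill the pseudoprobability of $E$. The plan is: treat the ordered pairs $(u,v)$, $u\neq v$ in $S$, as ``super-vertices'' carrying the label $Y_{u,v}=(X_u,X_v,p_u,p_v)$; each such label takes at most $4|\Sigma|^2$ values under any local distribution of $\cL$, hence has entropy at most $2\log(2|\Sigma|)$. Conditioning $\mu$ on an event $Y_A=y_A$ is, as noted after Definition~\ref{defn:local-dist}, a legitimate reweighting by an SoS polynomial $Q_{A,y_A}$ of degree $\tO(|A|/\nu)$ (the $1/\nu$ overhead per vertex coming from the degree-$\tO(1/\nu)$ approximation $p_{\beta,\nu}$ used to encode $p_u$), and $\sum_{y_A}Q_{A,y_A}=1$ modulo $\cA_I$, so averaging a conditioning back recovers $\mu$. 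I would first carry out the correlation reduction on these super-vertices, independently for two copies, and then argue that with probability bounded away from $0$ the same random conditioning also keeps $\pE[E]$ large.

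\textbf{Step 1: correlation reduction.} First I would show a random conditioning makes the average super-vertex mutual information small in expectation. Sample $t$ uniform in $\{0,\dots,k-1\}$ with $k=\Theta(\log|\Sigma|/(p\tau))$, sample $t$ i.i.d.\ uniform super-vertices $e_1,\dots,e_t$, let $A$ be the set of (at most $2t$) vertices they touch, and sample $y_A$ from the local marginal $\pPr_{\cL}[Y_A=\cdot]$; set $\mu_1=\mu\mid (Y_A=y_A)$. The standard telescoping of conditional entropy applied to the potential $\Phi_i=\E_{e}[H(Y_e\mid Y_{e_1},\dots,Y_{e_i})]$ (the expectation over a uniform super-vertex $e$, entropies computed in $\cL$) gives
\[
\E_{t,\,e_1,\dots,e_t,\,y_A}\Big[\E_{e,e'}\!\big[I(Y_e;Y_{e'}\mid Y_A)\big]\Big]
\;=\;\frac{\Phi_0-\E[\Phi_k]}{k}\;\le\;\frac{2\log(2|\Sigma|)}{k}\;\le\;\frac{p\tau}{8},
\]
for a suitable constant in $k$, so by Markov the conditioned average $Y$-correlation exceeds $\tau$ with probability at most $p/8$. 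Running the identical construction on an independent copy produces $B$ (with $|B|\le 2k$) and $\mu_2=\mu\mid(Y_B=y_B)$ with the analogous bound for the $Y'$-variables.

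\textbf{Step 2: preserving $E$, union bound, and degree count.} Since $\mu_1\times\mu_2$ is the reweighting of $\mu\times\mu$ by $Q_{A,y_A}(X)\,Q_{B,y_B}(X')$ and the two conditionings are independent, expanding $E(X,X')$ into monomials $X^\alpha(X')^\beta$ and using $\sum_{y_A}Q_{A,y_A}=\sum_{y_B}Q_{B,y_B}=1$ (modulo $\cA_I$) yields $\E_{A,y_A,B,y_B}\big[\pE_{\mu_1\times\mu_2}[E]\big]=\pE_{\mu\times\mu}[E]\ge p$; since $\pE_{\mu_1\times\mu_2}[E]\in[0,1]$, the first-moment bound gives $\Pr\big[\pE_{\mu_1\times\mu_2}[E]\ge p/2\big]\ge p/2$. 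A union bound over the three bad events — $\pE_{\mu_1\times\mu_2}[E]<p/2$, average $Y$-correlation $>\tau$, average $Y'$-correlation $>\tau$ — has total probability at most $(1-p/2)+p/8+p/8<1$, so some choice of $(A,y_A,B,y_B)$ is simultaneously good, which gives the $\mu_1,\mu_2$ in the statement. For the degrees: $\mu$ starts with degree $D'+\widetilde{\Omega}(\log|\Sigma|/(p\nu\tau))$, and forming each of $\mu_1,\mu_2$ spends $\tO(|A|/\nu)=\tO(\log|\Sigma|/(p\nu\tau))$ of the degree, so $\mu_1,\mu_2$ retain degree $\ge D'$, while $|A|,|B|=O(\log|\Sigma|/(p\tau))$ as claimed.

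\textbf{Main obstacle.} The high-level argument is nothing more than the RT12 potential telescoping plus a union bound; the real work is making it valid over pseudodistributions rather than genuine distributions. In particular I expect the delicate points to be: (i) verifying that $\mu\mid(Y_A=y_A)$ is a bona fide pseudodistribution whose family $\cL(\cdot,p_{\beta,\nu})$ is still consistent, so that the conditional entropies and mutual informations in the telescoping are well-defined; (ii) keeping every entropy/mutual-information quantity that appears within the available degree, which is exactly where the $\tO(1/\nu)$-per-conditioned-vertex overhead — and hence the $1/\nu$ in the degree hypothesis — comes from; and (iii) checking that the approximate (not exact) nature of the $p_{\beta,\nu}$-based indicators does not spoil the identities $\sum_{y_A}Q_{A,y_A}=1$ or the $[0,1]$-boundedness of the reweighting polynomials. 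These are routine given the preliminaries but have to be tracked carefully to land the stated parameters.
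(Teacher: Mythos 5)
Your proposal is correct and follows essentially the same route as the paper: condition on $O(\log|\Sigma|/(p\tau))$ randomly chosen super-vertex labels for each copy (the paper invokes the Raghavendra--Tan lemma as a black box where you inline its telescoping-potential proof), use Markov on the two conditional mutual-information quantities, use the averaging argument with $E\le 1$ to preserve $\pE[E]\ge p/2$ with probability $\ge p/2$, and union bound. The parameter choices, the observation that conditioning is reweighting by the polynomials $Q_{A,y_A}$ with $\sum_{y_A}Q_{A,y_A}=1$, and the degree accounting all match the paper's argument.
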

\begin{proof}
Deferred to Section~\ref{sec:ragh}.
\end{proof}

\subsubsection{Proof of Lemma~\ref{lem:sp-j}}
We combine the lemmas stated in the previous section and set parameters to complete the proof of Lemma~\ref{lem:sp-j}.

\begin{lemma}[Restatement of Lemma~\ref{lem:sp-j}]\label{lem:sp-j-restated}
There exists a constant $\eps_0 \in (0,1)$, such that for all positive constants $\eps \leq \eps_0$, $\alpha, \tau \leq 1$, $\nu \leq \eps \exp(-r)$, all integers $\ell \geq \Omega(r), n \geq \ell$, where $r = \lfloor \frac{32\sqrt{\eps}}{\alpha}\rfloor$ the following holds: Let $I$ be an affine UG instance on $J_{n,\ell,\alpha}$ and $\mu$ be a pseudodistribution over assignments for $I$ with $\val_\mu(I) \geq 1-\eps$ and degree at least $\widetilde{\Omega}\left(\frac{|\Sigma|\ell^r\exp(r)}{\nu\tau\sqrt{\eps}}\right)$. There exists a restriction $a \subseteq [n]$ of size $j \leq r$, a degree $\widetilde{O}(1/\nu)$ polynomial $P_a(X,X')$ in a fixed set of $|\Sigma|$ polynomials, subsets $A, B \subseteq V(J|_a)$ of size at most $\tO(\frac{|\Sigma|\ell^j\exp(r)}{\tau\sqrt{\eps}})$ and strings $y_A, y_B$ such that conditioning $\mu$ on the events $Y_A = y_A$ and $Y_B = y_B$ gives degree $\widetilde{\Omega}(1/\nu)$ pseudodistributions $\mu_1$ and $\mu_2$ such that:
\begin{enumerate}
\item $\cA_I  ~~\vdash_{\widetilde{O}(1/\nu)} ~~ P_a(X,X') \in [0,1]$.
\item $\pE_{\mu_1 \times \mu_2}[\Phi^a_{\sqrt{\eps},\nu}(X,X')P_a(X,X')] \geq \exp(-r) \pE[P_a(X,X')]$.
\item $\pE_{\mu_1 \times \mu_2}[P_a(X,X')] \geq \Omega\left(\frac{\sqrt{\eps}}{|\Sigma|\ell^{j}\exp(r)}\right)$.
\item $\E_{\substack{u_1,v_1 \sim S: u_1 \neq v_1\\ u_2,v_2 \sim S: u_2 \neq v_2}}[I(Y_{u_1,v_1};Y_{u_2,v_2})] \leq \tau.$
\item $\E_{\substack{u_1,v_1 \sim S: u_1 \neq v_1\\ u_2,v_2 \sim S: u_2 \neq v_2}}[I(Y'_{u_1,v_1};Y'_{u_2,v_2})] \leq \tau,$
\end{enumerate}
where $S = V(J|_a)$, $Y_{u,v} = (X_u,X_v,p_u,p_v)$, $Y'_{u,v} = (X'_u,X'_v,p'_u,p'_v)$ and the mutual information is taken with respect to the collection of local distributions $\cL(\mu_1 \times \mu_2,p_{\sqrt{\eps},\nu})$. 
\end{lemma}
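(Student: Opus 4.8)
The plan is to run the \sos-ified structure theorem (Lemma~\ref{lem:structure-fs-j}) through the product pseudodistribution to locate a ``globally dense'' subcube, set $P_a$ to be the corresponding structure-theorem multiplier, and then run the Raghavendra--Tan global-correlation-reduction (Lemma~\ref{lem:ragh-tan-appln}) on top of it while holding onto the conditional shift-partition potential. Concretely, I would first apply Lemma~\ref{lem:structure-fs-j} with its free parameter (there called $\eta$) set to $\Theta(1/\sqrt\eps)$ and take the pseudoexpectation of the resulting \sos inequality over $\mu\times\mu$ --- legitimate since $\deg\mu\ge\widetilde\Omega(1/\nu)$ dominates the degree of that proof and $\mu\times\mu$ satisfies $\cA_I$ for both copies (Fact~\ref{fact:indep}). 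Using $\pE_{\mu\times\mu}[\viol(X)]=\pE_{\mu\times\mu}[\viol(X')]\le\eps$, the estimate $1-(1-\alpha)^{r+1}=\Omega(\sqrt\eps)$ for $r=\lfloor 32\sqrt\eps/\alpha\rfloor$ (Claim~\ref{claim:subcube-expanse}), and $\nu\le\eps\exp(-r)$, every error term on the right-hand side is at most a small fraction of $1-(1-\alpha)^{r+1}$, so
\[
\sum_{j=0}^{r} c_j\exp(r)\,\ell^{j}\,\E_{a\in\binom{[n]}{j}}\Big[\sum_{s\in\Sigma}\pE_{\mu\times\mu}\big[q_a(F_s)\big(\delta(F_s|_a)-\exp(-r)\big)\big]\Big]\ \ge\ \Omega(\sqrt\eps).
\]
Since there are $r+1$ choices of $j$, $|\Sigma|$ choices of $s$, and $c_j\le\exp(r)$, averaging yields $j^\star\le r$, $s^\star\in\Sigma$ and (by taking the best outcome of the average over $a$) a set $a^\star\in\binom{[n]}{j^\star}$ with $\pE_{\mu\times\mu}[q_{a^\star}(F_{s^\star})(\delta(F_{s^\star}|_{a^\star})-\exp(-r))]\ge\Omega(\sqrt\eps/(|\Sigma|\ell^{j^\star}\exp(r)))$. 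I would set $P_{a^\star}(X,X'):=q_{a^\star}(F_{s^\star})$, a member of the fixed $|\Sigma|$-element family $\{q_{a^\star}(F_s)\}_{s\in\Sigma}$ of degree-$\widetilde O(1/\nu)$ polynomials; item~1 is Theorem~\ref{thm:structure-johnson} ($0\le q_{a^\star}(F_{s^\star})\le1$ modulo $\{F_{s^\star}(u)\in[0,1]\}$, and $\cA_I\vdash F_{s^\star}(u)\in[0,1]$ as $F_{s^\star}$ is a product of $Z_{u,s^\star}$ and two copies of $p_{\beta,\nu}$), and item~3 follows from the displayed bound together with the \sos facts $q_{a^\star}(F_{s^\star})\succeq0$ and $\delta(F_{s^\star}|_{a^\star})-\exp(-r)\preceq1$.

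Next I would convert the ``pre-conditioning'' inequality $\pE_{\mu\times\mu}[P_{a^\star}\,\delta(F_{s^\star}|_{a^\star})]\ge\exp(-r)\,\pE_{\mu\times\mu}[P_{a^\star}]$, which still refers to the \emph{global} density $\delta(F_{s^\star}|_{a^\star})=\E_{u\in J|_{a^\star}}[F_{s^\star}^G(u)]$, into a statement about the \emph{on-subcube} potential $\Phi^{a^\star}_{\sqrt\eps,\nu}$. The key point is that $r=\Theta(\sqrt\eps/\alpha)$ is small enough that $(1-\alpha)^{j^\star}\ge(1-\alpha)^{r}\ge 1-O(\sqrt\eps)$: only a $(1-\alpha)^{-j^\star}$ factor of a vertex's $G$-edges leaves $J|_{a^\star}$, so $\viol_u^{J|_{a^\star}}(X)\le\viol_u^G(X)/(1-\alpha)^{j^\star}$, and therefore if the value-threshold inside the structure-theorem's $F_{s^\star}^G$ is taken to be a large enough constant multiple of $\sqrt\eps$, the event ``$\val^G_u\ge\text{threshold}$'' forces ``$\val^{J|_{a^\star}}_u\ge\sqrt\eps$'' for \emph{every} subcube of size $\le r$. (Raising the threshold costs nothing: at most $O(\eps)$ of the vertices are below it, so the shift-partition still has expansion $O(\eps)$ and the structure theorem still applies.) This gives a pointwise \sos comparison $F_{s^\star}^G(u)\preceq F_{s^\star}^{J|_{a^\star}}(u)$, hence $\Phi^{a^\star}_{\sqrt\eps,\nu}\succeq(\E_{u\in J|_{a^\star}}[F_{s^\star}^{J|_{a^\star}}(u)])^2\succeq\delta(F_{s^\star}|_{a^\star})^2$; feeding this through the previous inequality and Cauchy--Schwarz for the reweighting of $\mu\times\mu$ by $P_{a^\star}$ (Lemma~\ref{CS2-prelim}) yields $\pE_{\mu\times\mu}[\Phi^{a^\star}_{\sqrt\eps,\nu}P_{a^\star}]\ge\exp(-\Theta(r))\pE_{\mu\times\mu}[P_{a^\star}]$, i.e.\ $\Phi^{a^\star}_{\sqrt\eps,\nu}(\mu\times\mu\mid P_{a^\star})\ge\exp(-\Theta(r))$.

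Finally I would invoke the Raghavendra--Tan step (Lemma~\ref{lem:ragh-tan-appln}) with $E=P_{a^\star}$, $S=V(J|_{a^\star})$, $p=\Omega(\sqrt\eps/(|\Sigma|\ell^{j^\star}\exp(r)))$ and the given $\tau$, but I would run its conditioning procedure so as to preserve, up to constant factors, \emph{both} $\pE[P_{a^\star}]$ and $\pE[\Phi^{a^\star}_{\sqrt\eps,\nu}P_{a^\star}]$ --- this costs only another constant in the correlation bound, since the conditioning is unchanged and one merely union-bounds over two ``bad'' events in the averaging over conditioning prefixes. This returns $A,B\subseteq V(J|_{a^\star})$ of size $O(\log|\Sigma|/(p\tau))=\widetilde O(|\Sigma|\ell^{j^\star}\exp(r)/(\tau\sqrt\eps))$, strings $y_A,y_B$, and conditional pseudodistributions $\mu_1,\mu_2$ of degree $\widetilde\Omega(1/\nu)$ for which: items~4 and~5 are the output pairwise-mutual-information bounds (for the $Y_{u,v}$ under $\mu_1$ and the $Y'_{u,v}$ under $\mu_2$, combined via the product structure of $\mu_1\times\mu_2$); item~3 is $\pE_{\mu_1\times\mu_2}[P_{a^\star}]\ge p/2$; and item~2 is the preserved bound $\pE_{\mu_1\times\mu_2}[\Phi^{a^\star}_{\sqrt\eps,\nu}P_{a^\star}]\ge\exp(-\Theta(r))\pE_{\mu_1\times\mu_2}[P_{a^\star}]$, with $\exp(-\Theta(r))$ absorbed into $\exp(-r)$. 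The degree demand of Lemma~\ref{lem:ragh-tan-appln}, $\deg\mu\ge\widetilde\Omega(\log|\Sigma|/(p\nu\tau))=\widetilde\Omega(|\Sigma|\ell^{j^\star}\exp(r)/(\nu\tau\sqrt\eps))$, is met by the hypothesis $\deg\mu\ge\widetilde\Omega(|\Sigma|\ell^{r}\exp(r)/(\nu\tau\sqrt\eps))$, and likewise for the bounds on $|A|,|B|$.

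I expect the main obstacle to be threading the \emph{conditional} potential $\Phi^{a^\star}(\cdot\mid P_{a^\star})$ through the global-correlation-reduction: ordinary reweighting controls only one pseudoexpectation at a time, whereas here one must simultaneously prevent the numerator $\pE[\Phi^{a^\star}P_{a^\star}]$ from collapsing and the denominator $\pE[P_{a^\star}]$ from blowing up, which forces one to reopen the Raghavendra--Tan argument and track two quantities along the conditioning. A second, closely related difficulty is carrying out the $\val^G$-versus-$\val^{J|_{a^\star}}$ comparison of the second paragraph entirely inside the \sos system and getting the threshold bookkeeping right --- this is exactly the place where the choice $r=\Theta(\sqrt\eps/\alpha)$ is forced (large enough that $1-(1-\alpha)^{r+1}\gg\viol$, small enough that $(1-\alpha)^r$ is within $O(\sqrt\eps)$ of $1$). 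One must also keep every additive error from the approximate indicators $p_{\beta,\nu}$ below the $\exp(-r)$ scale, which is why the hypothesis imposes $\nu\le\eps\exp(-r)$ and a pseudodistribution degree of order $|\Sigma|\ell^{r}\exp(r)/(\nu\tau\sqrt\eps)$. Everything else is averaging and parameter bookkeeping.
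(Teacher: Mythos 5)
Your architecture is the paper's: apply Lemma~\ref{lem:structure-fs-j} to $\mu\times\mu$ with $\eta=1/\sqrt\eps$ and $\beta$ a constant multiple of $\sqrt\eps$, average over $j,s,a$ to extract $P_a=q_a(F_s)$ with $\pE_{\mu\times\mu}[q_a(F_s)(\delta(F_s|_a)-\exp(-r))]\ge\Omega(\sqrt\eps/(|\Sigma|\ell^j\exp(r)))$, pass from global density to the on-subcube potential via the bounded expansion of $\le r$-restricted subcubes (Claims~\ref{claim:subcube-expanse} and~\ref{claim:potentials} are exactly your ``raise the threshold by $200\sqrt\eps$'' step, and Cauchy--Schwarz on the $P_a$-reweighted distribution turns $\pE'[\delta(F_s|_a)]$ into $\pE'[\delta(F_s|_a)^2]$), and finish with Lemma~\ref{lem:ragh-tan-appln}. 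The order in which you do the potential conversion and the correlation reduction is immaterial, since Claim~\ref{claim:potentials} is a degree-$\tO(1/\nu)$ \sos inequality valid for any sufficiently high-degree pseudodistribution.

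The one step that does not work as you describe it is the one you yourself flag as the main obstacle: preserving the \emph{conditional} potential through the Raghavendra--Tan conditioning by ``union-bounding over two bad events.'' Reverse Markov applied separately to $P_a$ and to $\Phi^a P_a$ only guarantees that each good event (the conditional pseudoexpectation staying within a constant of its mean) has probability at least $p/2$ and $\exp(-\Theta(r))p/2$ respectively over the random conditioning prefix; the complementary bad events can each have probability close to $1$, so a union bound gives nothing and you cannot conclude the two good events intersect. The paper's resolution is to never track two quantities at all: it feeds the single polynomial $E(X,X')=P_a\cdot(\delta(F_s|_a)-\exp(-r))$ into Lemma~\ref{lem:ragh-tan-appln} (the lemma only needs $\cA_I\vdash E\le 1$, not $E\ge 0$). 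Preserving $\pE[E]\ge p/2$ after conditioning simultaneously yields item~3 (since $P_a(\delta(F_s|_a)-\exp(-r))\le(1-\exp(-r))P_a$) and, after rearranging and the Cauchy--Schwarz step, item~2. Your proposal is repaired by exactly this substitution --- condition so as to preserve the combined polynomial rather than the two factors of the ratio --- after which everything else you wrote goes through.
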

\begin{proof}


We will apply Lemma~\ref{lem:structure-fs-j} with the following parameters: $r = \lfloor 32\sqrt{\eps}/\alpha \rfloor$, so that $1 - \lambda_{r+1} \geq 16\sqrt{\eps}$,  $\eta = 1/\sqrt{\eps}$ and $\beta = 201\sqrt{\eps}$.
Since $\eps$ is sufficiently small and $\nu \leq \eps/2$ we get that $1/(1 - \beta - \nu) < 2$. By assumption $\pE_\mu[\viol(X)] = \pE_\mu[\viol(X')] \le \eps$, $\pE_\mu$ has degree $\widetilde{O}(1/\nu)$ and $\pE_\mu$ satisfies $\cA_I$, therefore taking the pseudoexpectation of equation~\ref{eq:sp-j-1} with respect to the pseudodistribution $\mu \times \mu$ we get:
\begin{align}
\sum_{j = 0}^r c_j \exp(r) \ell^j \E_{a \in {[n] \choose j}}\pE_{\mu \times \mu}[\sum_{s}q_{a}(F_s)(\delta(F_s|_a) - \exp(-r))] \geq \frac{\sqrt{\eps}}{3}.    
\end{align}

Since $c_j$'s are smaller than $\exp(r)$, an averaging argument gives us a size $j \leq r$ restriction $a \subseteq [n]$ and an $s \in \Sigma$ such that:
\begin{equation}\label{eq:sp-j-2}
\pE_{\mu \times \mu}[q_{a}(F_s)(\delta(F_s|_a) - \exp(-r))] \geq \Omega\left(\frac{\sqrt{\eps}}{|\Sigma|\ell^{j}\exp(r)}\right).
\end{equation}

Let $P_a(X,X')$ be the polynomial $q_a(F_s(X,X'))$ from above, abbreviated henceforth as $P_a$. First note that since $q_a$ is a degree $O(1)$-SoS polynomial we immediately get that $P_a$ is a degree $\widetilde{O}(1/\nu)$ SoS polynomial such that:
\[\cA_I ~~ \vdash_{\widetilde{O}(1/\nu)} F_s(u) \in [0,1] ~~\vdash_{O(1)} ~~ q_a(F_s(X,X')) = P_a(X,X') \in [0,1],\]
thus proving the first statement in the theorem.
We now reduce the global correlations using Lemma~\ref{lem:ragh-tan-appln}. 
Namely, we take $\mu$ with $S$ being the set of variables corresponding to the subcube $a$, the polynomial $E(X,X') = P_a(\delta(F_s|_a) - \exp(-r))$ (for which we know that $\cA \vdash_{\tO(1/\nu)} E(X,X') \leq 1$), $\beta = \sqrt{\eps}$ and $\tau,\nu$ as in the lemma statement. Thus, we get a 
pseudodistribution $\mu_1 \times \mu_2$ over UG assignments to the subcube $a$, such that: 
\begin{equation}\label{eq:sp-large-mu1-mu2}
\pE_{\mu_1 \times \mu_2}[P_a(\delta(F_s|_a) - \exp(-r))] \geq \Omega\left(\frac{\sqrt{\eps}}{|\Sigma|\ell^{j}\exp(r)}\right),
\end{equation}
 along with the conditions on the mutual information of $\mu_1,\mu_2$. 
This proves point (4),(5) from the lemma statement. 

\paragraph{Establishing point (2) of the lemma.}
Let us now derive the fact that the global shift-partition potential $\Phi(\mu_1 \times \mu_2)|_a$ is large conditioned on $P_a$.  
Rearranging~\eqref{eq:sp-large-mu1-mu2} we get that:
\begin{equation}\label{eq1}
    \pE_{\mu_1 \times \mu_2}[P_{a}\delta(F_s|_a)] \geq \exp(-r)\pE_{\mu_1 \times \mu_2}[P_a] + \Omega\left(\frac{\sqrt{\eps}}{|\Sigma|\ell^{j}\exp(r)}\right) \geq \exp(-r) \pE_{\mu_1 \times \mu_2}[P_a].
\end{equation}

Since $P_a$ is an SoS polynomial we can reweight $\pE_{\mu_1 \times \mu_2}[\cdot]$ by $P_a$ to get $\widetilde{\mathbb{E}'}$ and apply Cauchy-Schwarz to get:
\[\exp(-r) \leq \widetilde{\mathbb{E}'}[\delta(F_s|_a)] \leq \sqrt{\widetilde{\mathbb{E}'}[\delta(F_s|_a)^2]} = 
\sqrt{\frac{\pE_{\mu_1 \times \mu_2}[\delta(F_s|_a)^2 P_a]}{\pE_{\mu_1 \times \mu_2}[P_a]}},\]
rearranging which we get that: $\pE_{\mu_1 \times \mu_2}[\delta(F_s|_a)^2P_a] \geq \exp(-r)\pE_{\mu_1 \times \mu_2}[P_a]$. By definition, the global potential restricted to $a$ is equal to $\sum_{s \in \Sigma} \delta(F_s|_a)^2$. Therefore adding the terms $\pE_{\mu_1 \times \mu_2}[\delta(F_{s'}|_a)^2 P_a]$ for $s' \in \Sigma, s' \neq s$ (which are all non-negative) to the LHS of~\eqref{eq1} we get:
\begin{equation}\label{eq:global-large}
\pE_{\mu_1 \times \mu_2}[\Phi_{\beta,\nu}(X,X')|_a P_a] \geq \exp(-r)\pE_{\mu_1 \times \mu_2}[P_a].
\end{equation}

We will now relate the global potential to the shift-partition potential on $a$, where the only difference between the quantities is that in the former the value of a vertex is the fraction of edges satisfied in the whole graph, whereas in the latter it is the value calculated according to only the edges inside the subcube $a$ (in the terms $p_{\beta,\nu}(\val(u))$).

First note that that an $r$-restricted subcube has bounded expansion when $r$ is not too large: $\phi(J|_a) \leq O(\sqrt{\eps})$ (Claim~\ref{claim:subcube-expanse}). Using this we get that,
\[
\Phi^a_{\beta -200\sqrt{\eps},\nu}(X,X') \ge \Phi_{\beta,\nu}(X,X')|_a - 4\nu,
\]
which we prove formally in Claim~\ref{claim:potentials}.


Overall, we get:
\[\pE_{\mu_1 \times \mu_2}[\Phi^a_{\sqrt{\eps},\nu}(X,X')P_a] \ge \pE_{\mu_1 \times \mu_2}[(\Phi_{201\sqrt{\eps},\nu}(X,X')|_a - 4\nu) P_a] \geq \exp(-r)\pE_{\mu_1 \times \mu_2}[P_a],\]
since $\nu \leq \exp(-r)/8$, which establishes point (2) of the 
lemma.

Point (3) follows from~\eqref{eq:sp-large-mu1-mu2}. Indeed, 
since $\delta(F_s|_a) \leq 1$ and $P_a(X,X') \geq 0$ (both facts are certifiable in degree $\widetilde{O}(1/\nu)$) we get that: $\cA_I \vdash_{\widetilde{O}(1/\nu)} P_{a}(X,X')(\delta(F_s|_a)-\exp(-r))] \leq (1-\exp(-r))P_a(X,X')$. Taking $\pE_{\mu_1 \times \mu_2}$ and plugging this into equation~\eqref{eq:sp-large-mu1-mu2} we get that:
\[\pE_{\mu_1 \times \mu_2}[P_a] \geq \Omega\left(\frac{\sqrt{\eps}}{|\Sigma|\ell^{j}\exp(r)}\right)
\qedhere\]
\end{proof}

\subsubsection{Auxiliary Claims}\label{sec:aux_claims}
We end this section by giving the proofs of Claims~\ref{claim:subcube-expanse} and \ref{claim:potentials}. These were also used in~\cite{BBKSS} but we include the proofs here for completeness.

\begin{claim}[Claim 6.10~\cite{BBKSS}]\label{claim:subcube-expanse}
If $r = \left\lfloor\frac{32\sqrt{\eps}}{\alpha}\right\rfloor < \frac{\ell}{4}$ and $s < r$, an $s$-restricted subcube of $J(n,\ell,\alpha\ell)$ has expansion at most $200\sqrt{\eps}$.
\end{claim}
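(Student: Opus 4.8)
The plan is to unfold what an $s$-restricted subcube is and then compute its expansion directly. Fix $a \subseteq [n]$ with $|a| = s$; the subcube $H_a$ is the set of vertices $A \in \binom{[n]}{\ell}$ with $A \supseteq a$, and we must bound $\Phi(H_a) = \Pr_{A \in H_a,\, B \in \Gamma(A)}[B \notin H_a]$, the edge expansion of $H_a$ inside the ambient graph $J(n,\ell,\alpha\ell)$. The first step is to recall the random-neighbor description of $J(n,\ell,\alpha\ell)$: a uniformly random neighbor $B$ of a vertex $A$ is obtained by choosing a uniformly random $\alpha\ell$-subset $D \subseteq A$ to delete and a uniformly random $\alpha\ell$-subset $D' \subseteq [n] \setminus A$ to insert, and setting $B = (A \setminus D) \cup D'$; each pair $(D,D')$ yields a distinct neighbor and all neighbors arise this way. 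Since $a \subseteq A$ and $D' \cap A = \emptyset$, we have $a \cap B = a \setminus D$, so $B \in H_a$ if and only if $D \cap a = \emptyset$.

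The second step is a union bound. For a uniformly random $\alpha\ell$-subset $D$ of the $\ell$-set $A$, each fixed element of $A$ lies in $D$ with probability exactly $\alpha\ell/\ell = \alpha$; hence, for every $A \in H_a$,
\[
\Pr_{D}[D \cap a \neq \emptyset] \;\leq\; \sum_{i \in a} \Pr_D[i \in D] \;=\; s\alpha .
\]
Since this is uniform over the choice of $A \in H_a$, we conclude $\Phi(H_a) \leq s\alpha$.

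The last step is to plug in the hypothesis $s < r = \left\lfloor \tfrac{32\sqrt{\eps}}{\alpha} \right\rfloor \leq \tfrac{32\sqrt{\eps}}{\alpha}$, which gives $\Phi(H_a) \leq s\alpha < 32\sqrt{\eps} \leq 200\sqrt{\eps}$, as claimed. The assumption $r < \ell/4$ (together with $\alpha \le 1/2$ in our setting) is only used to ensure $\alpha\ell \le \ell - s$, so that $D$ can actually avoid $a$ and the combinatorial description above is non-degenerate; it costs nothing in the estimate. Honestly there is no real obstacle in this claim — the only points requiring care are using the correct random-walk model on the noisy Johnson graph, and observing that the bound $32\sqrt{\eps}$ we actually obtain leaves generous slack below the stated $200\sqrt{\eps}$ (so a cruder accounting, e.g. bounding $\Pr[D\cap a\neq\emptyset] = 1 - \binom{\ell-s}{\alpha\ell}/\binom{\ell}{\alpha\ell} \le \frac{s\alpha\ell}{\ell-s}$, would also suffice).
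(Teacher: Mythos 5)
Your proof is correct and follows essentially the same route as the paper: both arguments reduce the expansion of $J|_a$ to the probability that the deleted $\alpha\ell$-subset of a vertex intersects $a$, which the paper writes as $1 - \binom{\ell-s}{\alpha\ell}/\binom{\ell}{\alpha\ell}$ and bounds via a telescoping product (using $r<\ell/4$ to bound each factor by $1-4\alpha/3$), while you bound it directly by a union bound as $s\alpha$. Your estimate is marginally cleaner and slightly tighter ($32\sqrt{\eps}$ versus the paper's $\approx 100\sqrt{\eps}$), and as you note it relegates the hypothesis $r<\ell/4$ to a non-degeneracy role rather than using it in the estimate.
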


\begin{proof}
Let $J|_Y$ be an $s$-restricted subcube.
We have that,
\[1 - \phi(J|_Y) = \frac{\binom{\ell - |Y|}{\alpha\ell}}{\binom{\ell}{\alpha\ell}} \geq \frac{\binom{\ell - r}{\alpha\ell}}{\binom{\ell}{\alpha\ell}} = \left(\frac{\ell - \alpha\ell}{\ell}\right)\left(\frac{\ell - \alpha\ell - 1}{\ell - 1}\right)\ldots \left(\frac{\ell - \alpha\ell - r + 1}{\ell - r + 1}\right).\]
Now since $r \leq \ell/4$ by assumption, each of the parenthesized terms is at least $\left(\frac{3\ell/4 - \alpha\ell}{3\ell/4}\right) = (1 - 4\alpha/3)$, so
\[1 - \phi(J|_Y) \geq \left(1-\frac{4\alpha}{3}\right)^r \geq 1 - \frac{4r\alpha}{3}.\]

Since $r = \left\lfloor\frac{32\sqrt{\eps}}{\alpha}\right\rfloor < \frac{75\sqrt{\eps}}{\alpha}$, we get that $\phi(J|_Y) < 200\sqrt{\eps}$
as desired.
\end{proof}

\begin{claim}[Claim 6.11~\cite{BBKSS}]\label{claim:potentials}
Suppose that $C$ is an $r$-restricted subcube of $J_{n,\ell,\alpha}$ with $r = \left\lfloor\frac{32\sqrt{\eps}}{\alpha}\right\rfloor$.
Then if $\Phi^C$ is the shift-partition potential restricted to $C$, for any $\beta \ge 201\sqrt{\eps}$ and $\nu < \eps$,
\[
\Phi^C_{\beta -200\sqrt{\eps},\nu}(X,X') \ge \Phi_{\beta,\nu}(X,X')|_C - 4\nu,
\]
and furthermore this is certifiable in degree $\tO(1/\nu)$ \sos.
\end{claim}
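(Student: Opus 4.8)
Both potentials are built from the shift indicators $Z_{u,s}=\sum_{a}X_{u,a}X'_{u,a+s}$, the only differences being: $\Phi^C_{\beta-200\sqrt{\eps},\nu}$ (Definition~\ref{def:shift-partition}) uses the approximate value indicators $p_{\beta-200\sqrt{\eps},\nu}$ evaluated at the value $\val^C_u(X)$ of $u$ \emph{inside} $C$, while $\Phi_{\beta,\nu}(X,X')|_C$ (Definition~\ref{def:res-shift}) uses $p_{\beta,\nu}$ evaluated at the value $\val^J_u(X)$ of $u$ inside the ambient Johnson graph $J$. Set $F^C_s(u)=Z_{u,s}\,p_{\beta-200\sqrt{\eps},\nu}(\val^C_u(X))p_{\beta-200\sqrt{\eps},\nu}(\val^C_u(X'))$ and $F^J_s(u)=Z_{u,s}\,p_{\beta,\nu}(\val^J_u(X))p_{\beta,\nu}(\val^J_u(X'))$. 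The plan is: (1) prove the per-vertex \sos bound $F^C_s(u)\succeq F^J_s(u)-2\nu\,Z_{u,s}$; (2) average over $u\in V(C)$, square, and sum over $s$.

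\textbf{Step 1.} First, $\val^C_u(X)\succeq\val^J_u(X)-100\sqrt{\eps}$ (and likewise for $X'$) in degree $O(1)$: the subcube $C=J|_a$ (with $|a|\le r=\lfloor 32\sqrt{\eps}/\alpha\rfloor$) is a ``regular'' subgraph of $J$ -- each $u\in V(C)$ has exactly $d_C(u)=(1-\phi(C))d_J(u)$ of its $J$-neighbours inside $C$ -- and $\phi(C)<50\sqrt{\eps}$ by the computation in the proof of Claim~\ref{claim:subcube-expanse}; combined with $\cA_I\vdash_{O(1)}Y_{(u,v)}=\sum_a X_{u,a}X_{v,\pi_{uv}(a)}\in[0,1]$, the $J$-edges at $u$ leaving $C$ contribute at most $\phi(C)d_J(u)$ to $\val^J_u(X)d_J(u)=\sum_{(u,v)\in E(J)}Y_{(u,v)}$, so $\val^C_u(X)\succeq\frac{\val^J_u(X)-\phi(C)}{1-\phi(C)}\succeq\val^J_u(X)-100\sqrt{\eps}$. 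Second, $p_{\beta-200\sqrt{\eps},\nu}(\val^C_u(X))\succeq p_{\beta,\nu}(\val^J_u(X))-\nu$ in degree $\tO(1/\nu)$: by Theorem~\ref{thm:step-approx} and Corollary~\ref{cor:Lukacs} one has, for each threshold $\gamma$, the degree-$\tO(1/\nu)$ \sos facts $p_{\gamma,\nu}(t)\le\nu$ on $[0,\gamma-\nu]$, $p_{\gamma,\nu}(t)\ge 1-\nu$ on $[\gamma,1]$, and $p_{\gamma,\nu}(t)\in[0,1]$ on $[0,1]$, and one does a case analysis on whether $\val^J_u(X)\le\beta-\nu$ or $\ge\beta-\nu$ (exactly as in the proof of Fact~\ref{fact:bdd-markov}): in the former case $p_{\beta,\nu}(\val^J_u(X))-\nu\preceq 0\preceq p_{\beta-200\sqrt{\eps},\nu}(\val^C_u(X))$; in the latter case $\val^C_u(X)\succeq\beta-\nu-100\sqrt{\eps}\succeq\beta-200\sqrt{\eps}$ (since $\nu<\eps<\sqrt{\eps}$) so $p_{\beta-200\sqrt{\eps},\nu}(\val^C_u(X))\succeq 1-\nu\succeq p_{\beta,\nu}(\val^J_u(X))-\nu$. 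Finally, for $a,a',b,b'\in[0,1]$ with $a\succeq a'-\nu$, $b\succeq b'-\nu$ we have $ab\succeq a'b'-2\nu$ (from $ab-a'b'=a(b-b')+b'(a-a')$ and $a,b'\preceq 1$); multiplying by $Z_{u,s}\succeq 0$ (and $Z_{u,s}\preceq 1$, Fact~\ref{fact:z-vars}) yields $F^C_s(u)\succeq F^J_s(u)-2\nu\,Z_{u,s}$.

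\textbf{Step 2.} Let $A_s=\E_{u\in C}[F^C_s(u)]$, $B_s=\E_{u\in C}[F^J_s(u)]$, $c_s=2\nu\,\E_{u\in C}[Z_{u,s}]\succeq 0$. Averaging Step 1 gives $A_s\succeq B_s-c_s$, and since $0\preceq F^C_s(u),F^J_s(u)\preceq Z_{u,s}\preceq 1$ we get $0\preceq A_s,B_s\preceq 1$, hence $A_s+B_s\preceq 2$. Then $(A_s+B_s)(A_s-B_s+c_s)\succeq 0$ (a product of two \sos-nonnegative quantities) expands to $A_s^2\succeq B_s^2-(A_s+B_s)c_s\succeq B_s^2-2c_s$. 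Summing over $s\in\Sigma$ and using $\sum_s Z_{u,s}=1$ (Fact~\ref{fact:z-vars}), so that $\sum_s c_s=2\nu$:
\[
\Phi^C_{\beta-200\sqrt{\eps},\nu}(X,X')=\sum_{s\in\Sigma}A_s^2\ \succeq\ \sum_{s\in\Sigma}B_s^2-2\sum_{s\in\Sigma}c_s=\Phi_{\beta,\nu}(X,X')|_C-4\nu,
\]
and every \sos manipulation above is of degree $\tO(1/\nu)$, as required.

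\textbf{Expected main obstacle.} The delicate point is the second \sos fact in Step~1 -- certifying $p_{\beta-200\sqrt{\eps},\nu}(\val^C_u(X))\succeq p_{\beta,\nu}(\val^J_u(X))-\nu$ in degree $\tO(1/\nu)$ (rather than via the general, exponential-degree Positivstellensatz), since the two approximants are applied to \emph{different} polynomials and $p_{\beta-200\sqrt{\eps},\nu}$ is not globally monotone. What makes low degree possible is that the $200\sqrt{\eps}$ threshold shift in $\Phi^C$ exactly absorbs the at-most-$O(\sqrt{\eps})$ drop in value incurred passing from $J$ to the subcube $C$, so that in the ``value $\ge\beta$'' branch the local value has already crossed the shifted threshold and the approximant is $\ge 1-\nu$ with no monotonicity argument needed; the ``value $<\beta$'' branch is trivial. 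Should the in-\sos case split turn out to be awkward, one can instead settle for $O(\nu)$ in place of $4\nu$ (harmless wherever this claim is invoked, e.g.\ in Lemma~\ref{lem:sp-j-restated}), handling that step by writing $F^C_s(u)=\chi F^C_s(u)+(1-\chi)F^C_s(u)$ via the soft indicator $\chi=p_{\beta,\nu}(\val^J_u(X))$ and paying an extra $O(\nu)$ for the soft-indicator slack.
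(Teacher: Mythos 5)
Your proof is correct and follows essentially the same route as the paper's: a vertex-by-vertex bound $\val^C_u \ge \val_u - O(\sqrt{\eps})$ from the subcube's bounded expansion, the resulting comparison $p_{\beta-200\sqrt{\eps},\nu}(\val^C_u) \ge p_{\beta,\nu}(\val_u) - 2\nu$, and then averaging, squaring, and summing over $s$ using $\sum_s Z_{u,s}=1$. Your write-up is just more explicit about the in-\sos case analysis and the squaring step, which the paper leaves implicit.
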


\begin{proof}
When $r = \left\lfloor\frac{32\sqrt{\eps}}{\alpha}\right\rfloor$, the expansion of $C$ is at most $1 - (1-4\alpha/3)^r \le 200\sqrt{\eps}$ by Claim~\ref{claim:subcube-expanse}.
Furthermore, from the definition of the Johnson graph this holds vertex-by-vertex; every $v \in C$ has at most a $200\sqrt{\eps}$-fraction of its neighbors outgoing.
Therefore,
\[
\Ind[\val_u^C(X) \ge \beta - 200\sqrt{\eps}] \ge \Ind[\val_u(X) \ge \beta],
\]
and furthermore since $\nu < \eps$,
\[
p_{\beta - 200 \sqrt{\eps},\nu}(\val_u^C(X)) + \nu \ge p_{\beta,\nu}(\val_u(X)) - \nu.
\]
The same holds for $p(\val_u(X'))$. Let $Z_{u,s} = \Ind[X_u - X'_u = s]$. Therefore, by definition,
\begin{align*}
    \Phi_{\beta - 200\sqrt{\eps},\nu}(X|_C,X'|_C)
    &= \sum_{s \in \Sigma} \E_{u \in C}\left[Z_{u,s}\cdot p_{\beta-200\sqrt{\eps},\nu}(\val_u^C(X))\cdot p_{\beta-200\sqrt{\eps},\nu}(\val_u^C(X'))\right]^2\\
    &\ge \sum_{s \in \Sigma} \E_{u \in C}\left[Z_{u,s}\cdot (p_{\beta,\nu}(\val_u(X)) - 2\nu)\cdot (p_{\beta,\nu}(\val_u(X')) - 2\nu)\right]^2\\
    &\ge \Phi_{\beta,\nu}(X,X')|_C - 4\nu,
\end{align*}
where each inequality is a sum-of-squares inequality of degree at most $2\deg(p)$.
\end{proof}

\subsection{Proof of Lemma~\ref{lem:ragh-tan-appln}: Reducing Global Correlation}\label{sec:ragh}
We will use the following lemma from~\cite{RT12}.

\begin{lemma}\label{lem:ragh-tan}
There exists $t \leq r$ such that: 
\[\E_{i_1,\ldots,i_t \sim [M]}\E_{i,j \sim [M]}[I(Y_i;Y_j \mid Y_{i_1}, \ldots,Y_{i_t})] \leq \frac{\log q}{r - 1},\]
where $q$ is the size of the domain of $Y_i$.
\end{lemma}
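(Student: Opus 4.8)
The plan is to run the classical entropy-telescoping argument of Raghavendra--Tan~\cite{RT12}, choosing the conditioning set at random. For an integer $t \ge 0$, let $i_1,\dots,i_t$ be i.i.d.\ uniform in $[M]$ and define the potential
\[
\Phi_t \ := \ \E_{i_1,\dots,i_t \sim [M]}\ \E_{i \sim [M]}\big[\, H(Y_i \mid Y_{i_1},\dots,Y_{i_t}) \,\big],
\]
the average conditional entropy of a random $Y_i$ given a random conditioning set of size $t$. Three elementary observations: $\Phi_0 = \E_i[H(Y_i)] \le \log q$, since each $Y_i$ takes values in a set of size at most $q$; $\Phi_t \ge 0$ for all $t$; and $\Phi_t$ is non-increasing in $t$, since for every fixed $i_1,\dots,i_t,i$ adding one more conditioning variable can only decrease the (expected) entropy.

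Next I would record the identity that one step of the potential equals the average conditional mutual information: writing $W = (Y_{i_1},\dots,Y_{i_t})$,
\[
\Phi_t - \Phi_{t+1} \ = \ \E_{i_1,\dots,i_t}\ \E_{i,j \sim [M]}\big[\, I(Y_i ; Y_j \mid W) \,\big].
\]
This is immediate from $I(Y_i;Y_j\mid W) = H(Y_i\mid W) - H(Y_i\mid W, Y_j)$: taking $\E_{j\sim[M]}$ and then $\E_i$ and $\E_{i_1,\dots,i_t}$, the first term averages to $\Phi_t$, while the second averages, with $j$ in the role of the fresh index $i_{t+1}$, to $\Phi_{t+1}$. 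The identity is exact and needs no genericity assumption on the random indices: whenever indices coincide the corresponding mutual-information term is either $0$ or a redundant entropy contribution that cancels consistently on both sides.

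Finally I would telescope and pigeonhole. Summing the identity over $t = 0,1,\dots,r-1$ gives
\[
\sum_{t=0}^{r-1}\big(\Phi_t - \Phi_{t+1}\big) \ = \ \Phi_0 - \Phi_r \ \le \ \Phi_0 \ \le \ \log q,
\]
using $\Phi_r \ge 0$. Hence at least one of these $r$ nonnegative summands is at most $\frac{\log q}{r} \le \frac{\log q}{r-1}$, so there is $t \in \{0,1,\dots,r-1\}$, in particular $t \le r$, with $\E_{i_1,\dots,i_t}\E_{i,j}[I(Y_i;Y_j\mid Y_{i_1},\dots,Y_{i_t})] \le \frac{\log q}{r-1}$, which is the claim. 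There is no genuinely hard step here; the only point requiring care, when the lemma is invoked (in Lemma~\ref{lem:ragh-tan-appln}) with the $Y_i$ being the relevant local variables drawn from the consistent collection $\cL(\mu, p_{\beta,\nu})$, is that every entropy and mutual information above involves only $O(r)$ of these variables at once, hence is a legitimate quantity of a genuine distribution (so that the chain rule and ``conditioning reduces entropy'' apply), provided $\mu$ has degree $\Omega(r)$ with additional room to absorb the $\tilde O(1/\nu)$-degree polynomials defining the $p$ variables.
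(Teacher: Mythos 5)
Your proof is correct and is exactly the standard entropy-potential telescoping argument from~\cite{RT12}; the paper states Lemma~\ref{lem:ragh-tan} as a black-box citation rather than reproving it, so there is nothing to diverge from. Your closing caveat — that every entropy/mutual-information quantity involves only $O(t)$ of the $Y_i$'s at a time and hence is computed on a genuine (consistent, local) distribution — is precisely the remark the paper makes immediately after the lemma statement, so the argument applies as needed in Lemma~\ref{lem:ragh-tan-appln}.
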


Note that the above lemma holds as long as there is a local collection of distributions over the variables $(Y_1,\ldots,Y_M)$ that are valid probability distributions over all collections of $t+2$ variables and are consistent with each other. We will apply the above lemma to reduce the global correlation between $(Y_{u_1,v_1};Y_{u_2,v_2})$ (similarly for $Y'$), for the local collection of distributions $\cL$ (Definition~\ref{defn:local-dist}).

\begin{lemma}[Lemma~\ref{lem:ragh-tan-appln} restated]
For all $\tau,p,\beta,\nu \in (0,1)$ and $|\Sigma|, D, D' \in \Z$, $D' \geq \max(D,\widetilde{\Omega}(1/\nu))$ the following holds: Suppose there is a degree $D' + \widetilde{\Omega}(\frac{\log |\Sigma|}{p \nu \tau})$ pseudodistribution $\mu$ over UG assignments that satisfies $\cA_I$, and a polynomial $E(X,X')$ satisfying $\pE_{(X,X') \sim \mu \times \mu}[E(X,X')] \geq p$ and $\cA_I \vdash_D E(X,X') \leq 1$. Then for all subsets $S \subseteq [N]$, there exist subsets $A, B \subseteq S$ of size at most $O(\frac{\log|\Sigma|}{p\tau})$ and strings $y_A, y_B$ such that conditioning $\mu$ on the events $Y_A = y_A$ and $Y_B = y_B$ gives pseudodistributions $\mu_1$ and $\mu_2$ of degree at least $D'$ such that: 
\begin{enumerate}
\item $\pE_{X,X'\sim \mu_1\times\mu_2}[E(X,X')] \geq \frac{p}{2}$.
\item $\E_{\substack{u_1,v_1 \sim S: u_1 \neq v_1\\ u_2,v_2 \sim S: u_2 \neq v_2}}[I(Y_{u_1,v_1};Y_{u_2,v_2})] \leq \tau.$
\item $\E_{\substack{u_1,v_1 \sim S: u_1 \neq v_1\\ u_2,v_2 \sim S: u_2 \neq v_2}}[I(Y'_{u_1,v_1};Y'_{u_2,v_2})] \leq \tau$,
\end{enumerate}
where for $A \subseteq [N]$, $Y_{A} = (X_{i_1},\ldots, X_{i_{|A|}},p_{i_1},\ldots, p_{i_{|A|}})$ (same for $Y'$) and the mutual information is with respect to the collection of local distributions $\cL(\mu_1 \times \mu_2,p_{\beta,\nu})$.
\end{lemma}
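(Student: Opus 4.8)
The plan is to run the Raghavendra--Tan global correlation reduction (Lemma~\ref{lem:ragh-tan}) on the collection of \emph{super-variables} $\{Y_{u,v}\}_{u\neq v\in S}$, treating each ordered pair $(u,v)$ as one variable $Y_{u,v}=(X_u,X_v,p_u,p_v)$ with domain of size $q=4|\Sigma|^2$, and separately on the analogous collection $\{Y'_{u,v}\}$ coming from the second copy. Since each super-variable touches only two vertices, the collection of local distributions $\cL(\mu,p_{\beta,\nu})$ of Definition~\ref{defn:local-dist} supplies valid, mutually consistent joint distributions over any $t+2$ of the super-variables provided $\mu$ has degree $\widetilde\Omega(r'/\nu)$, where $r'$ is the Raghavendra--Tan budget we will choose; this is exactly the source of the extra $\widetilde\Omega(\log|\Sigma|/(p\nu\tau))$ degree in the hypothesis. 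Also note $D'\geq\widetilde\Omega(1/\nu)$ guarantees that $\cL(\mu_1\times\mu_2,p_{\beta,\nu})$ is itself valid on the at most $8$ variables touched by a pair of super-variables.

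First I would set $r'=\widetilde\Theta\!\bigl(\tfrac{\log|\Sigma|}{p\tau}\bigr)$ and invoke Lemma~\ref{lem:ragh-tan} on the super-variable collection of the first copy: there is $t\leq r'$ and, by averaging over the random conditioning tuple, a fixed set $P$ of at most $t$ pairs with $\E_{(u_1,v_1),(u_2,v_2)}\bigl[I(Y_{u_1,v_1};Y_{u_2,v_2}\mid Y_P)\bigr]\leq\frac{\log q}{r'-1}$. Conditioning on the pair-values $Y_P$ is the same as conditioning on the vertex-values $Y_A$ for $A=\bigcup_{(u,v)\in P}\{u,v\}$, which has size at most $2t$, and since conditional mutual information is an average over the conditioning strings, Markov's inequality gives $\Pr_{y_A\sim\mu|_{Y_A}}\!\bigl[\E_{(u_1,v_1),(u_2,v_2)}[I(Y_{u_1,v_1};Y_{u_2,v_2})]\geq\tau\bigr]\leq p/8$ once $r'$ is chosen so that $\tfrac{8}{p}\cdot\tfrac{\log q}{r'-1}\leq\tau$. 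Running the identical argument on the second copy yields a set $B$ with $|B|\leq 2t$ and $\Pr_{y_B\sim\mu|_{Y_B}}\!\bigl[\E[I(Y'_{u_1,v_1};Y'_{u_2,v_2})]\geq\tau\bigr]\leq p/8$.

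It remains to select a single pair of strings $(y_A,y_B)$ that also keeps the value of $E$ large. Here I would use that $\mu\times\mu$ is a product, so conditioning the first copy on $Y_A=y_A$ and the second on $Y_B=y_B$ produces exactly $\mu_1\times\mu_2$ with $\mu_1=\mu|_{Y_A=y_A}$ and $\mu_2=\mu|_{Y_B=y_B}$, and the joint pseudomarginal of $(Y_A,Y_B)$ factors; hence $\E_{y_A,y_B}\bigl[\pE_{\mu_1\times\mu_2}[E]\bigr]=\pE_{\mu\times\mu}[E]\geq p$, using that the reweighting polynomials $\{Q_{y_A}\}$ sum to $1$ modulo $\cA_I$. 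Since $\cA_I\vdash_D E\leq 1$ and $\mu_1\times\mu_2$ still has degree $\geq D'\geq D$ and satisfies $\cA_I$, we get $\pE_{\mu_1\times\mu_2}[E]\in[0,1]$, so the reverse Markov inequality gives $\Pr_{y_A,y_B}[\pE_{\mu_1\times\mu_2}[E]\geq p/2]\geq p/2$. A union bound over the three events --- value $\geq p/2$, first-copy correlation $<\tau$, second-copy correlation $<\tau$ --- leaves probability at least $p/2-p/8-p/8=p/4>0$, so a good $(y_A,y_B)$ exists. Finally, the degree bookkeeping: $|A|,|B|=O(r')=\widetilde O(\tfrac{\log|\Sigma|}{p\tau})$, the reweighting polynomials $Q_{y_A},Q_{y_B}$ have degree $\widetilde O(\tfrac{\log|\Sigma|}{p\tau\nu})$, so $\mu_1,\mu_2$ retain degree $\geq D'$, as claimed.

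The main obstacle is the coupling between the value requirement and the correlation requirements: Lemma~\ref{lem:ragh-tan} only delivers small correlation \emph{on average} over the conditioning strings, so one must calibrate the constants (via Markov / reverse Markov applied to a random $(y_A,y_B)$ drawn from the pseudomarginals) so that the union bound over ``value stays $\geq p/2$'' and ``both correlation bounds hold'' is nonvacuous; this is precisely what forces $r'$, and hence $|A|,|B|$, to scale like $\tfrac{\log|\Sigma|}{p\tau}$ rather than $\tfrac{\log|\Sigma|}{\tau}$. A secondary technical point is verifying that $\cL(\mu,p_{\beta,\nu})$ really does furnish valid and consistent local distributions over enough super-variables for Lemma~\ref{lem:ragh-tan} to apply (and that reweighting by $Q_{y_A}Q_{y_B}$ commutes with passing to $\cL$), which follows routinely from the degree hypothesis and the remark after Definition~\ref{defn:local-dist}.
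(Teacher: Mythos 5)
Your proposal is correct and follows essentially the same route as the paper: apply the Raghavendra--Tan lemma to the super-variables $\{Y_{u,v}\}$ (and separately to $\{Y'_{u,v}\}$) via the local distributions $\cL(\mu\times\mu,p_{\beta,\nu})$, collapse pair-conditionings to vertex-conditionings by consistency of $\cL$, and combine Markov's inequality on the two correlation averages (each with budget $p/8$) with a reverse-Markov bound on $\pE[E\mid\cdot]$ (budget $p/2$) in a union bound, with identical calibration of $r'$, $|A|$, $|B|$, and the degree accounting. The only cosmetic difference is that you fix the conditioning index sets by averaging before randomizing over the strings, whereas the paper folds the indices into the union bound; both are valid.
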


\begin{proof}
Without loss of generality suppose that $S \subseteq [N]$ is $[m]$. Let ${[m] \choose r}$ denote the set of $r$-sized subsets of $[m]$ and $m_r$ denote ${m \choose r}$. Let $Y = (Y_{a_1},\ldots,Y_{a_{m_2}})$ for $a_t = (u,v) \in {[m] \choose 2}$. For brevity we will use $[m_2]$ for ${[m] \choose 2}$. Given the pseudodistribution $\mu \times \mu$ over assignments $(X,X')$ we will consider the 
local collection of distributions $\cL(\mu \times \mu,p_{\beta,\nu})$ over the variables of $X,X',p,p'$ and let this also denote the induced collection of distributions over $(Y,Y')$ (Definition~\ref{defn:local-dist}). Recall the notation for random variable $Y_A$, for any $A \subseteq [m]$: $Y_A = (X_{i_1}, \ldots, X_{i_{|A|}}, p_{i_1}, \ldots, p_{i_{|A|}})$. We will use $y_S$ to denote instantiations of $Y_S$.

Let $\cL$ denote $\cL(\mu \times \mu,p_{\beta,\nu})$.
Applying Lemma~\ref{lem:ragh-tan} to the collection of  distributions $\cL$ over $Y$ we get that for all $\tau' > 0$, conditioning on $t = O(\frac{\log |\Sigma|}{\tau'})$ variables gives:
\begin{equation}\label{eq:rt-1}
\E_{i_1,\ldots,i_t \sim [m_2]}[\E_{a,b \sim [m_2]} I(Y_a ; Y_b \mid Y_{i_1}, \ldots,Y_{i_t})] \leq \tau'.
\end{equation}

Let $\cL|_R$ denote the distribution that $\cL$ induces on the random variable $R$. We know that the distribution $\cL|_{(Y_{i_1}, \ldots, Y_{i_t})}$ can be reduced to the distribution $\cL|_{Y_{i_1 \cup \ldots \cup i_t}}$ by discarding repeating indices in $i_1,\ldots,i_t$. This is because $\cL$ is consistent on the value it assigns to a vertex $u$ when it occurs in $Y_{u,v}$ or $Y_{u,v'}$. We will drop the subscript $R$ from $\cL|_R$ when the random variable is clear from context. Expanding the definition of conditional mutual information we therefore get:
\begin{align*}
I(Y_a;Y_b \mid Y_{i_1}, \ldots, Y_{i_t}) &= \E_{(y_{i_1}, \ldots,y_{i_t}) \sim \cL|_{(Y_{i_1}, \ldots, Y_{i_t})}}[I(Y_a ; Y_b \mid Y_{i_1} = y_{i_1}, \ldots,Y_{i_t} = y_{i_t})] \\
&= \E_{y_{i_1 \cup \ldots \cup i_t} \sim \cL|_{Y_{i_1 \cup \ldots \cup i_t}}}[I(Y_a ; Y_b \mid Y_{i_1 \cup \ldots \cup i_t} = y_{i_1 \cup \ldots \cup i_t}].
\end{align*}
Plugging the above into equation~\eqref{eq:rt-1} and applying Markov's inequality we get that for all $\alpha \in (0,1)$:
\begin{equation}\label{eq:avg1}
\Pr_{\substack{i_1,\ldots,i_t \sim [m_2] \\ y_{i_1 \cup \ldots \cup i_t} \sim \cL}}\left[\E_{a,b \sim [m_2]}[I(Y_a ; Y_b \mid Y_{i_1 \cup \ldots \cup i_t} = y_{i_1 \cup \ldots \cup i_t})] \geq \frac{\tau'}{\alpha} \right] \leq \alpha.
\end{equation}
Repeating the above analysis for $Y'$ we get:
\begin{equation}\label{eq:avg2}
\Pr_{\substack{j_1,\ldots,j_t \sim [m_2] \\ y'_{j_1 \cup \ldots \cup j_t} \sim \cL}}\left[\E_{a,b \sim [m_2]}[I(Y_a ; Y_b \mid Y'_{j_1 \cup \ldots \cup j_t} = y_{j_1 \cup \ldots \cup j_t})] \geq \frac{\tau'}{\alpha} \right] \leq \alpha.
\end{equation}

Now recall that (Definition~\ref{defn:local-dist}) the probability of the event $(X_u,p_u) =  (\sigma,b)$ under $\cL$, for some $\sigma \in \Sigma$ and $b \in \{0,1\}$, corresponds to the pseudoexpectation under $\mu$ of an appropriate polynomial we will denote by $Q_{u,(\sigma,b)}$:
\begin{align}\label{eq:poly-cond}
\pPr_{\cL}[(X_u, p_u) = (\sigma, b)] &= \pE[X_{u,\sigma}(1-b + (-1)^{1-b}p(\val_u(X)))] \nonumber \\
&:= \pE_\mu[Q_{u,(\sigma,b)}].
\end{align}
We can extend the above definition by letting $Q_{S, \cup_{u \in S} (\sigma_u,b_u)}$ denote the polynomial $\prod_{u \in S} Q_{u,(\sigma_u,b_u)}$. For $y_S = (\cup_{u \in S} \sigma_u, \cup_{u \in S} b_u)$ we get that:
\begin{align}\label{eq:poly-cond-prod}
\pPr_{\cL}[(\cup_{u \in S} X_u, \cup_{u \in S} p_u) = Y_S = y_S] = \pE_\mu[Q_{S,\cup_{u \in S}(\sigma_u,b_u)}].
\end{align}
One can define analogous notation for $Y'$ (e.g. the polynomials $Q'_{u,(\sigma,b)}$, etc) and derive the statements above. We know that $\cA_I \vdash_{t~\tO(1/\nu)} Q_{S, \cup_{u \in S} (\sigma_u,b_u)} \in [0,1]$. For ease of notation let us define the following expression:
\begin{align*}
\pE_{\mu \times \mu}[E(X,X') \mid Y_S = y_S, Y'_T = y'_T] &:= \pE_{\mu \times \mu}[E(X,X') \mid  Q_{S, y_S} \cdot  Q'_{T, y'_T}] \\
&= \frac{\pE_{\mu \times \mu}[E(X,X')Q_{S, y_S} Q'_{T, y'_T}]}{\pE_{\mu \times \mu}[Q_{S, y_S} Q'_{T, y'_T}]} \\
&= \frac{\pE_{\mu \times \mu}[E(X,X')Q_{S, y_S} Q'_{T, y'_T}]}{\pPr_{\cL}[Y_{S} = y_{S}, Y'_{T} = y'_{T}]},
\end{align*}
where the last equality follows from equation~\eqref{eq:poly-cond-prod}. Analogous to the definition of conditional expectation we can check that:
\[\pE_{\mu \times \mu}[E(X,X')] = \E_{\substack{i_1,\ldots,i_t \sim [m] \\ j_1,\ldots,j_t \sim [m] \\ y_{i_1 \cup \ldots \cup i_t}, y'_{j_1 \cup \ldots \cup j_t} \sim \cL}}[\pE_{X,X'}[E(X,X') \mid Y_{i_1 \cup \ldots \cup i_t} = y_{i_1 \cup \ldots \cup i_t}, Y'_{j_1 \cup \ldots \cup j_t} = y'_{j_1 \cup \ldots \cup j_t}]].\]

Since $\cA_I \vdash_D E(X,X') \leq 1$ and $\deg(\mu) \geq D + \widetilde{\Omega}(t/\nu)$ we get that $E(X,X') \leq 1$ even after conditioning on a non-negative event $Q$ of degree $\widetilde{O}(t/\nu)$: $\pE[E(X,X') \mid Q] \leq 1$. An averaging argument implies that:
\begin{equation}\label{eq:avg3}
\Pr_{\substack{i_1,\ldots, j_1,\ldots \sim [m_2] \\ y_{i_1 \cup \ldots \cup i_t},y'_{j_1 \cup \ldots \cup j_t} \sim \cL}}\left[\pE_{\mu \times \mu}[E(X,X') \mid Y_{i_1 \cup \ldots \cup i_t} = y_{i_1 \cup \ldots \cup i_t}, Y'_{j_1 \cup \ldots \cup j_t} = y'_{j_1 \cup \ldots \cup j_t}] \geq \frac{p}{2}\right] \geq \frac{p}{2}.
\end{equation}

Choosing $\alpha = p/8$ and $\tau' < p\tau/8$ we can take a union bound over the events in equations~\eqref{eq:avg1},~\eqref{eq:avg2},~\eqref{eq:avg3} to get that there exist sets $i_1,\ldots,i_t, j_1,\ldots,j_t \in [m_2]$ and $y_{i_1 \cup \ldots \cup i_t}, y'_{j_1 \cup \ldots \cup j_t} \sim \cL$ such that,   
\[\E_{a,b \sim [m_2]}[I(Y_a ; Y_b \mid Y_{i_1 \cup \ldots \cup i_t} = y_{i_1 \cup \ldots \cup i_t}] \leq \frac{\tau'}{\alpha} < \tau.\]
\[\E_{a,b \sim [m_2]}[I(Y'_a ; Y'_b \mid Y'_{j_1 \cup \ldots \cup j_t} = y'_{j_1 \cup \ldots \cup j_t})] \leq \frac{\tau'}{\alpha} < \tau.
\]
\[\pE_{\mu \times \mu}[E(X,X') \mid Y_{i_1 \cup \ldots \cup i_t} = y_{i_1 \cup \ldots \cup i_t}, Y'_{j_1 \cup \ldots \cup j_t} = y'_{j_1 \cup \ldots \cup j_t}] \geq \frac{p}{2}.\]

Let $\mu_1$ be the pseudodistribution on $X$ that we get by conditioning $\mu$ on $Q_{i_1 \cup \ldots \cup i_t, y_{i_1 \cup \ldots \cup i_t}}$ and let $\mu_2$ be the pseudodistribution on $X'$ that we get by conditioning $\mu$ on $Q_{j_1 \cup \ldots \cup j_t,y'_{j_1 \cup \ldots \cup j_t}}$. It is easy to check that:
\[\cL(\mu \times \mu, p_{\beta,\nu})| (Y_{i_1 \cup \ldots \cup i_t} = y_{i_1 \cup \ldots \cup i_t}, Y'_{j_1 \cup \ldots \cup j_t} = y'_{j_1 \cup \ldots \cup j_t}) = \cL(\mu_1 \times \mu_2, p_{\beta,\nu}),\]
thus giving us the three properties we need in the lemma. It remains to check the degree bounds on $\mu$ that we require. We have conditioned on $t = O(\log |\Sigma|/p\tau)$ variables from $(Y,Y')$, each having degree $\tO(1/\nu)$ in $X,X'$, therefore it suffices to have degree of $\mu$ to be $\widetilde{\Omega}\left(\frac{\log |\Sigma|}{p\tau\nu}\right)$. After conditioning we get that $\deg(\mu_1 \times \mu_2) \geq D'$ if $\deg(\mu) \geq D'+\widetilde{\Omega}\left(\frac{\log |\Sigma|}{p\tau\nu}\right)$ thus completing the proof.
\end{proof}

\subsection{Proof of Lemma~\ref{lem:correlation}: 
Conditioning Does Not Introduce Correlations}

In this section we prove that given a product distribution over $(X,X')$ that satisfies low global correlation with respect to the variables $(Y,Y')$, where $Y_{u,v} = (X_u,X_v,p_u,p_v)$, and an event $E(X,X')$ that holds with large probability, conditioning on $E$ cannot correlate too many pairs $(Y_i,Y'_i)$. This is a general lemma that holds if the variables $(Y,Y')$ are local functions of the underlying variables $(X,X')$ and satisfy low global correlation, but here we state it for our specific application only.

\begin{lemma}[Restatement of Lemma~\ref{lem:correlation}]
For all $\tau,p,\beta,\nu,\delta \in (0,1)$, integers $D,N$, $m \leq N$ the following holds: Let $\mu_1 \times \mu_2$ be a degree $D+\widetilde{\Omega}(1/\nu)$ pseudodistribution over $(X,X')$ satisfying $\cA_I(X,X')$ and $E(X,X')$ be a polynomial such that, $\cA_I \vdash_D E(X,X') \in [0,1]$ and $\pE_{\mu_1 \times \mu_2}[E(X,X')] \geq p$. Suppose for $S \subseteq [N]$ we have that, 
\begin{enumerate}
\item $\E_{\substack{u_1,v_1 \sim S: u_1 \neq v_1\\ u_2,v_2 \sim S: u_2 \neq v_2}}[I(Y_{u_1,v_1};Y_{u_2,v_2})] \leq \tau.$
\item $\E_{\substack{u_1,v_1 \sim S: u_1 \neq v_1\\ u_2,v_2 \sim S: u_2 \neq v_2}}[I(Y'_{u_1,v_1};Y'_{u_2,v_2})] \leq \tau$,
\end{enumerate}
where $Y_{u,v} = (X_u,X_v,p_u,p_v), Y'_{u,v} = (X'_u,X'_v,p'_u,p'_v)$ and the mutual information is with respect to the collection of local distributions $\cL(\mu_1 \times \mu_2,p_{\beta,\nu})$ (Definition~\ref{defn:local-dist}). Then we have that, 
\[\Pr_{u,v \sim S: u \neq v}[ TV((Y_{u,v},Y'_{u,v})|E , (Y_{u,v},Y'_{u,v})) \geq \delta] \leq O\left(\frac{\sqrt{\tau}+ 1/|S|}{p\delta^2}\right),\]
where the distribution $(Y_{u,v},Y'_{u,v})|E$ refers to the joint distribution on these variables defined by the collection of local distributions $\cL(\mu_1 \times \mu_2|E, p_{\beta,\nu})$ and similarly $(Y_{u,v},Y'_{u,v})$ refers to the distribution defined by $\cL(\mu_1 \times \mu_2,p_{\beta,\nu})$.
\end{lemma}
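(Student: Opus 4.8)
The plan is to reduce everything to a statement about the single pair of variables $(Y_{u,v}, Y'_{u,v})$, and to bound how much conditioning on $E$ can change its joint distribution in terms of (a) how correlated $Y_{u,v}$ is with the rest of the coordinates, and (b) the probability $p$ of $E$. The key intuition is that an event $E(X,X')$ of pseudoprobability $\ge p$ can only substantially perturb the local distribution of $Y_{u,v}$ if $E$ "depends heavily" on $Y_{u,v}$, and this dependence is bounded by $I(Y_{u,v}; \text{rest})$, which in turn is controlled by the low-global-correlation hypothesis. Since $E$ may not be an indicator, I will first use the step-function approximation machinery (Theorem~\ref{thm:step-approx}) only if needed; but actually since $\cA_I \vdash_D E(X,X') \in [0,1]$ we can treat $E$ directly as a bounded reweighting, so the pseudodistribution $\mu_1 \times \mu_2 \mid E$ is well-defined in degree $D + \tilde\Omega(1/\nu)$ and the collection of local distributions $\cL(\mu_1 \times \mu_2 \mid E, p_{\beta,\nu})$ makes sense.

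First I would set up the information-theoretic quantity $\Delta_{u,v} := I_{\cL}(Y_{u,v}, Y'_{u,v} \,;\, W)$ where $W$ is (a suitable collection of) the remaining $(Y,Y')$-coordinates — concretely, it suffices to take $W = (Y_{u',v'}, Y'_{u',v'})$ for an independently random second pair $u',v' \sim S$, so that $\E_{u,v}[\Delta_{u,v}]$ is controlled by hypotheses (1) and (2) of the lemma (up to the $1/|S|$ additive slack coming from the chance that the two sampled pairs overlap, at which point the mutual information is not small but is at most the trivial bound; this is where the $1/|S|$ term enters). By Pinsker (the second displayed inequality in the statement of Pinsker's lemma in the preliminaries) and convexity, $\E_{u,v}\big[\mathrm{TV}\big((Y_{u,v},Y'_{u,v}) , (Y_{u,v},Y'_{u,v}) \mid W\big)\big] \le O(\sqrt{\E_{u,v}[\Delta_{u,v}]}) \le O(\sqrt{\tau} + 1/|S|)$ after accounting for the overlap term. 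The second step is the core one: I would show that conditioning on $E$ is "dominated" by conditioning on $W$ in the following averaged sense. Write $E$'s reweighting as a mixture, over draws of $W = w$ from $\cL$, of the conditional reweightings; since $\pE[E] \ge p$, a Markov/averaging argument shows that with probability $\ge p/2$ over $w \sim \cL$ we have $\pE[E \mid W = w] \ge p/2$. For such "good" $w$, conditioning further on $E$ changes the $(Y_{u,v},Y'_{u,v})$-marginal by a factor bounded by $2/p$ pointwise, hence the TV distance between $(Y_{u,v},Y'_{u,v}) \mid (E, W=w)$ and $(Y_{u,v},Y'_{u,v}) \mid W = w$ is at most... actually more carefully: $\mathrm{TV}\big((Y_{u,v},Y'_{u,v}) \mid E,\, (Y_{u,v},Y'_{u,v})\big) \le \frac{2}{p}\cdot \E_{w \sim \cL \mid E}\big[\mathrm{TV}\big((Y_{u,v},Y'_{u,v}) \mid W=w,\ (Y_{u,v},Y'_{u,v})\big)\big]$, because conditioning on $E$ only re-weights the distribution over $w$ by at most $1/p$ and the residual dependence of $Y_{u,v}$ on $E$ given $w$ is negligible when $w$ "screens off" $Y_{u,v}$ — this last point needs $E$ to be (approximately) a function of $W$ together with $Y_{u,v}$, which is NOT literally true, so I expect to instead argue via the chain rule / data-processing: $I(Y_{u,v},Y'_{u,v}\,;\,\Ind[E]) \le I(Y_{u,v},Y'_{u,v}\,;\,\text{rest of assignment}) $, and conditioning on a single bit $\Ind[E]$ of pseudoprobability $p$ changes a marginal in TV by at most $O(\sqrt{I/p})$ (this is the standard "conditioning on a rare-ish event via mutual information" estimate). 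Then average over $u,v$, apply Markov's inequality to pass from the expectation bound $\E_{u,v}[\mathrm{TV}] \le O((\sqrt\tau + 1/|S|)/p)$ to the probability bound $\Pr_{u,v}[\mathrm{TV} \ge \delta] \le O((\sqrt\tau + 1/|S|)/(p\delta^2))$ — note the $\delta^{-2}$ rather than $\delta^{-1}$ arises because the natural quantity controlled is $\E[\mathrm{TV}^2]$ (from Pinsker, $\mathrm{TV}^2 \le \frac12 D_{KL}$, and $D_{KL}$ is what mutual information bounds on average).

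The main obstacle, and the step deserving the most care, is making the second step rigorous \emph{for pseudodistributions rather than genuine distributions}: the quantities $I(Y_{u,v},Y'_{u,v};\cdot)$ and the conditional distributions are all defined through the \emph{local} collection $\cL(\mu_1\times\mu_2, p_{\beta,\nu})$ of Definition~\ref{defn:local-dist}, which are genuine (consistent) probability distributions on any $d$ coordinates but do \emph{not} come from a global distribution, so one cannot naively invoke the chain rule or data-processing across all of $S$ at once. The resolution is to only ever work with $O(1)$ (in fact, two or three) pairs at a time — which is exactly what hypotheses (1)–(2) are phrased for — so all information-theoretic manipulations happen inside a genuine distribution on a constant number of coordinates, for which the standard inequalities (Pinsker, data processing, chain rule) are applied as black boxes from the preliminaries. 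A secondary subtlety is that $Y_{u,v}$ involves the $p_u$ variables, which are not $\{0,1\}$-valued under $\mu$ but only morally so (they are defined through $p_{\beta,\nu}(\val_u(X)) \in [0,1]$); this is already absorbed into the definition of $\cL$, so as long as the degree of $\mu_1\times\mu_2$ exceeds $D + \tilde\Omega(1/\nu)$ (accounting for the degree of these approximating polynomials), every conditioning we perform stays within a valid pseudodistribution and the local distributions remain consistent — I would state this degree bookkeeping explicitly up front and then suppress it.
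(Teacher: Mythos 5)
Your proposal takes a genuinely different route from the paper, and as written it has a gap at its central step. You propose to bound, \emph{for each individual pair} $(u,v)$, the quantity $TV((Y_{u,v},Y'_{u,v})|E,\,(Y_{u,v},Y'_{u,v}))$ via $I(Y_{u,v},Y'_{u,v};\Ind[E])$, which you in turn want to control by data-processing through ``the rest of the assignment.'' This fails for two reasons. First, the hypotheses only bound \emph{pairwise} mutual informations $I(Y_a;Y_b)$ averaged over pairs; they say nothing about $I(Y_{u,v},Y'_{u,v};\text{rest})$, which can be large even when every pairwise mutual information is tiny (and $E$, e.g.\ the polynomial $P_a$ depending on $\delta(F_s|_a)$, genuinely depends on the whole assignment, not on any $O(1)$ window). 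Second, the data-processing step requires a global joint distribution over all the $Y$-coordinates together with $\Ind[E]$, which a pseudodistribution does not supply; your stated resolution (``only work with $O(1)$ pairs at a time'') covers the Pinsker-type manipulations but precisely \emph{not} this step, since $\Ind[E]$ is not determined by any constant number of coordinates. You flag the obstacle yourself but the pivot to data-processing does not escape it. (Separately, your accounting of where the $\delta^{-2}$ comes from is inconsistent: $\E_{u,v}[TV]\le C/p$ plus Markov gives $C/(p\delta)$, and getting $\delta^{-2}$ requires controlling a second moment you never actually bound.)

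The paper's proof sidesteps both problems by aggregating rather than arguing per pair. It lets $U$ be the set of bad pairs (of fractional size $\gamma$), picks for each $a\in U$ a distinguishing event $S_a$ witnessing the TV gap, and forms the single averaged statistic $Z=\E_{a\sim U}[\one((Y_a,Y'_a)\in S_a)-e_a]$, realized as a polynomial $P(X,X')$. Its second moment under $\mu_1\times\mu_2$ is bounded by $O(\sqrt{\tau}/\gamma+1/|U|)$ using only \emph{pairwise} covariances, each controlled by Pinsker and data-processing inside a genuine local distribution on two pairs at a time. On the other hand, the purely algebraic Cauchy--Schwarz chain $\pE[P^2]\ge\pE[E]\,\pE[P^2\mid E]\ge p\,\pE[P\mid E]^2\ge p\delta^2$ (valid for pseudoexpectations, no global distribution needed) gives a lower bound. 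Comparing the two yields $\gamma\le O((\sqrt{\tau}+1/|S|)/(p\delta^2))$. If you want to salvage your approach, the lesson is that the only way to ``see'' the global event $E$ with purely local/pairwise information is through a second-moment (variance) argument on an aggregate test function, not through a per-pair information inequality.
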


\begin{proof}
As in the proof of Lemma~\ref{lem:ragh-tan-appln}, without loss of generality suppose $S = [m]$, let $[m_2]$ denote $2$-sized subsets of $m$ and $Y = (Y_{a_1},\ldots,Y_{a_M})$ for $a_t = (u,v), (u,v) \in [m_2]$. Given the pseudodistribution $\cD$ over assignments $(X,X')$ we will consider the local collection of distributions $\cL(\cD, p_{\beta,\nu})$ over the variables $X,X',p,p'$ and therefore the induced collection of distributions over $(Y,Y') = (Y_{a_1},\ldots,Y_{a_M},Y'_{a_1},\ldots,Y'_{a_M})$ (Definition~\ref{defn:local-dist}).

Throughout the proof we will consider random variables $(Y,Y')$ drawn from the collection of local distributions $\cL(\mu_1 \times \mu_2,p_{\beta,\nu})$ and from the conditioned collection of distributions $\cL(\mu_1 \times \mu_2 | E,p_{\beta,\nu})$. For brevity of notation we will use $(Y_a,Y'_a)|E$ to be the joint distribution induced by $\cL(\mu_1 \times \mu_2 | E, p_{\beta,\nu})$ on $(Y_a,Y'_a)$ and use $(Y_a,Y'_a)$ to be the distribution induced by $\cL(\mu_1 \times \mu_2,p_{\beta,\nu})$. We will abbreviate $\cL(\mu_1 \times \mu_2,p_{\beta,\nu})$ to $\cL$. Let $U$ be the set of variables $a \in [m_2]$ for which $TV((Y_a,Y'_a)|E , (Y_a,Y'_a)) \geq \delta$ and let the fractional-size of $U$ be $\gamma$. If $\Omega$ denotes the domain of $Y_u, Y'_u$ then for every $a \in U$ there exists a set $S_a \subseteq \Omega \times \Omega$ such that: 
\begin{equation}\label{eq:e-tv}
\pPr_{\cL}[(Y_a,Y'_a) \in S_a \mid E] - \pPr_{\cL}[(Y_a,Y'_a) \in S_a] \geq \delta.
\end{equation}
Let $e_a$ denote  $\pPr_{\cL}[(Y_a,Y'_a) \in S_a]$. Define the random variables $Z_a = \one((Y_a,Y'_a) \in S_a) - e_a$. Define:
\[Z = \E_{a \sim U}[Z_a] = \E_{a \in U}[\one((Y_a,Y'_a) \in S_a) - e_a].\]

Let $\pE_{\cL}$ denote the natural expectation operator corresponding to the local distributions $\cL$. 
One can check that $\pE_{\cL}[Z] = \pE_{\cL}[Z_a] = 0$, and we now calculate its variance. For two events $A,B$ on the variables $Y,Y'$ let $\pCov(A,B)$ denote $\pE_{\cL}[AB] - \pE_{\cL}[A]\pE_{\cL}[B]$. Firstly for all $a,b \in [m_2]$  using Pinsker's inequality and data processing inequality we have that,
\begin{align*}
\pCov(\one((Y_a,Y'_a) \in S_a),\one((Y_b,Y'_b) \in S_b)) 
&\leq TV((Y_a,Y'_a,Y_b,Y'_b), (Y_a,Y'_a)\times (Y_b,Y'_b)) \\ &\leq O(\sqrt{I(Y_a,Y'_a;Y_b,Y'_b)}) \\
&\leq O(\sqrt{I(Y_a;Y_b) + I(Y'_a;Y'_b)})
\end{align*}

The proof will proceed by proving upper and lower bounds on $\pE_{\cL}[Z^2]$, where the upper bound uses low global correlation properties of $Y$ and $Y'$ and the lower bound uses the large deviation we have by equation~\ref{eq:e-tv}. 
\paragraph{Upper bound for $\pE_{\cL}[Z^2]$:}
We have the following upper bound:
\begin{align*}
\pE_{\cL}[Z^2] &= \E_{a,b \sim U}[\pE_{\cL}[(\one((Y_a,Y'_a) \in S_a) - e_a)(\one((Y_b,Y'_b) \in S_b) - e_b)]] \\
&= \E_{a,b \sim U}[\pE_{\cL}[\one((Y_a,Y'_a) \in S_a) - e_a]\pE_{\cL}[\one((Y_b,Y'_b) \in S_b) - e_b] + \pCov(\one((Y_a,Y'_a) \in S_a),\one((Y_b,Y'_b) \in S_b))] \\
&\leq \E_{a,b \sim U}[O(\sqrt{I(Y_a;Y_b) + I(Y'_a;Y'_b)})] \\
&\leq O(\sqrt{\E_{a,b \sim U}[I(Y_a;Y_b) + I(Y'_a;Y'_b)}])\\
&\leq O\left(\frac{\sqrt{\tau}}{\gamma}\right),
\end{align*}
where the last inequality follows because $\E_{a,b \sim [m_2]}[I(Y_a;Y_b)] \geq \gamma^2 \E_{a,b \sim U}[I(Y_a;Y_b)]$, and by assumption $\E_{a,b \sim [M]}[I(Y_a;Y_b)] \leq \tau$ (similarly for $Y'$).

\paragraph{Lower bound for $\pE_{\cL}[Z^2]$:}
Let $Q_a$ be the polynomial corresponding to the event $\Ind[(Y_a,Y'_a) \in S_a]$, i.e. $\pPr_{\cL}[(Y_a,Y'_a) \in S_a] = \pE_{\mu_1 \times \mu_2}[Q_a]$. Let $P(X,X')$ abbreviated as $P$ denote the polynomial $\E_a[Q_a - e_a]$. Let $Q_{a \cup b}$ be the polynomial such that $\pPr_{\cL}[(Y_a,Y'_a) \in S_a \wedge (Y_b,Y'_b) \in S_b] = \pE_{\mu_1 \times \mu_2}[Q_{a \cup b}]$. If $a \cap b = \phi$ we have that $Q_{a \cup b} = Q_a \cdot Q_b$, whereas this may not hold if they intersect in one or two variables. But we have that $\Pr_{a, b \sim U}[a \cap b \neq \phi] \leq O(1/|U|)$.
Using these facts we first show that, $\pE_{\cL}[Z^2]+O(1/|U|) \geq \pE_{\mu_1 \times \mu_2}[P^2]$ via the following two equations:
\begin{align}
\pE_{\cL}[Z^2] &= \E_{a,b \sim U}[\pE_{\cL}[(\one((Y_a,Y'_a) \in S_a) - e_a)(\one((Y_b,Y'_b) \in S_b) - e_b)]] \notag \\
&=\Pr_{a,b \sim U}[a \cap b = \phi]\E_{a,b \sim U: a\cap b=\phi}[\pE_{\mu_1 \times \mu_2}[(Q_{a} - e_a)(Q_b - e_b)]] \notag\\ &+ \Pr_{a,b \sim U}[a \cap b \neq \phi]\E_{a,b \sim U: a\cap b\neq \phi}[\pE_{\cL}[(\one((Y_a,Y'_a) \in S_a) - e_a)(\one((Y_b,Y'_b) \in S_b) - e_b)]] \notag \\
&\geq \Pr_{a,b \sim U}[a \cap b = \phi]\E_{a,b \sim U: a\cap b=\phi}[\pE_{\mu_1 \times \mu_2}[(Q_{a} - e_a)(Q_b - e_b)]] - O\left(\frac{1}{|U|}\right).\label{eq:z2-1}
\end{align}

On the other hand we have that:
\begin{align}
\pE_{\mu_1 \times \mu_2}[P^2] &= \E_{a,b \sim U}[\pE_{\mu_1 \times \mu_2}[(Q_a - e_a)(Q_b - e_b)]] \notag \\
&= \Pr_{a,b \sim U}[a \cap b = \phi]\E_{a,b \sim U: a\cap b=\phi}[\pE_{\mu_1 \times \mu_2}[(Q_{a} - e_a)(Q_b - e_b)]] \notag \\
&+\Pr_{a,b \sim U}[a \cap b \neq \phi]\E_{a,b \sim U: a\cap b=\phi}[\pE_{\mu_1 \times \mu_2}[(Q_{a} - e_a)(Q_b - e_b)]] \notag \\
&\leq \Pr_{a,b \sim U}[a \cap b = \phi]\E_{a,b \sim U: a\cap b\neq\phi}[\pE_{\mu_1 \times \mu_2}[(Q_{a} - e_a)(Q_b - e_b)]] + O\left(\frac{1}{|U|}\right)\label{eq:z2-2}
\end{align}
Combining \eqref{eq:z2-1} and \eqref{eq:z2-2} we get that:
\[\pE_{\cL}[Z^2]+O\left(\frac{1}{|U|}\right) \geq \pE_{\mu_1 \times \mu_2}[P^2].\]

Recall that $|U| = \gamma |S|$. Since $\cA_I \vdash_{D} E(X,X') \in [0,1]$ we get that:
\[\pE_{\cL}[Z^2]+O\left(\frac{1}{\gamma|S|}\right)  \geq \pE_{\mu_1 \times \mu_2}[P^2] \geq \pE_{\mu_1 \times \mu_2}[E]\pE_{\mu_1 \times \mu_2}[P^2 | E] \geq p \pE_{\mu_1 \times \mu_2}[P | E]^2 = p\pE_{\cL}[Z | E]^2,\]
where the last inequality is by Cauchy-Schwarz on the pseudodistribution $\mu_1 \times \mu_2 | E$ and the last equality follows by the definition of $P$ and $Z$. Using equation~\eqref{eq:e-tv} we know that for all $a \in U$:
\[\pE_{\cL}[Z_a \mid E] = \pPr_{\cL}[(Y_a,Y'_a) \in S_a \mid E] - e_a  \geq \delta,\]
which implies that $\pE_{\cL}[Z \mid E] \geq \delta$.

Combining the upper and lower bounds on $\pE_{\cL}[Z^2]$ we get that $\gamma \leq O\left(\frac{\sqrt{\tau}+ 1/|S|}{p\delta^2}\right)$,
completing the proof of the lemma.





\end{proof}

\subsection{Proof of Lemma~\ref{lem:round-j}: Rounding Subgraphs with Large Shift Potential}
\label{sec:low-ent-j}
In this section, we will show that when the shift-partition potential $\Phi^G_{\beta,\nu}$ (Definition~\ref{def:shift-partition}) is large with respect to a pseudodistribution $\cD$ (with certain nice properties), then the Condition\&Round Algorithm  (Algorithm~\ref{alg:low-ent-restated}) succeeds in returning a good assignment for the unique games instance. The proofs in this section follow along the lines of the analysis of Condition\&Round given in~\cite{BBKSS}, albeit instead of independence between $X,X'$ we only have approximate local independence. Since we will always be working with the graph $G$ we will henceforth drop the superscript $G$ from the shift-partition potential $\Phi$.
We will prove the following theorem in this section:

\begin{lemma}[Restatement of Lemma~\ref{lem:round-j}]\label{thm:round-j}
Let $I = (G, \Pi)$ be an affine instance of Unique Games over the alphabet $\Sigma$ and $\mu_1 \times \mu_2$ be a degree $D+\widetilde{O}(1/\nu)$ pseudodistribution over assignments $(X,X')$ to $I$. Let $E(X,X')$ be a polynomial such that $\cA_I(X,X') \vdash_D E(X,X') \in [0,1]$. Suppose we have that: 
\[\Pr_{u,v \sim V(G): u \neq v}[TV((Y_{u,v},Y'_{u,v})|E , (Y_{u,v},Y'_{u,v})) > \delta] \leq \zeta,\]
where the distribution $(Y_{u,v},Y'_{u,v})|E$ refers to the joint distribution on these variables defined by the collection of local distributions $\cL(\mu_1 \times \mu_2|E, p_{\beta,\nu})$ and similarly $(Y_{u,v},Y'_{u,v})$ refers to the distribution defined by $\cL(\mu_1 \times \mu_2,p_{\beta,\nu})$ (Definition~\ref{defn:local-dist}). 

If $\Phi^G_{\beta,\nu}(\mu_1 \times \mu_2 | E(X,X')) \geq \gamma$, then on at least one of the pseudodistributions $\mu^\sym_1$ or $\mu^\sym_2$ Algorithm~\ref{alg:low-ent-restated} runs in time $\poly(|V(G)|)$ and returns an assignment of expected value at least 
\[
(\beta - \nu)^2\left(\gamma - O(\delta+\zeta) - \frac{1}{|V(G)|}\right) - 3\nu(\beta - \nu)
\]
for $I$.
\end{lemma}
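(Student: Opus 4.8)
The plan is to mirror the rounding analysis of~\cite{BBKSS}, but carefully tracking the error introduced by the fact that $(Y_{u,v},Y'_{u,v})$ is only approximately independent (in the sense that $(Y_{u,v},Y'_{u,v})\mid E$ is close in total variation to $(Y_{u,v},Y'_{u,v})$) rather than genuinely product across $X$ and $X'$. First, I would recall the meaning of the hypothesis $\Phi^G_{\beta,\nu}(\mu_1\times\mu_2\mid E)\geq\gamma$: expanding the definition, $\sum_{s\in\Sigma}\pE_{\mu_1\times\mu_2}\!\left[\E_{u\in G}[F_s(u)]^2 \,\middle|\, E\right]\geq\gamma$, and by convexity (Cauchy–Schwarz on the pseudodistribution) this means there is an $s^\star$ — which by shift-symmetry of $\mu^{\sym}_1,\mu^{\sym}_2$ we may as well take to be $s^\star=0$ after re-centering — for which $\E_{u\in G}[F_{s^\star}(u)]$ is ``noticeably large'' on a noticeable fraction of the conditioned pseudodistribution. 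The key structural consequence is that for a constant fraction of pairs $(u,v)$ of vertices, the event ``$X_u-X'_u=s^\star$ and $X_v-X'_v=s^\star$, and both $u,v$ have large value under $X$ and under $X'$'' has probability bounded below in the conditioned distribution $(Y_{u,v},Y'_{u,v})\mid E$. This is where one uses that $F_{s^\star}(u)$ already incorporates the $p_{\beta,\nu}(\val_u(X))p_{\beta,\nu}(\val_u(X'))$ weights.

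Next, I would transfer this from the conditioned distribution back to the unconditioned one using the hypothesis $\Pr_{u,v}[TV((Y_{u,v},Y'_{u,v})\mid E,(Y_{u,v},Y'_{u,v}))>\delta]\leq\zeta$. For the $(1-\zeta)$-fraction of pairs $(u,v)$ on which the TV distance is at most $\delta$, any event measurable w.r.t.\ $(Y_{u,v},Y'_{u,v})$ has essentially the same probability under both distributions up to an additive $\delta$. Then I would analyze Condition\&Round directly on the unconditioned $\mu^{\sym}_1$ (and symmetrically $\mu^{\sym}_2$): the algorithm samples $u\sim\pi$, conditions on $X_u=0$, and sets $Z_v\sim\pE[X_v\mid X_u=0]$ independently. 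The point is the standard one: since $(Y_{u,v},Y'_{u,v})$ behaves approximately like a product of two copies, conditioning $X$ on $X_u=0$ is ``the same'' as fixing the value $s$ such that $X'=X+s$ on a large set; on a pair $(v,w)$ that is an edge of $G$ both satisfied by $X$ and lying in $F_0$, the rounded values $Z_v,Z_w$ will equal $X(v),X(w)$ — and hence satisfy the edge — with probability governed by the product structure and the weights $p_{\beta,\nu}(\val)\geq\beta-\nu$. Summing the contribution over edges and using that the expected number of such good edges is $\gtrsim(\beta-\nu)^2\gamma$ minus the losses, one gets that at least one of $\mu^{\sym}_1,\mu^{\sym}_2$ yields expected value at least $(\beta-\nu)^2(\gamma-O(\delta+\zeta)-1/|V(G)|)-3\nu(\beta-\nu)$; the $1/|V(G)|$ comes from the diagonal pairs $u=v$, and the $3\nu(\beta-\nu)$ from the slack in the approximating polynomials $p_{\beta,\nu}$ (items 1–3 of Theorem~\ref{thm:step-approx}). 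Derandomization via conditional expectations, losing nothing, gives the deterministic guarantee.

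The main obstacle I anticipate is the bookkeeping in the second step: making precise the claim that ``approximate local independence of $(Y_{u,v},Y'_{u,v})$'' lets Condition\&Round behave as in the genuinely independent case of~\cite{BBKSS}. Concretely, the analysis of~\cite{BBKSS} uses true independence of $X$ and $X'$ to argue that once we fix the assignment to $u$ (which determines the shift $s$), the probability that $Z_v=X(v)+s$ — equivalently $Z_v=X'(v)$ under the coupling implicit in $F_s$ — is exactly the marginal probability that $X'(v)$ takes that value, and these events are independent across $v$. Here we only have that $(Y_{u,v},Y'_{u,v})\mid E$ is $\delta$-close to a distribution we understand, and the variables $Y_{u,v}$ are not literal indicators but polynomial approximations thereof, so every probability estimate carries an additive $O(\delta)+O(\nu)$ error, and these must be shown not to blow up when summed over the $|E|$ edges (they don't, because each edge contributes an $O(1/|E|)$ weight, so the total error is $O(\delta)+O(\zeta)+O(\nu)$, not $O(|E|\delta)$). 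I would isolate this as a clean ``rounding with approximate independence'' claim, prove it by the covariance/second-moment style argument already used in the proof of Lemma~\ref{lem:correlation}, and then plug in the parameters. The remaining pieces — shift-symmetrization (Section~\ref{sec:symm-prelims}), the $p_{\beta,\nu}\in[\,0,1\,]$ and lower-bound facts (Fact~\ref{fact:bdd-markov}, Theorem~\ref{thm:step-approx}), and derandomization — are routine.
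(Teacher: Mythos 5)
Your first step — transferring the shift-partition potential from the conditioned pseudodistribution $\mu_1\times\mu_2\mid E$ back to $\mu_1\times\mu_2$ by paying $2\,TV((Y_{u,v},Y'_{u,v})\mid E,(Y_{u,v},Y'_{u,v}))$ per pair, giving the $O(\delta+\zeta)$ loss, with $1/|V(G)|$ from the diagonal pairs — is exactly what the paper does, and your accounting of where the $3\nu(\beta-\nu)$ and $(\beta-\nu)^2$ factors come from is also correct. The gap is in the second half. The paper does \emph{not} analyze Condition\&Round edge-by-edge on the product distribution; instead it introduces the \emph{alternate shift potential} $\Psi(\mu) = \E_{u,v}\bigl[\sum_s \pPr_\mu[X_v-X_u=s]^2\,\pE[\val_v(X)\mid X_v-X_u=s]\bigr]$, a quantity depending on a \emph{single} copy $\mu_i$. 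Using the product structure of $\mu_1\times\mu_2$ the unconditioned potential factors as $\sum_s\pE_{\mu_1}[\cdot]\pE_{\mu_2}[\cdot]$; then Fact~\ref{fact:bdd-markov} ($p(x)\le x/(\beta-\nu)+\nu$), AM--GM, and Cauchy--Schwarz bound it by $\frac{\Psi(\mu_1)+\Psi(\mu_2)}{2(\beta-\nu)^2}+\frac{3\nu}{\beta-\nu}$, after which the rounding guarantee is a black-box invocation of Lemma~3.6 of~\cite{BBKSS} (large $\Psi$ for a shift-symmetric pseudodistribution implies Condition\&Round returns value $\ge\Psi$). Your proposal never constructs this single-copy intermediate object, and that is precisely the step that makes the argument go through for pseudodistributions.

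Two concrete problems with your proposed substitute. First, extracting a single $s^\star$ ``by convexity'' and re-centering it to $0$ is not a legitimate operation on a pseudodistribution: the maximizing shift depends on the (pseudo-)sample $(X,X')$, and ``$\max_s$'' is not expressible inside a pseudoexpectation; the paper deliberately keeps the full sum over $s$ for exactly this reason (and doing so costs nothing, since Cauchy--Schwarz handles the sum). Second, the ``coupling implicit in $F_s$'' argument — that fixing $X_u$ determines the shift and then $Z_v=X(v)+s$ with the right probability independently across $v$ — is the informal intuition from the introduction, and formalizing it for pseudodistributions is the content of the $\Psi$-based reduction, not something the covariance/second-moment argument of Lemma~\ref{lem:correlation} provides (that lemma is used to establish the TV \emph{hypothesis} of the present statement, not in its proof). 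So while your error bookkeeping and the first reduction are right, the core of the proof as you have sketched it would not compile into a valid argument without essentially rediscovering the alternate shift potential.
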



\begin{proof}[Proof of Lemma~\ref{thm:round-j}]
Following the proof strategy of~\cite{BBKSS} we define the alternate shift potential $\Psi(\mu)$:
\begin{definition}[Alternate shift potential~\cite{BBKSS}]
The {\em alternate shift potential} of a degree-$D \ge 4$ pseudodistribution $\mu$ is given by
\[
\Psi(\mu) := \E_{u,v \sim \pi(G)} \left[\sum_{s \in \Sigma} \pPr_\mu[X_v - X_u = s]^2 \cdot \pE[\val_v(X) \mid X_v - X_u = s] \right],
\]
where $\pi(G)$ is the stationary measure on $G$ and $\val_v(X)$ denotes the value of vertex $v$.
\end{definition}

We will show that if $\Phi(\mu_1 \times \mu_2 | E)$ is large and $E$ does not introduce too many correlations, then either $\Psi(\mu_1)$ or $\Psi(\mu_2)$ must be large:
\begin{lemma}\label{lem:relating-ent-j}
Let $\mu_1 \times \mu_2$ be a degree $\widetilde{O}(D+1/\nu)$ pseudodistribution over $(X,X')$ and let $E(X,X')$ be a polynomial such that $\cA_I \vdash_D E(X,X') \in [0,1]$. Suppose we have that: 
\[\Pr_{u,v \sim V(G): u \neq v}[TV((Y_{u,v},Y'_{u,v})|E , (Y_{u,v},Y'_{u,v})) > \delta] \leq \zeta.\]

If the shift-partition potential of $\mu_1 \times \mu_2 |E(X,X')$ is large, then the alternate shift-potential of $\mu_1$ or $\mu_2$ must be large as well:
\[\Phi_{\beta,\nu}(\mu_1 \times \mu_2 | E(X,X')) \leq \frac{\Psi(\mu_1)}{2(\beta-\nu)^2} +\frac{\Psi(\mu_2)}{2(\beta-\nu)^2} + \frac{3\nu}{\beta - \nu} + 2\delta + 2\zeta + \frac{1}{|V(G)|}.\]
\end{lemma}
\begin{proof}
Deferred to Section~\ref{sec:lems}.
\end{proof}
Given a pseudodistribution $\mu$, we 
recall that $\mu^{sym}$ is the shift symmetrized pseudodistribution as in Section~\ref{sec:symm-prelims}.
We note that $\Psi(\mu) = \Psi(\mu^{\sym})$ because $\Psi$ as $\pPr_{\mu^\sym}[X_u - X_v = s] = \pPr_{\mu}[X_u - X_v = s]$ and $\pE_{\mu^\sym}[\val_v(X)|X_v - X_u = s] = \pE_{\mu}[\val_v(X)|X_v - X_u = s]$ for all $s$.  Therefore we can use the following lemma from~\cite{BBKSS} that shows that when the alternate shift potential of a shift-symmetric pseudodistribution is large, a single step of conditioning and rounding returns a solution of high objective value:
\begin{lemma}[Lemma 3.6~\cite{BBKSS}]
Let $I = (G, \Pi)$ be an affine instance of Unique Games over the alphabet $\Sigma$, let $\kappa>0$ and 
let $\mu$ be a degree-$4$ shift-symmetric pseudodistribution for $I$. If $\Psi(\mu) \geq \kappa$, then Algorithm~\ref{alg:low-ent-restated} returns a solution of expected value at least $\kappa$.
\end{lemma}

The statement now follows by applying the last lemma on either $\mu^\sym_1$ or $\mu^\sym_2$ (depending 
which one has a higher $\Psi$).
\end{proof}

 \subsubsection{Relating the potentials: Proof of Lemma~\ref{lem:relating-ent-j}}\label{sec:lems}

\begin{proof}[Proof of Lemma~\ref{lem:relating-ent-j}]
We begin by recalling that in the definition of $\Phi_{\beta,\eta}$, we used an $\nu$-additive polynomial approximation $p(x)$ of degree $\widetilde{O}(1/\nu)$ to the indicator function $\Ind[x \ge \beta]$ on the interval $x \in [0,1]$, guaranteed by Theorem~\ref{thm:step-approx}. We will use $E$ to denote the polynomial $E(X,X')$, $\cD$ to denote the pseudodistribution $\mu_1 \times \mu_2 | E$ and $\pi$ to denote the uniform (in general stationary) measure over $V(G)$. Let $p_u$ denote the polynomial $p_{\beta,\nu}(\val_u(X))$ and $p'_u = p_{\beta,\nu}(\val_u(X'))$. Recall also the collection of local distributions $\cL(\cdot, p_{\beta,\nu})$ (Definition~\ref{defn:local-dist}). We will overload the notation $p_u,p'_u$ to mean the variables from Definition~\ref{defn:local-dist} and the polynomial $p_{\beta,\nu}(\val_u(X))$, and the use should be clear from context. Expanding the definition of $\Phi_{\beta,\nu}(\cD)$ we get:
\begin{align}
    \Phi_{\beta,\nu}(\cD)
    &= \pE_{\cD}\left[\sum_{s \in \Sigma} \left(\E_{u \sim \pi} \Ind[X_u - X_u' = s]p_u p'_u\right)^2\right ] \nonumber\\
    &= \pE_{\cD}\left[\sum_{s \in \Sigma} \E_{u,v \sim \pi} \Ind[X_u - X_u' = X_v - X_v' = s]p_up_v p'_u p'_v \right]\nonumber\\
    &= \E_{u,v \sim \pi}\pE_{\cD}\left[ \Ind[X_u - X_v = X'_u- X'_v]\cdot p_u p_vp'_up'_v\right], \nonumber\\
    &= \E_{u,v\sim \pi}[\pE_{\mu_1 \times \mu_2}\left[ \Ind[X_u - X_v = X'_u- X'_v]\cdot p_u p_vp'_up'_v\right] + {\sf err}(u,v)],\label{eq:relating-potential}
\end{align}
for ${\sf err}(u,v)$ defined accordingly for each pair $(u,v)$ (note that we have switched from the pseudo-distribution $\mathcal{D}$ to the pseudo-distribution $\mu_1\times \mu_2$ in the last transition).

\paragraph{Bounding the error term in~\eqref{eq:relating-potential}.} 
For $u = v$, we get that ${\sf err}(u,v) = \pE_{\cD}[p_u^2 (p'_u)^2] - \pE_{\mu_1 \times \mu_2}[p_u^2 (p'_u)^2] \leq 1$. Let us bound ${\sf err}(u,v)$ for $u \neq v$.
Let $R(X_u,X_v,X'_u,X'_v,p_{u},p_{v}, p'_u,p'_v)$ be the polynomial $\Ind[X_u - X_v = X'_u- X'_v]p_u p_v p'_u p'_v$ ($p_u$ denotes the polynomial $p_{\beta,\nu}(\val_u(X))$). Here we will overload the notation of $p_u$ to also denote the indicator variables from Definition~\ref{defn:local-dist} and let $R(a_1,\ldots,a_4,b_1,\ldots,b_4)$ for $a_1,\ldots,a_4 \in \Sigma$ and $b_1,\ldots,b_4 \in \{0,1\}$ also denote the evaluation of $R(X_u,\ldots,p'_v)$ on the values $a_1,\ldots,b_4$: $\Ind[a_1 - a_2 = a_3- a_4]b_1 b_2b_3 b_4$. Using Definition~\ref{defn:local-dist} we have that when $u \neq v$: $\pE_{\cD}[R] = \pPr_{\cL(\cD,p_{\beta,\nu})}[X_u - X_v = X'_u - X'_v, p_u = p_v = p'_u = p'_v = 1]$ and similarly for the distribution $\mu_1 \times \mu_2$ in place of $\cD$. Therefore we get that for $u \neq v$: 
\begin{align*}
&{\sf err}(u,v)\\
=&\pE_{\cD}[R(X_u,\ldots,p'_v)] - \pE_{\mu_1 \times \mu_2}[R(X_u,\ldots,p'_v)] \\
=&\pPr_{\cL(\cD,p_{\beta,\nu})}[X_u - X_v = X'_u - X'_v, p_u = \ldots =p'_v = 1] - \pPr_{\cL(\mu_1 \times \mu_2,p_{\beta,\nu})}[X_u - X_v = X'_u - X'_v, p_u = \ldots = p'_v = 1]\\
=&\sum_{\substack{a_1,\ldots,a_4 \in \Sigma}}  R(a_1,\ldots,a_4, 1,\ldots,1)(\pPr_{\cL(\cD,p_{\beta,\nu})}[X_u = a_1,\ldots,p_u =1, \ldots] - \pPr_{\cL(\mu_1 \times \mu_2,p_{\beta,\nu})}[X_u = a_1,\ldots,p_u =1 \ldots])\\
&\leq \sum_{\substack{a_1,\ldots,a_4 \in \Sigma}} \left|\pPr_{\cL(\cD,p_{\beta,\nu})}[X_u = a_1,\ldots,p_u = 1,\ldots] - \pPr_{\cL(\mu_1 \times \mu_2,p_{\beta,\nu})}[X_u = a_1,\ldots,p_u = 1,\ldots] \right| \\
&= 2 TV((Y_{u,v},Y'_{u,v})|E , (Y_{u,v},Y'_{u,v})),
\end{align*}
where $Y_{u,v} = (X_u,X_v,p_u,p_v)$, $Y'_{u,v} = (X'_u,X'_v,p'_u,p'_v)$ as in the lemma, the distribution $(Y_{u,v},Y'_{u,v})|E$ refers to the joint distribution on these variables defined by the collection of local distributions $\cL(\mu_1 \times \mu_2|E, p_{\beta,\nu})$ and similarly $(Y_{u,v},Y'_{u,v})$ refers to the distribution defined by $\cL(\mu_1 \times \mu_2,p_{\beta,\nu})$ (Definition~\ref{defn:local-dist}). 

Combining, we get that: 
\begin{align}\label{eq:err-uv}
\E_{u,v \sim \pi}[{\sf err}(u,v)] &\leq \E_{u,v \sim \pi: u \neq v}[{\sf err}(u,v)] + \Pr_{u, v \sim \pi}[u = v]\E_{u}[{\sf err}(u,u)] \notag\\
&\leq \E_{u,v \sim \pi: u \neq v}[2 TV((Y_{u,v},Y'_{u,v})|E , (Y_{u,v},Y'_{u,v}))] + \frac{1}{|V(G)|} \notag\\
&\leq 2\delta + 2\zeta + \frac{1}{|V(G)|},
\end{align}
by the assumption in the lemma statement. 
\paragraph{Bounding the main term in~\eqref{eq:relating-potential}.} We now upper bound the first term in~\eqref{eq:relating-potential}. 
\begin{align*}
&\E_{u,v\sim \pi}[\pE_{\mu_1 \times \mu_{2}}\left[ \Ind[X_u - X_v = X'_u- X'_v]\cdot p_u p_vp'_up'_v\right] \\
&=\E_{u,v\sim \pi}\left[\sum_{s \in \Sigma} \pE_{\mu_1}[\Ind[X_u - X_v = s]\cdot p_u p_v]\pE_{\mu_2}[\Ind[X'_u - X'_v = s]\cdot p'_u p'_v]\right].
\end{align*}

We next use that $p(x),\val_u(X),\val_u(X') \in [0,1]$ and Fact~\ref{fact:bdd-markov} asserting that 
$p(x) \le \frac{x}{\beta-\nu} + \nu$ for all $x \in [0,1]$, and furthermore this is \sos-certifiable.
Thus, pulling out a factor of $p_v,p'_v$ and applying
Fact~\ref{fact:bdd-markov} to bound $p_u,p'_u$ we get 
that the first term on the right hand side of~\eqref{eq:relating-potential} is at most
\begin{align*}
&\E_{u,v\sim \pi}\left[\sum_{s \in \Sigma} \pE_{\mu_1}\left[\Ind[X_u - X_v = s]\cdot \frac{\val_u(X)}{\beta - \nu}\right]\pE_{\mu_{2}}\left[\Ind[X'_u - X'_v = s]\cdot \frac{\val_u(X')}{\beta - \nu}\right]\right] + \frac{3\nu}{\beta-\nu} \\
&\leq \frac{1}{2}\E_{u,v\sim \pi}\left[\sum_{s \in \Sigma} \pE_{\mu_1}\left[\Ind[X_u - X_v = s]\cdot \frac{\val_u(X)}{\beta - \nu}\right]^2\right] + \frac{1}{2}\E_{u,v\sim \pi}\left[\sum_{s \in \Sigma}\pE_{\mu_{2}}\left[\Ind[X'_u - X'_v = s]\cdot \frac{\val_u(X')}{\beta - \nu} \right]^2\right] \\
&+ \frac{3\nu}{\beta-\nu},
\end{align*}
where we have used that the AM-GM inequality. The first term may be bounded as
\begin{align*}
&\E_{u,v\sim \pi}\left[\sum_{s \in \Sigma} \pE_{\mu_1}\left[\Ind[X_u - X_v = s]\cdot \frac{\val_u(X)}{\beta - \nu}\right]^2\right] \nonumber\\
\leq&\E_{u,v\sim \pi}\left[\sum_{s \in \Sigma} \pE_{\mu_1}\left[\Ind[X_u - X_v = s]\cdot \frac{1}{\beta-\nu}\right] \pE_{\mu_1}\left[\Ind[X_u - X_v = s]\frac{\val_u(X)}{\beta - \nu}\right]\right] \nonumber \\
=&\frac{1}{(\beta-\nu)^2}\E_{u,v\sim \pi}\left[ \sum_{s \in \Sigma} \pE_{\mu_1}[\Ind[X_u - X_v = s]]^2 \cdot \pE_{\mu_1}[\val_u(X) \mid X_u - X_v = s] \right]\nonumber\\
=&\frac{\Psi(\mu_1)}{(\beta-\nu)^2},
\end{align*}
where we have used that $\val_u(X) \leq 1$ and applied the definition of conditional pseudoexpectation, 
and similarly the second term is upper bounded by $\frac{\Psi(\mu_{2})}{(\beta-\nu)^2}$. Plugging this and~\eqref{eq:err-uv} into~\eqref{eq:relating-potential} finishes the proof.
\end{proof}

\section{Unique Games with low completeness}\label{sec:johnson_2}
In this section we give an analysis which works for UG instances with arbitrary small completeness (but bounded away from $0$). The only step that changes is concluding that there is a subcube with large shift-potential (after conditioning). 

Recall that in the proof of the analogous lemma for large completeness (Lemma~\ref{lem:sp-j}) 
we used Claims~\ref{claim:subcube-expanse} and \ref{claim:potentials} to conclude that if a subcube has large global shift-partition potential it also has large shift-partition potential. This was possible 
because all the vertices with high global value will also have high value inside the subcube as the expansion of the subcube is small ($O(\eps)$ if the completeness was $1-\eps$). This fact is no longer true in the low completeness regime, as subcubes 
now may have expansion close to $1$.
To circumvent this issue, we no longer go via the intermediate global shift-partition route and 
directly try to conclude that the shift-partition potential is large on a subcube. Towards this end 
we need a stronger conclusion of global hypercontractivity which we refer to 
as an ``edge-covering'' statement: the subcubes on which some part of the shift partition is large cover nearly all of the the internal edges of the shift-partition. 

As we shall see, if the value of our pseudodistribution is originally $c$, the 
fraction of internal edges is at least $c^2$ (in particular $c^2$-fraction of edges are satisfied by both $X$ and $X'$), and as we are able to cover almost all internal edges we get that there exists a subcube with large completeness and in fact large shift-partition potential. Given this version of Lemma~\ref{lem:sp-j}, the rest of the analysis remains the same. 

\paragraph{Notation:}
For the purposes of this section we first consider the simpler shift-partition defined by functions $\{G_s: V(G) \rightarrow \R[X,X']\}_{s \in \Sigma}$:
\[G_s(u) = \one(X_u - X'_u = s).\]
Note that $G_s$'s cover all the vertices, that is $\sum_s \delta(G_s) = 1$ and satisfy Booleanity: $G_s(u)^2 = G_s(u)$ for all $u \in V(G)$.

Given two assignments $X,X'$, let $\val_u(X \wedge X')$ denote the fraction of edges incident on $u$ that are satisfied by \emph{both} $X$ and $X'$. Note that $\val_u(X \wedge X')$ is a degree four polynomial in $(X,X')$: 
\[\val_u(X \wedge X') = \E_{v: (u,v) \in E}[(\sum_{\sigma \in \Sigma} X_{u,\sigma}X_{v,\pi_{u,v}(\sigma)})(\sum_{\sigma \in \Sigma} X'_{u,\sigma}X'_{v,\pi_{u,v}(\sigma)})].\]
Let $\val_I(X \wedge X') = \E_u[\val_u(X \wedge X')]$ denote the fraction of edges in the graph that are satisfied by both $X$ and $X'$. Further for an $i$-restriction $a \in {[n] \choose i}$ and a vertex $u \in J|_a$ let $\val^a_u(X \wedge X')$ denote the fraction of edges incident on $u$ in $J|_a$ that are satisfied by both $X$ and $X'$. This is also a degree $4$ polynomial in $(X,X')$.

\begin{definition}[Approximate Indicator for Dense Subcubes]
Let $\eps_0,\ldots,\eps_r \in (0,1)$ to be determined later. For any $i$-restriction $a$ let $R_{s,a}$ denote the (approximate) indicator that $G_s$ is $\eps_i$-dense in $J|_a$, but not $\eps_{|b|}$-dense inside any subcube $J|_b$ where $b \subsetneq a$. To define this as a polynomial let us first define the approximate indicator $p_{<\beta,\nu}(x)$ as $1 - p_{\beta,\nu}(x)$. One can check that $p_{<\beta,\nu}$  approximates $\Ind[x < \beta]$ when $x \in [0,1]$ with similar properties as $p_{\beta,\nu}$ (Theorem~\ref{thm:step-approx}) that approximates $\Ind[x \geq \beta]$ for $x \in [0,1]$. Formally using the polynomial approximation for an indicator define 
\[R_{s,a} = p_{\eps_i,\nu}(\delta(G_s|_a))\prod_{j < i}\prod_{b \subset a: |b| = j}p_{<\eps_j,\nu}(\delta(G_s|_b)).\] 
\end{definition}

\subsection{The Edge Covering Theorem}
Our argument will need an upper bound on the number of
events $R_{s,a}$ that can occur simultaneously, which 
roughly speaking asks how many $i$-restricted subcubes can a given set $F$ be dense on. 
As stated, there is no good upper bound for this: 
if $F$ is dense on an $i-1$-restricted subcube
then it would be quite dense on many $i$-restricted subcubes containing it. This is the reason 
that in the event $R_{s,a}$, we required that the $i$-restricted subcube is dense but there is no $j$-restricted subcube ($j < i$) containing it on which our set is still somewhat dense.

\begin{claim}\label{claim:dense-subcubes}
If $\eps_0, \ldots, \eps_r$ satisfy that $\eps_{i-1} \leq \eps_i/(2^{i+1} i)$ then we get that:
\[\vdash_{\ell^{O(\ell^i/\nu)}} \E_{a \sim {[n] \choose i}}[R_{s,a}] \leq \frac{4\delta(G_s)}{\eps_i^2\ell^i} + O(\nu).\]
\end{claim}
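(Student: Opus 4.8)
The plan is to prove a purely combinatorial statement about honest Boolean sets and then lift it to a sum‑of‑squares inequality. Throughout, let $F\colon V(J)\to\{0,1\}$ be a Boolean set with $\delta:=\delta(F)$ (we take $F=G_s$, which is Boolean modulo $\cA_I$), and call an $i$-subset $a\subseteq[n]$ \emph{good} if $\delta(F|_a)\ge\eps_i$ while $\delta(F|_b)<\eps_{|b|}$ for every proper subset $b\subsetneq a$; write $\mathcal G_i(F)$ for the collection of good $i$-subsets. If $\delta\ge\eps_0$ then, since $b=\emptyset$ is a proper subset of every nonempty $a$, the factor $p_{<\eps_0,\nu}(\delta(G_s))$ in $R_{s,a}$ is $\le\nu$ by Theorem~\ref{thm:step-approx}, so $R_{s,a}\le\nu$ for all $a\ne\emptyset$ and the claim is immediate; hence assume $\delta<\eps_0$. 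The combinatorial core of the claim is the bound $|\mathcal G_i(F)|\le \tfrac{C_i\,\delta}{\eps_i^2\,\ell^i}\binom ni$ for a constant $C_i$ depending only on $i$.

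I would prove this by induction on $i$. The base case $i=0$ is Markov's inequality applied to $\E_{a\sim\binom{[n]}{0}}[\delta(F|_a)]=\delta$. For the inductive step the key point — and the source of the extra factor $\ell^i$ over the trivial Markov bound $\delta/\eps_i$ — is that the dense subcubes $H_a:=\{A\in\binom{[n]}{\ell}:A\supseteq a\}$ for good $a$ are quasi‑disjoint inside $F$, because $n\gg\ell$ forces any intersection $H_a\cap H_{a'}=H_{a\cup a'}$ with $a\ne a'$ to be a subcube of measure smaller by a factor $\approx(\ell/n)^{|a\cup a'|-i}$. Concretely, letting $N(A)=\#\{a\in\mathcal G_i(F):a\subseteq A\}$ and applying Cauchy--Schwarz to the covering $\bigcup_{a\in\mathcal G_i(F)}(F\cap H_a)\subseteq F$ gives
\[
|F|\;\ge\;\frac{\bigl(\sum_A N(A)\bigr)^2}{\sum_A N(A)^2}\;=\;\frac{\bigl(\sum_a|F\cap H_a|\bigr)^2}{\sum_{a,a'}|F\cap H_{a\cup a'}|},
\]
with $a,a'$ ranging over $\mathcal G_i(F)$. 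The numerator is at least $\bigl(|\mathcal G_i(F)|\,\eps_i\,|H|\bigr)^2$, $|H|=\binom{n-i}{\ell-i}$. In the denominator the diagonal contributes $\le|\mathcal G_i(F)|\,|H|$; a pair with $|a\cap a'|=c<i$ contributes $|F\cap H_{a\cup a'}|\le|H_{a\cup a'}|\approx(\ell/n)^{i-c}|H|$, the number of such pairs with $a\cap a'=b$ (a fixed $c$-set) is at most $t(b)^2$ where $t(b):=\#\{a\in\mathcal G_i(F):a\supseteq b\}$, $\sum_{|b|=c}t(b)=|\mathcal G_i(F)|\binom ic$, and $t(b)$ is bounded by the induction hypothesis applied inside the sub‑Johnson graph $J|_b$ at relative level $i-c$ (the $\eps$-recursion is inherited, in fact strengthened, by subcubes) together with $\delta(F|_b)<\eps_c$ from maximality. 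Summing over $c$ makes the off‑diagonal total $O\bigl(|\mathcal G_i(F)|\,|H|/\eps_i\bigr)$, the geometric‑type series converging precisely because of the separation $\eps_c\le\eps_i/(2^{i+1}i)$. Plugging back gives $|\mathcal G_i(F)|^2\eps_i^2|H|^2\lesssim|F|\cdot|\mathcal G_i(F)|\,|H|/\eps_i$, hence $|\mathcal G_i(F)|\lesssim|F|/(\eps_i^3|H|)$; dividing by $\binom ni$ and using $|H|\binom ni/|V(J)|=\binom\ell i\ge\ell^i/(2^i i!)$ gives the combinatorial bound.

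To obtain the stated SoS inequality I would apply Fact~\ref{fact:bdd-markov} to the outermost factor, $p_{\eps_i,\nu}(\delta(G_s|_a))\le\tfrac{\delta(G_s|_a)}{\eps_i-\nu}+\nu$, so that $\E_a[R_{s,a}]$ is controlled, up to an additive $O(\nu)$ coming from the approximate indicators $p_{<\eps_j,\nu}$ (Theorem~\ref{thm:step-approx}), by a polynomial inequality in the $O(\ell^i)$ subcube‑density quantities $\{\delta(G_s|_b)\}_{|b|\le i}$ — each a degree‑$2$ polynomial in $(X,X')$ valued in $[0,1]$ under $\cA_I$. The combinatorial statement above shows this scalar polynomial inequality holds, so invoking the blackbox Positivstellensatz (Theorem~\ref{thm:blackbox-sos}) on these bounded variables produces an SoS certificate modulo $\cA_I$; with roughly $\ell^i$ variables and polynomials of degree $\tO(1/\nu)$ this yields exactly the degree $\ell^{O(\ell^i/\nu)}$ in the statement.

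The main obstacle is twofold. First, closing the induction honestly: the off‑diagonal overlap sum must be shown to converge under the precise separation $\eps_{i-1}\le\eps_i/(2^{i+1}i)$ — at each level $c$ the blow‑up from $\binom{n-c}{i-c}(\ell/n)^{i-c}\approx\ell^{i-c}$ and the binomial weights $\binom ic$ must be beaten by the $\eps_c/\eps_i$ decay — and one has to track how the constant $C_i$ compounds through the recursion. Second, the passage to SoS: the covering/Cauchy--Schwarz argument is inherently set‑theoretic, so the real work is to reformulate it as a polynomial inequality in the density variables that is robust enough for Theorem~\ref{thm:blackbox-sos} to apply, and it is the exponential blow‑up in that theorem (applied to $\approx\ell^i$ variables) that forces the large degree; some care is also needed so that the errors from the product $\prod_{b\subsetneq a}p_{<\eps_{|b|},\nu}$ aggregate to only $O(\nu)$.
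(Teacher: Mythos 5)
Your route is genuinely different from the paper's: you count the maximally dense subcubes by a covering/Cauchy--Schwarz argument with an induction on $i$, whereas the paper uses a second-moment (Fourier) argument. It introduces $f_{i,G_s}(a)=\sum_{b\subseteq a}(-1)^{i-|b|}\delta(G_s|_b)$ (Definition~\ref{def:f_i}, Lemma~\ref{lem:restriction}), observes that the separation $\eps_{i-1}\le\eps_i/(2^{i+1}i)$ forces $f_{i,G_s}(a)\ge\eps_i/2$ whenever the exact indicator $T_{s,a}$ equals $1$, hence $T_{s,a}\le 4f_{i,G_s}(a)^2/\eps_i^2$ pointwise, and then applies the Parseval-type identity $\E_a[f_{i,G_s}(a)^2]=W^i[G_s]/\binom{\ell}{i}\le\delta(G_s)/\ell^i$ (Lemma~\ref{lem:3.2}). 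No induction and no quasi-disjointness of the $H_a$'s is needed.

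The decisive gap in your proposal is the SoS-ification, and it is not merely "the real work" left for later --- your combinatorial argument does not have the structure that makes the paper's proof certifiable at the claimed degree. The paper only ever invokes Theorem~\ref{thm:blackbox-sos} on the \emph{pointwise, per-restriction} inequality $R_{s,a}\le 4f_{i,G_s}(a)^2/\eps_i^2+2\nu$, which lives in the $2^i$ bounded variables $\{\delta(G_s|_b)\}_{b\subseteq a}$; the subsequent averaging over $a$ and the Parseval identity are linear/quadratic manipulations that are natively SoS. Your covering argument, by contrast, is a single global counting statement that entangles all restrictions simultaneously: the quantities $|F\cap H_{a\cup a'}|$, the pair counts $t(b)^2$, and the inductive appeal inside each $J|_b$ are set-theoretic facts about \emph{which} $a$'s are good, which in the SoS world depends on the formal variables $(X,X')$. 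If you try to express the whole thing as one polynomial inequality and feed it to Theorem~\ref{thm:blackbox-sos}, the relevant variable set is all of $\{\delta(G_s|_b)\}_{|b|\le i}$ --- that is $n^{\Theta(i)}$ variables, not the $O(\ell^i)$ you state --- and the resulting degree bound $\poly(k^{\deg f})$ becomes $n^{\Omega(1)}$, destroying the polynomial running time. You would need to show how to localize the covering argument to boundedly many variables per $a$, and no such localization is indicated. A secondary, quantitative issue: your own accounting of the off-diagonal term yields $|\mathcal G_i(F)|\lesssim \delta/(\eps_i^3\ell^i)\binom{n}{i}$, i.e.\ an extra factor $1/\eps_i$ relative to the claim, precisely because the hypothesis only gives $\eps_c\le\eps_i/(2^{i+1}i)$ rather than $\eps_c\lesssim\eps_i^2$; so even the scalar statement you set out to prove is weaker than the one claimed (harmless for the downstream application, where $\eps_{j-1}=\eps_j^5/2^{6r}$, but the claim as stated is not established).
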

\begin{proof}
We first present an argument when the polynomials $p_{\beta,\nu}$ in $R_{s,a}$ are replaced by indicators. We then give a sketch of how to convert the proof into an SoS proof. 

Converting the $p_{\beta,\nu}$'s to indicators, we get that $R_{s,a}$ corresponds to $T_{s,a}$: 
\begin{align*}
    T_{s,a} 
    &= \one(\delta(G_s|_a) \geq \eps_i) \prod_{j < i}\prod_{b \subset a: |b| 
    = j}\one(\delta(G_s|_b) < \eps_j)
\end{align*}
that is the event that $G_s$ is $\eps_i$-dense in $a$, but for all subsets $b$ of $a$, it is at most $\eps_j$-dense. We will show that the fraction of such restrictions $a$ must be small.

We consider the Fourier-analytic function $f_{i,F}$ defined in Section~\ref{sec:fourier}, Definition~\ref{def:f_i} and use its alternative formula from Lemma~\ref{lem:restriction}:
\[f_{i,F}(a) = \sum_{b \subseteq a} (-1)^{i -|b|}\delta(F|_b) \geq \delta(F|_a) - \sum_{b \subsetneq a} \delta(F|_b).\]
Technically we have only defined these functions for the Cayley-version of the Johnson graph $C_{n,\ell,\alpha}$, but one can use the above definition for Johnson graphs and derive the same properties that we use here upto $o(1)$ error terms. We will ignore these $o(1)$-error terms in this proof.

If $T_{s,a} = 1$ we get that, $\delta(F|_a) \geq \eps_i$, but each $\delta(F|_b)$ is at most $\eps_{|b|} < \eps_i/(2^{i+1} i)$ so we get:
\[f_{i,G_s}(a) \geq \eps_i - \sum_{j < i} {i \choose j}\eps_j \geq \eps_i - \sum_{j < i} 2^i \eps_j \geq \frac{\eps_i}{2}.\]
This immediately gives: 
\begin{equation}\label{eq:rsa1}
T_{s,a} \leq 1 \leq 4f_{i,G_s}(a)^2/\eps_i^2,
\end{equation}
if $T_{s,a} = 1$. If $T_{s,a} = 0$ then the above continues to hold.

On the other hand by the definition of $f_{i}$ Lemma~\ref{lem:3.2} gives that, 
\[\E[f_{i,G_s}(a)^2] = \frac{W^i[G_s]}{\ell^i} \leq \frac{\E[G_s^2]}{\ell^i}.\]
Combining with equation~\ref{eq:rsa1} we get:
\[\E_{a \sim {[n] \choose i}}[T_{s,a}] \leq \E_{a}\left[\frac{4f_{i,G_s}(a)^2}{\eps_i^2}\right] \leq \frac{4\delta(G_s)}{\eps_i^2\ell^i}.\]

\paragraph{SoS-ing the proof:} We sketch an SoS proof for the statement for $R_{s,a}$ with $a \in {[n] \choose 1}$. The full statement follows analogously. 

Let $\delta_0$ denote $\delta(G_s)$ and $\delta_1$ denote $\delta(G_s|_a).$ Analogous to \eqref{eq:rsa1} we will first show that:
\begin{equation}\label{eq:rsa-non-sos}
\frac{4(\delta_1 - \delta_0)^2}{\eps_1^2} + 2\nu - R_{s,a} > \nu
\end{equation}
if $\delta_1,\delta_0 \in [0,1]$. This will not be an SoS proof, but after showing this we can use a blackbox theorem to convert it into an SoS proof by the properties of our domain $[0,1]^2$. \eqref{eq:rsa-non-sos} follows by case analysis. 
\begin{enumerate}
    \item If $R_{s,a} > \nu$ it implies that $p_{\eps_1,\nu}(\delta_1) > \nu$ and $p_{<\eps_0,\nu}(\delta_0) > \nu$. By the definition of the polynomials $p_{\beta,\nu}$ this implies that $\delta_1 > \eps_1$ and $\delta_0 < \eps_0 < \eps_1/2$, which gives that $(\delta_1 - \delta_0)^2 \geq \eps_1^2/4$. Since $R_{s,a}$ is always $\leq 1$ we get:
    \[R_{s,a} \leq 1 \leq \frac{4(\delta_1 - \delta_0)^2}{\eps_1^2} < \frac{4(\delta_1 - \delta_0)^2}{\eps_1^2}+\nu.\]
    \item If $R_{s,a} \leq \nu$ we get:
    \[R_{s,a} \leq \frac{4(\delta_1 - \delta_0)^2}{\eps_1^2} + \nu.\]
\end{enumerate}
Hence rearranging the above and adding $\nu$ on both sides gives \eqref{eq:rsa-non-sos}. We can now apply Theorem~\ref{thm:blackbox-sos} that shows that positive polynomials bounded away from $0$ on the domain $[0,1]^k$ have a bounded degree SoS proof to get:
\[\delta_0,\delta_1 \in [0,1] \vdash_{\exp(1/\nu)} R_{s,a} \leq \frac{4(\delta_1 - \delta_0)^2}{\eps_1^2} + 2\nu.\]

One can check that the rest of the proof is already SoS therefore we get:
\[\cA_I \vdash_{\tO(1/\nu)} \delta_0,\delta_1 \in [0,1] \vdash_{\exp(1/\nu)} \E_{a \sim {[n] \choose 1}}[R_{s,a}] \leq \frac{4\delta(G_s)}{\eps_1^2\ell} + 2\nu.\]

One can do the exact same steps and use the SoS degree bound in Theorem~\ref{thm:blackbox-sos} to get an SoS proof of the full statement with degree $\ell^{O(\ell^i/\nu)}$. The bound that comes from Theorem~\ref{thm:blackbox-sos} is exponential in the degree of the polynomial $f$ that we want to prove is SoS, and hence our degree bound is also $\exp(\ell^i/\nu)$. We believe there could be a more direct analysis to get a degree bound of $\poly(1/\nu, \ell^i)$ instead.
\end{proof}

We now move on to stating the edge covering theorem. 
Below, we inspect the edges that stay inside a part in the shift partition, and show that they can nearly be encapsulated within subcubes on which some part becomes dense.
\begin{theorem}[Edge-Covering Theorem]\label{thm:edge-covering-restated}
For all constants $r \in \N$, $\eps_0, \ldots, \eps_r \in [0,1]$, where $\eps_r \leq \exp(-r)$ and $\eps_0, \ldots, \eps_r$ satisfy that $\eps_{i-1} \leq \eps_i^5/(2^{6r})$, we get that:
\begin{equation}\label{eq:edge-cov-final}
\cA_I \vdash_{\ell^{\poly(\ell^r/\nu)}}  \val_I(X \wedge X') \leq T_0(X,X') + \ldots + T_r(X,X') + \sf{err},
\end{equation}
where:
\begin{align*}
&T_0(X,X') = \sum_s p_{\eps_0,\nu}(\delta(G_s))\E_{u \sim G}[ G_s(u)\val_u(X \wedge X')],\\
&T_i(X,X') = \ell^{i}\E_{a \in {[n] \choose i}}[\sum_s R_{s,a}
\E_{u \sim J|_a}[G_s(u)\val^a_u(X \wedge X')]] \\
&{\sf err} = 4(1-\alpha)^{r+1} + 2^{6r}\max_{i}\frac{\eps_{i-1}}{\eps_i^4}+ |\Sigma|\ell^{O(r)}\sqrt{\nu}.
\end{align*}
\end{theorem}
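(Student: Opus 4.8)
The plan is to classify each internal edge $(u,v)$ of the shift-partition according to the ``smallest'' subcube (in the inclusion order) on which the part $G_s$ containing both $u,v$ is already somewhat dense, and to charge the edge to the corresponding term $T_i$. First I would fix an edge $(u,v)$ that is satisfied by both $X$ and $X'$; by Fact~\ref{fact:z-vars} (item 3), the fact that $(u,v)$ is internal to the shift-partition means there is a unique $s$ with $G_s(u) = G_s(v) = 1$, i.e. $X_u - X'_u = X_v - X'_v = s$. Now consider the chain of subcubes $J \supseteq J|_{a_1} \supseteq J|_{a_2} \supseteq \cdots$ obtained by adding elements of $u \cap v$ one at a time (there are many such chains, so I would in fact average over them, as is done for the Fourier functions $f_{i,F}$ in Section~\ref{sec:fourier}). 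For the \emph{first} index $i$ at which $G_s$ becomes $\eps_i$-dense in the relevant $i$-restricted subcube $J|_a$ — and is \emph{not} $\eps_j$-dense in any strictly smaller subcube along the way — the edge $(u,v)$ is ``captured'' by the event $R_{s,a}$ and should be counted inside $T_i$. Edges for which no such $i \le r$ exists, or for which the relevant subcube is not inside $u \cap v$, form the error term: these are handled by (a) the bound $(1-\alpha)^{r+1}$ for the probability that a random chain of length $r$ fails to descend into $u \cap v$ (this is exactly the $\lambda_{r+1}$-type factor already appearing in Theorem~\ref{thm:structure-johnson} and Claim~\ref{claim:subcube-expanse}), and (b) the ``over-counting'' regularity loss, which the maximal-density requirement in the definition of $R_{s,a}$ controls via Claim~\ref{claim:dense-subcubes}.

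The key steps, in order, would be: (1) Reduce the global quantity $\val_I(X\wedge X')$ to a sum over edges of $\Ind[\text{edge internal to shift-partition}] \cdot \val_u(X\wedge X')$-type local terms, using that every doubly-satisfied edge is internal (Fact~\ref{fact:z-vars}); (2) For each $i$, relate $\E_{a\sim\binom{[n]}{i}}[\sum_s R_{s,a}\,\E_{u\sim J|_a}[G_s(u)\val^a_u(X\wedge X')]]$ to the fraction of doubly-satisfied internal edges $(u,v)$ such that $a \subseteq u\cap v$ and the chain first becomes $\eps_i$-dense at level $i$; the factor $\ell^i$ in front of $T_i$ is the combinatorial correction accounting for the probability that a uniformly random $i$-set $a$ lies inside $u\cap v$ (which is $\Theta(1/\ell^i)$ since $|u\cap v| = (1-\alpha)\ell = \Theta(\ell)$); (3) Sum the telescoping classification over $i = 0,1,\dots,r$ and observe that every doubly-satisfied internal edge is either captured by some $T_i$ or contributes to $\sf{err}$; (4) Bound $\sf{err}$: the $(1-\alpha)^{r+1}$ term from chains not descending far enough (as in Claim~\ref{claim:subcube-expanse}), the $2^{6r}\max_i \eps_{i-1}/\eps_i^4$ term from edges where the ``first dense level'' is ambiguous because $G_s$ is marginally dense at multiple nested levels — here the condition $\eps_{i-1} \le \eps_i^5/2^{6r}$ makes these geometrically negligible, exactly as in the hypothesis of Claim~\ref{claim:dense-subcubes} — and the $|\Sigma|\ell^{O(r)}\sqrt{\nu}$ term from replacing the true indicators $\Ind[\delta(G_s|_a)\ge\eps_i]$ by the polynomial approximators $p_{\eps_i,\nu}$, accumulated over the $\le |\Sigma|\cdot\sum_{i\le r}\binom{n}{i}$-many events (Theorem~\ref{thm:step-approx}); (5) SoS-ify the whole argument: the edge-classification identities are polynomial identities in $(X,X')$ once the indicators are replaced by the $p_{\beta,\nu}$'s, the density-counting step is Claim~\ref{claim:dense-subcubes}, and the only genuinely non-SoS inequalities (comparing products of approximators) are converted blackbox via Theorem~\ref{thm:blackbox-sos} on the domain $[0,1]^{O(\ell^r)}$, which is precisely the source of the $\ell^{\poly(\ell^r/\nu)}$ degree bound.

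The main obstacle I expect is step (2)–(3): making the edge-to-subcube charging scheme both \emph{complete} (every internal doubly-satisfied edge is accounted for) and \emph{non-overcounting} (no edge is charged to more than a bounded number of $(i,a,s)$ triples), all while staying inside low-degree SoS. The subtlety is that a single edge $(u,v)$ sits inside $\binom{|u\cap v|}{i} = \Theta(\ell^i)$ many $i$-subcubes, so without the ``maximally dense'' restriction built into $R_{s,a}$ the sum $\sum_a R_{s,a}$ would blow up by a factor $\ell^i$ and destroy the bound; conversely, the maximality restriction forces one to track the entire nested chain of densities $\delta(G_s|_b)$ for $b\subsetneq a$, which is what drives the degree of the approximating polynomials (and hence the final SoS degree) up to $\ell^{\Theta(\ell^r/\nu)}$. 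Getting the bookkeeping of the nested density thresholds $\eps_0 \le \eps_1 \le \cdots \le \eps_r$ right — so that the ``ambiguous level'' error telescopes to $2^{6r}\max_i \eps_{i-1}/\eps_i^4$ rather than something larger — is the delicate combinatorial heart of the argument, and is exactly the place where one leans on the Fourier-analytic identities for $f_{i,F}$ (Lemma~\ref{lem:restriction}, Lemma~\ref{lem:3.2}) established in Section~\ref{sec:fourier} to convert statements about densities on random subcubes into statements about Fourier weight levels $W^i[G_s]$.
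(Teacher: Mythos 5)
Your skeleton for the captured edges is right (reduce to internal edges via Fact~\ref{fact:z-vars}, charge each edge to a maximally-dense subcube, control over-counting via Claim~\ref{claim:dense-subcubes} and the geometric decay $\eps_{i-1}\le\eps_i^5/2^{6r}$), but there is a genuine gap in how you propose to bound the edges that are \emph{not} captured by any dense subcube. You attribute the $4(1-\alpha)^{r+1}$ error to ``the probability that a random chain of length $r$ fails to descend into $u\cap v$,'' citing Claim~\ref{claim:subcube-expanse}. This is not a valid argument: your chains are built by adding elements of $u\cap v$, so they always descend, and more fundamentally no counting of chains or subcubes tells you that uncaptured edges are few. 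The actual proof decomposes $G_s = D_s\cup H_s$, where $D_s$ is the union of $G_s\cap J|_a$ over dense subcubes and $H_s = G_s\setminus D_s$ is by construction $(r,\eps_r)$-pseudorandom, and then invokes the global hypercontractivity theorem (Theorem~\ref{thm:structure-johnson}) to conclude that $H_s$ expands, so $|E(H_s,H_s)|/|E(J)| = \delta(H_s)-\ip{H_s,LH_s} \le 2\delta(G_s)(1-\alpha)^{r+1}$, with a matching bound for $|E(H_s,D_s)|$ via Cauchy--Schwarz and $\ip{H_s,J^2H_s}$. This is the one place where global hypercontractivity enters the edge-covering theorem — without it there is no reason a set avoiding all dense subcubes should have few internal edges — and your plan omits it entirely.

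Two smaller misattributions: the $2^{6r}\max_i\eps_{i-1}/\eps_i^4$ term does not bound ``ambiguity of the first dense level'' (that ambiguity is eliminated by the maximality condition built into $R_{s,a}$); it bounds edges running between two \emph{distinct} dense subcubes $J|_a$ and $J|_b$, counted via $|E(J|_a,J|_b)|/|E(J)| \le (2\ell/n)^{i+j-h}$ together with the subcube-counting bound of Claim~\ref{claim:dense-subcubes} applied to restrictions $G_s|_H$. And the $\ell^i$ prefactor of $T_i$ is simply the normalization $\binom{n}{i}|E(J|_a)|/|E(J)|\le\ell^i$ when converting the sum over dense subcubes to an expectation over all $a\in\binom{[n]}{i}$, not a chain-descent probability. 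Your identification of Theorem~\ref{thm:blackbox-sos} and the threshold-averaging over shifted approximators as the source of the $\ell^{\poly(\ell^r/\nu)}$ degree and the $|\Sigma|\ell^{O(r)}\sqrt{\nu}$ Booleanity error is correct.
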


\begin{proof}
We will first give a proof of this statement with the definition of $R_{s,a}$ replaced by actual indicator functions in place of the polynomials $p_{\beta,\nu}$. Formally, for an $i \leq r$-restriction $a \in {[n] \choose i}$ let $R_{s,a}$ be an indicator variable that is 1 if $\delta(G_s|_a) \geq \eps_i$ and $\delta(G_s|_b) \leq \eps_{|b|}$ for all $b \subsetneq a$. We will then give a sketch of how to convert this into an \sos proof and work with the true definition of $R_{s,a}$.

For any set $S$ let $\#sat(S)$ denote the number of edges inside $S$ that are satisfied by both $X$ and $X'$. We first have that:
\begin{equation}\label{eq:edge_cover0}
\val_{I}(X \wedge X') \leq \sum_s \frac{\#sat(G_s)}{|E|},
\end{equation}
since we know (from Fact~\ref{fact:z-vars}) that any edge that crosses between parts in the shift-partition must be violated by at least one of $X$ or $X'$. Let us therefore upper bound $\#sat(G_s)$.

Consider the set indicated by $G_s$ for fixed $s \in \Sigma$. Let the set of dense subcubes for $G_s$ be defined as follows:
\begin{align*}
&\cC^0_s = \begin{cases}\{J\} ~~~~~~~~~\text{if $\delta(G_s) \geq \eps_0$} \\
\phi ~~~~~~~~~~~~\text{otherwise}
\end{cases} \\
&\cC^{(1)}_s = \left\{a \in [n]: R_{s,a} = 1\right\} \\
&\vdots \\
&\cC^{(r)}_s = \left\{a \in {[n] \choose r}: R_{s,a} = 1\right\}. 
\end{align*}

Let $D_s$ denote the dense part of $G_s$: $\cup_{i \leq r} \cup_{a \in \cC^{(i)}_s} G_s \cap J|_a$ and let $H_s = G_s \setminus D_s$ be the pseudorandom part of $G_s$. It is easy to check that $H_s$ is $(r, \eps_r)$-pseudorandom. The edges inside $G_s$ can be divided into four types of edges: edges that stay inside the dense subcubes, edges that go between two different dense subcubes, edges inside $H_s$ and edges that go between $H_s$ and $D_s$. Applying this subdivision of edges on $\#sat(G_s)$ we get that:
\begin{align}
\frac{\#sat(G_s)}{|E(J)|} 
&\leq \sum_{i \leq r} \sum_{a \in \cC^{(i)}_s} \frac{\#sat(G_s \cap J|_a)}{|E(J)|}  +
\frac{|E(H_s, H_s)|}{|E(J)|} + \frac{|E(H_s, D_s)|}{|E(J)|}\notag\\\label{eq:edge-cov1}
+ 
&\sum_{i, j \leq r} 
\sum\limits_{h=0}^{\min(i,j)}
\sum_{a \in \cC^{(i)}_s}
\sum_{H\subseteq a, |H| = h}
\sum_{\substack{b \in \cC^{(j)}_s \\ a \neq b, a\cap b = H}} \frac{|E(J|_a, J|_b)|}{|E(J)|}.
\end{align}

We will check that the $i^{th}$-summand in the first term can be rearranged to give the $s^{th}$-summand in $T_i(X,X')$. We have that:
\begin{equation}\label{eq:edge-cov2}
\sum_{a \in \cC^{(i)}_s} \frac{\#sat(G_s \cap J|_a)}{|E(J)|} = \frac{{[n] \choose i}|E(J|_a)|}{|E(J)|}\E_{a \sim {[n] \choose i}}\left[R_{s,a}\frac{\#sat(G_s \cap J|_a)}{|E(J|_a)|}\right].
\end{equation}
By a direct computation we get that $\frac{{[n] \choose i}|E(J|_a)|}{|E(J)|} \leq \ell^i$. As for the second term above we get that, 
\begin{equation*}
\frac{\#sat(G_s \cap J|_a)}{|E(J|_a)|} \leq \frac{\sum_{u \in J|_a}\Ind[u \in G_s]\#sat_a(u)}{|V(J|_a)|d_a} = \E_{u \in J|_a}[\Ind[u \in G_s]\val_u^a(X \wedge X')],
\end{equation*}
where $d_a$ denotes the degree of a vertex inside the subgraph $J|_a$ and $\#sat_a(u)$ denotes the number of edges in $J|_a$ that are incident on $u$ and are satisfied by both $X$ and $X'$. So plugging this into \eqref{eq:edge-cov2} and summing up over $i \leq r$ we get that the first term in \eqref{eq:edge-cov1} gives us:
\begin{equation}\label{eq:first-term-ec}
\sum_{i \leq r} \sum_{a \in \cC^{(i)}_s} \frac{\#sat(G_s \cap J|_a)}{|E(J)|} \leq \sum_{i \leq r} \ell^i \E_{a \sim {[n] \choose i}}[R_{s,a}\E_{u \in J|_a}[\Ind[u \in G_s]\val_u^a(X \wedge X')]]. 
\end{equation}

We will bound the other terms in \eqref{eq:edge-cov1} by some small constant. 

\paragraph{Bounding the last term in~\eqref{eq:edge-cov1}.} 
We focus on $i\leq j$, and in the end multiply the 
bound we get by factor $2$. 

If $R_{s,\phi} = 1$, that is, $\delta(G_s) \geq \eps_0$ then we can check that this is $0$ since none of the other subcubes will have $R_{s,a} = 1$. So let us assume that this is not the case. 
Fix $i$ and $H\subseteq i$ of size $h$. For any $a \in {[n] \choose i}, b \in {[n] \choose j}$
intersecting in a set of size $h$ we have that
$\frac{|E(J|_a, J|_b)|}{|E(J)|} \leq \left(\frac{2\ell}{n}\right)^{i+j-h}$.
Indeed, to see this one can think of sampling an edge 
$(x,y)$ in $J$ uniformly at random, and asking what 
is the probability that $a\subseteq x$ and $b\subseteq y$. For that, we want $a\cap b$ to be contained in 
both $x$ and $y$ (which happens with probability at most $(2\ell/n)^h$), and that $a\setminus(a\cap b)\subseteq x$, $b\setminus(a\cap b)\subseteq y$
which happens with probability at most $(2\ell/n)^{i-h}\cdot (2\ell/n)^{j-h}$. 
Thus, 
we get that the last term in~\eqref{eq:edge-cov1}
is at most
\[
\sum_{i\leq j \leq r} 
\sum\limits_{h=0}^{i-1}
\sum_{a \in \cC^{(i)}_s}
\sum_{H\subseteq a, |H| = h}
|\{b~|b\in \cC^{(j)}_s, a\cap b = H, a\neq b\}|\left(\frac{2\ell}{n}\right)^{i+j-h}.
\]
Here we note that $h$ cannot be $i$, since then we 
would be looking at sub-cubes $J|_a\subseteq J|_b$, 
and so we cannot have both $R_{s,a} = 1$ and $R_{s,b}=1$.

Applying Claim~\ref{claim:dense-subcubes} on the 
restriction $G_s|_H$, we get that 
\[
|\{b~|b\in \cC^{(j)}_s, a\cap b = H, a\neq b\}|
\leq \frac{n^{j-h}}{\ell^{j-h}}\frac{\delta(G_s|_H)}{\eps_{j}^2}
\leq \frac{n^{j-h}}{\ell^{j-h}}\frac{\eps_h}{\eps_{j}^2}.
\]
We used the fact that as $R_{s,b} = 1$,  $G_s$ is at most $\eps_h$ dense in $J|_h$. Plugging that above we get that the last term 
on the right hand side of~\eqref{eq:edge-cov1} is 
at most
\[
\sum_{i\leq j \leq r} 
\sum\limits_{h=0}^{i-1}
\sum_{a \in \cC^{(i)}_s}
\sum_{H\subseteq a, |H| = h}
\frac{n^{j-h}}{\ell^{j-h}}\frac{\eps_h}{\eps_{j}^2}
\leq 
\sum_{i, j \leq r} 
2^i \cdot i\cdot \frac{n^{j-h}}{\ell^{j-h}}\frac{\eps_h}{\eps_{j}^2}
\cdot |\cC^{(i)}_s|\left(\frac{2\ell}{n}\right)^{i+j-h},
\]
and as $|\cC^{(i)}_s|\leq \frac{\delta(G_s)n^i}{\eps_i^2\ell^i}$ due to 
Claim~\ref{claim:dense-subcubes}, we get further upper
bound this by
\[
\sum_{i, j \leq r} 
2^i \cdot i\cdot \frac{n^{i+j-h}}{\ell^{i+j-h}}\frac{\eps_h\delta(G_s)}{\eps_{j}^2\eps_i^2}
\left(\frac{2\ell}{n}\right)^{i+j-h}
\leq
2^{3r+1} r^2\delta(G_s) \max_{i}\frac{\eps_{i-1}}{\eps_i^4}
\leq 2^{6r}\max_{i}\frac{\eps_{i-1}}{\eps_i^4}\delta(G_s).
\]
Summing this over $s$ gives that the contribution of the last term in~\eqref{eq:edge-cov1} to~\eqref{eq:edge_cover0} is at most
$2^{6r}\max_{i}\frac{\eps_{i-1}}{\eps_i^4}$.



\paragraph{Bounding the second and third term in \eqref{eq:edge-cov1}.} 
Here, we use Theorem~\ref{thm:structure-johnson}. 
The point is that the set $H_s$ is pseudo-random, 
and hence we can get good control over the edges from it to other sets. 

More precisely, for the second term note that $H_s$ is  $(r,\eps_r)$-pseudorandom  as per Definition~\ref{def:p.r.}, 
so by Theorem~\ref{thm:structure-johnson}) with $\gamma =\eps_r$ we get: 
\begin{equation}\label{eq:third-term-ec}
\frac{|E(H_s)|}{|E(J)|} = \delta(H_s) - \ip{H_s, LH_s} \leq 2\delta(H_s)(1-\alpha)^{r+1} \leq 2\delta(G_s)(1-\alpha)^{r+1},
\end{equation}
where the first inequality holds as long as $\eps_r \leq (1-\alpha)^{r+1}\exp(-r)$. Note that the second term in the structure theorem $q_a(H_s)(\delta(H_s|_a) - \eps_r)$ is non-positive since $q_a$ is \sos and $H_s$ is pseudorandom. The third term corresponding to Booleanity, $B(H_s)$ is $0$ since $H_s^3(u) = H_s(u)$.
Summing up over $s$ yields that the contribution of this to the right hand side of~\eqref{eq:edge_cover0} 
is at most $2(1-\alpha)^{r+1}$.

The bound on the third term in~\eqref{eq:edge-cov1} is similar. We have that
\begin{equation}\label{eq:fourth-term-ec}
\frac{|E(H_s,D_s)|}{|E(J)|} = \ip{D_s, J H_s} \leq \sqrt{\ip{D_s,D_s} \cdot \ip{H_s, J^2 H_s}} \leq 2\delta(G_s)(1-\alpha)^{r+1}.
\end{equation}
In the last inequality, we used $\ip{D_S,D_s}\leq \delta(G_s)$ and 
$\ip{H_s, J^2 H_s}\leq 2\delta(G_s)(1-\alpha)^{2(r+1)}$ 
which follows by Theorem~\ref{thm:structure-johnson} 
exactly as above. Summing up over $s$ yields total contribution of at most $2(1-\alpha)^{r+1}$.

\paragraph{SoS-ing the proof:} We give a proof sketch of how to convert this into an SoS proof. To avoid Booleanity error we will apply \eqref{eq:edge-cov1-sos} multiple times with $R_{s,a}$ replaced with $p_{\beta,\nu}$'s and take an average. We know that $R_{s,a} = p_1(\delta_1) \cdot \ldots\cdot p_t(\delta_t)$ where $p_i$'s are approximate indicators $p_{\beta_i,\nu}$ (or $p_{<\beta_i,\nu}$) and $\delta_i$'s are of the form $\delta(G_s|_b)$ for $b \subseteq a$. Then define $R^c_{s,a}$ as $p^c_1(\delta_1) \cdot \ldots\cdot p^c_t(\delta_t)$ where $p^c_i(\delta_i) = p_{\beta_i+c\nu^2,\nu^2}$ (resp. $p_{<\beta_i+c\nu^2,\nu^2}$) for all $c \in \{0,\ldots,\lfloor 1/\nu \rfloor\}$.

Note that $G_s$'s and the functions $\#sat(G_s \cap J|_a)$ are actual indicators as written since we can express $G_s(u) = \Ind[X_s - X'_u = s]$ and $\Ind[e \text{ is satisfied }]$ as a polynomial in $(X,X')$. Define $H^c_s(u) = G_s(u)\prod_{a \subset u: |a| \leq r}(1-R^c_{s,a})$ and $D^c_{s} = G_s(u)(1 - H^c_{s,a})$. Defining dense subcubes with $R^c_{s,a}$ we get the following analogue of \eqref{eq:edge-cov1}:
\begin{align}\label{eq:edge-cov1-sos}
\frac{\#sat(G_s)}{|E(J)|} &\leq \sum_{i \leq r} \sum_{a \in {[n] \choose i}}R^c_{s,a} \frac{\# sat(G_s \cap J|_a)}{|E(J)|} +
\ip{H^c_s, AH^c_s} + \ip{D^c_s, JH^c_s} \notag\\+&\sum_{i, j \leq r} 
\sum\limits_{h=0}^{\min(i,j)}
\sum_{a \in {[n] \choose i}} R^c_{s,a}
\sum_{H\subseteq a, |H| = h}
\sum_{\substack{b \in {[n] \choose j} \\ a \neq b, a\cap b = H}} R^c_{s,b} \frac{|E(J|_a, J|_b)|}{|E(J)|}.
\end{align}
We then take an expectation over $c \in \{0,\ldots,\lfloor 1/\nu \rfloor\}$ and then a sum over $s \in \Sigma$.

Now we can use the same analysis to upper bound each of these terms. The analysis for the first term follows as is; we get an $\E_c[R^c_{s,a}]$ (instead of $R_{s,a}$) but we will ignore this slight difference as it does not affect our final analysis of the algorithm.
For the second term we use the \sos version of Claim~\ref{claim:dense-subcubes} (as stated) to say that $\E_{a \in {[n] \choose i}}[R^c_{s,a}]$ is small and the same tricks like those in the SoS-ization of Claim~\ref{claim:dense-subcubes}. For the third and fourth terms we apply the structure theorem on the almost-Boolean function $H_s$ and bound the Booleanity term $\E_c[\sum_s B(H^c_s)]$ by $|\Sigma|\ell^{O(r)}\sqrt{\nu}$ instead. Additionally in the fourth term we apply an \sos version of Cauchy-Schwarz. We omit these details as we've carried out similar arguments before, but formally prove the upper bound on $\E_c[\sum_s B(H^c_s)]$ in Claim~\ref{claim:bool} in the appendix.
\end{proof}

\subsection{The Main Lemma}
In this section, we consider UG instances that have completeness bounded away from $1$ and possibly very close to $0$. The analysis of our algorithm differs only in one step: concluding that the shift potential is large.

The following claim asserts that taking two assignments $X, X'$ from our pseudo-distribution, 
the expected number of edges that are satisfied by 
both $X$ and $X'$ is still bounded away from $0$:
\begin{claim}\label{claim:val-intersection-restated}
$\pE_{\mu \times \mu}[\val_I(X \wedge X')] \geq c^2$.
\end{claim}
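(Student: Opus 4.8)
The plan is to leverage the fact that $\pE_{\mu \times \mu}$ factors as a product over the two independent copies. First I would recall that $\val_I(X \wedge X')$ is the degree-four polynomial
\[
\val_I(X \wedge X') = \E_{(u,v) \sim E}\Bigl[\Bigl(\sum_{\sigma} X_{u,\sigma} X_{v,\pi_{uv}(\sigma)}\Bigr)\Bigl(\sum_{\sigma} X'_{u,\sigma} X'_{v,\pi_{uv}(\sigma)}\Bigr)\Bigr] = \E_{(u,v)\sim E}[Y_{(u,v)} Y'_{(u,v)}],
\]
where $Y_{(u,v)} = \sum_\sigma X_{u,\sigma}X_{v,\pi_{uv}(\sigma)}$ is the "edge $(u,v)$ is satisfied by $X$" indicator and $Y'_{(u,v)}$ the analogous quantity for $X'$. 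By linearity of $\pE_{\mu\times\mu}$ and the defining property of the product pseudoexpectation (Fact~\ref{fact:indep}, Section~\ref{sec:ind_samples}) — namely $\pE_{\mu\times\mu}[f(X)g(X')] = \pE_\mu[f(X)]\cdot\pE_\mu[g(X')]$ whenever this is well-defined — we get
\[
\pE_{\mu\times\mu}[\val_I(X\wedge X')] = \E_{(u,v)\sim E}\bigl[\pE_\mu[Y_{(u,v)}]\cdot\pE_\mu[Y'_{(u,v)}]\bigr] = \E_{(u,v)\sim E}\bigl[\pE_\mu[Y_{(u,v)}]^2\bigr],
\]
since the second copy has the same marginal as the first. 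One subtlety to check is that $\pE_\mu[Y_{(u,v)}] \ge 0$: this holds because $Y_{(u,v)} = \sum_\sigma X_{u,\sigma}X_{v,\pi_{uv}(\sigma)}$ is a sum of products of Booleans, and under the UG axioms $\cA_I$ one has $X_{u,\sigma}X_{v,\pi_{uv}(\sigma)} = (X_{u,\sigma}X_{v,\pi_{uv}(\sigma)})^2$ plus the disjointness constraints, so $Y_{(u,v)}$ is a sum of squares modulo $\cA_I$ and hence $\pE_\mu[Y_{(u,v)}] \ge 0$; in fact $Y_{(u,v)} \in [0,1]$ modulo $\cA_I$.

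Next I would apply Cauchy–Schwarz (Jensen, in the form $\E[Z^2] \ge \E[Z]^2$) over the choice of edge $(u,v) \sim E$, using that each $\pE_\mu[Y_{(u,v)}] \ge 0$:
\[
\E_{(u,v)\sim E}\bigl[\pE_\mu[Y_{(u,v)}]^2\bigr] \ge \Bigl(\E_{(u,v)\sim E}\bigl[\pE_\mu[Y_{(u,v)}]\bigr]\Bigr)^2 = \bigl(\pE_\mu[\val_I(X)]\bigr)^2 \ge c^2,
\]
where the last equality is just linearity of $\pE_\mu$ together with the definition $\val_I(X) = \E_{(u,v)\sim E}[Y_{(u,v)}]$, and the final inequality uses the hypothesis $\val_\mu(I) = \pE_\mu[\val_I(X)] \ge c$.

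I do not expect a serious obstacle here; the statement is essentially a formal manipulation of the product pseudoexpectation plus one application of Cauchy–Schwarz. The only points requiring a little care are (i) verifying that $Y_{(u,v)}$ is nonnegative (equivalently, bounded in $[0,1]$) modulo the axioms, so that Jensen applies to the genuine nonnegative numbers $\pE_\mu[Y_{(u,v)}]$, and (ii) making sure the product pseudoexpectation $\pE_{\mu\times\mu}$ is of high enough degree that $\val_I(X\wedge X')$, a degree-four polynomial, lies in its domain — which it does by the degree assumptions on $\mu$ in the ambient setting. Both are immediate from the preliminaries, so the claim follows.
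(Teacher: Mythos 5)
Your proof is correct and follows essentially the same route as the paper: factor the product pseudoexpectation edge-by-edge to get $\E_{(u,v)\sim E}[\pE_\mu[Y_{(u,v)}]^2]$, then apply Cauchy--Schwarz (Jensen) over the choice of edge to obtain $\val_\mu(I)^2 \geq c^2$. The extra check that $\pE_\mu[Y_{(u,v)}]\geq 0$ is harmless but not needed, since $\E[Z^2]\geq \E[Z]^2$ holds for arbitrary real-valued $Z$.
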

\begin{align*}
\pE_{\mu \times \mu}[\val_I(X \wedge X')] &= \pE_{\mu \times \mu}[\E_{e \sim E(G)}[\one(e \text{ satisfied by } X)\one(e \text{ satisfied by } X')]] \\
&= \E_{e \sim E(G)}[\pPr_\mu[e \text{ satisfied by } X]^2] \\
&\geq \E_{e \sim E(G)}[\pPr_\mu[e \text{ satisfied by } X]]^2 ~~~~~~~~~~\text{(by Cauchy-Schwarz)}\\
&= \val_\mu(I)^2.
\end{align*}


\begin{lemma}\label{lem:sp-j-low-comp}
For all positive constants $c, \alpha, \tau \leq 1$, $\nu = \frac{c^4}{|\Sigma|^2\ell^{O(r)}}$, all integers $\ell \geq \Omega(r), n \geq \ell$, where $r = \Theta(\frac{\log c}{\log(1-\alpha)})$ the following holds: Let $I$ be an affine UG instance on $J_{n,\ell,\alpha}$ and $\mu$ be a pseudodistribution over assignments for $I$ with $\val_\mu(I) \geq c$ and degree at least $ \frac{\ell^{O(r)}\poly(|\Sigma|)}{\poly(c)\tau}+ \ell^{\poly(|\Sigma|\ell^r/c)}$. Then there exists a restriction $a \subseteq [n]$ of size $i \leq r$, a degree $\widetilde{O}(1/\nu)$ polynomial $P_a(X,X')$ in a fixed set of $|\Sigma|$ polynomials, subsets $A, B \subseteq V(J|_a)$ of size at most $\frac{\ell^{O(i)}|\Sigma|^2}{c^2\tau}$ and strings $y_A, y_B$ such that conditioning $\mu$ on the events $Y_A = y_A$ and $Y_B = y_B$ gives degree $\widetilde{\Omega}(1/\nu)$ pseudodistributions $\mu_1$ and $\mu_2$ such that:
\begin{enumerate}
\item $\cA_I  ~~\vdash_{\widetilde{O}(1/\nu)} ~~ P_a(X,X') \in [0,1]$.
\item $\pE_{\mu_1 \times \mu_2}[\Phi^a_{\beta,\nu}(X,X')P_a(X,X')] \geq \frac{c^{\Omega(r)}}{\exp(r^2)} \pE[P_a(X,X')]$.
\item $\pE_{\mu_1 \times \mu_2}[P_a(X,X')] \geq \Omega\left(\frac{c^2}{8r\ell^i|\Sigma|}\right)$.
\item $\E_{\substack{u_1,v_1 \sim S: u_1 \neq v_1\\ u_2,v_2 \sim S: u_2 \neq v_2}}[I(Y_{u_1,v_1};Y_{u_2,v_2})] \leq \tau.$
\item $\E_{\substack{u_1,v_1 \sim S: u_1 \neq v_1\\ u_2,v_2 \sim S: u_2 \neq v_2}}[I(Y'_{u_1,v_1};Y'_{u_2,v_2})] \leq \tau,$
\end{enumerate}
where $\beta = \left(\frac{\poly(c)}{\exp(r)}\right)^{r-i+1}$, $S = V(J|_a)$, $Y_{u,v} = (X_u,X_v,p_u,p_v)$, $Y'_{u,v} = (X'_u,X'_v,p'_u,p'_v)$ and the mutual information is taken with respect to the collection of local distributions $\cL(\mu_1 \times \mu_2,p_{\beta,\nu})$. 
\end{lemma}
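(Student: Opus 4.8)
The plan is to follow the template of the proof of Lemma~\ref{lem:sp-j-restated}, but feeding it the stronger \emph{edge-covering} conclusion (Theorem~\ref{thm:edge-covering-restated}) in place of the ``one shift-partition part is non-pseudorandom'' input that Lemma~\ref{lem:structure-fs-j} provided in the high-completeness regime. Concretely: pick the sequence $\eps_0,\dots,\eps_r$ obeying $\eps_{i-1}\le \eps_i^5/2^{6r}$ with $\eps_r=\exp(-r)$, set $r=\Theta(\log c/\log(1-\alpha))$ so that $(1-\alpha)^{r+1}\le\poly(c)$, and take $\nu=c^4/(|\Sigma|^2\ell^{O(r)})$; with these choices each of the three summands of ${\sf err}$ in Theorem~\ref{thm:edge-covering-restated} is $\le c^2/4$, so taking $\pE_{\mu\times\mu}$ of~\eqref{eq:edge-cov-final} and invoking Claim~\ref{claim:val-intersection-restated} ($\pE_{\mu\times\mu}[\val_I(X\wedge X')]\ge c^2$) yields $\sum_{i=0}^r\pE_{\mu\times\mu}[T_i]\ge c^2/2$. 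Since there are $r+1$ terms, some $i\le r$ has $\pE_{\mu\times\mu}[T_i]\ge c^2/(2(r+1))$; writing $T_i=\ell^i\,\E_{a\in\binom{[n]}{i}}\big[\sum_s R_{s,a}V_{s,a}\big]$ with $V_{s,a}:=\E_{u\in J|_a}[G_s(u)\val^a_u(X\wedge X')]$ (and $T_0=\sum_s p_{\eps_0,\nu}(\delta(G_s))V_{s,\emptyset}$), an averaging over the $i$-restrictions $a$ and over $s\in\Sigma$ produces an $i$-restriction $a$ and a shift $s$ with $\pE_{\mu\times\mu}[R_{s,a}V_{s,a}]\ge\Omega(c^2/(r\ell^i|\Sigma|))$.

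Next I would set $P_a:=R_{s,a}$. By construction $R_{s,a}$ is a product of $\tO(1/\nu)$-degree SoS approximate indicators (Theorem~\ref{thm:step-approx}), hence itself a degree-$\tO(1/\nu)$ SoS polynomial bounded in $[0,1]$ — Point~1. Since $G_s(u),\val^a_u(X\wedge X')\in[0,1]$ with constant-degree SoS proofs, $V_{s,a}\le 1$ and so $\pE_{\mu\times\mu}[P_a]\ge\pE_{\mu\times\mu}[P_a V_{s,a}]\ge\Omega(c^2/(r\ell^i|\Sigma|))$. Now apply the Raghavendra--Tan-type Lemma~\ref{lem:ragh-tan-appln} with the polynomial $E(X,X')=R_{s,a}V_{s,a}$ (which satisfies $\cA_I\vdash E\le 1$), $p=\Omega(c^2/(r\ell^i|\Sigma|))$, $S=V(J|_a)$ and the given $\tau$: this produces $\mu_1,\mu_2$, obtained by conditioning on $O(\log|\Sigma|/(p\tau))=\ell^{O(i)}|\Sigma|^2/(c^2\tau)$ many $Y$-coordinates, satisfying Points~4 and~5 and with $\pE_{\mu_1\times\mu_2}[P_a V_{s,a}]\ge p/2$; Point~3 for $\mu_1\times\mu_2$ then follows as before.

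The remaining task — Point~2 — is to deduce $\pE_{\mu_1\times\mu_2}[\Phi^a_{\beta,\nu}(X,X')P_a]\ge\tfrac{c^{\Omega(r)}}{\exp(r^2)}\pE[P_a]$. Reweighting $\pE_{\mu_1\times\mu_2}$ by $P_a=R_{s,a}$ gives a pseudodistribution under which $\widetilde{\E}'[V_{s,a}]\ge p/2$. I would then prove, as a constant-degree SoS inequality modulo $\cA_I$, that for $\beta$ small enough the density $\delta(F^a_s|_a)=\E_{u\in J|_a}[G_s(u)\,p_{\beta,\nu}(\val^a_u(X))\,p_{\beta,\nu}(\val^a_u(X'))]$ of the \emph{subcube} shift-partition part is, up to $O(\nu)$ error, at least $\Omega(\beta)$ whenever $V_{s,a}$ is not too small: on vertices where $\val^a_u(X\wedge X')$ exceeds a threshold comparable to $V_{s,a}$ one has $\val^a_u(X),\val^a_u(X')$ above that threshold as well (using $\val^a_u(X)\ge\val^a_u(X\wedge X')$ in SoS), hence $p_{\beta,\nu}(\cdot)\ge 1-\nu$ there, and a reverse-Markov estimate recovers a mass proportional to $V_{s,a}$. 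To make this SoS-legitimate one folds an approximate indicator of ``$V_{s,a}\gtrsim p$'' into $P_a$ (so $P_a$ still ranges over a fixed set of $O(|\Sigma|)$ polynomials), which is precisely what forces the tiny choice $\beta=(\poly(c)/\exp(r))^{r-i+1}$. With $\delta(F^a_s|_a)\gtrsim\beta$ in hand, I would conclude exactly as in the proof of Lemma~\ref{lem:sp-j-restated}: reweight by $P_a$, apply Cauchy--Schwarz (Lemma~\ref{CS2-prelim}) to pass from $\widetilde{\E}'[\delta(F^a_s|_a)]\gtrsim\beta$ to $\widetilde{\E}'[\delta(F^a_s|_a)^2]\gtrsim\beta^2$, and note $\Phi^a_{\beta,\nu}=\sum_{s'}\delta(F^a_{s'}|_a)^2\succeq\delta(F^a_s|_a)^2$, adding the nonnegative $s'\ne s$ terms to obtain Point~2 with $c^{\Omega(r)}/\exp(r^2)\asymp\beta^2$. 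The degree requirement is then the maximum of the $\ell^{\poly(\ell^r/\nu)}$ contributed by the SoS-ification inside Theorem~\ref{thm:edge-covering-restated} (via Theorem~\ref{thm:blackbox-sos} and Claim~\ref{claim:dense-subcubes}) and the $\tfrac{\ell^{O(r)}\poly(|\Sigma|)}{\poly(c)\tau}$ contributed by the conditioning above, matching the stated bound.

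The main obstacle is the passage just sketched for Point~2: the edge-covering theorem naturally delivers a density \emph{weighted by the doubly-satisfied subcube value} $\val^a_u(X\wedge X')$, whereas $\Phi^a_{\beta,\nu}$ is built from the threshold polynomials $p_{\beta,\nu}(\val^a_u(X))p_{\beta,\nu}(\val^a_u(X'))$; reconciling the two needs both a correctly tuned (very small) $\beta$ and a reverse-Markov/indicator argument pushed entirely through constant-degree SoS, and only after that does the Cauchy--Schwarz ``squaring'' step go through. A secondary nuisance is choosing $\eps_0,\dots,\eps_r,\nu,r$ so that ${\sf err}\le c^2/2$ while keeping all SoS degrees — which are exponential in the relevant approximation degrees via Theorem~\ref{thm:blackbox-sos} — finite.
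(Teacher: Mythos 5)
Your overall route is the same as the paper's: set the parameters so that the error term in the edge-covering theorem (Theorem~\ref{thm:edge-covering-restated}) is at most $c^2/2$, apply $\pE_{\mu\times\mu}$ together with Claim~\ref{claim:val-intersection-restated}, average to isolate a single pair $(a,s)$, take $P_a=R_{s,a}$, run Lemma~\ref{lem:ragh-tan-appln}, and finish with the reverse-Markov/indicator estimate plus Cauchy--Schwarz. However, there is a genuine gap in your derivation of Point~2, and it is exactly at the step you flag as the ``main obstacle.''

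After your direct averaging you only know $\pE_{\mu_1\times\mu_2}[P_a V_{s,a}]\ge p/2$ with $p=\Omega\bigl(c^2/(r\ell^i|\Sigma|)\bigr)$, and since you have no upper bound on $\pE[P_a]$ other than $1$, reweighting by $P_a$ only yields $\widetilde{\E}'[V_{s,a}]\ge p/2$, a quantity that decays like $\ell^{-i}|\Sigma|^{-1}$. Your threshold ``comparable to $V_{s,a}$'' (and hence your $\beta$, and hence the potential bound $\asymp\beta^2$ in Point~2) would then inherit this $\ell^{-i}|\Sigma|^{-1}$ dependence, contradicting the claimed $\ell$- and $|\Sigma|$-independent bound $c^{\Omega(r)}/\exp(r^2)$ --- and reintroducing precisely the deficiency of~\cite{BBKSS} that the lemma is designed to remove. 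Folding an approximate indicator of ``$V_{s,a}\gtrsim p$'' into $P_a$ does not repair this, since the guaranteed level is still only $p$. The missing idea is the threshold-subtraction step: before averaging over $(a,s)$, use Claim~\ref{claim:dense-subcubes} to write
\begin{equation*}
\frac{1}{\ell^i}\;\ge\;\frac{\eps_i^2}{5}\,\pE_{\mu\times\mu}\Bigl[\E_{a}\bigl[\textstyle\sum_s R_{s,a}\bigr]\Bigr],
\end{equation*}
split the lower bound $c^2/(2r\ell^i)$ on $\pE[T_i]/\ell^i$ into two halves, substitute the display into one half, and rearrange to obtain
\begin{equation*}
\pE_{\mu\times\mu}\,\E_{a}\Bigl[\textstyle\sum_s R_{s,a}\bigl(V_{s,a}-\tfrac{c^2\eps_i^2}{20r}\bigr)\Bigr]\;\ge\;\frac{c^2}{4r\ell^i}.
\end{equation*}
Averaging \emph{this} quantity and feeding $E=R_{s,a}\bigl(V_{s,a}-\tfrac{c^2\eps_i^2}{20r}\bigr)$ into Lemma~\ref{lem:ragh-tan-appln} gives, after reweighting by $R_{s,a}$, the conditional lower bound $\widetilde{\E}'[V_{s,a}]\ge c^2\eps_i^2/(20r)$ \emph{independently of} $\pE[P_a]$, $\ell$ and $|\Sigma|$; one then sets $\beta=c^2\eps_i^2/(40r)$ and the rest of your argument (the $x\le p_{\beta,\nu}(x)+\beta+O(\nu)$ estimate, Cauchy--Schwarz, and adding the nonnegative $s'\ne s$ terms) goes through and yields the stated $\ell$-independent constant. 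Point~3 is unaffected, since $\pE[P_a]\ge\pE[E]\ge c^2/(8r\ell^i|\Sigma|)$ still holds.
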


\begin{proof}
Our strategy will be to use the edge-covering theorem. We will apply $\pE_{\mu \times \mu}$ on both sides of the inequality and use Claim~\ref{claim:val-intersection-restated} to say that $\pE[\val_I(X \wedge X')]$ is large. 

We will set parameters in the end of the proof such that: $r$ is large enough so that $4(1-\alpha)^{r+1} \leq c^2/6$, $\nu$ is small enough such that $|\Sigma|\ell^{O(r)}\sqrt{\nu} \leq c^2/6$ and $\eps_i$'s satisfy $2^{6r}\max_{i}\frac{\eps_{i-1}}{\eps_i^4} \leq c^2/6$ and further $\eps_i$'s satisfy the conditions of Theorem~\ref{thm:edge-covering-restated}. So we get that the error term in Theorem~\ref{thm:edge-covering-restated} is small, 
\begin{equation}\label{eq:error}
{\sf err} = 4(1-\alpha)^{r+1} + 2^{6r}\max_{i}\frac{\eps_{i-1}}{\eps_i^4}+ |\Sigma|\ell^{O(r)}\sqrt{\nu} \leq \frac{c^2}{2}.
\end{equation}

Applying $\pE_{\mu \times \mu}$ on both sides of Theorem~\ref{thm:edge-covering-restated} (assuming $\deg(\mu) \geq \ell^{\poly(\ell^r/\nu)}$) and 
using Claim~\ref{claim:val-intersection-restated} to lower bound pseudoexpectation of the LHS we get:
\[
c^2
\leq
\pE_{\mu \times \mu}[\val_I(X \wedge X')]
\leq \sum_{i \in [r]} \pE_{\mu \times \mu}[T_i(X,X')]+{\sf err}
\leq \sum_{i \in [r]} \pE_{\mu \times \mu}[T_i(X,X')] + \frac{c^2}{2},\]
Therefore we get that there is an $i \in [r]$ for which $\pE_{\mu \times \mu}[T_i(X,X')] \geq c^2/2r$, and so
\begin{equation}\label{eq:low-comp1}
\pE_{\mu \times \mu}\E_{a \sim {[n] \choose i}}[\sum_s R_{s,a} \E_{u \sim J|_a}[G_s(u)\val^a_u(X \wedge X')]] \geq \frac{c^2}{2r\ell^i}.
\end{equation}

We aim to condition $\pE_{\mu \times \mu}$ on an appropriate polynomial (as in the proof of Lemma~\ref{lem:sp-j-restated}) so that the resulting pseudoexpectation operator satisfies: $\pE'[\E_{u \sim J|_a}[G_s(u)\val^a_u(X \wedge X')]]] \geq \Omega(c^2/r)$ for some specific $i$-restriction $a$. To do so we apply $\pE_{\mu \times \mu}$ to Claim~\ref{claim:dense-subcubes} and sum up over $s$, to get
\[\pE_{\mu \times \mu}\left[\E_{a \sim {[n] \choose i}}\sum_s R_{s,a}\right] \leq \frac{4}{\eps_i^2 \ell^i} + O(|\Sigma|\nu) \leq \frac{5}{\eps_i^2\ell_i},\]
where for the last inequality we have assumed $\nu \leq O(1/\ell^i|\Sigma|)$.

We split the RHS in equation~\ref{eq:low-comp1} to $c^2/4r\ell^i + c^2/4r\ell^i$, where in the first term we further use the above equation: $1/\ell^i \geq \frac{\eps_i^2}{5}\pE_{\mu \times \mu}[\E_{a \sim {[n] \choose i}}[\sum_s R_{s,a}]]$ to get that,
\begin{equation*}
\pE_{\mu \times \mu}\E_{a \sim {[n] \choose i}}\left[\sum_s R_{s,a} \E_{u \sim J|_a}[G_s(u)\val^a_u(X \wedge X')]\right] \geq \frac{c^2\eps_i^2}{20r}\pE_{\mu \times \mu}\left[\E_{a \sim {[n] \choose i}}[\sum_s R_{s,a}]\right] + \frac{c^2}{4r\ell^i},
\end{equation*}
and by re-arranging 
\begin{equation*}
\pE_{\mu \times \mu}\E_{a \sim {[n] \choose i}}\left[\sum_s R_{s,a} \left(\E_{u \sim J|_a}[G_s(u)\val^a_u(X \wedge X')] - \frac{c^2\eps_i^2}{20r}\right)\right] \geq \frac{c^2}{4r\ell^i}.
\end{equation*}
By averaging we get an $a \in {[n] \choose i}, s \in \Sigma$ such that:
\begin{equation}\label{eq:low-comp2}
\pE_{\mu \times \mu}\left[R_{s,a} \left(\E_{u \sim J|_a}[G_s(u)\val^a_u(X \wedge X')] - \frac{c^2\eps_i^2}{20r}\right)\right] \geq \frac{c^2}{4r\ell^i|\Sigma|}.
\end{equation}
The last result is the same as equation~\eqref{eq:sp-j-2}
in the proof of Lemma~\ref{lem:sp-j-restated}, 
which is the analogous statement for the completeness s clsoe to $1$ case. The rest of the proof is essentially the same as the rest of the proof of Lemma~\ref{lem:sp-j-restated}, and is outlined below. 

Applying Lemma~\ref{lem:ragh-tan-appln} we get pseudodistributions $\mu_1,\mu_2$ with low global correlation (as required in points 4,5 of the Lemma) with:
\begin{equation}\label{eq:low-comp3}
\pE_{\mu_1 \times \mu_2}\left[R_{s,a} \left(\E_{u \sim J|_a}[G_s(u)\val^a_u(X \wedge X')] - \frac{c^2\eps_i^2}{20r}\right)\right] \geq \frac{c^2}{8r\ell^i|\Sigma|}.
\end{equation}

Letting $P_a(X,X') = R_{s,a}(X,X')$ it is easy to check that:
\[\cA_I \vdash_{\tO(\exp(i)/\nu)} P_a(X,X') \in [0,1], \text{ and }~~~ \pE_{\mu_1 \times \mu_2}[P_a(X,X')] \geq \frac{c^2}{8r\ell^i|\Sigma|},\]
which proves points 1 and 3 of the Lemma statement. The only thing remaining to check is that conditioning $\mu_1 \times \mu_2$ on $R_{s,a}$ results in large shift-partition potential on $J|_a$. Recall that the shift-partition potential on $J|_a$ is: $\phi^a_{\beta,\nu}(X,X') = \sum_s \E_{u \sim J|_a}[G_s(u)p_{\beta,\nu}(\val^a_u(X))p_{\beta,\nu}(\val^a_u(X'))]^2$. From equation~\eqref{eq:low-comp3} we have that conditioning $\mu_1 \times \mu_2$ on $R_{s,a}$ gives:
\[\pE_{\mu_1 \times \mu_2 | R_{s,a}}[\E_{u \sim J|_a}[G_s(u)\val^a_u(X \wedge X')]] \geq  \frac{c^2\eps_i^2}{20r}.\]

We can prove that for $x \in [0,1] \vdash_{\tO(1/\nu)} x \leq  p_{\beta,\nu}(x) + \beta + O(\nu)$ (using the standard machinery in Section~\ref{sec:apx-ind}). Applying this with $\val^a_u(X \wedge X')$ and $\beta = \frac{c^2\eps_i^2}{40r}$, and further noting that
$p_{\beta,\nu}(\val_u^a(X \wedge X')) \leq p_{\beta,\nu}(\val_u^a(X))p_{\beta,\nu}(\val_u^a(X')) + O(\nu)$ we get:
\[\pE_{\mu_1 \times \mu_2 | R_{s,a}}[\E_{u \sim J|_a}[G_s(u)p_{\beta,\nu}(\val_u^a(X))p_{\beta,\nu}(\val_u^a(X'))]] \geq  \frac{c^2\eps_i^2}{40r} - O(\nu).\]

Applying Cauchy-Schwarz we get:
\[\pE_{\mu_1 \times \mu_2 | R_{s,a}}[\E_{u \sim J|_a}[G_s(u)p_{\beta,\nu}(\val_u^a(X))p_{\beta,\nu}(\val_u^a(X'))]^2] \geq  \Omega\left(\frac{c^4\eps_i^4}{r^2}\right).\]

By adding the terms corresponding to other $s \in \Sigma$ we immediately get:
\begin{equation}\label{eq:potential-c0}
\pE_{\mu_1 \times \mu_2 | R_{s,a}}[\phi_{\beta,\nu}^a(X,X')] \geq \Omega\left(\frac{c^4\eps_i^4}{r^2}\right).
\end{equation}

We finish the proof by setting $r,\eps_i,\nu$ according to all the requirements above.

\paragraph{Setting parameters:} To make $\sf{err}$ in \eqref{eq:error} small we set $r = \Theta(\frac{\log c}{\log(1-\alpha)})$, $\nu = \frac{c^4}{|\Sigma|^2\ell^{O(r)}}$, $\eps_{j-1} = \eps_j^5/2^{6r}$ with $\eps_r = \Theta(c^2\exp(-r))$. We need $\beta = c^2\eps_i^2/40r$ which equals $(\poly(c)/\exp(r))^{r-i+1}$. We can set $\deg(\mu) = \ell^{\poly(\ell^r/\nu)} = \ell^{\poly(|\Sigma|\ell^r/c)}$ so that it is valid to apply the edge-covering theorem. Additionally when we apply Lemma~\ref{lem:ragh-tan-appln} we need $\deg(\mu) \geq \Omega(\log |\Sigma|/p\tau\nu)$ where $p = c^2/4r\ell^i|\Sigma|$ so it suffices to take $\deg(\mu) = \frac{\ell^{O(r)}\poly(|\Sigma|)}{\poly(c)\tau}$. One can check that these parameters are good enough to make the rest of the statements in the proof work out. From \eqref{eq:potential-c0} we get that the conditioned shift-partition potential is at least $\Omega(c^2\eps_i^4/r^2) \geq \frac{c^{O(r)}}{\exp(r)}$ and this completes the proof.
\end{proof}

\subsection{Proof of Theorem~\ref{thm:main_johnson}}
The algorithm for arbitrary completeness is the same as that for completeness close to $1$ albeit with different parameter settings for $r, \delta,D,\gamma$ for the main algorithm and $t,\tau,\beta,\nu$ for SubRound. Given Lemma~\ref{lem:sp-j-low-comp} the rest of the analysis of the algorithm remains exactly the same. We set these parameters below but omit the description of the algorithm. The guarantees of this algorithm are potentially worse (running time $\approx n^{\ell^{\poly(\ell)}}$ and approximation factor $\approx c^{O(\log 1/c)}$) than the one for completeness close to $1$, but the important thing to note is that the running time is polynomial in $n$ (as long as $\ell, |\Sigma|$ are $O(1)$) and the approximation factor only depends on $c$ and not on $\ell,n$ or $|\Sigma|$.

\begin{theorem}\label{thm:main-johnson-arbit-comp}
For all positive constants $c \in (0,1]$, $\alpha \in \Q$, all integers $\ell \geq \Omega(r)$ with $\alpha \ell \in \N$ and $r =  \Theta(\frac{\log c}{\log(1-\alpha)})$ and $n$ large enough, Algorithm~\ref{alg:j} (with modified parameter settings given below) has the following guarantee: If $I$ is an instance of affine Unique Games on the $(n,\ell,\alpha)$-Johnson graph $J$ with alphabet $\Sigma$ and $\val(I) = c$, then in time $|V(J)|^{D}$ Algorithm~\ref{alg:j} returns an $\Omega\left(\frac{c^{\Omega(r)}}{\exp(r^2)}\right)$-satisfying assignment for $I$, with $D = \frac{\ell^{\poly(|\Sigma|\ell^r/c)}}{c^{O(r)}}$.
\end{theorem}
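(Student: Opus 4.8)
The plan is to run exactly the same Algorithm~\ref{alg:j} as in the completeness-close-to-$1$ case, only replacing every invocation of Lemma~\ref{lem:sp-j} by its low-completeness counterpart Lemma~\ref{lem:sp-j-low-comp}, and then re-tracing the proof of Theorem~\ref{thm:main-johnson} step by step with the new quantitative parameters. First I would fix $r = \Theta(\log c / \log(1-\alpha))$ so that $4(1-\alpha)^{r+1} \le c^2/6$, which is exactly the parameter regime in which Lemma~\ref{lem:sp-j-low-comp} applies; this is the analog of $r = \lfloor 64\sqrt{\eps}/\alpha\rfloor$ from the high-completeness proof. Then I would solve the degree-$D$ SoS relaxation of program~\eqref{eq:ip} with $D = \ell^{\poly(|\Sigma|\ell^r/c)}/c^{O(r)}$, which is large enough both to apply the edge-covering theorem (Theorem~\ref{thm:edge-covering-restated}, needing degree $\ell^{\poly(\ell^r/\nu)}$ with $\nu = c^4/(|\Sigma|^2\ell^{O(r)})$) and to perform the global-correlation-reduction conditioning of Lemma~\ref{lem:ragh-tan-appln} (needing an extra $\widetilde\Omega(\log|\Sigma|/(p\tau\nu))$ levels, where $p = \Omega(c^2/(r\ell^i|\Sigma|))$).

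Next I would feed the resulting pseudodistribution $\mu$ into SubRound. By Lemma~\ref{lem:sp-j-low-comp} there is an $\le r$-restricted subcube $a$, a polynomial $P_a(X,X')$, and conditionings producing $\mu_1\times\mu_2$ with: shift-partition potential on $J|_a$ after conditioning on $P_a$ at least $\gamma := c^{\Omega(r)}/\exp(r^2)$, probability $\pE_{\mu_1\times\mu_2}[P_a]\ge p = \Omega(c^2/(r\ell^i|\Sigma|))$, and global correlation $\le\tau$ for both $Y$ and $Y'$ families. Choosing $\tau = \Theta(\gamma^2 p \exp(-r)\,\delta^{-2})$ with $\delta = \Theta(\gamma)$, Lemma~\ref{lem:correlation} gives $\Pr_{u,v}[TV((Y_{u,v},Y'_{u,v})|P_a,(Y_{u,v},Y'_{u,v}))\ge\delta]\le O((\sqrt\tau+1/|S|)/(p\delta^2)) =: \zeta$, with $\zeta$ small compared to $\gamma$. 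Then the rounding lemma (Lemma~\ref{lem:round-j}, with $E = P_a$, $\beta$ as in Lemma~\ref{lem:sp-j-low-comp}, and $G = J|_a$) produces an assignment to $V(J|_a)$ of expected value $(\beta-\nu)^2(\gamma - O(\delta+\zeta) - 1/|V(J|_a)|) - 3\nu(\beta-\nu) = \Omega(\beta^2\gamma) = \Omega(c^{\Omega(r)}/\exp(r^2))$ on one of $\mu_1^{\sym},\mu_2^{\sym}$; derandomizing Condition\&Round makes this deterministic. Finally I would invoke the iteration Lemma~\ref{lem:subroutine} with $c$ the instance value, $\gamma = c^2/2$ (so instances of value $\ge c - \gamma = c/2$ are still handled), $\delta = \Omega(c^{\Omega(r)}/\exp(r^2))$ the SubRound value, and this same $r$; since $(1-\alpha)^r = \Theta(c)$, the lemma yields a global solution of value $\Omega(\delta\gamma(1-\alpha)^r) = \Omega(c^{\Omega(r)}/\exp(r^2))$, in time $O(|V(J)|T(\cA) + |V(J)|^3) = |V(J)|^{D}$.

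The only genuinely new ingredient — and the main obstacle — is replacing Lemma~\ref{lem:sp-j} by Lemma~\ref{lem:sp-j-low-comp}; everything downstream (the global-correlation reduction, the ``conditioning does not introduce correlations'' lemma, the rounding lemma, and the iteration lemma) is graph-agnostic and applies verbatim with the new parameters. Inside Lemma~\ref{lem:sp-j-low-comp} the crux is the Edge-Covering Theorem (Theorem~\ref{thm:edge-covering-restated}): in the high-completeness regime one could afford to pass through the \emph{global} shift-partition potential and Claims~\ref{claim:subcube-expanse},~\ref{claim:potentials}, because subcubes of radius $O(\sqrt\eps/\alpha)$ expand by only $O(\sqrt\eps)$, so high global value implies high value inside the subcube; when $c$ is bounded away from $0$ but small, subcubes can have expansion close to $1$ and this reasoning fails, so one must instead directly cover almost all internal edges of each shift-partition part $G_s$ by dense subcubes $R_{s,a}$ — splitting edges into (i) inside a dense subcube, (ii) between two dense subcubes, (iii) inside the pseudorandom remainder $H_s$, (iv) between $H_s$ and $D_s$ — and bound (ii)–(iv) via Claim~\ref{claim:dense-subcubes} and the Expansion Theorem~\ref{thm:structure-johnson}. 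The delicate points are the SoS-ification of these combinatorial counting bounds (handled by Theorem~\ref{thm:blackbox-sos}, at the cost of the $\exp(\ell^i/\nu)$ degree blowup that explains the stated running time), the careful hierarchical definition of the $\eps_i$'s with $\eps_{i-1}\le\eps_i^5/2^{6r}$ to control overcounting, and the Booleanity error $\E_c[\sum_s B(H^c_s)]\le|\Sigma|\ell^{O(r)}\sqrt\nu$ coming from the fact that the $R_{s,a}$ are only approximate indicators. Once Lemma~\ref{lem:sp-j-low-comp} is in hand, the proof of the theorem is a parameter-tracking exercise identical in structure to that of Theorem~\ref{thm:main-johnson}.
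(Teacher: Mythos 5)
Your plan is the paper's proof: run Algorithm~\ref{alg:j} with Lemma~\ref{lem:sp-j-low-comp} substituted for Lemma~\ref{lem:sp-j}, then chain Lemma~\ref{lem:correlation}, Lemma~\ref{lem:round-j}, and the iteration Lemma~\ref{lem:subroutine} with $r=\Theta(\log c/\log(1-\alpha))$, $\gamma=\Theta(c^2)$, and $\delta=c^{\Omega(r)}/\exp(r^2)$, which is exactly how Theorem~\ref{thm:main-johnson-arbit-comp} is established. One small slip: your explicit choice $\tau=\Theta(\gamma^2 p\exp(-r)\delta^{-2})$ simplifies to $p\exp(-r)$ and would make $\zeta = O(\sqrt{\tau}/(p\delta^2))$ blow up like $\ell^{i/2}$ rather than stay below $\gamma$, but the criterion you state next to it (pick $\tau$ so that $O((\sqrt{\tau}+1/|S|)/(p\delta^2))$ is small relative to $\gamma$) is the right one and is what the paper implements with $\tau = c^{O(r)}/(\exp(r^2)\ell^{O(i)}|\Sigma|)$.
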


\begin{proof}
Let $r = \Theta(\frac{\log c}{\log(1-\alpha)})$. Given the $i \leq r$-restricted subcube $a \subseteq [n], P_a(X,X')$ and $\mu_1 \times \mu_2$ from Lemma~\ref{lem:sp-j-low-comp} we have that $\Phi_{\beta,\nu}^a(\mu_1 \times \mu_2| P_a) \geq \frac{c^{\Omega(r)}}{\exp(r^2)}$, therefore we apply Lemma~\ref{lem:correlation} with $S = J|_a$, the polynomial  $E(X,X') = P_a(X,X')$ and the parameters $\beta = \left(\frac{\poly(c)}{\exp(r)}\right)^{r-i+1}, \nu = \frac{c^4}{|\Sigma|^2\ell^{O(r)}}, \delta = \frac{c^{O(r)}}{\exp(r^2)}, p = \Omega\left(\frac{c^2}{8r\ell^i|\Sigma|}\right)$ and $\tau = \frac{c^{O(r)}}{\exp(r^2)\ell^{O(i)}|\Sigma|}$. The parameter $\tau$ has been chosen so that $O(\frac{\sqrt{\tau}+1/|S|}{p\delta^2}) = \frac{c^{O(r)}}{\exp(r^2)}$ and therefore we can apply the rounding lemma (Lemma~\ref{lem:round-j}) with $\zeta = \frac{c^{O(r)}}{\exp(r^2)}$ and $\gamma = \frac{c^{\Omega(r)}}{\exp(r^2)}$ and the same settings of $\beta, \nu, \delta$. This shows that there exists a subcube $a \in {[n] \choose i}$ with SubRound value that is at least $\frac{c^{\Omega(r)}}{\exp(r^2)}$ if $\deg(\mu) \geq  \frac{\ell^{O(r)}\poly(|\Sigma|)}{\poly(c)\tau}+ \ell^{\poly(|\Sigma|\ell^r/c)}$. Hence it suffices to have degree of $\mu$ equal to $D = \frac{\ell^{\poly(|\Sigma|\ell^r/c)}}{c^{O(r)}}$.

By using SubRound as a subroutine, we finish the proof of this theorem by applying the iteration Lemma~\ref{lem:subroutine} with completeness $c$, $\gamma = c^2$, $\delta = \frac{c^{\Omega(r)}}{\exp(r^2)}$ and $r = \Theta(\frac{\log c}{\log(1-\alpha)})$ so that $(1-\alpha)^r = \poly(c)$. We can check that SubRound (with the parameter settings above) satisfies the hypotheses of the lemma: it finds a $\le r$-restricted subcube and an assignment to it with value $\Omega(\delta)$ in time $|V(J)|^{D}$ (where $D$ is the degree bound above). This gives us that Algorithm~\ref{alg:j} outputs an assignment of value at least $\Omega\left(\frac{c^{\Omega(r)}}{\exp(r^2)}\right)$ in time $|V(G)|^{O(D)}$.
\end{proof}

\section{Affine Unique Games on other Globally Hypercontractive Graphs}\label{sec:other_graphs}
In this section, we explain how to adapt our techniques from Sections~\ref{sec:johnson_1},~\ref{sec:johnson_2} for 
general globally hypercontractive graphs. We first give a semi-formal definition of global hypercontractivity and go over the steps of our algorithm, noting that even though we stated it for Johnson graphs earlier, it works for general graphs as long as we have an SoS certificate of global hypercontractivity. Later we give the specifics for the proofs of Theorems~\ref{thm:main_grass} and~\ref{thm:main_hdx}: UG algorithms for the Grassmann graph and random walks over HDXs.

\subsection{The Components that Go Into the Algorithm}\label{sec:abstraction}
The $4$-components that are used in 
our algorithm proving Theorem~\ref{thm:main_johnson} 
and the interaction between them can be abstracted as follows.
\begin{enumerate}
    \item \textbf{Global hypercontractivity.}\label{def:global-hypcont} 
    The first component going into our algorithm is a global hypercontractive inequality (or a consequence of it).
    This is a result asserting that 
    for our underlying graph $G$, for all $\gamma>0$ there are $r\in\mathbb{N}$, 
    $\eps_0,\ldots,\eps_r,\zeta,\eta>0$ and a collection 
    $\mathcal{C} = \mathcal{C}_0\cup\ldots\cup\mathcal{C}_r$ of sets of vertices (these are the combinations 
    of the basic sets) satisfying the following properties: 
    \begin{enumerate}
        \item 
        \textbf{$\mathcal{C}$ consists of non-expanding sets:} 
        The edge expansion of each $C\in\mathcal{C}$ in $G$ is $1-\zeta$.
        \item 
        \textbf{$\mathcal{C}$ explains all small sets that 
        do not expand well.} A basic form of this property 
        is that sets $S\subseteq V(G)$ that are not dense in 
        any set in $\mathcal{C}$, in the sense that $\delta(S\cap C)\leq \eps_1\delta(C)$
        for all $C\in\mathcal{C}$, have expansion close to $1$, that is, $\Phi(S)\geq 1-\gamma$.
        
        A stronger version of this property, that often follows from the above (as is the case of all graphs of interest in this paper) and that is used in the proof of Theorem~\ref{thm:main_johnson_close_to_1}, is that 
        one can capture a constant fraction of $S$ using $C\in\mathcal{C}$ in which $S$ is dense. 
        Equivalently, this property says that for a set of vertices $S\subseteq V(G)$, taking $\mathcal{C}' = \{C\in\mathcal{C}~|~C\in\mathcal{C}_i,~\delta(S\cap C)\geq \eps_i \delta(C)\}$ and defining $C' = \bigcup_{C\in \mathcal{C}'}C$, if $\delta(S\cap C')\leq \eta \delta(S)$ then
        $\Phi(S)\geq 1-\gamma$.
        
        An even stronger form of global hypercontractivity, that again follows from the most basic form in our cases of interest, is the edge covering theorem. 
        We used this form of global hypercontracitivity in the proof of Theorem~\ref{thm:main_johnson}. 
        Informally, this result says that not only a substantial chunk of $S$ can be covered by non-expanding sets from $\mathcal{C}$ in which $S$ is dense, but in fact one can capture almost all of the edges staying inside $S$ using $C\in\mathcal{C}$. More precisely, 
        this result asserts that 
        we may take $\mathcal{C}''\subseteq \mathcal{C}'$ 
        such that the following properties hold:
        \begin{enumerate}
            \item Edge covering: 
        sampling an edge $(u,v)$ inside $S$, we have that 
        except with probability $\xi$ it holds that $(u,v)$ 
        is an edge inside some $C''\subseteq \mathcal{C}''$.
            \item Maximally dense: in words, this means that 
            every $C''\in\mathcal{C}''$ has no $C\supsetneq C''$ 
            in $\mathcal{C}$ in which $S$ is somewhat dense. 
            Formally, we require that for all $i=1,\ldots,r$, 
            $C''\in\mathcal{C}''\cap \mathcal{C}_i$, 
            $j<i$ and $C_j\in\mathcal{C}_j$ such that $C_{j}\supseteq C''$, it holds that 
            $\delta(S\cap C_j) < \eps_j \delta(C_j)$. 
        \end{enumerate}
    \end{enumerate}
    The edge covering theorem allows us to argue that in our shift partition
    $F_s(X,X') = \{v\in V~|~ X(v) - X'(v) = s\}$ we may capture
    almost all edges that stay within some $F_s$ using the basic 
    sets from $\mathcal{C}$. In particular, since edges that both $X$ and $X'$ satisfy constitute a constant fraction of edges and they all stay within some $F_s$, it means that we may capture almost all of them using sets from $\mathcal{C}$. 
    \item  By elementary arguments, it follows
    that there is such a $C\in\mathcal{C}$ and $s$ such 
    that $F_s$ 
    becomes dense in $C$ with noticeable probability and inside $C$ a constant fraction of the edges are satisfied by both $X$ and $X'$. In our argument we consider this event 
    (this was approximated by the polynomial $P_a(X,X')$ in our algorithm to stay within 
    the realm of \sos). As we explained in the introduction
    this conditioning may create correlation between $X$ and 
    $X'$ that would cause our rounding technique not to work. 
    
    To circumvent this, we showed that: 
    \begin{enumerate}
        \item \textbf{Eliminating global correlations:} using conditioning (as in~\cite{RT12}) we are able to make sure that the two collection 
    of random variables $\{Z_{u,v}\}_{u,v\in C}, \{Z'_{u,v}\}_{u,v\in C}$ of interest 
    (that are concerned with local information about the assignments $X$ and $X'$ on vertices $u,v$) are roughly
    pairwise independent.
        \item \textbf{Conditioning mostly preserves independence:} As the probability of the event that
        $C$ becomes dense in $F_s$ is noticeable, and $\{Z_{u,v}\}_{u,v\in C}$,$\{Z'_{u,v}\}_{u,v\in C}$ were pairwise-independent to begin with, we argue 
        that it cannot create too many dependencies between
        $(Z_{u,v}, Z'_{u,v})$.
    \end{enumerate} 
    
    \item \textbf{Rounding.} We then gave a rounding procedure on a subcube $C$ and a pseudodistribution $\cD$, where on average (over $\cD$) our supposed assignments $X,X'$ satisfy a constant fraction of the edges (formalized via the notion of
    shift-partition potential), some part $F_s$ of the shift partition is large, and on which we have near 
    independence between certain random variables 
    $\{(Z_{u,v},Z'_{u,v})\}_{u,v}$. The output of the rounding procedure is an assignment to the vertices of $C$ that satisfies a constant fraction of the constraints inside it.
    
    \item \textbf{An iteration result.} By that, we mean a result saying that if we have a procedure as above that is able to find a $C\in\mathcal{C}$ and find an assignment to the vertices of $C$ that satisfies a constant fraction of the constraints in it, then we can iterate it to satisfy a constant fraction 
    of the constraints in the whole graph. 
    
    This is achieved by randomizing the edges incident to $C$, which ensures that any subsequent $C'$ we find can only have negligible overlap with the $C$'s we have found thus far.
    The idea is that any $C$ that we found has at least 
    $\delta$ of the edges touching it staying inside it, 
    out of which we manage to satisfy a constant fraction, 
    hence overall we manage to satisfy a constant fraction
    of the edges touching $C$. In return for that, 
    we randomize the edges touching $C$ hence effectively giving up on the objective value coming from those edges. The point is that since the ratio between the number of edges we manage to satisfy, and the number of edges we give up on is $\Omega(1)$, after a few iterations we will manage to satisfy a 
    constant fraction of the edges before we drop the objective value of our instance by too much.
\end{enumerate}

Our of these $4$ components, the second, third and 
fourth components work as is for any of the graphs that we 
care about. Indeed, the elimination of correlations and 
conditioning are completely generic, and the rounding 
procedure only hinges on the constraints being affine. 
As for the fourth component, it only relies on the properties
of the family $\mathcal{C}$ guaranteed from our structure 
theorem and otherwise is also generic.
\skipi
We next discuss the first component, which is a globally 
hypercontractive inequality, and for concreteness we 
consider the most basic version of it. This is a result
saying that a set $S$ which is not dense inside any 
$C\in \mathcal{C}$ has large expansion, and in 
all of our graphs of interest such results are known
~\cite{KMMS,KhotMS18,filmus2020hypercontractivity,keevash2021global,bafna2022hypercontractivity,gur2022hypercontractivity}. 
For such a result to give us an algorithm, we also require 
the proof of this part to be in the \sos proof system of
constant degree (which is the case in all of the above
examples, but in principle may not be the case). 

In the case of the Johnson graph the collection 
$\mathcal{C}$ is simply the collection of basic sets 
\[
\mathcal{C} = \left\{J|_a ~|~|a|\leq r\right\},
\qquad
\text{where }
J|_a = \left\{u\in\binom{[n]}{\ell}~\Big|~u\supseteq a\right\},
\]
the basic form of the global hypercontractive inequality is 
from~\cite{KMMS} (in the appendix we give an \sos version of 
that proof), and we have shown in Section~\ref{sec:johnson_2} how to conclude from it the edge covering theorem in a black-box manner.

\subsection{The Algorithm for the Grassmann Graph}
We now explain the first component above 
for the Grassmann graph, which is established in~\cite{KhotMS18}. Recall that the Grassmann graph 
${\sf Grass}(n,\ell)$ is the graph whose vertices are all
$\ell$-dimensional subspaces $L\subseteq \mathbb{F}_2^n$, 
and $(L,L')$ is an edge if ${\sf dim}(L\cap L') = \ell-1$.

Inside the Grassmann graph ${\sf Grass}(n,\ell)$ we have the two basic sets
\[
Z_x = \left\{L~|~x\in L\right\}\text{ for $x\in \mathbb{F}_2^n$},
\qquad
Z'_W = \left\{L~|~L\subseteq W\right\}\text{ for 
a hyperplane $W\subseteq \mathbb{F}_2^n$},
\]
and using them we can define the collections $\mathcal{C}_0,\ldots,\mathcal{C}_r$ and $\mathcal{C} = \mathcal{C}_0\cup\ldots\cup\mathcal{C}_r$ as above. Indeed, 
for a subspace $X\subseteq \mathbb{F}_2^n$ we define 
$Z_X = \cap_{x\in X}Z_x$ and for a subspace $W\subseteq \mathbb{F}_2^n$ we define $Z'_W=\cap_{W'\supseteq W} Z_{W'}$ 
and then
\[
\mathcal{C}_{m} = 
\bigcup_{m_1+m_2 = m}
\left\{Z_{X}\cap Z'_{W}~\Big|{\sf dim}(X) = m_1, {\sf dim}(W) = n-m_2, X\subseteq W\right\}.
\]
The result of~\cite{KhotMS18} shows that for all $\gamma>0$, 
there are $r\in\mathbb{N}$ and $\eps>0$ 
such that if $S$ is a set of 
vertices in ${\sf Grass}(n,\ell)$ such that $\delta(S\cap C)\leq \eps \delta(C)$ for all $C\in\mathcal{C}$, then $\Phi(S)\geq 1-\gamma$. In fact, the same proof shows that a slightly more general statement holds for bounded functions, 
asserting that if $F\colon V({\sf Grass}(n,\ell))\to [0,1]$ satisfies that 
$\E_{x}[F(x)^2 1_{x\in C}]\leq \eps \E_{x}[1_{x\in C}]$, then 
\[
\langle F, T F\rangle\geq (1-\gamma)\E_x[F^2(x)].
\]
where $T$ is the normalized adjacency matrix of ${\sf Grass}(n,\ell)$. Moreover their proof can be seen 
to be \sos-able since each step uses polynomial inequalities like the Parceval, Cauchy-Schwarz and H\"{o}lder's inequality on the variables $F$. We have used precisely these inequalities in Section~\ref{sec:fourier} to get an SoS certificate of global hypercontractivity for the Johnson graph (Theorem~\ref{thm:structure-johnson}). In fact, \cite{KhotMS18}'s proof is very similar to that of~\cite{KMMS} for the Johnson graph therefore one can similarly get an SoS certificate for the Grassmann.

From this result, one can follow along the lines of Theorem~\ref{thm:edge-covering-restated} to deduce, 
in a black-box manner, an edge covering theorem for 
the Grassmann graph: 

\begin{theorem}
For all $\eta>0$ there 
are $r\in\mathbb{N}$ and
$\eps_0,\ldots,\eps_r>0$ such that 
given a set 
$S\subseteq {\sf Grass}(n,\ell)$ 
one may find $\mathcal{C}_i'\subseteq \mathcal{C}_i$ 
such that 
\begin{enumerate}
    \item For each $i$ and $C\in\mathcal{C}_i'$ it holds 
    that $\delta(S\cap C)\geq \eps_i\delta(C)$.
    \item For $\tilde{S} = \bigcup_{i=0}^{r} \bigcup_{C\in \mathcal{C}_i} C$, one has 
    \[
    \Pr_{(L,L')\in E({\sf Grass}(n,\ell))}\left[L,L'\in S\land 
    (L\not\in\tilde{S}\text{ or }L'\not\in\tilde{S})\right]\leq \eta\delta(S).
    \]
    \item For all $j<i$, $C_i\in\mathcal{C}_i'$ and 
    $C_j\in \mathcal{C}_j$ such that $C_j\supseteq C_i$ 
    it holds that $\delta(S\cap C_j)\leq \eps_j \delta(C_j)$.
\end{enumerate}
\end{theorem}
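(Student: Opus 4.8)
The plan is to derive the Grassmann edge-covering theorem from the basic global hypercontractive inequality of~\cite{KhotMS18} by mimicking, essentially verbatim, the argument used to prove Theorem~\ref{thm:edge-covering-restated} (the Johnson edge-covering theorem) out of Theorem~\ref{thm:structure-johnson}. The only input specific to the Grassmann graph is the statement that a function $F\colon V({\sf Grass}(n,\ell))\to[0,1]$ which is $\eps$-small on every $C\in\mathcal{C}$ (in the sense $\E_x[F(x)^2 \one_{x\in C}]\le \eps\,\delta(C)$) satisfies $\langle F, TF\rangle\ge(1-\gamma)\E_x[F^2]$; this is the analogue of the ``pseudorandom implies expanding'' consequence of Theorem~\ref{thm:structure-johnson}. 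Everything downstream is combinatorial bookkeeping on the lattice of subspaces rather than on the subset lattice.

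First I would fix $\eta>0$, choose $r$ and $\gamma$ from the Grassmann hypercontractivity statement so that the ``pseudorandom part'' of any set contributes at most $\eta\delta(S)/4$ worth of edges, and then pick the density thresholds $\eps_0,\ldots,\eps_r$ in a geometrically decreasing fashion (as in the hypotheses of Theorem~\ref{thm:edge-covering-restated}, $\eps_{i-1}\le \eps_i^{5}/2^{O(r)}$). Given a set $S$, define $\mathcal{C}_i'$ to be those $C\in\mathcal{C}_i$ on which $S$ is $\eps_i$-dense but \emph{not} $\eps_j$-dense on any $C_j\in\mathcal{C}_j$, $j<i$, with $C_j\supseteq C$ — this is precisely the ``maximally dense'' condition, and it is exactly the Grassmann analogue of the event $R_{s,a}$. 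The key quantitative ingredient is the Grassmann analogue of Claim~\ref{claim:dense-subcubes}: an upper bound on $|\mathcal{C}_i'|$ and, more importantly, on how many members of $\mathcal{C}_j$ can intersect a fixed $C\in\mathcal{C}_i'$ in a common ``base''. For the Johnson graph this used the Fourier-level decomposition and the bound $\E[f_{i,F}^2]=W^i[F]/\ell^i$; for the Grassmann graph one instead uses the harmonic analysis over subspaces (the Grassmann scheme / $q$-analogue levels) — here one can either invoke the level-$i$ weight bound directly or, since $q=2$ is fixed and $\ell$ is the only growing parameter, just use the crude counting that a maximally-dense $C$ of ``corank $i$'' forces its level-$i$ harmonic mass to be $\Omega(\eps_i^2)$, and $\sum$ of these masses is at most $\|F\|_2^2$.

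With the counting lemma in hand, I would decompose the edges inside each shift-part $G_s$ exactly as in~\eqref{eq:edge-cov1}: edges staying inside a single dense $C\in\mathcal{C}_i'$, edges between two distinct dense subcubes, edges inside the pseudorandom remainder $H_s = G_s\setminus\bigcup C$, and edges between $H_s$ and the dense part. The first type reassembles into the terms $T_i$; the pseudorandom-remainder terms are bounded by the Grassmann hypercontractivity inequality (using that $\langle H_s,TH_s\rangle$ is close to $\delta(H_s)$, and Cauchy–Schwarz $\langle D_s, TH_s\rangle\le\sqrt{\langle D_s,D_s\rangle\langle H_s,T^2H_s\rangle}$, just as in~\eqref{eq:third-term-ec}–\eqref{eq:fourth-term-ec}); and the cross-subcube term is controlled by the intersection-counting lemma together with the elementary bound on $\tfrac{|E(C,C')|}{|E({\sf Grass})|}$ for $C,C'$ of given coranks intersecting in a given base — the $q$-analogue of $(2\ell/n)^{i+j-h}$, which decays like $2^{-\mathrm{(corank count)}(n-\ell)}$ or similar. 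Summing over $s$ and over $i$ gives the claimed inequality with error $\eta\delta(S)$.

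The main obstacle I anticipate is getting the Grassmann analogue of Claim~\ref{claim:dense-subcubes} with a clean bound and, relatedly, the right estimate for $|E(C,C')|/|E({\sf Grass})|$ when $C,C'$ are intersections of point-type and hyperplane-type basic sets: unlike the Johnson case, a ``basic set'' here is indexed by a \emph{pair} $(X,W)$ with $X\subseteq W$, so one must track two coranks per set and the notion of ``common base'' is itself a pair of subspaces. Making sure the maximally-dense condition still kills the diagonal case ($C\subseteq C'$) and that the counting bound degrades gracefully under restriction to a sub-subcube (needed when one conditions on a common base, the step using Claim~\ref{claim:dense-subcubes} applied to $G_s|_H$) is the delicate part; everything else is a transcription of Section~\ref{sec:johnson_2}. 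Since the paper only needs the statement black-box and the $q$-analogue estimates are all standard (and $q$ is a constant), I expect no essential new idea is required beyond carefully redoing the bookkeeping — but it is exactly this bookkeeping that hides the real work.
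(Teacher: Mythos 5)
Your proposal is correct in spirit but substantially heavier than what the paper does, and it front-loads difficulties that this particular statement does not actually require. The paper's proof is a short operator argument: define $\mathcal{C}_i'$ by the maximal-density condition (so items (1) and (3) hold by construction), observe that $G=\one_{S\cap\overline{\tilde{S}}}$ is $(r,\eps_r)$-pseudorandom and hence $\Phi(S\cap\overline{\tilde{S}})\geq 1-\eta^2/100$ by the Grassmann hypercontractivity of~\cite{KhotMS18}, and then bound the left-hand side of item (2) by $2\langle G,TF\rangle$ and apply $\langle G,TF\rangle\leq \|TG\|_2\|F\|_2\leq\sqrt{\langle G,TG\rangle}\,\|F\|_2=\|G\|_2\sqrt{1-\Phi}\,\|F\|_2\leq\delta(S)\eta/10$, using that $T$ is PSD. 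This single Cauchy--Schwarz step absorbs both your ``edges inside $H_s$'' and ``edges between $H_s$ and $D_s$'' terms at once, and --- crucially --- the statement never mentions edges between two \emph{distinct} dense subcubes, so the Grassmann analogue of Claim~\ref{claim:dense-subcubes} and the $|E(C,C')|/|E({\sf Grass})|$ estimates that you (rightly) flag as the delicate bookkeeping are simply not needed here. Those ingredients only enter when one upgrades this black-box statement to the full SoS shift-partition analogue of Theorem~\ref{thm:edge-covering-restated} with the $T_i$ reassembly and cross-subcube error terms, which the paper defers to a separate remark with exactly the recipe you describe (geometrically separated $\eps_i$'s, maximally dense subcubes, and a repeat of the last-term calculation). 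So your route would work, but you are proving a stronger theorem than the one stated; for the stated one, the PSD-ness of the Grassmann adjacency operator lets you skip essentially all of Section~\ref{sec:johnson_2}'s combinatorics.
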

\begin{proof}[Proof sketch]
We let $\eps_0,\ldots,\eps_r$ to be determined; it will only
matter to us that $\eps_r$ is small enough. 
Take $\mathcal{C}_i'$ to be all $C_i\in \mathcal{C}_i$ 
such that $\delta(S\cap C_i)\geq \eps_i \delta(C_i)$ and 
for which there are no $j$ and $C_j\in\mathcal{C}_j$ 
such that $C_j\supseteq C_i$ and $\delta(S\cap C_j)\geq \eps_j \delta(S)$.

Let $F = 1_S$, $\tilde{F} = 1_{\tilde{S}}$, 
and note that $G = F(1-\tilde{F}) = 1_{S\cap \overline{\tilde{S}}}$ 
is $(r,\eps_r)$ pseudo-random as per the definition of~\cite{KhotMS18}, so $\Phi(S\cap \overline{\tilde{S}})\geq 1-\eta^2/100$ provided that $\eps_r$ is sufficiently small. Note that
\[
\Pr_{(L,L')\in E({\sf Grass}(n,\ell))}\left[L,L'\in S\land 
    (L\not\in\tilde{S}\text{ or }L'\not\in\tilde{S})\right]
    \leq 2\E_{(L,L')}[F[L]F[L'](1-\tilde{F}[L])],
\]
which is equal to $2\langle G, T F\rangle$, where 
$T$ is the adjacency operator of the Grassmann graph with 
self loops. 
We may bound:
\begin{align*}
\langle G, T F\rangle
=
\langle T G, F\rangle
\leq \|T G\|_2\|F\|_2
= \sqrt{\langle T G, TG\rangle} \|F\|_2
&\leq \sqrt{\langle G, TG\rangle} \|F\|_2\\
&=\|G\|_2\sqrt{1-\Phi(S\cap\overline{\tilde{S}})} \|F\|_2.
\end{align*}
We used the fact that $T$ is positive semi-definite. 
Clearly $\|G\|_2,\|F\|_2\leq \sqrt{\delta(S)}$ hence we get
$\langle G, T F\rangle\leq \delta(S)\frac{\eta}{10}$, 
and plugging that above gives the second item.
\end{proof}
Adjusting this result to the setting of the shift 
partition and phrasing it as an \sos statement 
in the standard way, one gets an analog of 
Theorem~\ref{thm:edge-covering-restated} for the 
Grassmann graph. For that, one has to address 
edges going across different $C_i$'s and 
show that they have a small contribution. 
For that one has to choose 
$\eps_{i}$ to be sufficiently smaller than 
$\eps_{i+1}$, and include in $\mathcal{C}_i'$ 
only $C_i\in\mathcal{C}_i$ that are maximally dense, and repeat a calculation analogous to the one in Theorem~\ref{thm:edge-covering-restated} when bounding the last term therein.

The rest of our algorithm then proceeds in exactly the same way.

\subsection{The algorithm for High Dimensional Expanders}\label{sec:ug-on-hdx}
In this section we consider UG instances defined over higher-order walks on two-sided local spectral expanders and discuss why our results for Johnson graphs (that can be seen as partial-swap walks over the complete complex) generalize to HDXs. We refer the reader to~\cite{dikstein2018boolean} for an excellent exposition on HDXs. 


For concreteness, let us consider a consider a two-sided local spectral expanding complex $X$ and the partial-swap walk over $X(\ell)$ with depth $\alpha$. This random-walk naturally corresponds to a graph $G$ over the vertices $X(\ell)$. Given $G$, analogous to the subcubes in the Johnson graph, there are a family of basic sets called the links of the complex. \cite{bafna2022hypercontractivity},~\cite{gur2022hypercontractivity} generalized the result of~\cite{KMMS} to prove global hypercontractivity on HDXs and in particular showed that any non-expanding set in $G$ must have large constant density inside a link. The parameters they get are the same as that in the statement of~\cite{KMMS} for the $\alpha$-noisy Johnson graphs. In fact the proof of~\cite{bafna2022hypercontractivity} proceeded exactly along the lines of~\cite{KMMS} and therefore can be easily seen to be SoS-able to get an analogue of Theorem~\ref{thm:structure-johnson} for HDXs. One can similarly also get the stronger edge-covering theorem for HDXs (analogous to Theorem~\ref{thm:edge-covering-restated}) and we omit the details here.

Fortunately for us,~\cite{DBLP:conf/soda/BafnaHKL22} generalized the algorithm of~\cite{BBKSS} to get a UG algorithm for HDXs albeit suffered the same two drawbacks as~\cite{BBKSS} -- an $\ell$-dependent soundness guarantee and reliance on completeness being close to $1$.~\cite{DBLP:conf/soda/BafnaHKL22} thus show the generality of the~\cite{BBKSS} framework and illustrate that it does not depend on the symmetry properties/regularity of Johnson graphs, since higher-order walks could be over irregular sparse and highly asymmetric graphs. We can similarly generalize the Johnson result to get an algorithm for UG on HDXs via the steps outlined in Section~\ref{sec:abstraction}. To summarize, component 1 therein -- an SoS certificate of global hypercontractivity for HDXs -- can be obtained using~\cite{bafna2022hypercontractivity}, components 2 and 3 work for all graphs, and  component 4 is an iteration result that we can obtain, akin to the iteration result in~\cite{DBLP:conf/soda/BafnaHKL22}.


\section*{Acknowledgements}
We thank Boaz Barak for insightful discussions and encouragement during initial stages of this project. 

\bibliographystyle{amsalpha}
\bibliography{references}

\appendix

\section{Proof of the Iteration Lemma}\label{sec:pf_it}
We give a proof of the iteration lemma. Let $\val_f(H)$ denote the fraction of edges inside $H$ that are satisfied by the assignment $f$.

\begin{lemma}\label{lem:subroutine-restated}
Let $c,\gamma,\delta \in (0,1]$ and $r \in \nn$. Let $I$ be an affine UG instance on alphabet $\Sigma$ on $J_{n,\ell,\alpha}$ with $\ell, n$ large enough  and value at least $c$. 
Suppose we have a subroutine $\cA$ which given as input any affine UG instance $I'$ on $J_{n,\ell,\alpha}$ with $\val_\mu(I') \ge c-\gamma$, returns an $\leq r$-restricted subcube $H$ on $J$ and a partial assignment $f$ such that, $\text{val}_{f}(H) \geq \delta$.
Then if $\cA$ runs in time $T(\cA)$, there is a $O(|V(J)|T(\cA)+|V(J)|^3)$-time algorithm which finds a solution for $I$ that satisfies an $\Omega(\delta\gamma(1-\alpha)^r)$-fraction of the  edges of $J$. 
\end{lemma}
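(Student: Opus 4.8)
The plan is to run the subroutine $\cA$ iteratively, each time extracting a not-too-large subcube $H_j$ together with a partial assignment $f_j$ that is $\delta$-good inside $H_j$, then permanently committing to that assignment on the vertices of $H_j$ that have not yet been assigned, and finally re-randomizing all edges touching $\bigcup_{k\le j}H_k$ before the next iteration. The re-randomization is the key device: after we have committed assignments on a set $R_j$ of vertices, replacing the constraints on edges incident to $R_j$ by random affine shifts produces a new instance $I_j$ whose optimum is still close to $c$ (the honest assignment that agrees with $f_k$ on the committed vertices and with an optimal assignment of $I$ elsewhere loses only the edges incident to $R_j$), so as long as $R_j$ stays small — say $|R_j|\le \tfrac{\gamma}{2}|V(J)|$ — we still have $\val(I_j)\ge c-\gamma$ and may legally invoke $\cA$ again. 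This is exactly the loop in Algorithm~\ref{alg:j}, and the run-time bound $O(|V(J)|T(\cA)+|V(J)|^3)$ is immediate: each iteration assigns at least one new vertex (else $H_j$ would be entirely previously-assigned, which we can rule out, see below), so there are at most $|V(J)|$ iterations, each costing $T(\cA)$ plus $O(|V(J)|^2)$ bookkeeping to re-randomize edges and update values.

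\textbf{Key steps, in order.} First I would set up the accounting: let $H_j$ be the subcube returned in iteration $j$, let $S_j\subseteq H_j$ be the vertices of $H_j$ not assigned by any $f_k$, $k<j$, and commit $f_j$ on $S_j$. Since $H_j$ is an $\le r$-restricted subcube of $J(n,\ell,\alpha\ell)$, its vertex-measure is at least $\bigl(\tfrac{\ell-r}{n}\bigr)^r \ge \Omega\bigl((\ell/n)^r\bigr)$ times $|V(J)|$; more usefully, by Claim~\ref{claim:subcube-expanse}-type reasoning the fraction of \emph{edges} of $J$ lying inside $H_j$ is at least $\Omega((1-\alpha)^r)\cdot\tfrac{|V(H_j)|}{|V(J)|}$ — actually we want the cleaner statement that the fraction of edges of $J$ incident to $H_j$ that lie \emph{inside} $H_j$ is $\ge (1-\alpha)^r$ (an edge from a vertex of $H_j$ stays in $H_j$ iff the swapped-out coordinates avoid the restriction set, which has probability $\ge \bigl(\tfrac{\ell-r}{\ell}\bigr)^{\alpha\ell}\ge(1-\alpha)^r$ roughly). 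The second step is a potential/charging argument. Call an iteration \emph{productive} if $|S_j|\ge \tfrac12|V(H_j)|$, i.e.\ at least half of $H_j$ is freshly assigned. I claim every iteration is productive until the loop terminates: if $|S_j|<\tfrac12|V(H_j)|$, then at least half of $H_j$ was already in $R_{j-1}$, so $|R_{j-1}|\ge \tfrac12|V(H_j)|$; but the loop only continues while $|R_{j-1}|<\tfrac{\gamma}{2}|V(J)|$, and combined with a lower bound $|V(H_j)|\ge$ (some fixed fraction of $|V(J)|$ depending on $r,\ell,n$) this is consistent — so I would instead handle non-productive iterations by simply noting they still assign vertices or, more robustly, bound the \emph{total} number of iterations by observing $|R_j|$ is strictly increasing and capped at $\tfrac{\gamma}{2}|V(J)|$, hence the loop halts.

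\textbf{Third step: lower-bounding the value of the output.} When the loop halts, either we have run out of room ($|R_j|\ge\tfrac{\gamma}{2}|V(J)|$) or $\cA$ failed. In the first case: for each iteration $j$, the partial assignment $f_j$ restricted to $S_j$ satisfies, inside the subcube $H_j$, a $\delta$-fraction of $H_j$'s internal edges, but some of those edges may touch $H_j\setminus S_j$ (old vertices) whose labels were fixed earlier and may disagree. This is where one must be careful: I would charge the \emph{freshly satisfied} edges — those inside $S_j$, or between $S_j$ and $H_j\setminus S_j$ where $f_j$ happens to agree — and use that the edges inside $H_j$ number at least $\Omega((1-\alpha)^r)\cdot(\text{edge-degree})\cdot|V(H_j)|$, so a $\delta$-fraction of them is $\Omega(\delta(1-\alpha)^r |V(H_j)|/|V(J)|)$ of all edges of $J$. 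Summing over productive iterations and using $\sum_j |S_j| \ge \tfrac12 \sum_j |V(H_j)| \ge \tfrac12|R_j|\ge \tfrac{\gamma}{4}|V(J)|$ — wait, more precisely $\sum |S_j|=|R_{\text{final}}|\ge\tfrac\gamma2 |V(J)|$ since the $S_j$ partition $R_{\text{final}}$ — we get that the committed assignment satisfies $\Omega(\delta(1-\alpha)^r)\cdot\tfrac\gamma2$ of all edges, i.e.\ $\Omega(\delta\gamma(1-\alpha)^r)$, provided no edge is double-counted, which holds because the $S_j$ are disjoint and we only count edges with \emph{both} endpoints committed by the end (edges inside $S_j$, say). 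The second halting case (subroutine fails) I would rule out: by hypothesis $\cA$ succeeds on any instance of value $\ge c-\gamma$, and we maintain $\val(I_j)\ge c-\gamma$ throughout the loop by the re-randomization argument above, so $\cA$ never fails while the loop runs.

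\textbf{Main obstacle.} The delicate point is the interaction between committed (old) vertices and fresh vertices inside $H_j$: the subroutine's guarantee $\val_{f_j}(H_j)\ge\delta$ is about \emph{all} internal edges of $H_j$, but we can only safely claim credit for edges both of whose endpoints we control coherently. The clean fix is to only count edges \emph{inside $S_j$}; for this to retain an $\Omega(\delta)$-fraction we need $|S_j|$ to be a constant fraction of $|V(H_j)|$ (productivity) \emph{and} the induced subgraph on $S_j$ inside $H_j$ to carry a constant fraction of $H_j$'s internal edges — the latter follows because $S_j$ is large and $H_j$ is a (random-walk on a) Johnson-type subcube which is an expander-ish graph, so large vertex subsets contain a proportional share of edges; alternatively one argues that edges inside $H_j$ with at least one endpoint in the old set $R_{j-1}\cap H_j$ were \emph{already re-randomized} in a previous step (since they touched some earlier $R_k$), so they contribute $0$ to $\val_{f_j}(H_j)$ in expectation — meaning the $\delta$-value is essentially all coming from $S_j$-internal edges anyway. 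I expect reconciling these two views (worst-case vs.\ expected over the re-randomization) and getting the constants to line up to be the only real work; everything else is the bookkeeping loop already described in Algorithm~\ref{alg:j}.
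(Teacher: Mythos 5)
Your overall scheme is the same as the paper's: iterate the subroutine, commit $f_j$ only on the fresh vertices $S_j = H_j\setminus R_{j-1}$, re-randomize edges incident to $R_j$, observe that the value of $I_j$ drops by at most $2|R_j|/|V(J)|$ so the loop can continue while $|R_{j-1}|<\tfrac{\gamma}{2}|V(J)|$, and sum up the credit $\Omega(\delta(1-\alpha)^r|V(H_j)|/|V(J)|)$ per iteration using $\sum_j|V(S_j)|=|R_{\mathrm{final}}|\ge\tfrac{\gamma}{2}|V(J)|$. You also correctly identify the one genuinely delicate point: the guarantee $\val_{f_j}(H_j)\ge\delta$ refers to all internal edges of $H_j$, including those touching previously-committed vertices, and you must argue the $\delta$-value is essentially carried by $E(S_j,S_j)$.

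However, you leave exactly that point unresolved, and your two candidate fixes are not on equal footing. The ``productivity'' route ($|S_j|\ge\tfrac12|V(H_j)|$) is a dead end, as you half-notice yourself: an $\le r$-restricted subcube has measure roughly $(\ell/n)^r=o(1)$, so $H_j$ can be almost entirely contained in $R_{j-1}$ without violating $|R_{j-1}|<\tfrac{\gamma}{2}|V(J)|$; there is no a priori lower bound on $|S_j|/|V(H_j)|$, and the subsequent expansion argument has nothing to stand on. The re-randomization route is the correct one, but the ``in expectation'' version you state is insufficient: the subroutine $\cA$ chooses $H_j$ and $f_j$ \emph{after} seeing the random constraints, so an adversarial $f_j$ could in principle satisfy many of the re-randomized edges. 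What is needed (and what the paper proves as Claim~\ref{claim:chernoff}) is a with-high-probability statement holding simultaneously for \emph{every} assignment to the endpoints $A$ of the randomized edges and \emph{every} $\le r$-restricted subcube $C$: no assignment satisfies more than a $2/|\Sigma|$ fraction of $E(C\cap R_{j-1})$. This is a Chernoff bound followed by a union bound over $|\Sigma|^{|A|}$ assignments and $n^{O(r)}$ subcubes, and it only goes through because one can lower bound the number of randomized edges inside $C$ by $\gg n|\Sigma|\,|A|$; that lower bound in turn uses the Johnson-graph geometry (a vertex of $A$ lying in an earlier subcube keeps a $(1-\alpha)^r$ fraction of its edges inside it, and a vertex merely adjacent to one keeps an $n^{-r}$ fraction). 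Once this uniform bound is in place, $\val_{f_j}(E(H_j\cap R_{j-1}))\le 2/|\Sigma|$ deterministically, so $\#\mathrm{sat}_{f_j}(E(S_j,S_j))/|E(H_j)|\ge\delta-2/|\Sigma|\ge\delta/2$ and the accounting you describe closes. Without it, the argument has a real hole.
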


\begin{proof}
We will use the algorithm $\cA$ as a subroutine. To get a full assignment, our algorithm below is a generalized version of the Algorithm~\ref{alg:j}, where we've replaced some of the steps in Algorithm~\ref{alg:j} with an arbitrary subroutine $\cA$ that finds a subcube and an assignment to it with high value. We include it here for completeness.

\begin{algorithm-thm}[Partial to Full Assignment]\label{alg:partial} ~\\ 
\begin{compactenum}
\item Set $j = 1, I_0 = I, R_0 = \phi$.
\item While $|R_{j-1}| < \frac{\gamma}{2}|V(J)|$:
\begin{compactenum}
\item Run subroutine $\cA$ on $I_{j-1}$ to find an $\leq r$-restricted subcube $H_{j}$ and partial assignment $f_j$.
\item Let $S_j$ be the induced subgraph of $H_j$ induced by the set $H_j \setminus R_{j-1}$ and assign them using $f_j$. Let $R_j = R_{j-1} \cup H_j$. 
\item Randomize the constraints on edges incident on $R_j$ and let the new instance by $I_j$. 
\item Increment $j$.
\end{compactenum}
\item Output any assignment to $V(G)$ that agrees with all partial assignments $f_j$ (assigned to $S_j$'s) considered above.
\end{compactenum}
\end{algorithm-thm}

Let us analyse the $j^{th}$-iteration of this procedure. We are given an instance $I_{j-1}$.

\begin{claim}\label{claim:it-drop}
$\val(I_{j-1}) \geq \val(I) - \frac{2|V(R_{j-1})|}{|V(G)|}$.
\end{claim}

\begin{proof}
Suppose $X$ is an assignment that achieves the value of $I$. We can check that $X$ satisfies at least $\val(I) - \frac{2|V(R_{j-1})|}{|V(G)|}$-fraction of edges of $I_{j-1}$, hence the claim follows. To see this note that the fraction of edges incident on $R_{j-1}$ and therefore the fraction that is randomized is at most $\frac{2|V(R_{j-1})|}{|V(G)|}$ and every edge not incident on $R_{j-1}$ remains satisfied by $X$.
\end{proof}

We will show that the partial assignment $f_j$ to $H_j \setminus R_{j-1}$ satisfies a large fraction of edges (as counted in the whole graph). Before that let us show that the randomization of edges ensures that the value of the edges randomized is small (even when restricted to some subcube). Let $E(S)$ denote the edges incident on $S$ (including those with only one endpoint in $S$) and let $E(S,S)$ denote the edges with both endpoints in $S$. For a set of edges $E$ let $\val_f(E)$ denote the fraction of edges in $E$ that are satisfied by $f$.

\begin{claim}\label{claim:chernoff}
With probability $1-o(1)$ the instance $I_{j-1}$ satisfies the following for all $r$-restricted subcubes $C$:
\[\val(E(C \cap R_{j-1})) \leq \frac{2}{|\Sigma|},\]
where $\val(E(S))$ denotes the maximum fraction of $E(S)$ that can be satisfied by any UG assignment to the endpoints of $E(S)$.
\end{claim}

\begin{proof}
Let $k$ denote $|\Sigma|$. Fix an $i$-restricted subcube $C$ for $i \leq r$. We know that $R_{j-1}$ is a union of $\leq r$-restricted subcubes, therefore $C\cap R_{j-1}$ also forms a union of $\leq r$-restricted subcubes inside $C$. Let this collection of subcubes be denoted by $\cC$.
Consider the set of edges $E$ inside $C$ on which the constraints were randomized (in the previous iteration), and let $A$ be the set of vertices in $C$ that are endpoints of them. We claim that with probability $1-o(1)$,  for all assignments $F$ to $A$, $F$ satisfies at most $2/k$ of the constraints. First, the number of assignments is $k^{|A|}$, so we fix one assignment $F$, analyze it, and then union bound over all assignments.

Fix $F$; for each $e\in E$, the probability that $F$ satisfies $e$ is $1/k$. Hence in expectation $F$ satisfies
$\mu = 1/k$ fraction of the constraints, and by Chernoff’s bound
\[\Pr[\val_F(E) \geq 2\mu ]\leq e^{-\frac{|E|}{3k}}.\]

Now we lower bound $|E|$. Note that if $v\in A$, then either $v$ was in a subcube in $\cC$ or $v$ was adjacent to a vertex in a subcube from $\cC$. Let the degree of $v$ inside $C$ be denoted by $d$ which we know is ${\ell-i \choose \alpha(\ell - i)}{n-\ell \choose \alpha(\ell-i)}$.

\begin{enumerate}
    \item If $v \in C'$ for some $C' \in \cC$ then at least $(1-\alpha)^{r}$ of the edges incident to $v$ remain in $C'$ and therefore are in $E$. 
    \item If $v$ is adjacent to $u \in R_{j-1}\cap C$ then $u \in C'$ for some $C' \in \cC$. In particular there is a subset $I \subseteq u$ of size at most $r$ such that $C'$ is an $I$-restricted subcube. It follows that in fact $v$ has at least ${\ell-i \choose \alpha(\ell-i)}{n-\ell-r \choose \alpha(\ell-i) – r}$ neighbours in $C'$. By a simple calculation this is at least $n^{-r} d$, and it follows that at least $n^{-r} d$ of the edges adjacent to $v$ are in $E$.
\end{enumerate}

In conclusion, in any case we get that at least $n^{-r}d$ of the edges adjacent to $v \in A$ are inside $E$, hence $|E| \geq n^{-r} d |A|/2 \gg 3n|\Sigma| |A|$, so Chernoff gives a probability of $e^{-n|\Sigma| |A|}$ which is good enough for
a union bound over all ($k^{|A|}$ many) assignments to $A$. Additionally we union bound over all $\leq r$-restricted subcubes $C$ which are only $n^{O(r)}$ many, to get the statement of the lemma.
\end{proof}

At iteration $j$ of the while-loop, since the while condition is met, we know that $I_{j-1}$ has value $\geq c - \gamma$ (Claim~\ref{claim:it-drop}) so inside the while-loop, $\cA$ will always find an $\leq r$-subcube $H_j$ and assignment $f_j$ with $\val_{f_j}(H_j) \geq \delta$. Next, we find an assignment $f_j$ to the set of vertices $V(S_j)$ that by definition don't intersect previously assigned vertices. Since $f_j$ doesn't reassign any vertices, in the final step of the algorithm it is possible to output an assignment that is consistent with all previously considered partial assignments. We will now show that our final partial assignment satisfies a large fraction of the edges, where we say that an edge $(u,v)$ is satisfied by a partial assignment $f_j$, if both vertices $u,v$ have been assigned labels under $f_j$ and the labels satisfy the edge.

\begin{claim}\label{claim:it-val}
The value of the partial assignment found at iteration $j$ satisfies:
\[\val(f_j) \geq \delta(1 - \alpha)^r\frac{|V(H_j)|}{2|V(G)|},\]
where $\val(f_j)$ denotes the fraction of edges (in $E(G)$) satisfied by the partial assignment $f_j$.
\end{claim}

\begin{proof}
Let $H_j = S_j \cup (H_j \cap R_{j-1})$ and let $E_1 = E(S_j,S_j)$ and $E_2 = E(H_j \cap R_{j-1})$ so that $E_1 \cup E_2 = E(H_j,H_j)$ denoted by $E$. By Claim~\ref{claim:chernoff} we know that $\val_{f_j}(E_2) \leq 2/k$ where $k$ denotes $|\Sigma|$. 
By an averaging argument:
\[\delta \leq \frac{\#\text{sat}_{f_j}(E_1)}{|E|}+ \frac{|E_2|}{|E|}\val_{f_j}(E_2) \leq \frac{\#\text{sat}_{f_j}(E_1)}{|E|} + \frac{2}{k},\]
where $\#\text{sat}_{f_j}(E_1)$ denotes the number of edges of $E_1$ satisfied by $f_j$. This implies that $\frac{\#\text{sat}_{f_j}(E_1)}{|E|} \geq \delta/2$. 

Since $H_j$ is a $\leq r$-restricted subcube at least $(1-\alpha)^r$-fraction of edges stay inside $H_j$, hence $|E|/|E(H_j)|$ is at least $(1-\alpha)^r$. Using that $|E(H_j)|/|E(J)| \geq |V(H_j)|/|V(J)|$ we get that: 
\[\val(f_j) =  \frac{\#\text{sat}_{f_j}(E_1)}{|E|} \cdot \frac{|E|}{|E(H_j)|} \cdot \frac{|E(H_j)|}{|E(J)|} \geq \frac{\delta}{2}(1 - \alpha)^r\frac{|V(H_j)|}{|V(G)|},\]
where $\val(f_j)$ denotes the edges satisfied by $f_j$ (assigned to $S_j$) as a fraction in the whole graph. 
\end{proof}

Once we have these facts, the conclusion is immediate. Firstly there cannot be more than $V(G)$ iterations of the while-loop, since at each iteration we assign at least one new vertex. Each iteration takes time $T(\cA) + |V(G)|^2$, hence the algorithm runs in time $|V(G)|(T(\cA) + |V(G)|^2)$. 

Suppose the algorithm exits the while loop at the $t+1$ iteration. Then we know that $R_{t} \geq \frac{\gamma}{2}|V(J)|$.
Using claims~\ref{claim:it-val} above, we get that the value of the final assignment is proportional to the number of vertices assigned: 
\[\sum_{j=1}^t\val(f_j) \geq \delta(1 - \alpha)^r \sum_{j=1}^t \frac{|V(H_j)|}{2|V(J)|} \geq \delta(1 - \alpha)^r\frac{|R_t|}{2|V(J)|} \geq \delta(1 - \alpha)^r\frac{\gamma}{4},\]
thus proving the lemma.
\end{proof}
\section{Missing Proofs}\label{sec:missing_pfs}

The following claim bounds the Booleanity error term 
and shows that there is a \sos proof that they are
negligible (provided that we take the parameter $\nu$ 
to be sufficiently small).
\begin{claim}\label{claim:bool}
\[\cA_I ~~ \vdash_{\poly(\ell^r/\nu)} ~~ \E_{c}[\sum_s B(H^c_s)] \leq |\Sigma|\ell^{O(r)}\sqrt{\nu}.\]
\end{claim}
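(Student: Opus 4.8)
\textbf{Proof plan for Claim~\ref{claim:bool}.}

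Recall that $B(H_s^c) = \frac{4}{3}\E_X[(H_s^{c}(X)^3 - H_s^c(X))\,\Pi_{\geq \lambda_r}H_s^c(X)]$, where $H_s^c(u) = G_s(u)\prod_{a\subset u,|a|\leq r}(1-R_{s,a}^c)$. The plan is to bound this term by controlling the ``Booleanity defect'' $H_s^c(X)^3 - H_s^c(X)$ pointwise, using the crucial fact that $G_s$ is an exact $\{0,1\}$-indicator while the only source of non-Booleanity is the product of approximate indicators $R_{s,a}^c$. First I would observe that since $G_s^3 = G_s$ is an axiom (Fact~\ref{fact:z-vars}, Booleanity of the shift variables), we may factor $H_s^{c,3} - H_s^c = G_s\bigl(P^3 - P\bigr)$ where $P = P(X) = \prod_{a\subset u, |a|\leq r}(1-R_{s,a}^c)$; and since each $R_{s,a}^c = \prod_t p_t^c(\delta_t)$ is a product of degree-$\tilde O(1/\nu^2)$ polynomials each lying in $[0,1]$ (certifiable via Theorem~\ref{thm:step-approx} and Corollary~\ref{cor:Lukacs}), we have $\cA_I \vdash_{\poly(\ell^r/\nu)} P \in [0,1]$, hence $|P^3 - P| = P(1-P^2) = P(1-P)(1+P) \leq 2P(1-P)$, and so $\cA_I \vdash_{\poly(\ell^r/\nu)} |H_s^{c,3} - H_s^c| \leq 2G_s(1-P)$ pointwise.

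The key step is then to show $\E_c[\,\E_X[G_s(X)(1-P_c(X))]\,] \leq \ell^{O(r)}\sqrt{\nu}$ for each $s$. Here $1 - P_c = 1 - \prod_{a}(1-R_{s,a}^c) \leq \sum_{a\subset u,|a|\leq r} R_{s,a}^c$ by a union-bound-type inequality (valid in SoS since each factor is in $[0,1]$: $1 - \prod_i(1-x_i) \leq \sum_i x_i$ has a constant-degree SoS proof on $[0,1]^m$). Now the point of averaging over the shift parameter $c \in \{0,\dots,\lfloor 1/\nu\rfloor\}$ is that the thresholds $\beta_i + c\nu^2$ are spread over a window of width $\Theta(\nu)$, so that $\E_c[R_{s,a}^c(X)]$ is small \emph{pointwise} for most $X$: for a fixed evaluation point, $R_{s,a}^c$ can be non-negligible (i.e. $> \nu$) only for those $c$ for which every relevant $\delta(G_s|_b)$ falls within $\nu$ of its threshold, and since the $\lfloor 1/\nu\rfloor + 1$ thresholds are disjoint $\nu$-windows, at most $O(1)$ values of $c$ per threshold can contribute, so $\E_c[R_{s,a}^c] \leq O(\nu) + O(\nu \cdot (\text{number of factors})) \leq \tilde O(\nu)$. (This is exactly the standard ``shift trick'' used to suppress Booleanity errors in~\cite{BBKSS}.) Summing over the $\leq \ell^{O(r)}$ choices of $a\subset u$ with $|a|\leq r$, we get $\E_c[G_s(1-P_c)] \leq \ell^{O(r)}\tilde O(\nu)$; combined with $\|\Pi_{\geq\lambda_r}H_s^c\|_\infty \leq \|H_s^c\|_\infty \leq 1$ and $\langle H_s^{c,3}-H_s^c, \Pi_{\geq\lambda_r}H_s^c\rangle \leq \|H_s^{c,3}-H_s^c\|_1 \cdot \|\Pi_{\geq\lambda_r}H_s^c\|_\infty$ (Cauchy–Schwarz, or just $|\langle f,g\rangle| \leq \|f\|_1\|g\|_\infty$), and the elementary $\|f\|_1 \leq \sqrt{\|f\|_2 \cdot \|f\|_1} \cdot$-type smoothing to convert the $\nu$ into $\sqrt\nu$ (since the $\Pi_{\geq\lambda_r}$ projection and the $\ell_\infty$-to-$\ell_1$ passage lose a square root — this is where the $\sqrt{\nu}$ in the statement comes from), we obtain $\E_c[B(H_s^c)] \leq \ell^{O(r)}\sqrt{\nu}$ per $s$, and summing over $s\in\Sigma$ gives the claimed $|\Sigma|\ell^{O(r)}\sqrt\nu$.

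To make every step an SoS proof: the pointwise bounds $P \in [0,1]$, $|P^3-P|\leq 2P(1-P)$, and $1-\prod(1-x_i)\leq\sum x_i$ are all multivariate polynomial inequalities on the box $[0,1]^m$ with $m = \ell^{O(r)}$ many variables (the $\delta(G_s|_b)$'s), so by Theorem~\ref{thm:blackbox-sos} they admit SoS certificates of degree $\poly(m^{O(1)}) = \ell^{\poly(\ell^r)}$; actually, since the degrees and variable counts here are at most $\ell^{O(r)}$ and the polynomials involved are products of the bounded-coefficient $p_{\beta,\nu}$'s, the bound $\poly(\ell^r/\nu)$ claimed in the statement follows from the more careful accounting already used in the SoS-ization of Claim~\ref{claim:dense-subcubes}. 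The main obstacle I anticipate is the bookkeeping in the $\sqrt{\nu}$ loss: one has to be careful that applying $\Pi_{\geq\lambda_r}$ to an $\ell_1$-small but not necessarily $\ell_\infty$-small function, and then pairing against $H_s^c$, genuinely only costs a square root and not more, which requires using $\langle f, \Pi g\rangle \leq \|f\|_2 \|\Pi g\|_2 \leq \|f\|_2\|g\|_2 \leq \sqrt{\|f\|_1}$ (since $0\leq f\leq 2$ implies $\|f\|_2^2 \leq 2\|f\|_1$) rather than the cruder $\ell_1$–$\ell_\infty$ bound — and verifying this chain is SoS-certifiable, which it is since $\Pi_{\geq\lambda_r}$ is a fixed PSD contraction. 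The rest is routine.
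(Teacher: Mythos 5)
Your overall skeleton — factoring out the exact indicator $G_s$, decoupling $F^3-F$ from $\Pi_{\geq\lambda_r}F$ by Cauchy--Schwarz, extracting the $\sqrt{\nu}$ by balancing the pointwise-$O(\nu)$ Booleanity defect against the $O(\delta(G_s))$ mass term, and summing over $s$ — is the same as the paper's. But there is a fatal gap in the "key step". You pass from $|P^3-P|\leq 2P(1-P)$ to $2(1-P)$ and then assert that $\E_c[R_{s,a}^c(X)]\leq \tilde O(\nu)$ pointwise, on the grounds that $R_{s,a}^c$ is "non-negligible only for those $c$ for which every relevant $\delta(G_s|_b)$ falls within $\nu$ of its threshold". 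That is false: if $\delta(G_s|_a)$ sits comfortably above $\eps_i$ and every $\delta(G_s|_b)$, $b\subsetneq a$, sits comfortably below $\eps_{|b|}$, then every factor $p^c_t(\delta_t)$ is at least $1-\nu^2$ for \emph{every} $c$, so $R_{s,a}^c\approx 1$ for all $c$ and $\E_c[R_{s,a}^c]\approx 1$. The shift over $c$ suppresses the \emph{Booleanity defect} $p^c(\delta)-p^c(\delta)^2$ (for a fixed $\delta$, at most one threshold window $[\beta+c\nu^2,\beta+(c+1)\nu^2]$ contains $\delta$, so only one value of $c$ contributes more than $O(\nu^2)$); it does not make the indicator itself small. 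Note also that $\E_X[G_s(1-P_c)]=\delta(D_s^c)$ is exactly the measure of the dense part of $G_s$, which can be all of $\delta(G_s)$ — so the quantity you set out to bound by $\ell^{O(r)}\sqrt{\nu}$ is simply not small, and no amount of SoS-ification rescues it.

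The repair is to not discard the factor of $P$: keep $P-P^2$ (equivalently work with $H_s^c-(H_s^c)^2$, using $(1+F)(F-F^2)\leq 2(F-F^2)$ to reduce from $F-F^3$), and telescope the Booleanity defect of the product through the inequality $\prod_i x_i - (\prod_i x_i)^2 \leq \sum_i (x_i-x_i^2)$ on $[0,1]^k$ (the paper's Claim~\ref{claim:silly_sos}), applied once to the product over $a$ of $(1-R^c_{s,a})$ and once to the product of approximate indicators inside each $R^c_{s,a}$. This reduces everything to $\E_c[p^c(\delta)-p^c(\delta)^2]\leq O(\nu)$, which is exactly what the shift trick proves, and yields $\E_c[H_s^c-(H_s^c)^2]\leq \ell^{O(r)}\nu$ pointwise. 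From there your Cauchy--Schwarz against $\|\Pi H_s^c\|_2\leq\|H_s^c\|_2$ and the $\eta=1/\sqrt{\nu}$ balancing go through as you describe (phrase the square root as the AM--GM $ab\leq\frac{\eta}{2}a^2+\frac{1}{2\eta}b^2$ so it stays inside SoS).
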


\begin{proof}
Recall that $B(F) = \frac{4}{3}\E_u[(F(u)^3 - F(u))(\Pi_{\geq \lambda_r}F)(u)] = \frac{4}{3}\ip{F^3 - F, PF}$, where $\Pi$ is the projection operator $\Pi_{\geq \lambda_r}$. Using Cauchy-Schwarz we get that for all $\eta > 0$:
\begin{equation}\label{eq:bool-cs}
B(F) \leq \frac{4}{3}\left(\frac{\eta}{2}||F^3 - F||_2^2 + \frac{1}{2\eta}||\Pi F||_2^2\right) \leq \frac{2\eta}{3}\E_u[F(u) -F(u)^3] + \frac{2}{3\eta}\delta(F^2),
\end{equation}
where the last inequality holds since $F \in [0,1]$ and $\Pi$ is a projection operator. We will use this 
to bound $B(H_s^c)$ as follows. 
Fixing $s, u$, taking $F = H_s^c(u)$, 
the first term above is 
$\E_c[(H_s^c(u)-H_s^c(u)^3]$ and we upper bound it 
by $O(\ell^r\nu)$. Indeed, first as 
$0\leq H_s^c(u)\leq 1$ we have
\begin{equation}\label{eq:bool1}
\E_c[H_s^c(u) - H_s^c(u)^3] = (1+H_s^c(u))\E_c[H_s^c(u) - H_s^c(u)^2] \leq 2 \E_c[H_s^c(u) - H_s^c(u)^2].    
\end{equation}
Next, recall that $H_s^c(u)$ is a product of the terms $(1-R^c_{s,a})$ for $a \subset u, |a| \leq r$. Additionally each variable $R^c_{s,a}$ is a product of approximate indicators: namely there are $\beta_i \in (0,1)$  such that $R_{s,a}^c(u) = p^{c,s,a}_1(\delta^{c,s,a}_1)\cdot \ldots\cdot p^{c,s,a}_k(\delta^{c,s,a}_k)$, where $p^{c,s,a}_i = p_{\beta_i+c\nu^2,\nu^2}$ or $p_{<\beta_i+c\nu^2,\nu^2}$ as $c$ varies in $\{0,1,\ldots,\lfloor 1/\nu \rfloor\}$, and $\delta_i$'s are linear functions of the variables $G_s(u)$ and lie in $[0,1]$ under the axioms $\cA_I$.
By Claim~\ref{claim:silly_sos}(applied twice) it follows that:
\[\E_c[H_s^c(u) - H_s^c(u)^2] \leq \sum_{a \subset u: |a|\leq r}\E_c[R^c_{s,a} - (R^c_{s,a})^2] \leq \sum_{a \subset u: |a|\leq r}\sum_{i \in [k]}\E_c[p_i^{c,s,a}(\delta^{c,s,a}_i) - p_i^{c,s,a}(\delta^{c,s,a}_i)^2],\]
so we can bound each of the terms above.  

We will show that each term is smaller than $O(\nu)$. Let $p$ be an approximate indicator polynomial $p = p_{\beta,\nu}$ and let $p^c$ denote $p_{\beta+c\nu^2,\nu^2}$ for $c \in \{0,\ldots,\lfloor 1/\nu\rfloor\}$. For all $\delta \in [0,1]$, we argue that
\begin{equation}\label{eq:bool-random}
\delta \in [0,1] \vdash_{\tO(1/\nu^2)} ~~ \E_c[p^c(\delta) - p^c(\delta)^2] \leq O(\nu).
\end{equation}
Indeed, this follows by case analysis on $\delta \in [0,1]$ (that we show next), and then using L\'{u}kacs theorem (Corollary~\ref{cor:Lukacs}) we get there is an \sos proof of degree $\deg(p^c)$. 

If $\delta \in [0,\beta] \cup [\beta+\nu, 1]$ then one can check that $p^c(\delta) \in [0,\nu^2]$ for all $c$. Therefore let $\delta \in [\beta,\beta+\nu]$, and specifically in one of the intervals - $[\beta+t\nu^2, \beta+(t+1)\nu^2]$ for $t \in \{0,\ldots,\lfloor1/\nu\rfloor\}$. Then we get that $p^t(\delta)$ could be any number between $[0,1]$ and therefore $p^t(\delta)-p^t(\delta)^2 \leq 1/4$. But for every $t' \neq t$ we get that $p^{t'}(\delta)$ is either in the interval $[0,\nu^2]$ or $[1-\nu^2,1]$ implying that $p^{t'}(\delta) - p^{t'}(\delta)^2 \leq O(\nu^2)$. Averaging over $a$ we therefore get:
\[\E[p^c(\delta) - p^c(\delta)^2] \leq \nu \cdot \frac{1}{4}+ (1-\nu)O(\nu^2) \leq O(\nu).\]
In conclusion, we get that 
$\E_c[H_s^c(u) - H_s^c(u)^2]\leq O(k\ell^r\nu) = \ell^{O(r)}\nu$. Plugging this into~\eqref{eq:bool-cs} 
and taking $\eta = 1/\sqrt{\nu}$ we get that
\[\E_c[B(H^c_s)] \leq  \sqrt{\nu}\ell^{O(r)} + O(\sqrt{\nu}\E_c[\delta(H^c_s)]) \leq \sqrt{\nu}\ell^{O(r)},\]
and summing over $s \in \Sigma$ finishes the proof.
\end{proof}

\begin{claim}\label{claim:silly_sos}
\[
\{0 \leq x_i \leq 1\} \vdash_{2k} 
\prod_{i \in [k]}x_i - (\prod_{i \in [k]}x_i)^2 \leq x_1-x_1^2 + \ldots + x_k - x_k^2. 
\]
\end{claim}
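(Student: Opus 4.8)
$\{0 \leq x_i \leq 1\} \vdash_{2k} \prod_{i \in [k]}x_i - \left(\prod_{i \in [k]}x_i\right)^2 \leq \sum_{i\in[k]}(x_i - x_i^2)$.

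The plan is to induct on $k$. For $k=1$ the statement is the trivial identity $x_1 - x_1^2 = x_1 - x_1^2$, which is a degree-$2$ \sos proof (in fact an equality). For the inductive step, write $P = \prod_{i=1}^{k-1}x_i$ and observe that by the inductive hypothesis we have a degree-$2(k-1)$ \sos proof that $P - P^2 \leq \sum_{i=1}^{k-1}(x_i - x_i^2)$ modulo $\{0 \le x_i \le 1\}$. It therefore suffices to show, in degree-$2k$ \sos, that
\[
x_k P - (x_k P)^2 \;\le\; (P - P^2) + (x_k - x_k^2),
\]
since adding this to the inductive inequality gives the result. Rearranging, the target inequality is equivalent to
\[
(x_k - x_k^2) + (P - P^2) - x_k P + x_k^2 P^2 \;\ge\; 0.
\]

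The key algebraic step is to verify the identity
\[
(x_k - x_k^2) + (P - P^2) - x_k P + x_k^2 P^2 \;=\; x_k(1-x_k)(1-P)^2 \;+\; (1-x_k)^2\,P(1-P) \;+\; x_k^2(1-P)P(1-?)\ldots
\]
— rather than guess a decomposition, I would simply expand $x_k(1-x_k)(1-P) + (1-x_k)P(1-P)$ and check it matches, or more cleanly note that the left side factors. In fact a clean way: write $x_kP = y$; we want $(x_k - x_k^2)+(P-P^2) \ge y - y^2$. Using $x_k, P \in [0,1]$ (provable in \sos from the axioms since each $x_i\in[0,1]$ implies $P\in[0,1]$ via repeated multiplication of \sos-nonnegative quantities), we have $y = x_kP \le x_k$ and $y \le P$, so $1-y \ge 1-x_k \ge 0$ and $1-y\ge 1-P\ge 0$. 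Then
\[
y - y^2 = y(1-y) \le \tfrac12\big(x_k(1-x_k) + P(1-P)\big) + \text{(slack)}?
\]
is not immediately tight; instead I would use the more direct bound $y(1-y) \le x_k(1-P) + P(1-x_k)$? Let me instead just commit to the honest route: expand both sides as polynomials in $x_k$ and $P$ and exhibit the difference
\[
\Delta := (x_k-x_k^2)+(P-P^2)-(x_kP - x_k^2P^2)
\]
as an explicit \sos combination of the axiom polynomials $x_k(1-x_k)$, $P(1-P)$ (the latter itself being degree-$2(k-1)$ \sos modulo the axioms by the inductive hypothesis applied with the product of the same $k-1$ variables, or directly since $P\in[0,1]$ is \sos-certifiable). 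A short computation gives $\Delta = x_k(1-x_k) + (1-x_k^2)\,P(1-P) \cdot(\text{something})$; I will nail the exact coefficients in the write-up, but the structure is that $\Delta$ is a nonnegative-coefficient combination of $x_k(1-x_k)$ and $P(1-P)$ with \sos multipliers of degree $\le 2k-2$, hence the whole proof has degree $\le 2k$.

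The main obstacle is purely bookkeeping: getting the explicit \sos decomposition of $\Delta$ right and confirming that the multipliers (which will be expressions like $1$, $x_k^2$, $(1-x_k)^2$, $P^2$) are themselves squares or products of axiom polynomials so that the degree stays at $2k$. Since $P(1-P)$ is not literally an axiom but $P\in[0,1]$ follows in \sos from $\{0\le x_i\le 1\}_{i<k}$ by induction (each step multiplies two quantities in $[0,1]$, and $ab(1-ab) = a\cdot b(1-b) + b(1-b)\cdot 0 + \ldots$ — more simply, $ab \ge 0$ is \sos and $1-ab = (1-a) + a(1-b) \ge 0$ is \sos), this causes no degree blowup beyond $2k$. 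I expect the cleanest final form to be: induct, reduce to the two-variable inequality $y(1-y)\le x(1-x)+z(1-z)$ valid whenever $y=xz$ and $x,z\in[0,1]$, and prove that two-variable statement by the identity $x(1-x)+z(1-z)-xz(1-xz) = x(1-x)(1-z)^2 + z(1-z)$, which one verifies by direct expansion — each summand on the right being manifestly a product of \sos-nonnegative (indeed axiom-derived) factors of degree at most $2$ in the new variables, so degree $\le 2k$ overall.
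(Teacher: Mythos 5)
Your overall strategy is the same as the paper's: induct on $k$, and reduce everything to a two‑variable inequality for $x$ and $P=\prod_{i<k}x_i$, then invoke the inductive hypothesis to replace $P-P^2$ by $\sum_{i<k}(x_i-x_i^2)$. That part is fine, and the degree bookkeeping would work out to $2k$.

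However, the one concrete algebraic step you commit to is wrong. You claim the identity
\[
x(1-x)+z(1-z)-xz(1-xz) \;=\; x(1-x)(1-z)^2 + z(1-z),
\]
but expanding both sides shows the left side exceeds the right side by $xz(1-z)(1-2x)$, which is not identically zero and changes sign on $[0,1]^2$. Concretely, at $x=1$, $z=1/2$ the left side is $0$ while the right side is $1/4$. So this decomposition does not certify the inequality, and the rest of your argument around it is explicitly deferred (``I will nail the exact coefficients''), so there is a genuine gap at the heart of the proof. The fix is the decomposition the paper uses:
\[
xz-(xz)^2 \;=\; x\,(z-z^2)\;+\;(x-x^2)\,z^2 ,
\]
which \emph{is} an identity (check: $xz - xz^2 + xz^2 - x^2z^2 = xz - (xz)^2$), and from which the bound $xz-(xz)^2 \le (z-z^2)+(x-x^2)$ follows immediately because the multipliers $x$ and $z^2$ are \sos-certifiably in $[0,1]$ under the axioms. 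Substituting $z=P$ and applying the inductive hypothesis to $P-P^2$ completes the step within degree $2k$. I'd also note that your detours through $y(1-y)\le \tfrac12(\cdots)$ and the unfinished expression for $\Delta$ should be cut; once you have the correct two-variable identity, nothing else is needed.
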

\begin{proof}
We prove this by induction. For $k=1$ this is clear, and for $k=2$ we have
\begin{align*}
xy - (xy)^2 &= xy – xy^2 + xy^2 – x^2 y^2 \\
&= x(y-y^2)+ (x-x^2) y^2 \\
&\leq y-y^2 + x-x^2.    
\end{align*}

For the inductive step, let $k\geq 3$ and denote $y=x_1\cdots x_{k-1}$. Applying the base case and 
then the inductive hypothesis we get
\[\prod_{i \in [k]}x_i - \left(\prod_{i \in [k]}x_i\right)^2 
=y x_k - (y x_k)^2
\leq y-y^2 + x_k-x_k^2
\leq x_1 - x_1^2 + \ldots + x_{k-1}^2 - x_{k-1}^2.
\]
One can check that the SoS degree used is at most $2k$.
\end{proof}

\section{Sum-of-Squares Certificate of Expansion in the Johnson Graph}\label{sec:fourier}
In this section we give an \sos proof of Theorem~\ref{thm:structure-johnson}.
Our proof follows the same lines as~\cite{KMMS}, except that we implement each step carefully by a low-degree SoS proof. For convenience, we restate 
Theorem~\ref{thm:structure-johnson} below.
\begin{theorem}[Expansion Theorem \ref{thm:structure-johnson} for Johnson Graphs  restated]\label{thm:structure-johnson-restated}
For all $\alpha \in (0,1)$, all integers $\ell \geq 1/\alpha$ and $n \geq \ell$, the following holds: Let $J_{n,\ell,\alpha}$ be the $\alpha$-noisy Johnson graph. For every constant $\gamma \in (0,1)$ and positive integer $r \leq O(\ell)$, every function $F: V(J) \rightarrow \R$ that is $(r,\gamma)$-pseudorandom has high expansion:
\begin{enumerate}
\item $\{F(X) \in [0,1]\}_{X \in V(J)}  \vdash_{O(1)}\, \\
\ip{F,LF} \geq \delta(1 - (1-\alpha)^{r+1})(1- \gamma^{1/3}\exp(r)) - \sum_{j = 0}^r \frac{ c_j\ell^{j}}{\gamma}\E_{a \sim {[n] \choose j}}[q_a(F)({\delta(F^{ 2}|_a) - \gamma)}]+ B(F)$
\item $\{F(X) \in [0,1]\} \,\,\, \vdash_{2} \,\,\, 0 \leq q_a(F) \leq 1,$
\end{enumerate}
where for all $j \leq r$, $c_j$'s are positive constants of size at most $\exp(r)$, for all 
$a \subseteq [n]$ of size $j$, $q_a(F)$ is a degree $2$  polynomials, and $B(F)=\frac{4}{3}\E_X[(F^{ 3} - F)\Pi_{\geq \lambda_{r}}F]$ for $\lambda_r = (1-\alpha)^r$. Here, $\Pi_{\geq \lambda}$ is the projection operator on the space of spanned by eigenvectors of$J$ of eigenvalues at least $\lambda$.
\end{theorem}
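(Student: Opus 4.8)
\textbf{Proof plan for Theorem~\ref{thm:structure-johnson-restated} (SoS expansion for the Johnson graph).}
The plan is to follow the Fourier-analytic argument of~\cite{KMMS} step by step, being careful that every inequality used is a low-degree polynomial inequality in the variables $F(X)$, so that the whole derivation becomes a constant-degree SoS proof modulo the pseudorandomness axioms $\{\gamma - \delta(F^2|_a)\geq 0\}_{|a|\le r}$ and the Booleanity axioms $\{F(X)^3 = F(X)\}$. First I would set up the relevant harmonic decomposition of the Johnson graph $J(n,\ell,\alpha\ell)$: its eigenspaces $V_0,\dots,V_\ell$ are indexed by levels, with eigenvalue on $V_i$ roughly $\lambda_i\approx (1-\alpha)^i$ up to $o(1)$ corrections. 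The key combinatorial objects are the ``restriction-derivative'' functions $f_{i,F}\colon \binom{[n]}{i}\to\R$ (which I would define formally as in Definition~\ref{def:f_i}, with $f_{i,F}(a)=\sum_{b\subseteq a}(-1)^{i-|b|}\delta(F|_b)$ as in Lemma~\ref{lem:restriction}), which satisfy $\E_{a\sim\binom{[n]}{i}}[f_{i,F}(a)^2]=W^i[F]/\ell^i$ where $W^i[F]=\langle \Pi_{=i}F,\Pi_{=i}F\rangle$ is the $i$-th level Fourier weight (Lemma~\ref{lem:3.2}). All of these are quadratic identities in $F$, hence trivially SoS-certifiable in degree $2$.

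The heart of the argument is a two-part split of $\langle F, LF\rangle = \delta(F) - \sum_i \lambda_i W^i[F]$ (again using the Booleanity axiom $F^3=F$ to replace $\delta(F)$ by $\E[F^2]$ when convenient, which is where the error term $B(F)=\tfrac43\E[(F^3-F)\Pi_{\ge\lambda_r}F]$ enters as the SoS multiplier of the cubic axiom). I would write $\sum_{i} \lambda_i W^i[F] = \sum_{i\le r}\lambda_i W^i[F] + \sum_{i>r}\lambda_i W^i[F]$. The tail $\sum_{i>r}\lambda_i W^i[F]\le (1-\alpha)^{r+1}\sum_{i>r}W^i[F]\le (1-\alpha)^{r+1}\delta(F)$ is handled by the crude bound $\lambda_i\le\lambda_{r+1}$ and $\sum_i W^i[F]=\E[F^2]\le\delta(F)$ (Booleanity again), giving the $(1-(1-\alpha)^{r+1})$ factor. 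For the low-level part $\sum_{i\le r} W^i[F]$, pseudorandomness is what forces it to be small: using $\E_a[f_{i,F}(a)^2]=W^i[F]/\ell^i$ together with the pointwise expansion $f_{i,F}(a)^2 \preceq (\text{low-degree poly})\cdot\delta(F^2|_a)$ — more precisely a Cauchy–Schwarz / Hölder manipulation expressing $f_{i,F}(a)^2$ in terms of $\delta(F^2|_b)$ for $b\subseteq a$ — one extracts, for each $i\le r$, the SoS multipliers $q_a(F)$ of the axioms $\{\gamma-\delta(F^2|_a)\}$ and a constant $c_i=O(\exp(r))$ so that $W^i[F]\le \ell^i\gamma\cdot(\text{something})+\tfrac{c_i\ell^i}{\gamma}\E_a[q_a(F)(\delta(F^2|_a)-\gamma)]$; the $\gamma^{1/3}\exp(r)$ loss comes from optimizing a Hölder-type inequality $W^i[F]\lesssim \gamma^{1/3}\exp(r)\,\delta(F)$ valid under pseudorandomness. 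The multipliers $q_a$ are explicitly quadratic (products/sums of the linear forms $\delta(F|_b)$ and $\delta(F^2|_b)$), so $\{F\in[0,1]\}\vdash_2 0\le q_a(F)\le 1$ is immediate by case analysis plus Luk\'acs (Corollary~\ref{cor:Lukacs}). Assembling the tail bound, the low-level bound, and the Booleanity correction $B(F)$ yields exactly the claimed inequality, with all steps being polynomial identities or degree-$\le O(1)$ SoS inequalities.

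The main obstacle I anticipate is not any single estimate but the bookkeeping of \emph{SoS-ifying} the pointwise/Hölder steps: in~\cite{KMMS} the passage from $\E_a[f_{i,F}(a)^2]$ to a bound in terms of $\max_a\delta(F^2|_a)$ uses Hölder's inequality with a non-integer exponent ($\gamma^{1/3}$), and one must replace this with the genuinely SoS-provable Fact~\ref{fact:sos-hol} ($Y^3Z\preceq \tfrac{3\nu}{4}Y^4+\tfrac1{4\nu^3}Z^4$) applied to the right monomials in $F$, which is what dictates the constant $c_j=\exp(r)$ and the $\gamma^{1/3}$ power — getting the exponents and the multiplier degrees to line up so the final degree stays $O(1)$ is the delicate part. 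A secondary subtlety is that the Johnson-graph eigenvalues are only $\lambda_i=(1-\alpha)^i+o_n(1)$, so one must either absorb the $o_n(1)$ slack into the statement (harmless for $n$ large) or work with the exact rational eigenvalues; I would do the former. Finally, one has to verify that the identity $\langle F,LF\rangle = \delta(F)-\sum_i\lambda_i W^i[F]$ and the level-weight formulas are genuine polynomial identities over the SoS variables (not just true for $\{0,1\}$-valued $F$) — they are, since they are linear-algebraic facts about the fixed matrix $L$, so this costs nothing in degree.
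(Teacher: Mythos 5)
Your overall architecture matches the paper's: the level decomposition $F=\sum_i F_i$, the tail bound via $\lambda_{r+1}=(1-\alpha)^{r+1}$, the functions $f_{i,F}$ with $\E_{a}[f_{i,F}(a)^2]=W^i[F]/\binom{\ell}{i}$ (Lemma~\ref{lem:3.2}), the $q_a$'s as SoS multipliers of the density axioms, and the Booleanity correction $B(F)$ are all the right ingredients. (The paper additionally first transfers to the abelian Cayley graph $C_{n,\ell,\alpha}$ on $[n]^\ell$ and works with permutation-invariant functions there, a technical convenience you omit; and note the theorem only assumes $F\in[0,1]$, so $B(F)$ is an explicit error term in the conclusion rather than a multiplier of a cubic axiom.)

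The genuine gap is in the step where pseudorandomness is supposed to make $\sum_{i\le r}W^i[F]$ small. You propose to combine $\E_a[f_{i,F}(a)^2]=W^i[F]/\binom{\ell}{i}$ with a pointwise/Cauchy--Schwarz bound $f_{i,F}(a)^2\lesssim \delta(F^2|_a)\le\gamma$. That second-moment route only yields $W^i[F]\lesssim \binom{\ell}{i}\exp(i)\gamma$, i.e., it loses a factor of $\ell^i$ in the \emph{main} term, and so cannot produce the claimed $\ell$-independent loss $\gamma^{1/3}\exp(r)\,\delta(F)$ — which is the entire point of the theorem, since the algorithm takes $\gamma=\exp(-r)$ independent of $\ell$. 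The mechanism the paper actually uses is a fourth-moment ``level-$i$ inequality'': a lower bound $\E[F_i^4]\ge 4\eps^3\eta_i-3\eps^4\delta(F)+B(F)$, obtained from $\eta_i=\langle F,F_i\rangle=\E[F^3F_i]+\E[(F-F^3)F_i]$ and the SoS H\"older inequality of Fact~\ref{fact:sos-hol} (this is where $B(F)$ and, after optimizing $\eps\approx(\exp(i)\gamma)^{1/3}$, the exponent $1/3$ come from); and an upper bound $\E[F_i^4]\le\exp(i)\gamma\,\eta_i+\sum_j c_j\ell^{i}\E_a[\cdots(\delta(F^2|_a)-\gamma)]$ (Lemma~\ref{lem:upper-bound}), proved by expanding $\E[F_i^4]=\sum_{I_1,\dots,I_4}\E[f_i(X|_{I_1})\cdots f_i(X|_{I_4})]$, noting that tuples with a singleton index vanish, and taming the remaining $\approx\ell^{d}$ tuples of union size $d$ by repeated Cauchy--Schwarz to extract a compensating $(O(i)/\ell)^{d}$. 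Cancelling $\eta_i$ between the two bounds is what removes the $\binom{\ell}{i}$ loss. You gesture at the H\"older optimization, but without the fourth-moment upper bound and its 4-wise product expansion — which is where essentially all the combinatorial and SoS bookkeeping of this proof lives — the argument does not close.
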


It will be more convenient for us to move to a closely related Cayley graph, which we denote by $C_{n,\ell,\alpha}$. This graph is essentially the same as the Johnson graph, albeit viewed as a product domain. In Theorem~\ref{thm:structure-johnson-cayley}, we state an analogous structure theorem for ``permutation-invariant'' sets on $C_{n,\ell,\alpha}$ that we show in Theorem~\ref{thm:structure-johnson-cayley}, and it is easy to derive Theorem~\ref{thm:structure-johnson-restated} above from it.

\paragraph{Notation.} 
We use $[n]$ to denote the set $\{0,\ldots,n-1\}$, and also the group $(\Z/n\Z)$, the natural numbers modulo $n$. 
Generally, when we take a set $S$ and raise it to a positive integer power $\ell$, we mean the set of all ordered multisets of elements of $S$ of size $\ell$. We use $\Chi_t$ for $t \in [n]$ to denote the characters of the group $\Z/n\Z$ (or the eigenvectors of the $n$-cycle), where $\Chi_t: [n] \rightarrow \mathbb{C}$ is the function $\Chi_t(x) = e^{\frac{2\pi i tx}{n}}$.
We will use $\lambda_G(v)$ to denote the eigenvalue of $v$ which is an eigenvector of the adjacency matrix of graph $G$. For a string $S \in \Sigma^m$, for some alphabet $\Sigma$, and a set $I \subseteq [m]$, we denote its restriction to the set of coordinates in $I$, by $S|_I$.

\subsection{The Graph $C_{n,\ell,\alpha}$ and its Spectrum}

\begin{definition}[Johnson-approximating Cayley Graph $C_{n,\ell,\alpha}$]
Let $\alpha$ be a number in $(0,1)$ and $\ell$ be a positive integer. Let $n$ be a positive integer such that $n > \ell$. 
\begin{enumerate}
\item The vertex set of $C_{n,\ell,\alpha}$ is $[n]^{\ell}$. We will drop the subscript $(n,\ell,\alpha)$ in $C_{n,\ell,\alpha}$ when these parameters are clear from context.
\item The edges are described by the following random process. For a vertex $X = (x_1,\ldots,x_\ell), x_i \in [n]$, the distribution over the neighbours of $X$ is described by: choose $(y_1,\ldots,y_\ell)$ uniformly at random from $[n]^{\ell}$ and $b = (b_1,\ldots,b_{\ell}) \sim \{0,1\}^{\ell}$ such that the Hamming weight of $b$ equals $\alpha \ell$, and output $Z = (x_1 + b_1 \cdot y_1,\ldots,x_l + b_\ell \cdot y_\ell)$ as the neighbour of $X$.
\end{enumerate}
\end{definition}

It is easy to verify that the graph defined above is a weighted Cayley graph with vertex set being the elements of the group $[n]^\ell = (\Z/n\Z)^{\ell}$. The natural group operation associated with this set is component-wise addition modulo $n$. 

We next discuss the spectral properties of $C_{n,\ell\alpha}$. Overloading notations, we let $C = C_{n,\ell,\alpha}$ also denote the normalized adjacency matrix of the graph $C$. Note firstly that the eigenvectors of $C$ are the characters of the group $[n]^\ell$, namely $\raisebox{2pt}{$\chi$}_{T}$ where $T = (T_1,\ldots,T_\ell) \in [n]^{\ell}$ defined as $\Chi_T(x) = \Chi_{T_1}(x_1) \cdot \ldots \cdot \Chi_{T_{\ell}}(x_\ell)$ for $x\in [n]^{\ell}$ 
We next define the degree of an eigenvector.

\begin{definition}[Degree of $\Chi_T$]
For all $T \in [n]^\ell$, where $T = (T_1,\ldots,T_\ell)$, define the degree of $\chi_{T}$ as:
$${\sf deg}(\chi_T) = |T| := |\{i \mid T_i \neq 0\}|.$$
\end{definition}

The following lemma asserts that the eigenvalue of $\chi_T$ is roughly $(1-\alpha)^{|T|}$:
\begin{lemma}\label{lem:eigenval}
Let $\lambda_C(\Chi_T)$ denote the eigenvalue of $C$ corresponding to the eigenvector $\Chi_T$ for $T \in [n]^\ell$. We have that, 
$$
\lambda_{C}(\raisebox{2pt}{$\chi$}_T) = 
\begin{cases}
\frac{\binom{\ell - |T|}{(1 - \alpha)\ell - |T|}}{\binom{\ell}{(1 - \alpha)\ell}}, ~~~~|T| \leq (1 - \alpha)\ell\\
0, ~~~~~~~~~~~~~~~~~~~~~~~\text{otherwise.} 
\end{cases}
$$
\end{lemma}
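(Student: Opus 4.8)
The plan is to exploit the fact that $C = C_{n,\ell,\alpha}$ is a (weighted) Cayley graph over the abelian group $([n])^\ell = (\Z/n\Z)^\ell$, so that its eigenvectors are exactly the characters $\Chi_T$, and the eigenvalue attached to $\Chi_T$ is the Fourier coefficient of the ``step'' distribution. Concretely, let $\mu$ denote the distribution over $([n])^\ell$ from which the step $g = (b_1 y_1,\ldots,b_\ell y_\ell)$ is drawn, where $y \sim [n]^\ell$ is uniform and $b \in \{0,1\}^\ell$ is a uniformly random vector of Hamming weight $\alpha\ell$. Since the neighbour of $X$ is $X+g$ and $\Chi_T$ is a group homomorphism, $(C\Chi_T)(X) = \E_{g\sim\mu}[\Chi_T(X+g)] = \Chi_T(X)\cdot \E_{g\sim\mu}[\Chi_T(g)]$, so $\Chi_T$ is an eigenvector of $C$ with eigenvalue $\lambda_C(\Chi_T) = \E_{g\sim\mu}[\Chi_T(g)]$. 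This reduces the lemma to evaluating this single expectation (and in particular shows the eigenvalue is real, as it must be).

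To evaluate it, I would first condition on $b$. Given $b$, the coordinates $y_1,\ldots,y_\ell$ are independent uniform over $[n]$, and $\Chi_T(g) = \prod_{i=1}^{\ell} \Chi_{T_i}(b_i y_i)$ factors across coordinates, so $\E_y[\Chi_T(g) \mid b] = \prod_{i=1}^{\ell} \E_{y_i\sim[n]}[\Chi_{T_i}(b_i y_i)]$. For coordinates with $b_i = 0$ the factor is $\Chi_{T_i}(0) = 1$; for coordinates with $b_i = 1$, orthogonality of the characters of $\Z/n\Z$ gives $\E_{y_i\sim[n]}[\Chi_{T_i}(y_i)] = \Ind[T_i = 0]$. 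Hence $\E_y[\Chi_T(g)\mid b]$ equals $1$ if $\{i : b_i = 1\}$ is disjoint from $\mathrm{supp}(T) := \{i : T_i \neq 0\}$, and $0$ otherwise. Taking the expectation over $b$ yields $\lambda_C(\Chi_T) = \Pr_b[\mathrm{supp}(b)\cap\mathrm{supp}(T) = \emptyset]$.

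Finally I would compute this probability. Since $b$ is a uniformly random $\alpha\ell$-subset of $[\ell]$ and $|\mathrm{supp}(T)| = |T|$: if $|T| > (1-\alpha)\ell$ then the complement of $\mathrm{supp}(T)$ has size $\ell - |T| < \alpha\ell$, so there is no room for $\alpha\ell$ indices disjoint from $\mathrm{supp}(T)$ and the probability is $0$; otherwise the count is hypergeometric, $\Pr_b[\mathrm{supp}(b)\cap\mathrm{supp}(T) = \emptyset] = \binom{\ell - |T|}{\alpha\ell}\big/\binom{\ell}{\alpha\ell}$. The form stated in the lemma then follows from the elementary identities $\binom{\ell - |T|}{\alpha\ell} = \binom{\ell - |T|}{(1-\alpha)\ell - |T|}$ and $\binom{\ell}{\alpha\ell} = \binom{\ell}{(1-\alpha)\ell}$.

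There is no genuine obstacle here: the argument is a routine application of character orthogonality on $(\Z/n\Z)^\ell$. The only points worth a sentence of care are (i) that a binomial coefficient is read as $0$ once its lower index exceeds its upper index, which is precisely what makes the two cases of the stated formula consistent, and (ii) the trivial rewriting between the $\binom{\cdot}{\alpha\ell}$ and $\binom{\cdot}{(1-\alpha)\ell - |T|}$ presentations.
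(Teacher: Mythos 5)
Your proposal is correct and follows essentially the same route as the paper: identify $\lambda_C(\Chi_T)$ with $\E_{y,b}[\Chi_{(b_1T_1,\ldots,b_\ell T_\ell)}(y)]$ via the Cayley-graph/character structure, reduce by orthogonality to $\Pr_b[\mathrm{supp}(b)\cap\mathrm{supp}(T)=\emptyset]$, and evaluate the hypergeometric count. The only (harmless) difference is that you spell out the coordinate-wise conditioning on $b$ and the binomial-coefficient rewriting slightly more explicitly than the paper does.
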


\begin{proof}
Let $T = (T_1, \ldots, T_\ell)$. For all $X \in [n]^\ell$, we have that,

\begin{align*}
C \cdot \Chi_T(X) = \E_{y,b}[\Chi_T(x_1+b_1 y, \ldots, x_l + b_l y_l)] 
&= \Chi_T(X) \E_{y,b}[\Chi_{T_1,\ldots,T_l}(b_1y_1,\ldots,b_ly_l)] \\
&= \Chi_T(X) \E_{y,b}[\Chi_{(b_1 T_1,\ldots,b_l T_l)}(y)].
\end{align*}

For $y,S \in [n]$ and $S \neq 0$, we know that the eigenvector $\Chi_S$ is orthogonal to the eigenvector $\Chi_0$, equivalently that $\E_y[\Chi_S(y)] = 0$, whereas if $S = 0$ then $\E_y[\Chi_S(y)] = 1$. So we get that,

\begin{align*}
\lambda_{C}(\Chi_T) = \E_{y,b}[\Chi_{(b_1 T_1,\ldots,b_\ell T_\ell)}(y)] 
&= \Pr_b[(b_1T_1,\ldots,b_\ell T_\ell) = 0^{\ell}] \\
&= \begin{cases}
\frac{\binom{l - |T|}{(1 - \alpha)l - |T|}}{\binom{l}{(1 - \alpha)l}}, ~~~~|T| \leq (1 - \alpha)\ell\\
0, ~~~~~~~~~~~~~~~~~~~~~~~\text{otherwise.} 
\end{cases}.
\end{align*}
\end{proof}

\subsection{Analyzing non-expanding sets of the Johnson graph}
Since we want to deal with sets in the Johnson graph $J_{n,\ell,\alpha}$ we will only consider ``permutation-invariant'' sets on $C_{n,\ell,
\alpha}$. Notice that the vertices of the Johnson graph are subsets of $[n]$ of size $\ell$, whereas the vertices of the Johnson-approximating graph are ordered $\ell$-tuples of $[n]$. Therefore, given a set $S$ in the Johnson graph, it has a natural mapping to the set $S'$ which is a subset of the vertices of the Johnson-approximating graph $C$, 
$S' := \{(x_{\pi(1)},\ldots,x_{\pi(\ell)}) \mid \pi: [\ell] \rightarrow [\ell], \{x_1,\ldots,x_{\ell}\} \in S\}$. This leads to the following definition:

\begin{definition}[Permutation-invariance]\label{def:perm-inv}
We say that a set $S \subseteq C_{n,\ell,\alpha}$ is permutation-invariant if for all permutations $\pi \in \mathcal{S}_\ell$ and all $X = (x_1,\ldots,x_\ell) \in S$ we have that $X_{\pi} = (x_{\pi(1)},\ldots,x_{\pi(\ell)})$ belongs to $S$. Similarly, a function $F: V(C) \rightarrow \R$ is permutation invariant if for all inputs $X = (x_1,\ldots,x_{\ell})$, we have that $F(x_1,\ldots,x_\ell) = F(x_{\pi(1)},\ldots,x_{\pi(\ell)})$, for all $\pi\in S_{\ell}$. Further let $\A_{inv}$ denote the set of axioms that $F$ is permutation-invariant, that is, 
\[\A_{inv} := \{F(x_1,\ldots,x_\ell) = F(x_{\pi(1)},\ldots,x_{\pi(\ell)})\}_{\pi \in \mathcal{S}_\ell, X \in [n]^\ell}.\]
\end{definition}

Since the set of vertices in $C$ that correspond to some set of vertices in $J$ are permutation invariant it will be enough to focus are attention on these special sets and from now on whenever we refer to a set in $V(C)$, the reader can assume that it is permutation-invariant.

To analyze non-expanding sets of $C$, we will consider permutation-invariant functions $F: V(C) \rightarrow [0,1]$. Typically one would consider $0/1$-valued functions $F$, where $F$ is the indicator function of a set $S$, i.e. $F(X) = 1$ when $X \in S$. But since we need to analyze ``approximate-sets'' (the indicator function is approximated by a polynomial that takes values close to $0/1$), $F(X)$ could take any value between $[0,1]$.

Recall that the Fourier decomposition of $F$ gives us that, $F(X) = \sum_T \hat{F}(T)\Chi_T(X)$. We will now define the following for a function $F$:

\begin{definition}\label{def:f_i}
Given a function $F\colon V(C)\to [0,1]$, we have the following level decomposition:
\begin{enumerate}
\item We write 
$F = F_0 + \ldots + F_\ell$,
where $F_i(X) = \sum_{T: |T| = i} \hat{F}(T) \Chi_T(X)$. We will call $F$ a level $i$ function, if its Fourier decomposition has degree $i$ characters only, i.e. $\hat{F}(T) = 0,$ for all $T$ such that $|T| \neq i$.
\item Let $f_{i,F}: [n]^i \rightarrow \R$ be a function defined as, 
$$f_{i,F}(x_1,\ldots,x_i) := \sum_{T_1,\ldots,T_i \in ([n] \setminus 0)^i}\hat{F}(T_1,\ldots,T_i,0,\ldots)\chi_{T_1,\ldots,T_i}(x_1,\ldots,x_i),$$ 
\end{enumerate}
\end{definition}

Let $X = (x_1,\ldots,x_{j}) \in [n]^{j}$ and $I$ be a subset of $\{1,\ldots,j\}$. Let $I = \{k_1,\ldots,k_{|I|}\}$ where $k_1 < k_2 < \ldots < k_{|I|}$. We will use $X|_I$ to denote the ordered tuple of elements $(x_{k_1},\ldots,x_{k_{|I|}})$. We will now state some simple properties of $F$ that are implied by permutation-invariance.

\begin{lemma}\label{lem:props}
For all functions $F:[n]^\ell \rightarrow \R$ that are permutation-invariant, we have that:
\begin{enumerate}
\item $\hat{F}(T_1,\ldots, T_\ell) = \hat{F}(T_{\pi(1)},\ldots,T_{\pi(\ell)})$, for all $(T_1,\ldots,T_\ell) \in [n]^\ell$ and all permutations $\pi:[\ell] \rightarrow [\ell]$.

\item The functions $F_i$ and $f_{i,F}$ are also permutation-invariant.

\item $F_i(X) = \sum\limits_{\substack{I \subseteq [\ell] \\ |I| = i}} f_{i,F}(X|_{I})$.
\end{enumerate}
\end{lemma}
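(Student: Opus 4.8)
\textbf{Proof plan for Lemma~\ref{lem:props}.}

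The plan is to derive all three items purely from the definition of the Fourier expansion of $F$ over the group $[n]^{\ell}$ together with the permutation-invariance axioms $\cA_{\mathrm{inv}}$. First I would record the basic fact that for any $X\in[n]^{\ell}$ and any permutation $\pi\in S_{\ell}$ we have $\Chi_T(X_\pi)=\Chi_{T_{\pi^{-1}}}(X)$, where $X_\pi=(x_{\pi(1)},\dots,x_{\pi(\ell)})$ and similarly $T_{\pi^{-1}}$ denotes the permuted index tuple; this is immediate from $\Chi_T(X)=\prod_i\Chi_{T_i}(x_i)$ and reindexing the product. Combining this with $F(X)=F(X_\pi)$ (an axiom in $\cA_{\mathrm{inv}}$) and the uniqueness of Fourier coefficients yields, by matching coefficients, $\hat F(T)=\hat F(T_\pi)$ for all $\pi$, which is item~(1). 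Concretely, $F(X_\pi)=\sum_T\hat F(T)\Chi_T(X_\pi)=\sum_T\hat F(T)\Chi_{T_{\pi^{-1}}}(X)=\sum_{T'}\hat F(T'_{\pi})\Chi_{T'}(X)$, and comparing with $F(X)=\sum_{T'}\hat F(T')\Chi_{T'}(X)$ gives $\hat F(T')=\hat F(T'_\pi)$.

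For item~(2), permutation-invariance of $F_i$ follows because $F_i$ is defined by restricting the Fourier support to $\{T:|T|=i\}$, and this support is closed under permuting indices; so $F_i(X_\pi)=\sum_{|T|=i}\hat F(T)\Chi_{T_{\pi^{-1}}}(X)=\sum_{|T|=i}\hat F(T_\pi)\Chi_{T_{\pi^{-1}}\cdot(\text{same relabel})}(X)$, and after relabeling the summation index and using item~(1) this equals $F_i(X)$ — i.e. the same coefficient-matching argument applied levelwise. For $f_{i,F}:[n]^i\to\R$, invariance under $\pi\in S_i$ is checked the same way: its Fourier coefficients are $\hat F(T_1,\dots,T_i,0,\dots,0)$ with $T_1,\dots,T_i\neq 0$, and permuting the first $i$ arguments permutes these coefficients among themselves, so by item~(1) (applied with a permutation fixing coordinates $i+1,\dots,\ell$) the value is unchanged.

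Item~(3) is the one requiring a genuine (if short) computation: I would start from $F_i(X)=\sum_{T:|T|=i}\hat F(T)\Chi_T(X)$ and partition the sum over such $T$ according to the support $\mathrm{supp}(T)=I\subseteq[\ell]$ with $|I|=i$. For a fixed $I$, the characters with support exactly $I$ are $\Chi_{T}$ where $T$ is supported on $I$ with all entries in $I$ nonzero, and $\Chi_T(X)$ depends only on $X|_I$. By item~(1), $\hat F(T)$ depends only on the multiset of nonzero entries — more precisely, for $T$ supported on $I$, $\hat F(T)=\hat F(T')$ where $T'$ is the tuple with the same nonzero values placed in the first $i$ coordinates (choose $\pi$ mapping $I$ to $\{1,\dots,i\}$ in increasing order). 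Hence $\sum_{T:\,\mathrm{supp}(T)=I}\hat F(T)\Chi_T(X)=\sum_{(T_1,\dots,T_i)\in([n]\setminus 0)^i}\hat F(T_1,\dots,T_i,0,\dots)\Chi_{T_1,\dots,T_i}(X|_I)=f_{i,F}(X|_I)$. Summing over all $I$ with $|I|=i$ gives $F_i(X)=\sum_{|I|=i}f_{i,F}(X|_I)$, as claimed. The main (minor) obstacle is just bookkeeping the reindexing in the last step — making sure that the permutation used to normalize $T$ to its "first $i$ coordinates" form is consistent with the increasing-order identification of $I$ with $\{1,\dots,i\}$ used in defining $X|_I$ — but this is routine once item~(1) is in hand. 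No SoS machinery is needed here; these are ordinary identities about $F$ viewed as a function, which is why the lemma is stated without a $\vdash$ symbol.
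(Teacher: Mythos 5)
Your proof is correct, and it fills in exactly the "straightforward manipulation of the definitions" that the paper leaves implicit: coefficient matching via $\Chi_T(X_\pi)=\Chi_{T_{\pi^{-1}}}(X)$ for item (1), closure of the level-$i$ support under permutations for item (2), and partitioning the level-$i$ sum by the support of $T$ for item (3). No issues.
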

\begin{proof}
A straightforward manipulation of the definitions.
\end{proof}

We note that the Fourier coefficients $\hat{F}(T), \forall T \in \C$, the level-$i$ functions $F_i(X)$ and $f_i(X)$ are all linear functions of the indeterminates $\{F(X)\}$,
hence we shall also think of them as indeterminates 
when arguing about \sos proofs.

\subsection{Restrictions}
In this section, we define restrictions of functions $F$ and state several lemmas related to them.
\begin{definition}[$r$-restricted subcubes of $C$]
Given an ordered tuple, $A = (a_1,\ldots,a_r)$ for $a_i \in [n]$ and $r \leq \ell-1$, we let $C|_{A}$ denote the subset of vertices of $C$ whose first $r$ coordinates are restricted to be $(a_1,\ldots,a_r)$. We call such a subset an $r$-restricted subcube of $C$.
\end{definition}

\begin{definition}[Restrictions]
Given a function $F:[n]^\ell \rightarrow \R$ and an ordered tuple, $A = (a_1,\ldots,a_r)$ for $a_i \in [n]$ and $1 \leq r \leq \ell-1$, we define the restricted function $F|_A: [n]^{\ell - r} \rightarrow \R$ as, 
$$F|_A(x_1,\ldots,x_{\ell-r}) = F(a_1,\ldots,a_r, x_1,\ldots,x_{\ell-r}).$$
Further, let $\delta_A(F)$ denote the mass of the function restricted to $A$, that is,
$$\delta_A(F) := \delta(F|_A) = \E\limits_{X \in [n]^{\ell - r}}[F|_A(X)].$$
For convenience, when $A = \phi$ ($r = 0$), define $F|_A(X) := F(X)$, and $\delta_A(F) := \delta(F) = \E_{X \in [n]^\ell}[F(X)]$.
\end{definition}

The following lemma gives a relation between the level functions of a function $F$ and the level function of its restrictions.
\begin{lemma}\label{lem:restriction}
Let $F$ be a permutation-invariant function on $V(C)$. Then we have the following:
\begin{enumerate}
\item For all $a \in [n]$ and for all $i$ such that $0 \leq i \leq \ell-1$, and all $X \in [n]^i$, we have that,
\[f_{i+1,F}(a, X) = f_{i,F|_{\{a\}}}(X) - f_{i,F}(X).\]

\item For all integers $i$ such that $0 \leq i \leq \ell$ and for all $X \in [n]^i$, we get an inclusion-exclusion formula for $f_i(X)$ in terms of restrictions of $F$: 
\[f_{i,F}(X) = \sum_{B \subseteq \{1,\ldots,i\}} (-1)^{i - |B|} \delta_{X|_B}(F),\]
where $X|_B$ is the ordered tuple of elements of $X$ restricted to the indices in $B$.
\end{enumerate}
\end{lemma}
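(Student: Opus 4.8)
\textbf{Proof plan for Lemma~\ref{lem:restriction}.}

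The plan is to prove both parts by direct Fourier manipulation, using permutation-invariance (Lemma~\ref{lem:props}) only insofar as it guarantees the $f_{i,F}$ are well-defined symmetric objects; the core identities are purely combinatorial. For part (1), I would start from the definition $f_{i+1,F}(a,X) = \sum_{T_1,\dots,T_{i+1}\in([n]\setminus 0)^{i+1}} \hat F(T_1,\dots,T_{i+1},0,\dots)\,\Chi_{T_1}(a)\Chi_{T_2,\dots,T_{i+1}}(X)$ and split the sum according to whether we ``remember'' the dependence on $a$. The key computation is that restricting $F$ to the first coordinate equal to $a$ corresponds, at the Fourier level, to collapsing the first coordinate: for any fixed tail-pattern $(T_2,\dots,T_{i+1},0,\dots)$, one has $\widehat{F|_{\{a\}}}(T_2,\dots) = \sum_{T_1\in[n]}\hat F(T_1,T_2,\dots)\Chi_{T_1}(a)$. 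Separating the $T_1=0$ term from the $T_1\neq 0$ terms in this last sum, and matching the $T_1=0$ piece with $f_{i,F}(X)$ and the $T_1\neq 0$ piece with $f_{i+1,F}(a,X)$, gives exactly $f_{i,F|_{\{a\}}}(X) = f_{i,F}(X) + f_{i+1,F}(a,X)$, which is the claimed identity after rearranging. I would be slightly careful that $f_{i,F|_{\{a\}}}$ is the level-$i$ symmetrized object for the $(\ell-1)$-variable function $F|_{\{a\}}$, which is legitimate since $F|_{\{a\}}$ is still permutation-invariant in its remaining coordinates.

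For part (2), the cleanest route is induction on $i$ using part (1), or alternatively a direct ``telescoping'' of the recursion. Taking the inductive approach: the base case $i=0$ says $f_{0,F}(\,) = \delta(F)$, which is immediate since $f_{0,F}$ is just the constant Fourier coefficient $\hat F(0,\dots,0) = \E_X[F(X)] = \delta(F)$. For the inductive step, write $X = (x_1, X')$ with $X'\in[n]^{i-1}$ and apply part (1) with $a = x_1$: $f_{i,F}(x_1,X') = f_{i-1,F|_{\{x_1\}}}(X') - f_{i-1,F}(X')$. Applying the inductive hypothesis to both terms on the right — to $F|_{\{x_1\}}$ (whose restrictions $\delta_{X'|_{B'}}(F|_{\{x_1\}})$ equal $\delta_{(x_1,X'|_{B'})}(F)$, i.e. restrictions of $F$ that include the index $1$) and to $F$ — yields $\sum_{B'\subseteq\{2,\dots,i\}}(-1)^{(i-1)-|B'|}\big(\delta_{(x_1, X|_{B'})}(F) - \delta_{X|_{B'}}(F)\big)$. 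Re-indexing the first family of terms as subsets $B = \{1\}\cup B'$ of $\{1,\dots,i\}$ containing $1$, and the second as subsets $B = B'$ not containing $1$, and checking the sign $(-1)^{(i-1)-|B'|} = (-1)^{i-|B|}$ in the first case and $(-1)^{(i-1)-|B'|} = -(-1)^{i-|B|}$ (absorbing the overall minus sign) in the second, recombines everything into $\sum_{B\subseteq\{1,\dots,i\}}(-1)^{i-|B|}\delta_{X|_B}(F)$, as desired.

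I expect the main obstacle to be purely bookkeeping: keeping the sign conventions and the index-relabeling straight when restrictions of $F|_{\{a\}}$ get re-expressed as restrictions of $F$, and making sure the ``$0$'th coordinate'' in the restriction lines up with ``$B$ contains $1$'' after the re-indexing. There is no analytic difficulty and no need to invoke the spectral structure (Lemma~\ref{lem:eigenval}) or pseudorandomness here — this lemma is a structural fact about symmetric functions and their restrictions. A minor point worth stating explicitly is that all the quantities $\hat F(T)$, $F_i$, $f_{i,F}$, and $\delta_A(F)$ are \emph{linear} in the indeterminates $\{F(X)\}$, so the identities in parts (1) and (2) are linear identities; in particular they hold verbatim as degree-$1$ (hence \sos) identities, which is what lets them be used inside the \sos proof of Theorem~\ref{thm:structure-johnson-restated}.
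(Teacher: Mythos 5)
Your proposal is correct and follows essentially the same route as the paper: part (1) is the same Fourier computation (the identity $\widehat{F|_{\{a\}}}(T)=\sum_{T_1}\hat F(T_1,T)\Chi_{T_1}(a)$ together with splitting off the $T_1=0$ term, which is where permutation-invariance of the Fourier coefficients is actually used), and part (2) is the same induction on $i$ via part (1) with the same re-indexing of subsets according to whether they contain the first index. Your closing remark that these are degree-$1$ identities in the indeterminates $\{F(X)\}$ is consistent with how the paper later uses them inside \sos arguments.
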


\begin{proof}
We prove each item separately.
\paragraph{Proof of the first item.}
Using the definition, we can expand out $f_{i+1,F}$ to get that,

\[f_{i,F}(a, X) = \sum_{(T_1,\ldots,T_{i+1}) \in ([n] \setminus 0)^{i+1}}\hat{F}(T_1,\ldots,T_{i+1},0,\ldots, 0)\Chi_{T_1,\ldots,T_{i+1}}(a, X).\]

We can split this sum into two parts, one where $T_{1}$ can take \emph{any} value (even $0$) and the 
second where $T_{1} = 0$. We get that,

\begin{align*}
f_{i,F}(a,X) &= \sum_{ \substack{T_1 \in [n] \\ T \in ([n] \setminus 0)^i}}\hat{F}(T_1,T, 0,\ldots,0,\ldots, 0)\Chi_{T_1,T}(a,X) 
- \sum_{ \substack{T_1 = 0 \\ T \in ([n] \setminus 0)^i}}\hat{F}(0, T ,0,\ldots, 0)\Chi_{T}(X).
\end{align*}
We will show that the first term equals $f_{i, F|_{\{a\}}}(X)$ and the second term equals $f_{i,F}(X)$. This implies the conclusion needed.

For the first term we have that, 
\begin{align}
&\sum_{ \substack{T_1 \in [n] \\ T \in ([n] \setminus 0)^i}}\hat{F}(T_1, T,0,\ldots,0)\Chi_{T_1,T}(a, X) \nonumber \\
= &\sum_{ \substack{T_1 \in [n] \\ T \in ([n] \setminus 0)^i}} \E_{\substack{Y_1 \in [n], \\ Y \in [n]^{\ell - 1}}}\left[F(Y_1,Y)\overline{\Chi_{T_1}}(Y_1)\overline{\Chi_{(T,0,\ldots,0)}(Y)}\right]\Chi_{T_1}(a)\Chi_{T}(X) \nonumber \\
= &\sum_{T \in ([n] \setminus 0)^i} \E_{\substack{Y_1 \in [n], \\ Y \in [n]^{\ell - 1}}}\left[F(Y_1,Y)\overline{\Chi_{(T,0,\ldots,0)}(Y)}\sum_{T_1 \in [n]}\Chi_{T_1}(a-Y_1)\right]\Chi_{T}(X). \label{eq:last}
\end{align}

We now have that $\sum_{T_1 \in [n]}\Chi_{T_1}(a-Y_1) = 0$ if $Y_1 \neq a$ and equals $n$ otherwise. Using this fact we get that~\eqref{eq:last} is equal to
\begin{align*}
\sum_{T \in ([n] \setminus 0)^i} \frac{1}{n} \cdot \E_{Y \in [n]^{\ell - 1}}\left[F(a,Y)\overline{\Chi_{(T,0,\ldots,0)}(Y)}\cdot n\right]\Chi_{T}(X) 
= \sum_{T \in ([n] \setminus 0)^i} \widehat{F|_{\{a\}}}(T,0,\ldots,0)\Chi_{T}(X),
\end{align*}
which is equal to $f_{i, F|_{\{a\}}}(X)$ by definition.

As for the second term, it is equal to
\begin{align*}
&\sum_{ \substack{T_1 = 0 \\ (T_2,\ldots,T_{i+1}) \in ([n] \setminus 0)^i}}\widehat{F}(0, T_2,\ldots,T_{i+1},0,\ldots, 0)\Chi_{T_2,\ldots,T_{i+1}}(X)\\
= &\sum_{(T_2,\ldots,T_{i+1}) \in ([n] \setminus 0)^i}\hat{F}(T_2, \ldots,T_{i+1},0,\ldots, 0)\Chi_{T_2,\ldots,T_{i+1}}(X),
\end{align*}
since by Lemma~\ref{lem:props} (1) we have that $\hat{F}(0, T_2,\ldots,T_{i+1}, 0, \ldots) = \hat{F}(0, T_2,\ldots,T_{i+1}, 0, \ldots)$. Since the last equality is the definition of $f_{i,F}(X)$, the conclusion follows.

\paragraph{Proof of the second item.}
We will prove this claim by induction on $i$. For the base case of $i = 0$, by definition, we have that,

\[f_{0,F}(\phi) = \hat{F}(0,\ldots,0) = \E_{X \in [n]^\ell}[F(X)] = \delta(F) = \delta_{\phi}(F) = \sum_{B \subseteq \phi} \delta_{\phi|_B}(F).\]

Now let us assume that for all permutation-invariant functions $G$ the claim holds for $i-1$, i.e. for all $X \in [n^{i-1}]$, we have that $f_{i-1,G}(X) = \sum_{B \in \{1,\ldots,i-1\}} (-1)^{i-1-|B|} \delta_{X|_{B}}(G)$. Now we will prove the  claim for $f_{i,F}$, thus completing the induction.
Let $X = (x_1, X')$, where $X \in [n]^i, x_1 \in [n]$ and $X' \in [n]^{i-1}$. Then by property (1) of the same lemma, we have that,
\[f_{i,F}(X) = f_{i,F|_{\{x_1\}}}(X') - f_{i,F}(X').\]

Expanding the RHS using the induction hypothesis on the functions $F|_{\{x_1\}}$ and $F$, we get that,

\begin{align*}
f_{i,F}(X) &= \sum_{B' \in \{1,\ldots,i-1\}} (-1)^{i-1-|B'|} \delta_{X'|_{B'}}(F|_{\{x_1\}}) - \sum_{B' \in \{1,\ldots,i-1\}} (-1)^{i-1-|B'|} \delta_{X'|_{B'}}(F) \\
&= \sum_{B' \in \{1,\ldots,i-1\}} (-1)^{i-(1+|B'|)} \delta_{(x_1,X'|_{B'})}(F) + \sum_{B' \in \{1,\ldots,i-1\}} (-1)^{i-|B'|} \delta_{X'|_{B'}}(F) \\
&= \sum_{B \in \{1,\ldots,i\}: 1 \in B} (-1)^{i-|B|} \delta_{X|_B}(F) + \sum_{B \in \{1,\ldots,i\}: 1 \notin B} (-1)^{i-|B|} \delta_{X|_{B}}(F) \\
&= \sum_{B \in \{1,\ldots, i\}} (-1)^{i-|B|} \delta_{X|_B}(F).
\end{align*}

This completes the inductive step and the proof of the lemma.
\end{proof}

\subsection{Calculating $2$-nd Moments}
Let $W_i(F)$ denote the weight of $F$ on the $i^{th}$ level, that is, let $\eta_i = W_i[F] = \E_X[F_i(X)^2]$. Let $\delta(F)$ denote $\E_X[F(X)]$. Now we will derive some lemmas about the second moments. We get the following relation between the level-$i$ Fourier weight $\eta_i$ and the $\ell_2$-norm of the $f_i$'s.

\begin{lemma}\label{lem:3.2}
For all $i \in [\ell]$, given the variables $\{F(X)\}_{X \in V(C)}$:
$$\cA_{inv} ~~\vdash_2 ~~\E_{X \in [n]^{i}}[f_{i,F}(X)^2] =  \frac{\eta_{i}}{\binom{\ell}{i}}.$$
\end{lemma}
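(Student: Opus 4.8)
\textbf{Proof plan for Lemma~\ref{lem:3.2}.}

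The plan is to expand both sides using the Fourier/Parseval identity and match terms, exploiting permutation-invariance to turn a sum over ordered degree-$i$ tuples $T\in[n]^\ell$ into a symmetric count. First I would recall from Definition~\ref{def:f_i} that $f_{i,F}(x_1,\dots,x_i)=\sum_{T'\in([n]\setminus 0)^i}\hat F(T',0,\dots,0)\,\Chi_{T'}(x_1,\dots,x_i)$, so by Parseval on the group $[n]^i$ we get
\[
\E_{X\in[n]^i}[f_{i,F}(X)^2]=\sum_{T'\in([n]\setminus 0)^i}\hat F(T',0,\dots,0)^2 .
\]
On the other side, $\eta_i=W_i[F]=\E_X[F_i(X)^2]=\sum_{T:|T|=i}\hat F(T)^2$, where the sum is over all $T\in[n]^\ell$ with exactly $i$ nonzero coordinates. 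The key combinatorial point is that, by Lemma~\ref{lem:props}(1), $\hat F$ is invariant under permuting coordinates of $T$, so for a fixed choice of which $i$ of the $\ell$ coordinates are nonzero and a fixed assignment $T'\in([n]\setminus 0)^i$ of values to those coordinates, all $\binom{\ell}{i}$ placements give the same value $\hat F(T',0,\dots,0)^2$. Hence $\sum_{T:|T|=i}\hat F(T)^2=\binom{\ell}{i}\sum_{T'\in([n]\setminus 0)^i}\hat F(T',0,\dots,0)^2$, which rearranges to exactly the claimed identity $\E_X[f_{i,F}(X)^2]=\eta_i/\binom{\ell}{i}$.

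Since the statement must hold as a degree-$2$ \sos proof modulo $\cA_{inv}$, I would then check that each step above is a manipulation of the indeterminates $\{F(X)\}$ that only uses (i) the linear relations defining $\hat F(T)$, $F_i$, $f_{i,F}$ in terms of $\{F(X)\}$, and (ii) the linear axioms in $\cA_{inv}$ asserting $F(x_1,\dots,x_\ell)=F(x_{\pi(1)},\dots,x_{\pi(\ell)})$. Concretely: $\E_X[f_{i,F}(X)^2]$ and $\eta_i=\E_X[F_i(X)^2]$ are both fixed quadratic forms in $\{F(X)\}$ (after writing the characters out and averaging), and the desired identity is an equality of two quadratic polynomials in $\{F(X)\}$ which becomes a polynomial identity once we substitute the consequences of $\cA_{inv}$ that equate the relevant Fourier coefficients. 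An equality of quadratics that holds identically modulo linear axioms has a degree-$2$ \sos proof (indeed it is just an algebraic identity: $p(F)-q(F)=\sum_k c_k(F)\cdot(\text{axiom}_k)$ with the $c_k$ linear), so the degree bound is immediate.

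I do not expect a serious obstacle here; the only mild care needed is bookkeeping with the ordered-versus-symmetric counting and making sure the ``$\binom{\ell}{i}$ identical placements'' step is phrased as a linear consequence of the permutation-invariance axioms rather than an informal symmetry argument, so that it is genuinely captured in degree $2$. In particular one should be slightly careful that $f_{i,F}$ as defined pins the nonzero coordinates to the \emph{first} $i$ positions, so when comparing to $F_i(X)=\sum_{|I|=i}f_{i,F}(X|_I)$ (Lemma~\ref{lem:props}(3)) the cross terms $\E_X[f_{i,F}(X|_I)f_{i,F}(X|_{I'})]$ for $I\ne I'$ must be shown to vanish — this again follows from orthogonality of distinct characters after averaging over $X$, and is a degree-$2$ identity in $\{F(X)\}$. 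Collecting these, the lemma follows.
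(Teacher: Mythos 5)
Your proposal is correct and follows essentially the same route as the paper: the paper expands $\eta_i = \E_A\bigl[(\sum_{|I|=i} f_{i,F}(A|_I))^2\bigr]$ via Lemma~\ref{lem:props}(3), kills the cross terms by orthogonality, and observes that the $\binom{\ell}{i}$ diagonal terms are all equal to $\E_{X\in[n]^i}[f_{i,F}(X)^2]$ --- which is just the physical-space version of your Fourier-side Parseval count (and is exactly the ``cross terms vanish'' check you append at the end). The only nit is that the characters $\Chi_T$ are complex-valued, so Parseval yields $\sum_T |\hat F(T)|^2$ rather than $\sum_T \hat F(T)^2$; this is harmless for real $F$, but one should carry the conjugates (or work with the real quadratic forms directly, as the paper does) so that the claimed degree-$2$ identity in the real indeterminates $\{F(X)\}$ is literal.
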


\begin{proof}
Since $F_i(A) = \sum_{I \subseteq [\ell]}f_i(A|_I)$, we have that, 
\begin{align*}
    \eta_i=\E_{A \sim [n]^{\ell}}[F_i(A)^2] = 
    \E_A[(\sum_I f_i(A_I))^2] 
    &= \sum_I \E_A[ f_i(A_I)^2] + \sum_{I \neq I'} \E_{A}[f_i(A_I)f_i(A_{I'})] \\
    &= \sum_I \E_A[ f_i(A_I)^2] + 0 \\
    &= \E_{(x_1,\ldots,x_i)}[f_i(X)^2] \cdot \binom{\ell}{i},
\end{align*}
and rearranging implies the lemma. 
\end{proof}

\begin{lemma}\label{lem:p.r.-weight}
For all $i \in [\ell]$, for all $a,b \in [i]$, for all $A \in (\{0,1\}^k)^a$, given the indeterminates $\{F(X)\}_{X \in V(C)}$:
\begin{align*}
\cA_{inv}(F) ~~~\vdash_2 ~~~ &\E_{X \in (\{0,1\}^k)^{i-a}}[f_i(A,X)^2] \leq \frac{\delta(F^{ 2}|_A)}{{l-a \choose i-a}}.
\end{align*}
\end{lemma}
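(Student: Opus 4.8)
\textbf{Proof plan for Lemma~\ref{lem:p.r.-weight}.}
The statement is a ``localized'' version of Lemma~\ref{lem:3.2}: instead of averaging $f_{i,F}(X)^2$ over all of $[n]^i$, we first fix the first $a$ coordinates to equal a tuple $A$ and only average over the remaining $i-a$ coordinates, and we bound the result not by the $i$-th level weight but by the second moment of the restriction $\delta(F^2|_A)$. The natural approach is to reduce this to Lemma~\ref{lem:3.2} applied to the restricted function $F|_A$, using the relationship between $f_{i,F}$ restricted to its first $a$ arguments being fixed and $f_{i-a, F|_A}$. First I would record the fact, which follows by iterating the first item of Lemma~\ref{lem:restriction} (and using permutation-invariance, Lemma~\ref{lem:props}), that for fixed $A = (a_1,\dots,a_a)$ one has an identity expressing $f_{i,F}(A,X)$ as a signed combination of $f_{i-a, F|_{A'}}(X)$ over sub-tuples $A'$ of $A$ — more precisely, $f_{i,F}(A,\cdot)$ is the top ($(i-a)$-level) part of the ``discrete derivative'' of $F$ in the directions fixed by $A$. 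The cleanest route, though, is to observe that $f_{i,F}(A,X)$ as a function of $X \in [n]^{i-a}$ is exactly (a piece of) $f_{i-a}$ of the function $G_A := $ the appropriate alternating sum $\sum_{A' \subseteq A}(-1)^{a-|A'|} F|_{A'}$, whose square has average at most $\delta(F^2|_A)$ up to the combinatorial normalization; but since we only need an inequality, it is simpler still to directly bound $\E_X[f_{i,F}(A,X)^2]$ by $\E_X[ (F|_A)_{i-a}(X)^2]/\binom{\ell-a}{i-a}$ via the same orthogonality computation as in Lemma~\ref{lem:3.2}.

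Concretely, the key steps in order: (1) Fix $A \in [n]^a$ and consider the restricted function $H := F|_A : [n]^{\ell-a} \to \R$, which by Lemma~\ref{lem:props}(2) is still permutation-invariant in its $\ell-a$ arguments. (2) Show that $f_{i,F}(A, X) = f_{i-a, H}(X)$; this is the heart of the argument and follows by induction on $a$ from Lemma~\ref{lem:restriction}(1), peeling off one fixed coordinate at a time — note Lemma~\ref{lem:restriction}(1) gives $f_{i+1,F}(a,X) = f_{i,F|_{\{a\}}}(X) - f_{i,F}(X)$, but here we want the version where \emph{all} the first $a$ coordinates carry nonzero characters, which is precisely $f_{i,F}(A,\cdot)$ being $f_{i-a}$ of a restriction once we track which Fourier coefficients survive; I would phrase this via the Fourier expansion of $f_{i,F}$ directly, summing only over tuples whose first $a$ entries are nonzero, and matching it term-by-term with the Fourier expansion of $f_{i-a,H}$. (3) Apply Lemma~\ref{lem:3.2} to $H$ on the domain $[n]^{\ell-a}$ to get $\E_{X\in[n]^{i-a}}[f_{i-a,H}(X)^2] = W_{i-a}[H]/\binom{\ell-a}{i-a}$. (4) Bound the level weight by the total weight, $W_{i-a}[H] \le \E_X[H(X)^2] = \delta(H^2) = \delta(F^2|_A)$, using Parseval; this last step is a sum-of-squares inequality of degree $2$ since $W_{i-a}[H]$ and $\E[H^2] - W_{i-a}[H] = \sum_{j\neq i-a} W_j[H]$ are each manifestly sums of squares of linear forms in $\{F(X)\}$. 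Combining (2)–(4) gives $\E_X[f_{i,F}(A,X)^2] \le \delta(F^2|_A)/\binom{\ell-a}{i-a}$, as required, and the entire chain is a degree-$2$ \sos derivation from $\cA_{inv}(F)$ because every identity used (Fourier expansions, Lemma~\ref{lem:restriction}) is linear in the indeterminates and only one squaring is performed.

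\textbf{Main obstacle.} The delicate point is step (2): carefully verifying that fixing the first $a$ coordinates of $f_{i,F}$ to $A$ yields \emph{exactly} $f_{i-a, F|_A}$ on the remaining coordinates, with no leftover lower-degree terms and the right normalization. Lemma~\ref{lem:restriction}(1) is stated for peeling one coordinate and involves a \emph{difference} $f_{i,F|_{\{a\}}} - f_{i,F}$, so naively iterating produces an inclusion-exclusion sum of restrictions rather than a single restriction; the resolution is that when we evaluate at a \emph{fixed} tuple $A$ and look at the coefficient structure, $f_{i,F}(A,X) = \sum_{T_1,\dots,T_a \neq 0,\ T_{a+1},\dots,T_i \neq 0} \hat F(T)\,\Chi(A,X)$, and pushing the sum over $T_1,\dots,T_a$ into a conditional expectation collapses (by the same character-orthogonality trick used in the proof of Lemma~\ref{lem:restriction}(1), applied $a$ times) to the Fourier expansion of $f_{i-a, F|_A}$ with an $n^a$ normalization that exactly cancels. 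I would write this out as a direct Fourier computation rather than an induction to avoid bookkeeping errors, and then the remaining steps are routine given the lemmas already proved.
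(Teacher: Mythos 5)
Your plan reconstructs the paper's proof exactly: the paper's three-line argument is precisely your steps (1)--(4), with its first displayed equality being the identity $\E_X[f_{i,F}(A,X)^2]=\E_X[f_{i-a,F|_A}(X)^2]$ that you isolate as step (2), followed by Lemma~\ref{lem:3.2} applied to $F|_A$ and the Parseval bound $W^{i-a}(F|_A)\le \delta(F^{2}|_A)$.

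The step you flag as the main obstacle is, however, a genuine gap, and the ``exact collapse'' you are counting on does not occur. Expanding both sides in characters, $f_{i,F}(A,X)=\sum_{S\in([n]\setminus 0)^a}\sum_{U\in([n]\setminus 0)^{i-a}}\hat{F}(S,U,0,\dots,0)\,\Chi_S(A)\Chi_U(X)$, whereas $\widehat{F|_A}(U,0,\dots,0)=\sum_{S\in[n]^a}\hat{F}(S,U,0,\dots,0)\,\Chi_S(A)$ with the sum over \emph{all} of $[n]^a$, zero entries included. So $f_{i-a,F|_A}$ carries extra terms from tuples $S$ with some zero coordinates; equivalently, by the second item of Lemma~\ref{lem:restriction}, $f_{i,F}(A,X)$ contains all the inclusion--exclusion terms $\delta(F|_{(A|_{B_1},X|_{B_2})})$ with $B_1\subsetneq[a]$, which are absent from $f_{i-a,F|_A}(X)$. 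The two functions have the same character support in $X$ but different coefficients, so neither the claimed identity nor an inequality between their second moments follows. In fact the lemma can fail as stated: take $F(x_1,\dots,x_\ell)=\prod_j \Ind(x_j\ne 0)$ (permutation-invariant), $A=(0)$, $i=2$, $a=1$; then $F|_A\equiv 0$ so the right-hand side is $0$, while $f_{2,F}(0,x)=\delta(F)-\delta(F|_x)\not\equiv 0$, so the left-hand side is positive. A repaired statement must keep the full inclusion--exclusion, e.g.\ bound $\E_X[f_{i,F}(A,X)^2]$ by Cauchy--Schwarz over the $2^a$ sub-restrictions, yielding something like $2^a\sum_{A'\subseteq A}\delta(F^{2}|_{A'})/\binom{\ell-|A'|}{i-a}$; this changes the multipliers that feed into Lemma~\ref{lem-prod} and Theorem~\ref{thm:structure-johnson-cayley}. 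To be clear, the paper's own proof silently asserts the same false identity, so your reconstruction is faithful to it --- but as written, step (2) does not go through.
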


\begin{proof}
By Lemma~\ref{lem:3.2}, we have that,
\begin{align}
\E_{X \in (\{0,1\}^k)^{i-a}}[f_i(A,X)^2] = \E_{X \sim (\{0,1\}^k)^{i-a}}[f_{i-a,F|_A}(B)^2] = \frac{W^{i-a}(F|_A)}{{l-a \choose i-a}}. \label{5.5:eq1}
\end{align}

We also have that,

\begin{align}
W^{i-a}(F|_A) \leq \E_{X \in (\{0,1\}^k)^{i-a}}[F|_A(X)^2] = \delta(F^{ 2}|_A), \label{5.5:eq2}
\end{align}
where the first inequality is a degree 2 SoS-inequality. The proof is concluded by combining~\eqref{5.5:eq1},~\eqref{5.5:eq2}.
\end{proof}

\subsection{Bounding $4$-th moments}
The next part of the proof is to prove a lower 
bound on the fourth moment of $F_i$ (using the fact
it is correlated with a Boolean valued function, 
hence is supposed to have a high $4$th moment) 
as well as an upper bound on the fourth moment of 
$F_i$ (using the fact $F$ is pseudo-random). We 
begin with the following lemma which gives us the lower bound:
\begin{lemma}[Lower Bound]\label{lem:lower-bound}
For all real $\epsilon > 0$ given the indeterminates $\{F(X)\}_{X \in V(C)}$ we have that,
$$\E_X[F_i^4(X)] \geq 4\epsilon^3 \eta_i - 3\epsilon^4\delta(F) + B(F),$$
where $B(F) = 4\eps^3\E_X[(F^{ 3} - F)F_i]+3\eps^4 \E_X[F - F^{ 4}]$.
\end{lemma}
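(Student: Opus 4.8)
\textbf{Proof plan for Lemma~\ref{lem:lower-bound}.}

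The plan is to exploit the inequality $x^4 \geq 4\epsilon^3 x - 3\epsilon^4$, which holds for all real $x$ and all $\epsilon>0$ (it is the tangent line at $x=\epsilon$ to the convex function $x\mapsto x^4$ after the substitution, or equivalently it follows from AM-GM / the fact that $x^4 - 4\epsilon^3 x + 3\epsilon^4 = (x-\epsilon)^2(x^2+2\epsilon x + 3\epsilon^2)$ and the quadratic factor is a sum of squares since its discriminant $4\epsilon^2 - 12\epsilon^2 < 0$). First I would apply this pointwise with $x = F_i(X)$ to get $F_i(X)^4 \geq 4\epsilon^3 F_i(X) - 3\epsilon^4$; note however that $\E_X[F_i(X)] = \hat F(\mathbf{0}) \cdot \mathbb{1}[i=0]$, which is not what we want, so the linear term must be massaged. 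The key move is to instead bound $F_i(X)^4$ from below using $F(X)$ rather than $F_i(X)$ in the linear correction: since $F$ is (morally) Boolean we have $F(X)^3 = F(X)$ and hence $\E_X[F(X) F_i(X)] = \E_X[F(X)^3 F_i(X)]$; but more directly, $\E_X[F(X)F_i(X)] = \sum_{|T|=i}\hat F(T)^2 = \eta_i$ since $F_i$ is the level-$i$ projection of $F$.

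So the concrete steps are: (1) write $\E_X[F_i^4] \geq 4\epsilon^3\E_X[F_i \cdot F] - 3\epsilon^4\E_X[F] + \text{error}$, where the error accounts for replacing the naive linear-term bound; (2) identify $\E_X[F_i F] = \eta_i$ exactly by Parseval/orthogonality of levels; (3) identify $\E_X[F] = \delta(F)$; (4) collect the leftover terms into $B(F)$. For step (1), the cleanest route is: from the pointwise inequality $F_i^4 \geq 4\epsilon^3 F_i - 3\epsilon^4$, add and subtract to convert $4\epsilon^3 F_i$ into $4\epsilon^3 F$ plus a correction $4\epsilon^3(F_i - F)$, and convert $-3\epsilon^4$ into $-3\epsilon^4 F^4$ plus $-3\epsilon^4(1 - F^4)$ (using that the constant $1$ can be written as $F^4 + (1-F^4)$). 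Taking expectations, $\E_X[4\epsilon^3 F] = 4\epsilon^3\delta(F)$ — wait, we want $4\epsilon^3\eta_i$, so actually the substitution should target $\E_X[F_i \cdot F]$; I would write $4\epsilon^3 F_i = 4\epsilon^3 F_i F^2 + 4\epsilon^3 F_i(1-F^2)$ and use Booleanity $F^2 \approx F$, $F^3 \approx F$ so that $\E[F_i F^2]$ relates to $\eta_i$ up to the Booleanity defect, matching the stated form $B(F) = 4\epsilon^3\E_X[(F^3-F)F_i] + 3\epsilon^4\E_X[F-F^4]$. Tracking which powers of $F$ appear: $\E_X[F_i \cdot F^3] = \E_X[F_i \cdot F] + \E_X[F_i(F^3-F)] = \eta_i + \E_X[(F^3-F)F_i]$, and $\E_X[3\epsilon^4 F^4] = 3\epsilon^4\delta(F) + 3\epsilon^4\E_X[F^4 - F] - 3\epsilon^4\E_X[F^4-F]$... the bookkeeping is to choose the pointwise inequality as $F_i^4 \geq 4\epsilon^3 F_i F^3 - 3\epsilon^4 F^4$ is \emph{not} valid pointwise in general, so instead I expect the right approach is to keep $F_i^4 \geq 4\epsilon^3 F_i - 3\epsilon^4$ and then separately add the identities $0 = 4\epsilon^3\E_X[F_i(F^3 - 1)] - 4\epsilon^3\E_X[F_i(F^3-1)]$ and reorganize, using $\E_X[F_i \cdot 1] = 0$ for $i\geq 1$ and $\E_X[F_i \cdot F^3]$, $\E_X[F_i\cdot F]$ as above.

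The main obstacle is purely the algebraic bookkeeping: getting the error term to come out \emph{exactly} as $B(F) = 4\epsilon^3\E_X[(F^3-F)F_i] + 3\epsilon^4\E_X[F - F^4]$ rather than some other equivalent expression, and making sure every manipulation is a low-degree SoS inequality (the pointwise bound $x^4 - 4\epsilon^3 x + 3\epsilon^4 = (x-\epsilon)^2(x^2+2\epsilon x+3\epsilon^2) \succeq 0$ is degree-$4$ SoS in $F_i$, hence degree-$4$ SoS in the indeterminates $\{F(X)\}$ since $F_i$ is linear in them). Once the pointwise inequality is in hand, taking $\E_X$ is linear and the only content is recognizing $\E_X[F_i F] = \eta_i$ and $\E_X[F_i] = 0$ for $i \geq 1$ (and handling $i=0$ trivially, where $F_0 = \delta(F)$ is constant). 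I would present the pointwise SoS identity first, then take expectations, then simplify the constant and linear terms using Parseval, and finally read off $B(F)$. I expect this to be a short argument once the correct pointwise inequality and the correct grouping of Booleanity-defect terms are pinned down.
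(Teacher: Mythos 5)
There is a genuine gap, and it sits exactly at the point where you write that ``$F_i^4 \geq 4\eps^3 F_iF^3 - 3\eps^4F^4$ is \emph{not} valid pointwise in general'' and then fall back on the tangent-line inequality at $x=F_i(X)$. That fallback cannot work: the inequality $F_i(X)^4 \geq 4\eps^3F_i(X)-3\eps^4$ has linear term $\E_X[F_i]=0$ for $i\geq 1$, so after taking expectations it only yields $\E_X[F_i^4]\geq -3\eps^4$, and no amount of adding identities of the form $0 = A - A$ can manufacture the $+4\eps^3\eta_i$ term (adding and subtracting $4\eps^3\E_X[F_iF^3]$ leaves an uncancelled $-4\eps^3\E_X[F_iF^3]$ on the right-hand side). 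The information connecting $\E_X[F_i^4]$ to $\eta_i$ must enter through the pointwise inequality itself, not through bookkeeping afterwards.

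The inequality you rejected is in fact valid for all real values and is precisely the key step. It is the homogenization of your own tangent-line identity:
\begin{equation*}
z^4 - 4\eps^3 y^3 z + 3\eps^4 y^4 \;=\; (z-\eps y)^2\bigl(z^2+2\eps yz+3\eps^2y^2\bigr)\;=\;(z-\eps y)^2\bigl((z+\eps y)^2+2\eps^2y^2\bigr)\;\geq\;0,
\end{equation*}
a degree-$4$ sum of squares. (The paper invokes this as the SoS H\"{o}lder inequality, Fact~\ref{fact:sos-hol}: $Y^3Z\leq \tfrac{3\nu}{4}Y^4+\tfrac{1}{4\nu^3}Z^4$; multiplying by $4\nu^3$ and setting $Y=F$, $Z=F_i$, $\nu=\eps$ gives exactly $F_i^4\geq 4\eps^3F^3F_i-3\eps^4F^4$.) With this in hand the proof is the short computation you were aiming for: take $\E_X$, write $\E_X[F^3F_i]=\ip{F,F_i}+\E_X[(F^3-F)F_i]=\eta_i+\E_X[(F^3-F)F_i]$ using orthogonality of the level decomposition, write $\E_X[F^4]=\delta(F)-\E_X[F-F^4]$, and the error terms assemble into $B(F)$ exactly as stated. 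The paper runs this same argument in the reverse direction (starting from $\eta_i=\E_X[F^3F_i]+\E_X[(F-F^3)F_i]$ and applying H\"{o}lder to the first term), but the content is identical. So your proposal correctly identifies every ingredient — the quartic SoS certificate, the identity $\ip{F,F_i}=\eta_i$, and the two Booleanity-defect terms — but the one inequality you discard is the one the proof needs, and the route you substitute for it does not close.
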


\begin{proof}
We have that, 
\begin{align}\label{eq:eta-i}
\eta_i = \ip{F,F_i} = \E_X[F(X)^3 \cdot F_i(X)] + \E_X[(F(X) - F(X)^3) \cdot F_i(X)],
\end{align}
since $F = \sum F_i$ is an orthogonal decomposition of $F$. Using Fact~\ref{fact:sos-hol} on the first term we get:
\begin{align*}
\E_X[F(X)^3 \cdot F_i(X)] &= 
\E_X[F(X)^3 \cdot F_i(X)]\\
&\leq  
\frac{3\epsilon}{4}\E_X[F(X)^4] + \frac{1}{4\epsilon^3}\E_X[F_i(X)^4],\\
&= \frac{3\epsilon}{4}\E_X[F(X)] +\frac{3\epsilon}{4}\E_X[F(X)^4 - F(X)] +  \frac{1}{4\epsilon^3}\E_X[F_i(X)^4],
\end{align*}
for all real $\epsilon > 0$. Plugging this into~\eqref{eq:eta-i} and rearranging gives the lemma.
\end{proof}

We will now prove an upper bound on $\E_X[F_i^4(X)]$. We have the following lemma which bounds the fourth moment of $F_i$ in terms of the second moment of $F_i$:


\begin{lemma}[Upper Bound]\label{lem:upper-bound}
For all integers $i \in [\ell]$, there exist positive constants $c_1, \ldots, c_i \leq \exp(i)$ such that for all $F$:
\[\cA_{inv} \,\, \vdash_{O(1)}\,\, \E_X[F_i(X)^4] \leq \exp(i)\gamma \eta_i + \sum_{j = 0}^i c_j \ell^{i}\E_{a \sim [n]^{j}}[\E_{b \sim [n]^{\ell-j}}[f_i(a,b)^2]\cdot (\delta(F^{ 2}|_a) - \gamma)].\] 
\end{lemma}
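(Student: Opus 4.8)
\textbf{Proof plan for Lemma~\ref{lem:upper-bound}.}
The plan is to mimic the corresponding argument in~\cite{KMMS} but arranged so that every inequality used is either a degree-$O(1)$ \sos inequality on the indeterminates $\{F(X)\}$, a direct Fourier identity (which is linear and hence trivially \sos), or an application of Lemma~\ref{lem:p.r.-weight} / Lemma~\ref{lem:3.2}. First I would expand $\E_X[F_i(X)^4]$ using the permutation-invariant decomposition $F_i(X) = \sum_{I\subseteq[\ell],|I|=i} f_{i,F}(X|_I)$ from Lemma~\ref{lem:props}(3), obtaining
\[
\E_X[F_i(X)^4] = \sum_{I_1,I_2,I_3,I_4} \E_X\big[f_i(X|_{I_1})f_i(X|_{I_2})f_i(X|_{I_3})f_i(X|_{I_4})\big].
\]
A term survives only when the union $I_1\cup I_2\cup I_3\cup I_4$ is ``covered'' in the sense that every coordinate in the union appears in at least two of the $I_j$'s (otherwise the character-orthogonality over $[n]$ kills the expectation, since $f_i$ has no trivial Fourier mass in any coordinate). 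I would then classify the surviving $4$-tuples of $i$-sets by the size $j$ of their pairwise-intersection pattern — concretely, group them by the set $a$ of ``doubly covered'' coordinates, whose size ranges over $j=0,1,\dots,i$ — and count, for each $j$, how many such configurations there are; this count is a constant depending only on $i$ (bounded by $\exp(i)$, since it is a sum of multinomial coefficients in a bounded number of slots) times $\ell^{O(i)}$ coming from the choice of the remaining free coordinates. This is exactly the combinatorial bookkeeping of~\cite{KMMS}; I would cite it and just re-derive the exponents carefully so as to read off the $c_j$'s and the $\ell^{i}$ power in the statement.

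Next, for each fixed covered-set $a$ of size $j$, the inner expectation is controlled by Cauchy--Schwarz applied twice (which is \sos of degree $O(1)$): $\E_X[\prod_{t=1}^4 f_i(X|_{I_t})] \le \prod_t \big(\E_X[f_i(X|_{I_t})^4]\big)^{1/4}$, or more usefully a pairing that leaves, after averaging over the free coordinates outside $a$, a product of the form $\E_{b}[f_i(a,b)^2]$ against a second factor also of this shape. At this point Lemma~\ref{lem:p.r.-weight} (or rather its proof, which says $\E_{b}[f_i(a,b)^2]\le \delta(F^2|_a)/\binom{\ell-j}{i-j}$, an \sos inequality of degree $2$) lets me replace one of these factors by $\delta(F^2|_a)$, turning the whole term into $\E_{b\sim[n]^{\ell-j}}[f_i(a,b)^2]\cdot \delta(F^2|_a)$ up to the constant prefactors. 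Then I split $\delta(F^2|_a) = (\delta(F^2|_a)-\gamma) + \gamma$. The ``$-\gamma$'' pieces, once summed over all $a$ of size $j$ with the right prefactor, are exactly the terms $\sum_j c_j \ell^{i}\E_{a}[\E_b[f_i(a,b)^2](\delta(F^2|_a)-\gamma)]$ appearing in the lemma. The ``$\gamma$'' pieces, using $\sum_{a}\E_b[f_i(a,b)^2]$-type sums and Lemma~\ref{lem:3.2} ($\E_X[f_{i,F}(X)^2]=\eta_i/\binom\ell i$, hence $\sum_{|I|=i}\E_X[f_i(X|_I)^2]=\eta_i$, and analogously for restricted versions), collapse into $\exp(i)\gamma\eta_i$; here I have to be a little careful that the combinatorial factors coming from the counting in the previous paragraph multiply $\gamma\eta_i$ by at most $\exp(i)$, which they do because everything is a bounded function of $i$.

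The only subtlety is to make sure the whole chain is genuinely an \sos proof of constant degree in the $\{F(X)\}$ variables. The degree is bounded because: the Fourier expansion and the identities $F_i=\sum f_i$, $f_i=\sum_B (-1)^{\cdots}\delta_{X|_B}(F)$ (Lemma~\ref{lem:restriction}) are \emph{linear} in $\{F(X)\}$; the only genuinely nonlinear steps are the (double) Cauchy--Schwarz on products of four linear forms — degree $4$ — and Lemma~\ref{lem:p.r.-weight}, which is degree $2$; and the combinatorial re-indexing introduces no new indeterminates, only real coefficients $c_j$ whose absolute value I bound by $\exp(i)$. I expect the main obstacle to be precisely this combinatorial counting step: getting the covered-coordinate classification right, arguing cleanly (and \sos-ly, i.e.\ without case analysis on the values of $F$) that uncovered configurations contribute exactly zero rather than something small, and tracking the powers of $\ell$ and the $\exp(i)$-sized constants so that the final bound has \emph{exactly} the form $\exp(i)\gamma\eta_i + \sum_{j=0}^i c_j\ell^{i}\E_a[\cdots]$ with $c_j\le\exp(i)$. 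Everything after the counting is routine Cauchy--Schwarz plus invocations of Lemmas~\ref{lem:3.2} and~\ref{lem:p.r.-weight}.
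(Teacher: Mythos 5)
Your proposal follows essentially the same route as the paper: expand $\E_X[F_i^4]$ into four-wise products of the $f_i(X|_{I_t})$, observe that any tuple with a singly-covered coordinate contributes exactly zero, decouple the surviving products via (weighted) Cauchy--Schwarz into products of conditional second moments, bound one factor by $\delta(F^2|_a)/\binom{\ell-j}{i-j}$ via Lemma~\ref{lem:p.r.-weight}, split $\delta(F^2|_a)=\gamma+(\delta(F^2|_a)-\gamma)$, and use Lemma~\ref{lem:3.2} together with an $\exp(i)\cdot\ell^{O(i)}$ count of configurations. The paper organizes the count by the size $d$ of the union $I_1\cup\cdots\cup I_4$ and balances the two Cauchy--Schwarz halves with an explicit $\eps=(O(i)/\ell)^{(|A|-|C|)/2}$ to get the stated powers of $\ell$, which is exactly the bookkeeping you flagged as the remaining work.
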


\begin{proof}
When the set $X$ and the function $F$ is clear from context, we use $f_i(I)$ in place of $f_i(X|_I)$. Thus,
\begin{align*}
\E_X[F_i^4(X)] &= \sum_{\substack{ I_1,\ldots,I_4 \subseteq [l] \\ |I_j| = i, \forall j\in [4]}}\E_X[ f_i(I_1) f_i(I_2) f_i(I_3) f_i(I_4)].   
\end{align*}
Denote, $$D(i,d) = \{(I_1,\ldots,I_4) \mid \forall j \in [4], I_j \subseteq [l], |I_j| = i, |I_1 \cup I_2 \cup I_3 \cup I_4| = d\}.$$
Note that for all $d \in [i,4i]$,
\[|D(i,d)| \leq {\ell \choose d}\cdot {d \choose i}^4 \leq \left(\frac{\ell e}{d}\right)^d \left(\frac{de}{i} \right)^{4i} \leq \ell^d \exp(i).\]

We will prove the following lemma about the four-wise products, for $(I_1,\ldots,I_4) \in D(i,d)$. We defer
the proof to Section~\ref{sec:pf_lem-prod}.
\begin{lemma}\label{lem-prod}
Let $i,d$ be integers such that $i \leq d \leq 4i$ and let $I_1,\ldots,I_4 \subseteq [l]$, such that $(I_1,I_2,I_3,I_4) \in D(i,d)$. Then there exist integers $t_1,t_2 \in [i]$ such that for all $\gamma$:
\begin{align*}
&\E_X[f_i(I_1)f_i(I_2)f_i(I_3)f_i(I_4)] \\
&\leq \left(\frac{O(i)}{\ell}\right)^{d}\gamma\eta_i + \left(\frac{O(i)}{\ell}\right)^{d-i}\E_{a_1 \sim [n]^{t_1}}[\E_{b_1 \sim [n]^{i-t_1}}[f_i(a_1,b_1)^2]\cdot (\delta(F^{ 2}|_{a_1}) - \gamma)]  \\ 
&+ \left(\frac{O(i)}{\ell}\right)^{d-i}\E_{a_2 \sim [n]^{t_2}}[\E_{b_2 \sim [n]^{i-t_2}}[f_i(a_2,b_2)^2](\delta(F^{ 2}|_{a_2}) - \gamma)]. 
\end{align*}
\end{lemma}
We now use Lemma~\ref{lem-prod} to finish the proof of Lemma~\ref{lem:upper-bound}. Indeed, we get that
\begin{align*}
&\E_X[F_i^4(X)] \\
&= \sum_{\substack{ I_1,\ldots,I_4 \subseteq [l] \\ |I_j| = i, \forall j\in [4]}}\E_X[ f_i(I_1) f_i(I_2) f_i(I_3) f_i(I_4)] \\
&\leq \sum_{d = i}^{4i} \sum_{\substack{ I_1,\ldots,I_4 \subseteq [l] \\ (I_1,\ldots,I_4) \in D(i,d)}} \E_X[ f_i(I_1) f_i(I_2) f_i(I_3) f_i(I_4)] \\
&\leq \sum_{d = i}^{4i} \sum_{\substack{ I_1,\ldots,I_4 \subseteq [l] \\ (I_1,\ldots,I_4) \in D(i,d)}} \left(\frac{O(i)}{\ell}\right)^{d}\gamma\eta_i +  \left(\frac{O(i)}{\ell}\right)^{d-i}\E_{a_1 \sim [n]^{t_1}}[\E_{b_1 \sim [n]^{i-t_1}}[f_i(a_1,b_1)^2]\cdot (\delta(F^{ 2}|_{a_1}) - \gamma)]  \\ 
& ~~~~~~~~~~~~~~~~~~~~~~~~~~~~~~~~~+\left(\frac{O(i)}{\ell}\right)^{d-i}\E_{a_2 \sim [n]^{t_2}}[\E_{b_2 \sim [n]^{i-t_2}}[f_i(a_2,b_2)^2](\delta(F^{ 2}|_{a_2}) - \gamma)] \\
&\leq \exp(i)\gamma \eta_i + \sum_{j = 0}^i c_j \ell^{i}\E_{a \sim [n]^{j}}[\E_{b \sim [n]^{i - j}}[f_i(a,b)^2]\cdot (\delta(F^{ 2}|_a) - \gamma)],  
\end{align*}
for some positive constants $c_j \leq \exp(i)$, where we have used that $|D(i,d)| \leq \ell^d \exp(i)$.
\end{proof}

\subsection{Proof of Lemma~\ref{lem-prod}}\label{sec:pf_lem-prod}
We begin by analyzing the simple case in which there is an index that appears in only one of the $I_j$'s, 
and show that in this case the expectation is just equal to $0$:
\begin{proposition}
If there is an index $j$ that appears in only one of the $I_k$s, then,
$$\E_X[ f_i(I_1) f_i(I_2) f_i(I_3) f_i(I_4)] = 0.$$
\end{proposition}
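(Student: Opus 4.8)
The plan is to exploit orthogonality of the characters $\Chi_T$ together with the structure of the functions $f_{i,F}$. Recall that by Definition~\ref{def:f_i}, $f_{i,F}(x_1,\ldots,x_i)$ is a Fourier expansion supported entirely on characters $\Chi_{T_1,\ldots,T_i}$ with \emph{every} $T_k \neq 0$; that is, it is a ``multilinear, full-degree'' piece. First I would fix the index $j$ from the hypothesis, and relabel so that $j$ lies in $I_1$ but in none of $I_2,I_3,I_4$. Then in the product $f_i(X|_{I_1}) f_i(X|_{I_2}) f_i(X|_{I_3}) f_i(X|_{I_4})$, the coordinate $x_j$ appears only inside the first factor.

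Next I would expand each $f_i$ into its Fourier series and compute $\E_X$ by linearity, reducing to a sum of terms of the form $\hat F(\cdot)\hat F(\cdot)\hat F(\cdot)\hat F(\cdot)\cdot \E_X[\Chi_{T^{(1)}}(X|_{I_1})\Chi_{T^{(2)}}(X|_{I_2})\Chi_{T^{(3)}}(X|_{I_3})\Chi_{T^{(4)}}(X|_{I_4})]$. The key observation is that $\E_{X}[\prod_k \Chi_{T^{(k)}}(X|_{I_k})]$ factors over the coordinates $x_1,\ldots,x_\ell$, and the factor corresponding to coordinate $x_j$ is $\E_{x_j\in[n]}[\Chi_{T^{(1)}_{j}}(x_j)]$ (since $x_j$ occurs only in $\Chi_{T^{(1)}}$). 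This expectation is $0$ unless $T^{(1)}_j = 0$. But $T^{(1)}$ ranges only over tuples in which \emph{all} entries (including the one indexed by position $j$ within $I_1$) are nonzero, because that is precisely the support condition defining $f_{i,F}$. Hence $T^{(1)}_j\neq 0$ always, the $x_j$-factor is identically $0$, and so every term in the expansion vanishes. Therefore $\E_X[f_i(I_1)f_i(I_2)f_i(I_3)f_i(I_4)] = 0$.

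I should be careful that this is an SoS-valid manipulation, but it is: the Fourier coefficients $\hat F(T)$ are linear forms in the indeterminates $\{F(X)\}$ (as noted in the excerpt after Lemma~\ref{lem:props}), so the expansion and the evaluation of $\E_X$ are purely linear-algebraic identities in the indeterminates, and the conclusion ``$=0$'' is a degree-$2$ polynomial identity. No positivity or axioms are needed here; it is an exact identity. The only mild subtlety — and the one place I would be slightly careful in writing — is keeping the indexing straight between ``the $k$-th coordinate of the tuple $T^{(1)}$'' and ``the element of $I_1$ to which it corresponds,'' and confirming that the coordinate $x_j$ indeed sits in a position of $I_1$ that carries a forced-nonzero frequency; this follows immediately from the definition of $f_{i,F}$, since every frequency in its support is nonzero in every one of its $i$ slots. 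That is the entire argument; there is no real obstacle, just bookkeeping.
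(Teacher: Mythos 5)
Your proof is correct and is essentially the paper's argument: both hinge on the single fact that $f_{i,F}$ is supported only on characters $\Chi_{T_1,\ldots,T_i}$ with every $T_k \neq 0$, so averaging over the lone coordinate $x_j$ (which appears in only one factor) kills every term. The paper phrases this by conditioning on the remaining coordinates and observing that the inner expectation of that one factor vanishes, whereas you phrase it via a full Fourier expansion and orthogonality of characters — the same mechanism in slightly more explicit form.
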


\begin{proof}
Without loss of generality we can assume that $1$ appears only in $I_4$. We have that,
\begin{align*}
\E_X[ f_i(I_1) f_i(I_2) f_i(I_3) f_i(I_4)] &= \E_{x_1,\ldots,x_l}[ f_i(I_1) f_i(I_2) f_i(I_3) f_i(I_4)] \\
&= \E_{x_2,\ldots,x_l}[f_i(I_1) f_i(I_2) f_i(I_3) \E_{x_1}[f_i(I_4) \mid x_2 = x_2',\ldots, x_l = x_l']]\\
&= \E_{x_2,\ldots,x_l}[f_i(I_1) f_i(I_2) f_i(I_3) \delta_{(x_2',\ldots,x_l') \cap I_4}(f_i)] \\
&= 0,
\end{align*}
where the last equality follows by Lemma~\ref{lem:props}. 
\end{proof}
For the rest of the proof we assume each index in $I_1\cup I_2\cup I_3\cup I_4$ appears in at least two sets. Without loss of generality we can assume that $\cup_{j = 1}^4 I_j = [d]$. Let $H_2, H_3, H_4 \subseteq [d]$ be the set of elements that appear in $2,3,4$ of the sets $(I_1,\ldots,I_4)$ respectively. Furthermore let $H_2 = X_{12} \cup X_{13} \ldots X_{34},$ where $X_{jk}$ is the set of elements that occur in $I_j$ and in $I_k$. We will abuse notation and use $H_j$ to also denote the set of random variables $(x_{j_1},\ldots, x_{j_m}), m \in H_j$ and $x_m \sim \{0,1\}^k$. Firstly we get that, 

\begin{align}
&\E_{H_2}[ f_i(I_1) f_i(I_2) f_i(I_3) f_i(I_4)] \nonumber\\
= &\E_{X_{12},\ldots,X_{34}}[f_i(I_1) f_i(I_2) f_i(I_3) f_i(I_4)] \nonumber\\
= &\E_{\substack{X_{13},X_{14},\\ X_{23},X_{24}}}[\E_{X_{12}}[f_i(I_1) f_i(I_2)]\E_{X_{34}}[f_i(I_3) f_i(I_4)] ] \nonumber\\
\leq &\frac{\eps}{2}\E_{\substack{X_{13},X_{14},\\ X_{23},X_{24}}}[\E_{X_{12}}[f_i(I_1) f_i(I_2)]^2] +  \frac{1}{2\eps}\E_{\substack{X_{13},X_{14},\\ X_{23},X_{24}}}[\E_{X_{34}}[f_i(I_3) f_i(I_4)]^2] \label{CS1} \\
\leq &\frac{\eps}{2}\E_{\substack{X_{13},X_{14},\\ X_{23},X_{24}}}[\E_{X_{12}}[f_i(I_1)^2]\E_{X_{12}}[f_i(I_2)^2]] +  \frac{1}{2\eps}\E_{\substack{X_{13},X_{14},\\ X_{23},X_{24}}}[\E_{X_{34}}[f_i(I_3)^2]\E_{X_{34}}[f_i(I_4)^2]]\label{CS2} \\
\leq &\frac{\eps}{2}\E_{X_{12},X_{13},X_{14}}[f_i(I_1)^2]\E_{X_{12},X_{23},X_{24}}[f_i(I_2)^2] +  \frac{1}{2\eps}\E_{X_{34},X_{13},X_{23}}[f_i(I_3)^2]\E_{X_{34},X_{14},X_{24}}[f_i(I_4)^2] \nonumber \\
= &\frac{\eps}{2}\E_{H_2}[f_i(I_1)^2]\E_{H_2}[f_i(I_2)^2] +  \frac{1}{2\eps}\E_{H_2}[f_i(I_3)^2]\E_{H_2}[f_i(I_4)^2],\label{last-eq} 
\end{align}
where inequalities~\eqref{CS1},~\eqref{CS2} follow by Cauchy Schwarz (Lemma~\ref{CS1-prelim} and Lemma~\ref{CS2-prelim}). 


\bigskip

We will now bound the expression $\E_X[f_i(I_1)f_i(I_2)f_i(I_3)f_i(I_4)]$. We have that,

\begin{align}
&\E_X[f_i(I_1)f_i(I_2)f_i(I_3)f_i(I_4)] \nonumber \\
= &\E_{H_4,H_3} [\E_{H_2}[f_i(I_1)f_i(I_2)f_i(I_3)f_i(I_4)]]\nonumber \\
\leq &\frac{\eps}{2}\E_{H_4,H_3}[\E_{H_2}[f_i(I_1)^2]\E_{H_2}[f_i(I_2)^2]] +  \frac{1}{2\eps}\E_{H_4,H_3}[\E_{H_2}[f_i(I_3)^2]\E_{H_2}[f_i(I_4)^2]]\nonumber\\
= &\frac{\eps}{2}\E_{A}[\E_{B_1}[f_i(I_1)^2]\E_{B_2}[f_i(I_2)^2]] +  \frac{1}{2\eps}\E_{C}[\E_{D_1}[f_i(I_3)^2]\E_{D_2}[f_i(I_4)^2]],\nonumber\\
= &\frac{\eps}{2}\E_{A}[\E_{B_1}[f_i(a, b_1)^2]\E_{B_2}[f_i(a, b_2)^2]] +  \frac{1}{2\eps}\E_{C}[\E_{D_1}[f_i(c,d_1)^2]\E_{D_2}[f_i(c,d_2)^2]],\label{eq:hypcont}
\end{align}
with $A = H_4 \cup (H_3 \cap I_1 \cap I_2)$, 
$B_1 = (H_2 \cap I_1) \cup ((H_3 \cap I_1) \setminus I_2)$,
$B_2 = (H_2 \cap I_2) \cup ((H_3 \cap I_2) \setminus I_1)$, $C = H_4 \cup (H_3 \cap I_3 \cap I_4)$, $D_1 = (H_2 \cap I_3) \cup ((H_3 \cap I_3) \setminus I_4)$,
$D_2 = (H_2 \cap I_4) \cup ((H_3 \cap I_4) \setminus I_3)$. The sets $a,b_1,b_2,c,d_1,d_2$ are instantiations of the corresponding random variables, $A, \ldots,D_2$. We have that,

\begin{align*}
\E_{A}[\E_{B_1}[f_i(a, b_1)^2]\E_{B_2}[f_i(a, b_2)^2]] &\leq \E_{A}[\frac{\delta(F^{ 2}|_a)}{{\ell-|A| \choose i - |A|}}\E_{B_2}[f_i(a,b_2)^2]] \\
&= \frac{1}{{\ell-|A| \choose i - |A|}}\E_{A}[\gamma \E_{B_2}[f_i(a,b_2)^2]] + \E_A[(\delta(F^{ 2}|_A) - \gamma)\left(\frac{\E_{B_2}[f_i(a,b_2)^2]}{{\ell-|A| \choose i - |A|}}\right)],\\
&\leq \frac{\gamma}{{\ell-|A| \choose i - |A|}}\cdot \frac{\eta_i}{{\ell \choose i}} + \E_A[(\delta(F^{2}|_A) - \gamma)\left(\frac{\E_{B_2}[f_i(a,b_2)^2]}{{\ell-|A| \choose i - |A|}}\right)],
\end{align*}
where in the first inequality we have used Lemma~\ref{lem:p.r.-weight} to bound $\E_{B_1}[f_i(a,b_1)^2]$ for each $a \sim A$ and the last inequality uses Lemma~\ref{lem:3.2}. Similarly we can bound the second term in equation~\eqref{eq:hypcont}:
\[\E_{C}[\E_{D_1}[f_i(c,d_1)^2]\E_{D_2}[f_i(c,d_2)^2]] \leq \frac{\gamma}{{\ell-|C| \choose i - |C|}}\cdot \frac{\eta_i}{{\ell \choose i}} + \E_C[(\delta(F^{2}|_C) - \gamma)\left(\frac{\E_{D_2}[f_i(c,d_2)^2]}{{\ell-|C| \choose i - |C|}}\right)].\]

Plugging in these bounds into equation~\eqref{eq:hypcont} and setting $\eps = \left(\frac{O(i)}{\ell}\right)^{\frac{|A|-|C|}{2}}$ we get:
\begin{align*}
&\E_X[f_i(I_1)f_i(I_2)f_i(I_3)f_i(I_4)] \\
&\leq \left(\frac{O(i)}{\ell}\right)^{\frac{4i - |A|-|C|}{2}}\gamma\eta_i + \E_A[ \left(\frac{O(i)}{\ell}\right)^{\frac{2i-|A|-|C|}{2}}\E_{B_2}[f_i(a,b_2)^2]\cdot (\delta(F^{ 2}|_a) - \gamma)]  \\ 
&+\E_C[\left(\frac{O(i)}{\ell}\right)^{\frac{2i-|A|-|C|}{2}}\E_{D_2}[f_i(c,d_2)^2](\delta(F^{ 2}|_c) - \gamma)]. 
\end{align*}

We know that $4i - |A|-|C| = 2d$ and substituting this gives the lemma statement.

\subsection{Combining the Upper and Lower Bounds on $\|F_i\|_4$: the Level $i$ Inequality}
Combining the upper and lower bounds on $\E[F_i(X)^4]$ we can now prove an upper bound on $\eta_i$.

\begin{lemma}[Level $i$ inequality]\label{lem:leveli-j}
There exist positive constants $c_0,\ldots,c_i \leq \exp(i)$ such that given the indeterminates $\{F(X)\}_{X \in V(C)}$,
\begin{align*}
\cA_{inv} \,\, \vdash_{O(1)}\,\, 
\eta_i \leq \exp(i)\gamma^{1/3}\delta + B(F) + \sum_{j = 0}^i  \frac{c_j\ell^{i}}{\gamma}\E_{a \sim [n]^{j}}[\E_{b \sim [n]^{i-j}}[f_i(a,b)^2]\cdot (\delta(F^{ 2}|_a) - \gamma)],
\end{align*}
where $B(F) =  \frac{4}{3}\E_X[(F - F^{ 3})F_i] + \exp(i)\gamma^{1/3}\E_X[F^{ 4} - F]$.
\end{lemma}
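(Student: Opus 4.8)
The plan is to combine the lower bound on the fourth moment of $F_i$ (Lemma~\ref{lem:lower-bound}) with the upper bound (Lemma~\ref{lem:upper-bound}), where the crucial move is to choose the free parameter $\epsilon$ in the lower bound \textbf{as a function of $\gamma$} so that the "main" $\gamma\eta_i$ term on the upper bound side gets absorbed into the $\eta_i$-term on the lower bound side. Concretely, recall Lemma~\ref{lem:lower-bound} gives
\[
\E_X[F_i^4(X)] \geq 4\epsilon^3\eta_i - 3\epsilon^4\delta(F) + B_1(F),
\qquad B_1(F) = 4\epsilon^3\E_X[(F^3-F)F_i] + 3\epsilon^4\E_X[F-F^4],
\]
and Lemma~\ref{lem:upper-bound} gives
\[
\E_X[F_i^4(X)] \leq \exp(i)\gamma\eta_i + \sum_{j=0}^i c_j \ell^{i}\E_{a\sim[n]^j}\big[\E_{b}[f_i(a,b)^2](\delta(F^2|_a)-\gamma)\big].
\]
Chaining these two inequalities and rearranging puts $(4\epsilon^3 - \exp(i)\gamma)\eta_i$ on one side.

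The main step, then, is the choice $\epsilon = c\,\gamma^{1/3}$ for a suitable absolute constant (depending on $i$ only through an $\exp(i)$-type factor), chosen large enough that $4\epsilon^3 = 4c^3\gamma \geq 2\exp(i)\gamma$, so that $4\epsilon^3 - \exp(i)\gamma \geq \tfrac{1}{2}\cdot 4\epsilon^3 = 2\epsilon^3$, hence $(4\epsilon^3-\exp(i)\gamma)^{-1} \leq \tfrac{1}{2\epsilon^3} = O(1/\gamma)$ up to an $\exp(i)$ factor. Dividing through, the $\eta_i$ coefficient on the right becomes $\exp(i)/\gamma$ multiplying the sum of "error" terms $\ell^i \E_a[\E_b[f_i(a,b)^2](\delta(F^2|_a)-\gamma)]$, with fresh constants $c_j' \leq \exp(i)$; the $-3\epsilon^4\delta(F)$ term contributes $\tfrac{3\epsilon^4}{4\epsilon^3-\exp(i)\gamma}\delta \leq \exp(i)\epsilon\,\delta = \exp(i)\gamma^{1/3}\delta$, giving the $\exp(i)\gamma^{1/3}\delta$ term in the statement; and the Booleanity term $B_1(F)$, divided by the same positive quantity, is bounded by $\tfrac{4}{3}\E_X[(F-F^3)F_i]$ plus $\exp(i)\gamma^{1/3}\E_X[F^4-F]$ after noting $\tfrac{4\epsilon^3}{4\epsilon^3-\exp(i)\gamma}\leq \tfrac43$ (by the choice of $\epsilon$) and $\tfrac{3\epsilon^4}{4\epsilon^3-\exp(i)\gamma}\leq \exp(i)\epsilon$, which exactly matches the definition $B(F) = \tfrac43\E_X[(F-F^3)F_i] + \exp(i)\gamma^{1/3}\E_X[F^4-F]$ in the lemma statement.

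Everything here is SoS-friendly of constant degree: both input lemmas are asserted to hold in $\cA_{inv} \vdash_{O(1)}$, dividing a valid SoS inequality by a \emph{positive scalar} (the number $4\epsilon^3-\exp(i)\gamma>0$, which does not depend on the indeterminates $F(X)$) preserves degree, and $\epsilon = c\gamma^{1/3}$ is likewise a fixed positive scalar, so no new indeterminates or square roots of polynomials enter. Thus I would: (1) instantiate Lemma~\ref{lem:lower-bound} with $\epsilon = c\gamma^{1/3}$; (2) instantiate Lemma~\ref{lem:upper-bound}; (3) chain the two, rearrange to isolate $\eta_i$, and read off the three groups of terms as above, checking the numerical inequalities on the coefficients; (4) rename the constants, noting each is at most $\exp(i)$. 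The only mild subtlety — and the step I would be most careful with — is keeping track of all the $\exp(i)$ factors so that the final constants $c_0,\dots,c_i$ are genuinely bounded by $\exp(i)$ and that $B(F)$ comes out \emph{exactly} in the claimed form rather than with a slightly worse constant in front of $\E_X[(F-F^3)F_i]$; this forces the quantitative choice $4c^3 \geq 2\exp(i)$ (so the leading fraction is $\leq 4/3$) rather than merely $4c^3 > \exp(i)$.
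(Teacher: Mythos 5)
Your proposal is essentially the paper's proof: chain Lemma~\ref{lem:lower-bound} against Lemma~\ref{lem:upper-bound}, choose $\epsilon\propto\gamma^{1/3}$ so that the $\exp(i)\gamma\,\eta_i$ term from the upper bound is absorbed into the $4\epsilon^3\eta_i$ term from the lower bound, and divide through. One correction on exactly the constant you flagged as delicate: your condition $4c^3\geq 2\exp(i)$ only gives $4\epsilon^3-\exp(i)\gamma\geq 2\epsilon^3$, so the leading fraction $\tfrac{4\epsilon^3}{4\epsilon^3-\exp(i)\gamma}$ is bounded by $2$, not by $4/3$; and since $\E_X[(F-F^3)F_i]$ has indefinite sign, you cannot subsequently replace a coefficient of $2$ in front of it by $4/3$. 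The paper avoids any such bounding by taking $\epsilon^3=\exp(i)\gamma$ \emph{exactly}, so that $4\epsilon^3-\exp(i)\gamma=3\epsilon^3$ and dividing by $3\epsilon^3$ produces the coefficient $4/3$ on the nose (and $\tfrac{3\epsilon^4}{3\epsilon^3}=\epsilon=\exp(i)\gamma^{1/3}$ for the other two terms). With that exact choice the rest of your outline goes through verbatim.
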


\begin{proof}
Let $\delta = \E_X[F(X)]$ and $\eta_i = \E_X[F_i^2(X)]$. Under the Booleanity axioms on $F$, we have proved an upper and lower bound on ${\E_X[F_i^4(X)]}$,
\begin{itemize}
    \item Upper Bound in Lemma~\ref{lem:upper-bound}:
\begin{equation}\label{eq:ub}
\E_X[F_i(X)^4] \leq \exp(i)\gamma \eta_i + \sum_{j = 0}^i c'_j \ell^{i}\E_{a \sim [n]^{j}}[\E_{b \sim [n]^{i-j}}[f_i(a,b)^2]\cdot (\delta(F^{2}|_a) - \gamma)],
\end{equation}
for some positive constants $c'_j \leq \exp(i)$.

\item Lower Bound in Lemma~\ref{lem:lower-bound}: For all $\eps \in \R$ we get,
\begin{equation}\label{eq:lb}
\E_X[F_i^4(X)] \geq 4\eps^3 \eta_i - 3\eps^4\delta + B_1(F),
\end{equation}
where $B_1(F) = 4\eps^3\E_X[(F^{ 3} - F)F_i]+3\eps^4 \E_X[F - F^{ 4}]$.
\end{itemize}

Combining~\eqref{eq:ub} and~\eqref{eq:lb}, 
setting $\eps = (\exp(i)\gamma)^{1/3}$ and dividing 
by $3\eps^3$ yields
\begin{align*}
\eta_i \leq \exp(i)\gamma^{1/3}\delta + \sum_{j = 0}^i \frac{c'_j}{3\gamma\exp(i)} \ell^{i}\E_{a \sim [n]^{j}}[\E_{b \sim [n]^{i-j}}[f_i(a,b)^2]\cdot (\delta(F^{ 2}|_a) - \gamma)] - \frac{B_1(F)}{3\gamma \exp(i)}.
\end{align*}
The proof is thus completed by setting $c_j = \frac{c'_j}{3\exp(i)}$ and 
$B(F) = \frac{4}{3}\E_X[(F - F^{ 3})F_i] + \exp(i)\gamma^{1/3}\E_X[F^{ 4} - F]$.
\end{proof}

\subsection{Expansion of Pseudorandom sets}
Lemma~\ref{lem:leveli-j} directly implies an expansion theorem for $C_{n,\ell,\alpha}$, as follows:
\begin{theorem}[Expansion Theorem for $C_{n,\ell,\alpha}$]\label{thm:structure-johnson-cayley}
For all $\alpha \in (0,1)$, all integers $\ell \geq 1/\alpha$ and $n \geq \ell$, the following holds: Let $C_{n,\ell,\alpha}$ be the Johnson-approximating Cayley graph. For every constant $\gamma \in (0,1)$ and positive integer $r \leq O(\ell)$, every permutation-invariant function $F: V(C) \rightarrow \R$ that is $(r,\gamma)$-pseudorandom:
\begin{enumerate}
\item $\cA_{inv}(F) \cup \{F(X) \in [0,1]\}_{X \in V(C)} \, \vdash_{O(1)}\, \\
\ip{F, LF} \geq \E[F](1 - (1-\alpha)^{r+1})(1- \gamma^{1/3}\exp(r)) - \sum_{j = 0}^r \frac{ c_j\ell^{j}}{\gamma}\E_{a \sim [n]^{j}}[q_a(F)({\delta(F^{ 2}|_a) - \gamma)}] + B(F).$
\item $\{F(X) \in [0,1]\} \,\,\, \vdash_{2} \,\,\, 0 \leq q_a(F) \leq 1,$
\end{enumerate}
where for all $j \leq r$, $c_j$'s are positive constants $\leq \exp(r)$, for all $a \in [n]^j, j \leq r$, $q_a(F)$ are degree $2$  polynomials and $B(F) = \frac{4}{3}\E_X[(F^{ 3} - F)\sum_{i = 0}^r F_i] + c\E_X[F - F^{ 4}]$, for a positive constant $c \leq \exp(r)\gamma^{1/3}$.
\end{theorem}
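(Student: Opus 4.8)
\textbf{Proof plan for Theorem~\ref{thm:structure-johnson-cayley} (and hence Theorem~\ref{thm:structure-johnson}).}
The plan is to assemble the statement from the level-$i$ inequalities (Lemma~\ref{lem:leveli-j}) already proved, together with the spectral information from Lemma~\ref{lem:eigenval}. First I would write $\ip{F,LF} = \ip{F,F} - \ip{F,CF}$ and expand both terms by the level decomposition $F = F_0 + \cdots + F_\ell$, which is orthogonal under $\cA_{inv}$ by the usual Parseval argument; thus $\ip{F,F} = \sum_{i=0}^{\ell}\eta_i$ and $\ip{F,CF} = \sum_{i=0}^{\ell}\lambda_i \eta_i$ where $\lambda_i := \lambda_C(\Chi_T)$ for $|T| = i$ is $\frac{\binom{\ell-i}{(1-\alpha)\ell - i}}{\binom{\ell}{(1-\alpha)\ell}}$ for $i \le (1-\alpha)\ell$ and $0$ otherwise (Lemma~\ref{lem:eigenval}). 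So $\ip{F,LF} = \sum_{i=0}^{\ell}(1-\lambda_i)\eta_i$. Here the key elementary estimate is that $\lambda_i \le (1-\alpha)^i$ for all $i$, and in particular for every $i > r$ we have $1 - \lambda_i \ge 1 - (1-\alpha)^{r+1}$; this is a purely numerical inequality on the binomial ratio and is SoS-trivial since the $\eta_i$ are sums of squares (so we may lower-bound each coefficient $1-\lambda_i$ by the constant $1-(1-\alpha)^{r+1}$ for $i \ge r+1$, and by $0$ for $i \le r$, losing only what we subtract back).

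Next I would handle the low-degree part $\sum_{i=0}^{r}(1-\lambda_i)\eta_i \le \sum_{i=0}^{r}\eta_i$, which is the ``error'' we must control using pseudorandomness. For each $i \le r$, apply Lemma~\ref{lem:leveli-j} to upper bound $\eta_i$ by $\exp(i)\gamma^{1/3}\delta + B_i(F) + \sum_{j=0}^{i}\frac{c_{i,j}\ell^i}{\gamma}\E_{a\sim[n]^j}[\E_b[f_i(a,b)^2](\delta(F^2|_a) - \gamma)]$. The multiplier $q_a(F)$ in the theorem statement will be $\E_{b\sim[n]^{i-|a|}}[f_i(a,b)^2]$ — a sum of squares of linear forms in the indeterminates $\{F(X)\}$, hence a degree-$2$ SoS polynomial, and it is bounded in $[0,1]$ by Lemma~\ref{lem:p.r.-weight} together with $\delta(F^2|_a)\le 1$ (since $F\in[0,1]$), giving the second clause of the theorem. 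Summing over $i\le r$ and collecting: the $\exp(i)\gamma^{1/3}\delta$ terms sum to at most $\exp(r)\gamma^{1/3}\delta$, which combines with the main term $\delta(1-(1-\alpha)^{r+1})$ to produce the claimed factor $\delta(1-(1-\alpha)^{r+1})(1-\gamma^{1/3}\exp(r))$ after factoring; the density terms aggregate (re-indexing by the size $j$ of the restriction and merging the finitely many $c_{i,j}$ into constants $c_j \le \exp(r)$) into $-\sum_{j=0}^{r}\frac{c_j\ell^j}{\gamma}\E_{a\sim[n]^j}[q_a(F)(\delta(F^2|_a)-\gamma)]$; and the Booleanity remainders $B_i(F) = \frac43\E_X[(F-F^3)F_i] + \exp(i)\gamma^{1/3}\E_X[F^4 - F]$ sum to $\frac43\E_X[(F-F^3)\sum_{i=0}^{r}F_i] + c\,\E_X[F-F^4]$ with $c\le\exp(r)\gamma^{1/3}$, which is exactly $-B(F)$ in the sign convention of the statement (note $\sum_{i=0}^r F_i = \Pi_{\ge \lambda_r}F$ up to the eigenvalue-indexing, using $\lambda_i \le (1-\alpha)^r$ for $i \le r$ and $\lambda_i$ small for $i$ slightly above — this identification is where one has to be mildly careful).

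I expect the main obstacle to be the bookkeeping of the Booleanity error term $B(F)$: one must verify that $\sum_{i=0}^r F_i$ coincides with the projection $\Pi_{\ge\lambda_r}$ onto eigenvalues $\ge\lambda_r=(1-\alpha)^r$. This is not literally true coefficient-by-coefficient because $\lambda_i$ is not exactly $(1-\alpha)^i$; the genuinely cleanest route is to observe that eigenvalues are monotone decreasing in $|T|$, so $\{|T|\le r\}$ and $\{\lambda \ge \lambda_r\}$ need not coincide exactly, and instead to track $\Pi_{\ge\lambda_r}F$ directly: replace the cutoff ``$i\le r$'' throughout by the spectral cutoff, bound $1-\lambda \ge 1 - (1-\alpha)^{r+1}$ on the complementary (low-eigenvalue) subspace, and apply Lemma~\ref{lem:leveli-j} only to those levels $i$ that survive in $\Pi_{\ge\lambda_r}F$; since $\lambda_i \le (1-\alpha)^i$ forces every such surviving level to have $i \le r$ (as $\lambda_i < \lambda_r$ once $i > r$... one checks $\lambda_{r+1} < (1-\alpha)^{r+1} < (1-\alpha)^r = \lambda_r$, wait this needs $\lambda_r$ to actually equal $(1-\alpha)^r$, which it does not — so the honest statement is $\lambda_i \le (1-\alpha)^i$ and one picks $\lambda_r := (1-\alpha)^r$ as the \emph{threshold}, not as an eigenvalue). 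Concretely: define the threshold $\lambda_r := (1-\alpha)^r$; then $\lambda_i > \lambda_r \Rightarrow (1-\alpha)^i \ge \lambda_i > (1-\alpha)^r \Rightarrow i < r$, so $\Pi_{\ge\lambda_r}F = \sum_{i : \lambda_i \ge \lambda_r}F_i$ only involves levels $i < r$ (plus possibly level $r$ if $\lambda_r \ge$ threshold), all of which are covered by Lemma~\ref{lem:leveli-j}, and on the orthogonal complement every eigenvalue is $< (1-\alpha)^r$, hence $\le (1-\alpha)^{r+1}$ by the same monotonicity, giving $1 - \lambda \ge 1-(1-\alpha)^{r+1}$. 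With that alignment fixed, everything else is routine aggregation of finitely many constants, and all the inequalities used (Parseval, Cauchy–Schwarz, Hölder, the level-$i$ lemma, Lemma~\ref{lem:p.r.-weight}) are degree-$O(1)$ SoS, so the final proof is degree $O(1)$ as claimed. Theorem~\ref{thm:structure-johnson} then follows by the standard reduction: a permutation-invariant function on $J$ lifts to a permutation-invariant $F$ on $C$ with the same density and essentially the same quadratic form (up to lower-order terms absorbable into the constants), so one simply invokes Theorem~\ref{thm:structure-johnson-cayley}.
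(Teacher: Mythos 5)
Your proposal is correct and follows essentially the same route as the paper: decompose $\ip{F,LF}$ spectrally, use $\lambda_i \le (1-\alpha)^i$ to discount all levels above $r$ by $1-(1-\alpha)^{r+1}$, and control $\sum_{i\le r}\eta_i$ via the level-$i$ inequality (Lemma~\ref{lem:leveli-j}) with Lemma~\ref{lem:p.r.-weight} supplying the $[0,1]$ bound on the multipliers. The only cosmetic differences are that the paper defines a single normalized $q_a(F)$ aggregating all levels $i\ge |a|$ up front (exactly the merging you describe), and your concern about $\Pi_{\ge\lambda_r}F$ versus $\sum_{i=0}^r F_i$ does not arise for the Cayley-graph statement, whose $B(F)$ is written directly in terms of $\sum_{i=0}^r F_i$.
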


\begin{proof}
We first have that: $\ip{F, LF} = \E[F^{ 2}]  - \ip{F,CF}$, where $L$ and $C$ denote the normalized Laplacian and adjacency matrix of the graph $C_{k,\ell,\alpha}$ respectively. Let $1=\lambda_1 \geq \ldots \lambda_\ell \geq -1$ denote the eigenvalues of $C$. We will henceforth bound $\ip{F,CF}$:
\begin{align*}
\ip{F,CF} = \sum_{i \in [l]} \lambda_i \eta_i 
&\leq \sum_{i = 0}^r \lambda_i\eta_i + (\E[F^{ 2}] - \sum_{i=1}^r \eta_i)\lambda_{r+1} \\
&\leq (1 - \lambda_{r+1}) \sum_{i=0}^r \eta_i + \E[F]\lambda_{r+1},\\
\end{align*}
where in the last step we use that $F(X) \in [0,1]$. Let $\delta$ denote $\E[F]$. Using Lemma~\ref{lem:leveli-j} we get: 
\begin{align}
&\ip{F,CF} \notag\\
&\leq (1 - \lambda_{r+1})\sum_{i=0}^r \left(\exp(i)\gamma^{1/3}\delta + B_{i}(F)+ \sum_{j = 0}^i  \frac{c'_{ij}\ell^{i}}{\gamma}\E_{a \sim [n]^{j}}[\E_{b \sim [n]^{i-j}}[f_i(a,b)^2]\cdot (\delta(F^{2}|_a) - \gamma)] \right) + \delta\lambda_{r+1} \notag\\
&\leq (1 - \lambda_{r+1})(\exp(r)\gamma^{1/3}\delta) + (1-\lambda_{r+1})\sum_i B_i(F) \notag\\+ &\sum_{j = 0}^r  \E_{a \sim [n]^{j}}[ (\delta(F^{ 2}|_a) - \gamma)(\sum_{i = j}^r (1 - \lambda_{r+1})\frac{c'_{ij} \ell^{i}}{\gamma} \E_{b \sim [n]^{i-j}}[f_i(a,b)^2])] +  \delta\lambda_{r+1}, \notag
\end{align}
for some positive constants $c'_{ij} \leq \exp(i)$ and $B_i(F) =  \frac{4}{3}\E_X[(F - F^{ 3})F_i] + \exp(i)\gamma^{1/3}\E_X[F^{ 4} - F]$.

Let us now look at the multipliers of $\delta(F^{2}|_a) - \gamma$, for each $a \in [n]^j$ consider the polynomial: $q_a(F) = \frac{1}{c_j\ell^j}\sum_{i = j}^r (1 - \lambda_{r+1})c'_{ij} \ell^{i} \E_{b \sim [n]^{i-j}}[f_i(a,b)^2]$, for fixed constants $c_j \leq \exp(r)$ to be determined later. Using Lemma~\ref{lem:p.r.-weight} to bound $\E_{b \sim [n]^{i-j}}[f_i(a,b)^2]$ we get that:
\begin{align*}
\{F(X) \in [0,1]\} \vdash_2 ~~~~0 \leq q_a(F) \leq \frac{1}{c_j}\sum_{i = j}^r (1 - \lambda_{r+1})c'_{ij}\ell^{i-j} \frac{ \delta(F^{2}|_a)}{{\ell - j \choose i - j}} \leq 1,
\end{align*}
for $c_j = \sum_{i = j}^r (1 - \lambda_{r+1})c'_{ij}\frac{\ell^{i-j}}{{\ell - j \choose i-j}} \leq \exp(r)$. Plugging in the polynomials $q_a(F)$ and rearranging yields
\[\ip{F,LF} \geq \delta(1 - \lambda_{r+1})(1- \gamma^{1/3}\exp(r)) - \sum_{j = 0}^r  \frac{c_j\ell^{j}}{\gamma}\E_{a \sim [n]^{j}}[q_a(F)(\delta(F^{ 2}|_a) - \gamma)] + B(F),\]
for some positive constants $c_j \leq \exp(r)$ and $B(F) = \frac{4}{3}\E_X[(F^{ 3} - F)\sum_{i = 0}^r F_i] + c\E_X[F - F^{ 4}]$, for a positive constant $c \leq \exp(r)\gamma^{1/3}$. 

We will now plug in upper bounds for $\lambda_i$. From Lemma~\ref{lem:eigenval}, we have that $\lambda_i = \frac{\binom{\ell - |T|}{(1 - \alpha)\ell - |T|}}{\binom{\ell}{(1 - \alpha)\ell}}$ for $i \leq (1-\alpha)\ell$ and $0$ otherwise. One can check that ${\lambda_i \leq (1-\alpha)\lambda_{i-1}}$ for all $i$ between $1$ and $\ell$. Since $\lambda_0 = 1$, we get that $\lambda_i \leq (1-\alpha)^i$, thus completing the proof of the lemma.

\end{proof}

\end{document}